\let\manuscriptKernelLabel\label
\documentclass[aps,prb,10pt,twocolumn,superscriptaddress,floatfix,longbibliography]{revtex4-2}
\let\label\manuscriptKernelLabel

\usepackage{amsmath,amssymb,amsthm,bbold,mathrsfs}
\usepackage{etoolbox}
\usepackage{needspace}
\usepackage{graphicx}
\usepackage{xcolor}
\usepackage{multirow}
\usepackage{array}[=2016-10-06]
\usepackage{longtable}
\usepackage{booktabs}
\usepackage{tikz-cd}

\makeatletter
\g@addto@macro\@mkpream@relax{\let\@endpbox\relax}
\makeatother

\newcommand{\spd}[1]{(#1+1)d}
\newcommand{\Z}{\mathbb{Z}}
\newcommand{\R}{\mathbb{R}}
\newcommand{\C}{\mathbb{C}}
\newcommand{\ii}{\mathrm{i}}
\newcommand{\ee}{\mathrm{e}}
\newcommand{\cA}{ {\cal A} }
\newcommand{\cC}{ {\cal C} }
\newcommand{\cD}{ {\cal D} }

\newcommand{\cH}{ {\cal H} }
\newcommand{\cM}{ {\cal M} }
\newcommand{\cN}{ {\cal N} }
\newcommand{\cO}{\mathcal{O}}
\newcommand{\cR}{ {\cal R} }
\newcommand{\cS}{ {\cal S} }
\newcommand{\cT}{ {\cal T} }
\newcommand{\cV}{ {\cal V} }
\newcommand{\cW}{ {\cal W} }

\newcommand{\cY}{ {\cal Y} }

\newcommand{\Hilb}{\mathsf{Hilb}}

\newcommand{\Ising}{\mathsf{Ising}}
\newcommand{\Fib}{\mathsf{Fib}}
\newcommand{\one}{\mathbb{1}}
\newcommand{\Tr}{\mathrm{Tr}}
\newcommand{\tTr}{\mathrm{tTr}}
\newcommand{\ev}{\mathrm{ev}}
\newcommand{\coev}{\mathrm{coev}}
\newcommand{\End}{\mathrm{End}}
\newcommand{\Fun}{\mathsf{Fun}}
\newcommand{\Rep}{\mathsf{Rep}}
\newcommand{\Mod}{\mathrm{Mod}}
\newcommand{\id}{\mathrm{id}}
\newcommand{\Irr}{\mathrm{Irr}}

\newcommand{\Tube}{\mathsf{Tube}}
\newcommand{\Torus}{\mathsf{Torus}}
\newcommand{\Tubev}{\mathsf{Tube}_{\mathrm v}}
\newcommand{\Tubeh}{\mathsf{Tube}_{\mathrm h}}
\newcommand{\Torusv}{\mathsf{Torus}_{\mathrm v}}
\newcommand{\Torush}{\mathsf{Torus}_{\mathrm h}}

\newcommand{\fZ}{\mathfrak{Z}}
\newcommand{\scrA}{\mathscr{A}}
\newcommand{\scrB}{\mathscr{B}}
\newcommand{\scrC}{\mathscr{C}}

\newcommand{\scrL}{\mathscr{L}}

\newcommand{\scrS}{\mathscr{S}}

\newcommand{\scrV}{\mathscr{V}}
\newcommand{\scrW}{\mathscr{W}}
\newcommand{\dd}{\mathrm{d}}
\newcommand{\xt}{\times}
\newcommand{\la}{\langle}
\newcommand{\ra}{\rangle}
\newcommand{\fgt}{\mathrm{fgt}}
\newcommand{\Mat}{\mathsf{Mat}}

\newcommand{\rev}{\mathrm{rev}}
\newcommand{\sym}{\mathrm{sym}}
\newcommand{\diagsector}{\mathrm{diag}}
\newcommand{\localmodules}{\mathrm{loc}}
\newcommand{\closedsector}{\mathrm{cl}}
\newcommand{\cutlabel}{\mathrm{cut}}
\newcommand{\contraction}{\mathrm{ct}}
\newcommand{\fiberfunctor}{\mathrm{ff}}
\newcommand{\ptr}{\operatorname{ptr}}
\newcommand{\swap}{\operatorname{swap}}
\newcommand{\JonesKauffman}{\mathsf{JK}}

\makeatletter
\newcommand{\condensationHookFill@}[1]{%
  \arrowfill@\relbar\relbar{\relbar\joinrel\rhook}#1%
}
\newcommand{\condensesTo}[1]{%
  \mathrel{\mathpalette\condensesTo@{#1}}%
}
\newcommand{\condensesTo@}[2]{%
  \setbox\z@\hbox{$\m@th#1\xrightarrow{\mkern5mu#2\mkern5mu}$}%
  \ooalign{%
    \copy\z@\cr
    \hbox to\wd\z@{%
      \kern.75ex
      \hbox to\dimexpr\wd\z@-1.5ex\relax{\condensationHookFill@{#1}\hfil}%
      \kern.75ex\hfil
    }\cr
  }%
}
\makeatother

\theoremstyle{definition}
\newtheorem{question}{Question}
\newtheorem{definition}{Definition}[section]
\newtheorem{proposition}[definition]{Proposition}

\newtheorem{lem}[definition]{Lemma}

\theoremstyle{remark}
\newtheorem{remark}{Remark}

\allowdisplaybreaks[4]
\makeatletter
\pretocmd{\l@part}{\Needspace{8\baselineskip}}{}{}
\makeatother

\definecolor{CUPurple}{RGB}{117,15,109}
\usepackage[
  colorlinks=true,
  linkcolor=CUPurple,
  citecolor=CUPurple,
  urlcolor=CUPurple,
]{hyperref}

\tikzcdset{
  condensesTo/.code={%
    \pgfsetarrows{tikzcd cap-tikzcd to}%
    \tikzset{postaction={
      draw,
      arrows={tikzcd cap-tikzcd left hook},
      shorten <=1ex,
      shorten >=1ex
    }}%
  }
}

\begin{document}

\title{Algebraic Tensor Network Renormalization and Holographic duality}
\author{Chenqi Meng}
\affiliation{Department of Physics, The Chinese University of Hong Kong, Shatin, New Territories, Hong Kong}
\author{Tian Lan}
\affiliation{Department of Physics, The Chinese University of Hong Kong, Shatin, New Territories, Hong Kong}
\author{Zhengcheng Gu}
\email[Corresponding author: ]{zcgu@phy.cuhk.edu.hk}
\affiliation{Department of Physics, The Chinese University of Hong Kong, Shatin, New Territories, Hong Kong}
\date{\today}

\begin{abstract}
Emergent generalized symmetries and the holographic principle play very important roles in understanding quantum phase transitions. In recent years, tensor-network renormalization with generalized symmetry and the corresponding fixed-point tensor formulation have been proposed as a discrete spacetime framework for reformulating conformal field theory. Nevertheless, a complete holographic formulation within this framework is still lacking, and the current formulation does not directly accommodate sector partition functions that are modular covariant in general. In this work, we formulate $\cR$-TNR, which builds the categorical constraints of a fusion category $\cR$ into local tensor spaces and preserves these constraints under Loop-TNR. For suitable choices of $\cR$, the initial tensors constructed here flow under $\cR$-Loop-TNR to stable gapless fixed points. In the Ising, tricritical Ising, and three-state Potts conformal field theory examples, a broad range of initial parameters leads to the same gapless fixed point in each example. The method resolves conformal towers by anyon sector and yields scaling dimensions, conformal spins, and magnitudes of sector-resolved structure constants with exteremly high accuracy compared to previous methods. These results realize a \emph{topological bootstrap} based on \textit{stable symmetry–topological-order} in which finite categorical and geometric input, together with an initial tensor that need not be fine-tuned, suffices to recover the infrared conformal field theory and its sector-resolved conformal data.
\end{abstract}

\maketitle

\tableofcontents

\section{Introduction}\label{sec:introduction}
The renormalization group (RG) is a central concept in modern physics, especially in the study of critical systems, where behavior changes with scale. In condensed matter physics, RG fixed points give rise to universality classes, which characterize the universal properties of continuous phase transitions. In high-energy physics, RG flows describe how interactions change with energy scale, as exemplified by asymptotic freedom in the strong interaction.
In quantum field theory, renormalizability allows predictions to remain independent of the ultraviolet cutoff and is a key property of the Standard Model. Nevertheless, the conventional momentum-space RG calculations rely on perturbative expansions and are difficult to apply at strong coupling. Tensor nework renormalization (TNR) offers a complementary real-space RG approach that acts directly on lattice partition functions. 
To tackle the challenges of real-space RG, recent advances in quantum information theory have introduced powerful new ideas. Concepts such as long-range entanglement~\cite{Chen_2010} and entanglement renormalization~\cite{Vidal_2007,Gu_2008,Gu_2009} have led to RG approaches applicable to both weakly and strongly interacting systems.

The tensor renormalization group (TRG)~\cite{Levin_2007} provides a first tensor-network formulation of real-space coarse graining for lattice models, enabling accurate calculations of free energies, specific heats, and phase-transition points~\cite{Hinczewski_2008}. Beyond these thermodynamic quantities, incorporating entanglement filtering allows the RG flow to reveal fixed-point tensor structures of gapped phases with symmetry~\cite{Gu_2009}. Despite these successes for gapped phases, TRG performs poorly at criticality. At finite bond dimension, its local, tree-like truncations leave short-range correlations trapped in closed loops, allowing these correlations and the associated truncation errors to accumulate along the RG flow. Entanglement filtering removes local correlations but does not by itself fully resolve this difficulty at criticality. TNR~\cite{Evenbly_2015} and, in particular, Loop-TNR~\cite{Yang_2017} address this limitation by treating local correlations more systematically. Loop-TNR jointly optimizes the tensors around a plaquette, so that errors in one tensor are constrained by the surrounding tensors. This collective optimization substantially improves the numerical stability of critical RG flows and enables precise extraction of conformal spectra~\cite{Bao:2019hfc,wei2023tensornetworkrenormalizationapplication}, RG flows of coupling constants~\cite{Ueda_2021,Ueda_2023_finite_bond_dimension}, and structure constants~\cite{Li_2022,ueda2023fixedpointtensorfourpointfunction}.

Nevertheless, the \emph{numerical} stability of Loop-TNR does not, however, imply the \emph{physical} stability of a critical point. A CFT represented accurately by Loop-TNR may still admit relevant perturbations allowed by its microscopic global symmetry. Reaching such a CFT then requires fine-tuning the corresponding microscopic couplings. In finite-bond-dimension calculations of an unprotected critical point, the computed spectrum may therefore drift as renormalization proceeds.
Thus, it is very important to incorporate emergent genralized symmetry into these real-space RG schemes. For ordinary global symmetries, tensor-network bonds can be labeled by irreducible representations, with fusion rules determined by the symmetry group. Generalized symmetries require a broader categorical description that also captures their fusion and recoupling data. Strange-correlator constructions~\cite{Bal_2018,Vanhove_2022} provide a useful perspective on the relation between such symmetries and a topological theory in one higher dimension. Recent fixed-point tensor constructions for rational CFTs also admit a strange-correlator interpretation ~\cite{Cheng:2023kxh}. These developments motivate a tensor RG scheme in which categorical symmetry constraints are imposed directly on local tensors and preserved at every coarse-graining step.

Mathematically, these symmetry constraints can be understood through the
symmetry--topological-order (symTO)
correspondence~\cite{Ji_2019,Ji_2020,Kong_2020_math,Kong_2020_phy,Chatterjee_2023,
Aasen_2016,Aasen_2020,Chen_2020}.
In symTO, a $(1+1)$-dimensional system with symmetry constraints provided
by a unitary fusion category (UFC) $\cR$ can be regarded as a boundary of
the $(2+1)$-dimensional topological order $\fZ(\cR)$, where $\fZ(\cR)$ is
the Drinfeld center of $\cR$.
When the low-energy boundary theory is a CFT, symmetry-preserving relevant
local operators correspond to relevant operators in its trivial bulk-anyon
sector. If this sector admits no relevant perturbations, the CFT can be
realized without fine-tuning. We call
$\fZ(\cR)$ a \emph{stable symTO for this CFT}
(Definition~\ref{def:stable-symto}).
This mechanism of symmetry protection has been studied both through the
symTO description and in lattice models with fusion-category
symmetry~\cite{chatterjee2024emergentgeneralizedsymmetrymaximal,Roychowdhury_2026}.

To impose the symmetry constraints of a unitary fusion category $\cR$ within Loop-TNR, we develop an algebraic TNR framework in which local tensors are morphisms in $\cR$. We call these morphisms $\cR$-tensors and networks built from them $\cR$-tensor networks. We extend tensor contractions and low-rank approximations to these tensors and apply the Loop-TNR scheme. The resulting algorithm, which we call $\cR$-Loop-TNR, preserves the symmetry constraints throughout the RG flow. With a suitable choice of $\cR$, the initial tensors constructed here can flow to a stable gapless fixed point without fine-tuning. At a gapless fixed point of $\cR$-Loop-TNR, the local $\cR$-tensors carry information about the CFT on the boundary of the bulk topological order $\fZ(\cR)$. To extract scaling dimensions resolved by bulk-anyon sectors, we further generalize the transfer-matrix method of Ref.~\cite{Gu_2009} to $\cR$-tensors. The corresponding sector partition functions form a modular-covariant vector, as predicted by the symTO framework~\cite{Ji_2019}.

In particular, constructing a transfer matrix requires closing two opposite legs of a
local tensor to form a tube. For an $\cR$-tensor, this closure must be
defined without assuming a braiding on $\cR$. As pointed out in
Refs.~\cite{lan2025tubecategorytensorrenormalization,YBWL2605.06661}, this
closure is induced by an appropriate coend. Such a coend can be
thought of as summing up the spaces of symmetric local tensors over the repeated
charges labeling the contracted physical bonds. 
Ref.~\cite{YBWL2605.06661}
proposed calling the arrays of spaces of tensors indexed by charges
as \emph{pro-tensors} (as opposed to ordinary tensors, which are arrays of numbers
indexed by basis vectors in charges). Therefore, the correct symmetric
transfer matrices are computed via pro-tensor contraction/trace. In general,
the pro-tensor trace differs from both the ordinary trace and the quantum trace.
When the symmetry is trivial, it reduces to the ordinary trace~\cite{YBWL2605.06661}.
These coend spaces, together with the composition induced by tensor contraction, form the tube category associated with $\cR$ \cite{HARDIMAN_2019,Hardiman_2019_graphical,lan2025tubecategorytensorrenormalization}.
The transfer matrix is therefore a morphism in this category.
By diagonalizing it, we obtain the conformal spectrum and the
corresponding anyonic eigenstates in each bulk-anyon sector.
We further show how to extract structure constants resolved by
bulk-anyon sectors from the fixed-point $\cR$-tensors.

Finally, we would like to introduce a novel concpet: \textit{the stable symTO}, that is, for certain choices of $\cR$, the categorical symmetry can protect a
stable gapless phase described by a
CFT~\cite{chatterjee2024emergentgeneralizedsymmetrymaximal}.
We will introduce a universal construction for initial
$\cR$-tensors that guaranteen a RG flow towards the
corresponding gapless fixed point.
For the examples studied here, we choose $\cR$ to be the underlying
fusion category of a unitary modular tensor category (UMTC).
Using finite categorical and geometric input, we construct a family
of initial $\cR$-tensors with a small number of trial parameters.
This construction does not require the full conformal data of the
target CFT.
These tensors may flow under $\cR$-Loop-TNR to a gapless fixed point
stabilized by $\cR$.
When such a gapless flow is obtained, we find that the conformal
spectrum remains stable over many RG steps.
Moreover, varying the initial tensor over a broad range of trial
parameters leads to the same gapless fixed point.
From the fixed-point tensors, we resolve conformal towers by anyon
sector and extract sector-resolved structure constants, obtaining
more accurate conformal data than in calculations with weaker
categorical constraints.
We call this procedure of recovering the infrared CFT and its
conformal data from finite categorical and geometric input the
\emph{topological bootstrap}%
~\cite{bao2025tensorcomplextopologicalbootstrap}.

We study the Ising CFT protected by
$\cR=\Ising$, the tricritical Ising CFT protected by
$\cR=\mathsf{6}_{7/10}$, and the three-state Potts CFT protected by
$\cR=\mathsf{6}_{4/5}$.
For comparison, we study non-stable symTOs
whose trivial sectors contain relevant perturbations. The $\Z_2$-symmetric
Ising model, the tricritical Ising CFT with $\Fib$ symmetry, and the three-state
Potts model with $\Rep(\Z_3)$ symmetry therefore require fine-tuning of their
interactions.
The data for these categories are collected in
Appendix~\ref{app:categorical-data}.
Our numerical implementation benefits from the \texttt{TensorKit.jl}
ecosystem developed by the Ghent tensor-network group and collaborators.
We use its tools for fusion-category
data~\cite{TensorKitSectors_code,CategoryData_code}, categorical tensor
operations~\cite{Devos_2025_TensorKit,jutho_2025_15243493}, and numerical
linear algebra~\cite{haegeman_2025_14911568,MatrixAlgebraKit_code}, together
with the Loop-TNR implementation in
\texttt{TNRKit.jl}~\cite{Vanthilt_2026_TNRKit,TNRKit_code}.
The categorical operations and contraction schemes are specified explicitly
in this paper, allowing the calculations to be reproduced using other
numerical implementations.
However, the algorithms for diagonalizing the transfer matrix in the tube
category and computing structure constants have not yet been implemented
in \texttt{TNRKit.jl}.

The paper is organized as follows. Section~\ref{sec:r-tnr} introduces
$\cR$-tensors and $\cR$-tensor networks, their fusion-tree representation,
and the $\cR$-Loop-TNR algorithm. Section~\ref{sec:holographic-data}
develops the holographic Turaev--Viro interpretation and explains how to
extract conformal spectra and structure constants, using the tube category
to resolve the topological sectors and derive modular covariance. Section~\ref{sec:ordinary-and-generalized-symmetry-examples} presents the fine-tuned
$\Z_2$-symmetric Ising, Fibonacci-symmetric tricritical Ising, and
$\Z_3$-symmetric three-state Potts models.
Section~\ref{sec:stable-symto} studies realizations of the same CFTs with
stable symTOs and demonstrates the topological bootstrap.
Section~\ref{sec:cft-fixed-point-mechanism} analyzes the fixed-point conditions and the
mechanism underlying the stability of these constructions.
Section~\ref{sec:conclusion-and-outlook} summarizes the results and discusses
open questions and future directions. Readers unfamiliar with symTO are encouraged to begin with
Appendix~\ref{app:symto-review}, which provides a brief overview of
the symmetry--boundary correspondence and the relations between gapped and
gapless phases. Appendix~\ref{app:categorical-data} collects the categorical
data and factorizations used in the examples, while
Appendix~\ref{app:four-by-four-structure-constants} describes the
half-$4\xt4$ patch calculation of structure constants.
Appendix~\ref{app:geometric-factors} derives the geometric factors entering
the tensor and structure-constant constructions.
Appendix~\ref{app:directional-tube-traces} gives the basis-independent
change-of-cut proof of modular $S$-covariance, and
Appendix~\ref{app:notation-table} collects the notation used throughout
the paper.

\section{\texorpdfstring{$\cR$}{R}-tensors, \texorpdfstring{$\cR$}{R}-tensor networks, and \texorpdfstring{$\cR$}{R}-TNR}
\label{sec:r-tnr}

This section assembles the local algebraic and numerical ingredients of the
$\cR$-TNR framework. Subsection~\ref{sec:r-tensor-basis} defines $\cR$-tensors
and $\cR$-tensor networks, fixes the fusion-tree basis and duality conventions,
and reviews the decompositions and truncations used in numerical calculations.
Subsection~\ref{sec:r-loop-tnr} then describes how these operations are combined
with Loop-TNR to preserve the fusion-category charge structure throughout the
RG flow.

The underlying categorical tensor operations are implemented in the
\texttt{TensorKit.jl} ecosystem, including
\texttt{CategoryData.jl}~\cite{CategoryData_code},
\texttt{TensorKitSectors.jl}~\cite{TensorKitSectors_code},
\texttt{TensorKit.jl}~\cite{Devos_2025_TensorKit,jutho_2025_15243493},
\texttt{KrylovKit.jl}~\cite{haegeman_2025_14911568}, and
\texttt{MatrixAlgebraKit.jl}~\cite{MatrixAlgebraKit_code}. The
$\cR$-Loop-TNR algorithm is implemented in
\texttt{TNRKit.jl}~\cite{Vanthilt_2026_TNRKit,TNRKit_code}. Our purpose here
is to establish a self-contained notation and algorithmic interface for the
holographic constructions and physical observables developed in
Sec.~\ref{sec:holographic-data}.

\subsection{\texorpdfstring{$\cR$}{R}-tensor network and \texorpdfstring{$\cR$}{R}-tensor operations in basis}\label{sec:r-tensor-basis}

This section fixes the categorical and numerical conventions used
throughout the paper. We first define $\cR$-tensors as morphisms in a unitary
fusion category, then construct $\cR$-tensor networks and present their
fusion-tree basis representation and basic numerical operations.

\subsubsection{\texorpdfstring{$\cR$-tensor}{R-tensor}}\label{sec:r-tensor}

Let $\cR$ be a unitary fusion category (UFC). We call a morphism in $\cR$
an $\cR$-tensor and use ``tensor'' and ``morphism'' interchangeably
throughout this work. For objects $X,Y\in\cR$, we write
$\cR(X,Y):=\operatorname{Hom}_{\cR}(X,Y)$ for the corresponding morphism
space.

We represent $\cR$-tensors using string diagrams, with morphisms read from
top to bottom \cite{Turaev:2017uxl,Devos_2025_TensorKit,Lan2024_Intro_cat}.
For example, the morphisms $f:X\to Y$ and
$g:X_1\otimes X_2\to Y_1\otimes Y_2$ are represented as
\[
    \raisebox{-28.65274pt}{\includegraphics{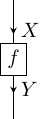}},\quad 
    \raisebox{-28.65274pt}{\includegraphics{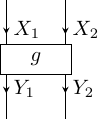}}.
\]

The tensor product $f\otimes g$ is represented by placing the diagrams
side by side.
\[
\raisebox{-28.65274pt}{\includegraphics{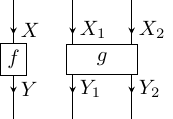}}.
\]
Composition is represented by stacking the diagrams vertically. For
$f:X\to Y$ and $h:Y\to Z$, the composite $h\circ f$ is
\[
\raisebox{-35.76593pt}{\includegraphics{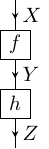}}.
\]

The dagger of $f:X\to Y$ is $f^\dagger:Y\to X$.
To bend tensor legs, we use the evaluation and coevaluation morphisms
\[
\ev_X:X\otimes X^*\longrightarrow\one,
\qquad
\coev_X:\one\longrightarrow X^*\otimes X,
\]
represented by
\[
\ev_X :=
\raisebox{-14.42636pt}{\includegraphics{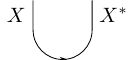}},\qquad
\coev_X :=
\raisebox{-14.42636pt}{\includegraphics{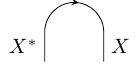}}.
\]
These satisfy the snake identities
\begin{align*}
(\ev_X\otimes\id_X)
\circ(\id_X\otimes\coev_X)
&=\id_X,\\
(\id_{X^*}\otimes\ev_X)
\circ(\coev_X\otimes\id_{X^*})
&=\id_{X^*}.
\end{align*}
The oppositely oriented coevaluation and
evaluation are $\ev_X^\dagger$ and $\coev_X^\dagger$, respectively.
We choose the standard spherical normalization, so that the two closures
of an endomorphism $f:X\to X$ give the same trace.
\begin{align*}
\Tr_X(f)
&=\ev_X\circ(f\otimes\id_{X^*})\circ\ev_X^\dagger\\
&=\coev_X^\dagger
  \circ(\id_{X^*}\otimes f)
  \circ\coev_X.
\end{align*}

\subsubsection{\texorpdfstring{$\cR$-tensor network}{R-tensor network}}\label{sec:r-tensor-network}

An $\cR$-tensor network is formed by connecting $\cR$-tensors along legs
labeled by matching objects. For concreteness, consider a square network
in the plane. Each edge carries an object of $\cR$, and each vertex
carries an $\cR$-tensor whose domain and codomain are determined by its
incoming and outgoing edges. Contracting the network means composing
these morphisms according to the string-diagram rules. In the diagram
below, morphisms are read from upper-right to lower-left.
\[
\raisebox{-61.03795pt}{\includegraphics{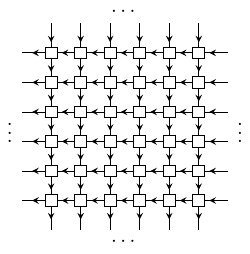}}.
\]

For an infinite tensor network, the contraction of infinitely many tensors
requires a careful definition, which we do not provide here. In practice, we
perform finite local contractions within the infinite tensor network. In numerical TNR calculations, the final physical
data, such as partition functions, correlation functions, and energy spectra,
are always computed from a very large but finite tensor network.

Physically, a 2D $\cR$-tensor network can represent the partition function of a
2D classical (or (1+1)-dimensional quantum) symmetric system. Here $\cR$ is the
system's category of symmetry charges, which we refer to simply as the
\emph{charge category}~\cite{Lan:2023oyu,Lan_2024,Meng:2024nxx,Kong_2020_phy}.
For example, for a 2D statistical system without symmetry, $\cR$ is chosen to be
the category of finite-dimensional Hilbert spaces $\Hilb$. Then each edge is
assigned a finite-dimensional Hilbert space, and each morphism is an ordinary
tensor. This reduces to an ordinary tensor network. For a system with
$G$-symmetry, where $G$ is a finite group, $\cR$ is chosen to be the category
of finite-dimensional unitary $G$-representations, denoted by $\Rep(G)$.
In this case, each bond of the
tensor network is a $G$-representation, i.e.\ carries a $G$-charge, and each
morphism in the tensor network is a $G$-symmetric operator. For a system with
anomaly-free fusion category symmetry whose symmetry algebra is the Hopf
algebra $\mathsf{H}$, the charge category is chosen to be $\Rep(\mathsf{H})$
\cite{Inamura_2022,Meng:2024nxx}.

This tensor-network construction also extends to compact Lie groups $G$.
Although $\Rep(G)$ may have infinitely many simple isomorphism classes,
each bond is a finite direct sum of simple objects with finite multiplicities.
Local tensor contractions therefore involve only finitely many simple sectors,
and low-rank approximations retain finitely many of them.
Thus, each step of the computation takes place in finite-dimensional morphism spaces.

For most statistical models, the partition function can be represented by a
translationally invariant 2D $\cR$-tensor network in which the same morphism is
assigned to every vertex and the same object to every edge. We
consider only translationally invariant tensor networks throughout this work.

\subsubsection{\texorpdfstring{Presentation of $\cR$-tensors}{Presentation of R-tensors}}\label{sec:r-tensor-presentation}

To make the $\cR$-tensor network numerically tractable, we fix a set of
representatives of simple objects, labeled by $i,j,k,l,m,n,\ldots$,
and expand all morphisms in a chosen basis.
Every object $X\in\cR$ admits a decomposition
\[
X\cong\bigoplus_{i\in\Irr\cR}\C^{N^X_i}\odot i,
\]
where $N^X_i=\dim\cR(i,X)$ and $\odot$ denotes the copower, so that
$\C^{N^X_i}\odot i$ consists of $N^X_i$ copies of $i$.
We choose orthonormal embeddings \cite{Lan2024_Intro_cat,Devos_2025_TensorKit}
\[
\left\{q^X_{i,\alpha}\in\cR(i, X):\ i\in\Irr\cR,\ \alpha = 1,2,\cdots, N^X_i\right\}
\]
satisfying
\begin{equation}\label{eq:orthonormal-completeness}
q_{i,\alpha}^{X\dagger} \circ q_{j,\beta}^X = \delta_{ij}\delta_{\alpha\beta},\quad \sum_{i\in\Irr\cR}\sum_{\alpha = 1}^{N^X_i}q_{i,\alpha}^X\circ q_{i,\alpha}^{X\dagger} = \id_{X}.
\end{equation}
Graphically, we use triangles to represent orthonormal embeddings:
\[
q_{i,\alpha}^X :=
\raisebox{-28.75275pt}{\includegraphics{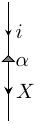}}.
\]
The orthonormality and completeness conditions are then expressed as
\[
\raisebox{-28.65274pt}{\includegraphics{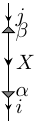}} = \delta_{ij}\delta_{\alpha\beta}\xt
\raisebox{-28.65274pt}{\includegraphics{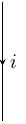}}
,\quad \sum_{i,\alpha}
\raisebox{-28.85274pt}{\includegraphics{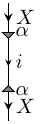}} =
\raisebox{-28.85274pt}{\includegraphics{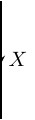}}.
\]

With this choice of basis, any morphism in $\cR$,
\[
f: A \rightarrow B
\]
can be expanded by inserting the identity decompositions of $\id_A$ and
$\id_B$:
\[
f = \sum_{i,\alpha}\sum_{j,\beta}\left(q^B_{j,\beta}\circ q_{j,\beta}^{B\dagger}\right)\circ f\circ \left(q^A_{i,\alpha}\circ q_{i,\alpha}^{A\dagger}\right).
\]
Since $i$ and $j$ are simple objects, the following morphism is proportional to
the identity morphism:
\[
q_{j,\beta}^{B\dagger} \circ f\circ q^A_{i,\alpha} = \delta_{ij}f^i_{\beta\alpha}\times \id_{i},
\]
where $f^i_{\beta\alpha}$ is a scalar. It follows that $f$ can be expanded as
\[
f = \sum_{i,\alpha\beta}f^i_{\beta\alpha} \times q^B_{i,\beta} \circ q_{i,\alpha}^{A\dagger}.
\]
Thus, a tensor (morphism) in $\cR$ is essentially a collection of matrices
$\{f^m\}_{m}$ graded by the simple objects in $\Irr\cR$. When morphisms are
composed,
\[
A \xrightarrow{ f}B \xrightarrow{g}C,
\]
the corresponding $\Irr(\cR)$-graded matrices are multiplied under the basis
expansion. The graded multi-matrix is the raw data stored on a computer.

To express orthonormal embeddings for higher-rank tensors, we fix a basis of
trivalent embedding maps:
\[
q^{i\otimes j}_{k,\alpha} = \raisebox{-28.65274pt}{\includegraphics{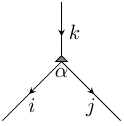}},\quad i, j, k\in\Irr\cR.
\]
We also introduce an isotopy-invariant basis defined by
\begin{equation}\label{eq:isotopy-invariant-fusion-basis}
\raisebox{-28.65274pt}{\includegraphics{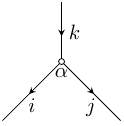}} =
\left(\frac{d_i d_j}{d_k}\right)^{\frac{1}{4}}
\raisebox{-28.65274pt}{\includegraphics{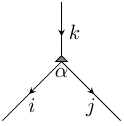}}.
\end{equation}
The inner product of this basis is defined using the quantum spherical trace,
\[
\la f|g\ra := \Tr(f^\dagger \circ g).
\]
Graphically, this reads
\[
\vcenter{\hbox{\includegraphics{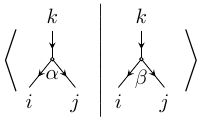}}}
=
\raisebox{-27.68953pt}{\includegraphics{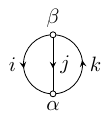}}
=
\delta_{\alpha\beta}\sqrt{d_id_jd_{k^*}}.
\]

Any higher-rank tensor can then be expanded in terms of fusion trees. For
example, a rank-4 $\cR$-tensor is a morphism in $\cR$:
\[
T: X\otimes Y \rightarrow Z \otimes W,\quad T =
\raisebox{-28.85274pt}{\includegraphics{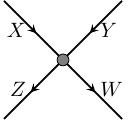}}.
\]
Unlike in the rank-2 case, the orthonormal embeddings
$\left\{q^{X\otimes Y}_{m,I}\right\}$ and
$\left\{q^{Z\otimes W}_{m,J}\right\}$ can be expanded sequentially by choosing
a basis for each bond. Each trivalent graph used as a basis vector is called a
fusion tree. We first insert an identity decomposition on each bond:
\[
\raisebox{-24.58499pt}{\includegraphics{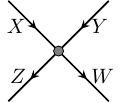}} = \sum_{ijkl \in\Irr\cR}\sum_{\alpha\beta\gamma\delta}
\raisebox{-36.62492pt}{\includegraphics{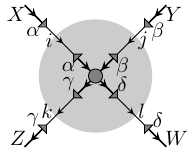}}.
\]

For each fixed choice of $i,j,k,l,\alpha,\beta,\gamma,\delta$, we can expand the
shaded part in the orthonormal fusion-tree basis:
\begin{equation}\label{eq:rank-4-fusion-tree-expansion}
\raisebox{-34.3432pt}{\includegraphics{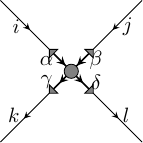}}= \sum_{m,\mu\nu}T^{m}_{(\mu,i\alpha, j\beta), (\nu,k\gamma, l\delta)}
\raisebox{-32.20934pt}{\includegraphics{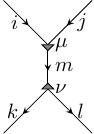}}.
\end{equation}
Therefore, in the fusion-tree basis, the tensor is expressed as
\begin{align}
    &\raisebox{-43.0791pt}{\includegraphics{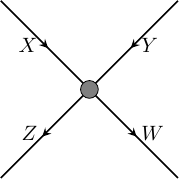}} \nonumber\\
    &= \sum_{m}\sum_{\mu,i\alpha, j\beta} \sum_{\nu,k\gamma,l\delta}T^m_{(\mu,i\alpha, j\beta), (\nu,k\gamma, l\delta)} \nonumber\\
    &\qquad \times \raisebox{-45.87216pt}{\includegraphics{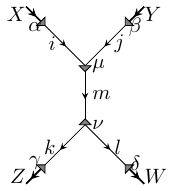}}
\label{eq:rank-4-tensor-decomposition}
\end{align}

A general $\cR$-tensor
\[
f:X := X_1\otimes X_2\otimes \cdots \otimes X_n \rightarrow Y_1\otimes Y_2\otimes \cdots \otimes Y_m =: Y
\]
can then be represented as a block-diagonal matrix in the embedding-tree basis:
\[
f = \sum_{m}\sum_{I, J}f^m_{J, I}\xt q^{Y}_{m, J} \circ q^{X\dagger}_{m, I},
\]
where $I$ and $J$ label embedding-tree basis elements for $X$ and $Y$,
respectively. For example,
\[
q^{X}_{m, I} := \raisebox{-35.86592pt}{\includegraphics{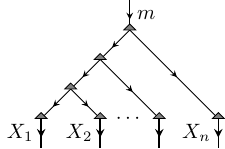}}.
\]
Here, $I$ denotes the collection of indices associated with an embedding tree.
As in the rank-2 tensor case, the $\Irr\cR$-graded multi-matrix constitutes the
raw data stored on a computer~\cite{Devos_2025_TensorKit}. Composition of
morphisms again corresponds to multiplication of the associated block matrices.

Different embedding trees differ by an $F$-move:
    \begin{multline}
    \raisebox{-33.48965pt}{\includegraphics{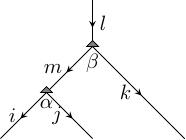}} \\
    = \sum_{n\mu\nu}F^{ijk}_{l, (n,\mu,\nu),(m,\alpha,\beta)}
    \raisebox{-33.48965pt}{\includegraphics{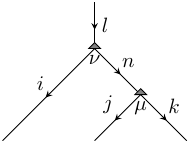}}.
    \end{multline}
Therefore the $F$-move can be used to transform between different embedding orders. 
The pentagon identity ensures that all transformation paths give a consistent result.
In UFC, $F$-matrix is further required to be unitary.

We have so far defined only vertical tensor contractions. In a tensor network,
however, tensors are not always composed vertically. Tensor transposition, or
equivalently, bending tensor legs, is therefore needed in general to define
arbitrary tensor contractions. In fusion category theory, such bending is
defined by duality morphisms.

General transpositions can be constructed from several elementary moves. Since
all $\cR$-tensors are expanded in a fusion-tree basis, it suffices to express
the duality maps for a fixed choice of simple objects and trivalent embedding
trees. We review the algorithm implemented in \texttt{TensorKit.jl} (see its
documentation).

We fix a set of representatives $\Irr\cR$ of the simple objects in $\cR$ and all tri-valent embedding basis.
Let
$i \in \cR$ be a simple object. Since $i^*$ is also simple, there is a unique
$\bar i\in\Irr\cR$ for which a unitary isomorphism
$Z_i:i^* \simeq \bar i$ exists. We denote this isomorphism by
\[
Z_i \ =\ \raisebox{-28.65274pt}{\includegraphics{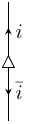}}.
\]

The unitary isomorphism $Z_i$ is defined only up to a phase, which can be chosen
so that the following condition holds:
\begin{equation}\label{eq:first-duality-identity}
    \raisebox{-22.96227pt}{\includegraphics{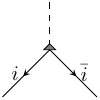}} = \frac{1}{\sqrt{d_i}} \xt
\raisebox{-21.53955pt}{\includegraphics{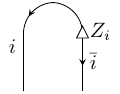}}
\end{equation}
Taking the dagger and the dual of this isomorphism yields two further
isomorphisms:
\[
\raisebox{-28.65274pt}{\includegraphics{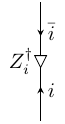}},\quad 
\raisebox{-28.65274pt}{\includegraphics{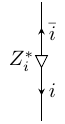}}\ =\ 
\raisebox{-28.65274pt}{\includegraphics{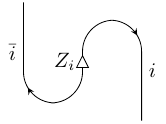}}.
\]
Since $Z_i$ can be defined for every $i$, we find that $Z_i^*$ is proportional
to $Z_{\bar{i}}$. We denote the proportionality phase by $\varkappa_i$:
\[
\raisebox{-28.65274pt}{\includegraphics{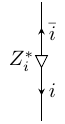}}\ = \ 
\varkappa_i\times
\raisebox{-28.65274pt}{\includegraphics{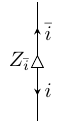}}.
\]
Consequently, Eq.~\eqref{eq:first-duality-identity} implies the other two
identities:
\begin{equation}\label{eq:other-two-dual}
    \raisebox{-26.61208pt}{\includegraphics{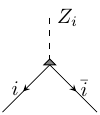}} 
=
\frac{1}{\sqrt{d_i}}
\raisebox{-21.53955pt}{\includegraphics{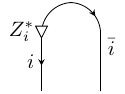}}
=
\frac{\varkappa_i}{\sqrt{d_i}}
\raisebox{-21.53955pt}{\includegraphics{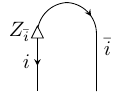}}.
\end{equation}
These identities relate embedding maps to duality maps.

Using these relations, we can fix the phase factor by considering the diagram
\[
\raisebox{-85.55823pt}{\includegraphics{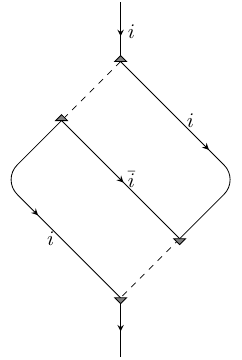}}.
\]
On the one hand, using the $F$-move, the diagram evaluates to
$F^{i\bar{i}i}_{i, (\one, 1,1), (\one, 1, 1)} \xt \id_{i}$. On the other hand,
taking the dagger of
\[
\raisebox{-22.96227pt}{\includegraphics{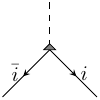}} = \frac{1}{\sqrt{d_i}} \xt
\raisebox{-21.53955pt}{\includegraphics{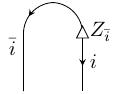}},
\]
the diagram simplifies to
\[
\frac{\varkappa_i}{d_i}\xt
\raisebox{-85.55823pt}{\includegraphics{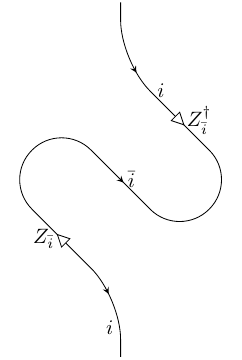}} = \frac{\varkappa_i}{d_i}\xt \id_i.
\]
Therefore, $\varkappa_i$ is the phase of the $F$-symbol
$F^{i\bar{i}i}_{i, (\one, 1,1), (\one, 1, 1)}$, while $d_i$ is the inverse of
its absolute value. The phase factor $\varkappa_i$ is called the
Frobenius--Schur phase.

We now consider how the duality map transforms the basic trivalent embedding
map. The first transformation is called the $B$-move in Kitaev's
work~\cite{Kitaev_2006}.
\[
\raisebox{-28.65274pt}{\includegraphics{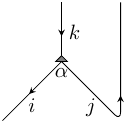}} = 
\sum_{\beta}\sqrt{\frac{d_k}{d_i}}B^{ij}_{k, \beta\alpha}
\raisebox{-31.90134pt}{\includegraphics{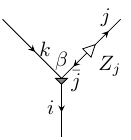}}.
\]
The normalization factor $\sqrt{\frac{d_k}{d_i}}$ is chosen by convention so
that the $B$-symbol is a unitary matrix.

The $B$-symbol can be obtained from the $F$-symbol as described in
\texttt{TensorKit.jl}'s documentation. Taking the dagger of
Eq.~\eqref{eq:first-duality-identity} gives
\begin{equation}\label{eq:duality-no-fs}
    \raisebox{-21.23157pt}{\includegraphics{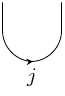}} 
 = \sqrt{d_j}\xt
\raisebox{-28.65274pt}{\includegraphics{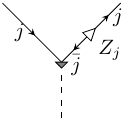}}.
\end{equation}

Substituting this expression into the left-hand side and using the $F$-move and
its unitarity gives
\[
\sqrt{\frac{d_k}{d_i}}B^{ij}_{k,\beta\alpha} = \sqrt{d_j}F^{ij\bar{j}}_{i, (\one, 1, 1), (k, \alpha, \beta)}.
\]

Similarly, the $A$-move is defined by
    \[
    \raisebox{-28.65274pt}{\includegraphics{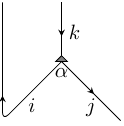}} = \sum_{\beta}\sqrt{\frac{d_k}{d_j}}A^{ij}_{k, \beta\alpha}
    \raisebox{-28.65274pt}{\includegraphics{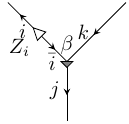}}.
    \]
Taking the dagger of Eq.~\eqref{eq:other-two-dual} and then substituting
$i \leftarrow \bar{i}$ gives
\begin{equation}\label{eq:duality-fs}
    \raisebox{-19.91212pt}{\includegraphics{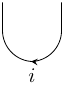}} 
 = \sqrt{d_{\bar{i}}}\varkappa_{\bar{i}}\xt
\raisebox{-28.65274pt}{\includegraphics{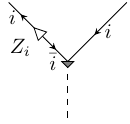}}.
\end{equation}
Substituting this expression into the left-hand side and using the dagger of the
$F$-move gives
\[
\sqrt{\frac{d_k}{d_j}}A^{ij}_{k,\beta\alpha} = \sqrt{d_i}\varkappa_{\bar{i}} \left(F^{\bar{i}ij}_{j, (k, \alpha,\beta), (\one, 1, 1)}\right)^*.
\]

When the bent leg is already dualized, we take the dagger of the inverse
bend. Unitarity gives the inverse matrices $A^\dagger$ and $B^\dagger$,
and taking the dagger of the diagram also conjugates their coefficients.
The resulting expressions therefore contain $A$ and $B$ with exchanged
multiplicity indices.
\begin{align*}
\raisebox{-28.65274pt}{\includegraphics{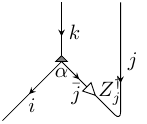}}
&=\sum_{\beta}\sqrt{\frac{d_k}{d_i}}B^{kj}_{i,\alpha\beta}
\raisebox{-28.65274pt}{\includegraphics{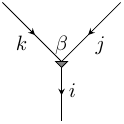}},\\
\raisebox{-30.35362pt}{\includegraphics{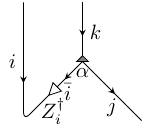}}
&=\sum_{\beta}\sqrt{\frac{d_k}{d_j}}A^{ik}_{j,\alpha\beta}
\raisebox{-28.65274pt}{\includegraphics{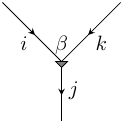}}.
\end{align*}
Here the vertices on the right are $(q^{k\otimes j}_{i,\beta})^\dagger$
and $(q^{i\otimes k}_{j,\beta})^\dagger$, respectively. Equivalently,
Eqs.~\eqref{eq:duality-fs} and~\eqref{eq:duality-no-fs} give
\begin{align*}
\sqrt{\frac{d_k}{d_i}}B^{kj}_{i,\alpha\beta}
&=\sqrt{d_j}\varkappa_{\bar j}
F^{i\bar j j}_{i,(\one,1,1),(k,\alpha,\beta)},\\
\sqrt{\frac{d_k}{d_j}}A^{ik}_{j,\alpha\beta}
&=\sqrt{d_i}
\left(F^{i\bar i j}_{j,(k,\alpha,\beta),(\one,1,1)}\right)^*.
\end{align*}
Thus the right-leg bend carries a Frobenius--Schur phase
$\varkappa_{\bar j}$, whereas the left-leg bend has no explicit phase factor
in its $F$-symbol expression.

All basis transformations associated with the elementary duality moves are
therefore expressed in terms of $F$-symbols.

For a generic $\cR$-tensor
\[
f:X_1\otimes X_2\otimes \cdots \otimes X_n \rightarrow Y_1\otimes Y_2\otimes \cdots \otimes Y_m,
\]
all transpositions of $f$ can be generated by duality transformations:
\[
\raisebox{-19.96135pt}{\includegraphics{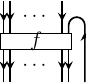}},\quad
\raisebox{-19.96135pt}{\includegraphics{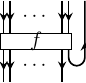}},\quad
\raisebox{-19.96135pt}{\includegraphics{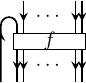}},\quad
\raisebox{-19.96135pt}{\includegraphics{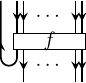}}.
\]
In the fusion-tree basis, these transpositions become the following four types
of moves in \texttt{TensorKit.jl}.
\begin{itemize}
    \item \texttt{bendright}:
    \[
    \raisebox{-64.41866pt}{\includegraphics{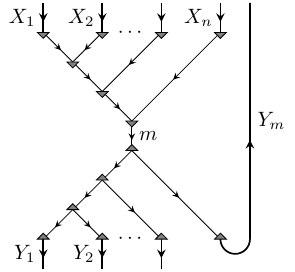}}.
    \]
This elementary move can be calculated using a single $B$-move.
    \item \texttt{bendleft}:
    \[
    \raisebox{-64.41866pt}{\includegraphics{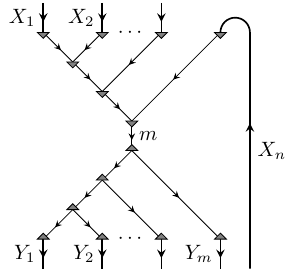}}.
    \]
This move is also elementary and can be calculated by taking the dagger of a
$B$-move.
\item \texttt{foldright}:
\[
    \raisebox{-64.41866pt}{\includegraphics{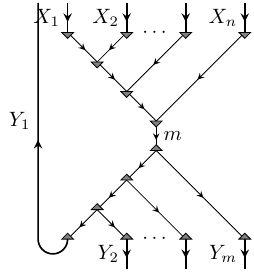}}.
    \]
This move is more involved. One first performs a sequence of $F$-moves to leave
the first leg dangling, then performs an $A$-move, and finally applies a
sequence of daggered $F$-moves to return the fusion tree to its standard form.
    \item \texttt{foldleft}:
  \[
    \raisebox{-64.41866pt}{\includegraphics{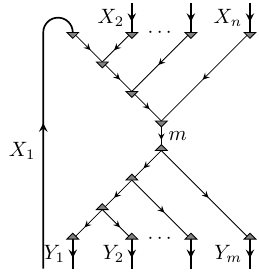}}.
    \]
    Similarly, \texttt{foldleft} can be calculated using a sequence of daggered
    $F$-moves, a single daggered $A$-move, and a sequence of $F$-moves.
\end{itemize}

Having established how to calculate duality maps, we can evaluate arbitrary
$\cR$-tensor contractions by applying a tensor transposition, performing a
vertical tensor contraction, and then applying another tensor transposition.
For example,
\[
\raisebox{-21.53955pt}{\includegraphics{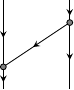}}\  = \ 
\raisebox{-21.53955pt}{\includegraphics{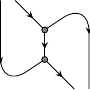}}.
\]

\subsubsection{Decomposition and truncation}\label{sec:decomposition-and-truncation}

Another basic tensor operation is tensor decomposition. For simplicity, we
consider only singular value decomposition (SVD) here. For further details, see
the implementations in \texttt{MatrixAlgebraKit.jl}~\cite{MatrixAlgebraKit_code}.

Suppose an $\cR$-tensor $f:A \rightarrow B$ is expressed in multi-matrix form as
\[
f = \sum_{m,IJ}f^m_{I,J} \xt q^B_{m, I} \circ q^{A,\dagger}_{m, J}.
\]
The SVD of $f$ is defined by performing a sectorwise SVD on each $f^m$:
\[
f^m_{I, J} = \sum_{K = 1}^{\mathrm{rank}(f^m)}U^m_{I,K}\xt \sigma^m_{K}\xt V^m_{K,J}.
\]
We then define a new object
\[
M = \bigoplus_{m\in\Irr\cR}\C^{\mathrm{rank}(f^m)} \odot m
\]
and the morphisms
\begin{align*}
    U & := \sum_{m,IK}U^m_{I,K} \xt q^B_{m, I} \circ q^{M\dagger}_{m, K},\\
    S & := \sum_{m, K}\sigma^m_K \xt q^{M}_{m, K} \circ q^{M\dagger}_{m, K},\\
    V & := \sum_{m, KJ}V^m_{K,J} \xt q^{M}_{m, K}\circ q^{A\dagger}_{m, J},
\end{align*}
satisfying
\begin{gather*}
f = U \circ S\circ V, \quad U^\dagger\circ U = \id_{M},\\
V\circ V^\dagger = \id_M,\quad S^\dagger = S > 0.
\end{gather*}
Graphically, the decomposition is
\[
\raisebox{-46.725pt}{\includegraphics{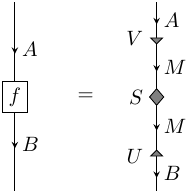}}.
\]
Other decompositions, such as the QR decomposition, can be defined similarly by
performing a sectorwise QR decomposition and then combining the decomposed
matrices from different sectors.

Bond dimension is an important measure of the complexity of a tensor-network
algorithm, and serves as an input constraint when performing tensor decomposition.
We therefore also need to define bond dimension for an $\cR$-tensor.
Here, the quantum dimension of the bond objects replaces the usual bond
dimension of an ordinary tensor.

To see why, consider an $\cR$-tensor with $n$ legs, each labeled by an object
$V\in\cR$. For any finite $n$, we have the following well-known inequality:
\[
\dim\cR(\one, V^{\otimes n}) \leq d_V^n,
\]
where $d_V$ is the quantum dimension of $V\in\cR$. Moreover, as $n$ goes to infinity, one has the equation
\[
\limsup_{n\rightarrow \infty}\left(\dim\cR(\one, V^{\otimes n})\right)^{\frac{1}{n}} = d_V.
\]
Thus, as $n$ tends to infinity, the dimension of an $\cR$-tensor scales as
$d_V^n$. Therefore, by
analogy with ordinary tensors, the quantum dimension correctly captures the
scaling of the number of degrees of freedom of an $\cR$-tensor as the number of
legs increases. In practical calculations, however, $n$ is typically not very
large, so the number of degrees of freedom is generally smaller than $d_V^n$.

Recall that the purpose of the SVD is to obtain the optimal low-rank
approximation of a matrix in the Frobenius norm. To generalize the truncated SVD to an
$\cR$-matrix, we first compute the Frobenius norm, which is defined by the
quantum trace:
\[
\|f\|^2_F := \Tr(f^\dagger \circ f) = \Tr(S^2) = \sum_{m,K}d_m\xt \left(\sigma^m_{K}\right)^2.
\]
Graphically,
\[
\|f\|^2_F =
\raisebox{-67.914pt}{\includegraphics{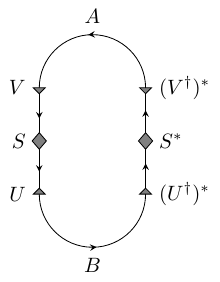}}
=\raisebox{-16.85pt}{\includegraphics{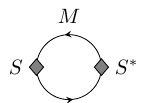}}.
\]
The right-hand strand carries the dual morphisms $(V^\dagger)^*$,
$S^*=(S^\dagger)^*$, and $(U^\dagger)^*$. The second equality uses
$U^\dagger\circ U=\id_M$ and $V\circ V^\dagger=\id_M$, leaving the
closure that evaluates to $\Tr(S^2)$.
Therefore, the optimal rank-$\chi$ approximation is obtained by the following procedure.
\begin{enumerate}
    \item For each sector $m$, collect the singular values $\sigma_K^m$ and
    sort them in descending order.
    \item Suppose we retain the first $r_m$ singular values in each sector
    $m$. The squared Frobenius norm of the truncated tensor is then
    \[
    \|f_{\text{trunc}}\|_F^2
    =\sum_m d_m\sum_{K=1}^{r_m}(\sigma_K^m)^2.
    \]
    \item Choose the retained ranks by solving
    \[
    \{r_m\}_{m\in\Irr\cR}
    \in\underset{\{r_m\}:\,\sum_m d_mr_m\leq\chi}{\arg\max}
    \|f_{\text{trunc}}\|_F^2.
    \]
    Define the truncated space
    \[
    M_{\text{trunc}} = \bigoplus_{m\in\Irr\cR}\C^{r_m} \odot m
    \]
    and the truncated tensors
\begin{align*}
    U_{\text{trunc}} & := \sum_{m,I}\sum_{K = 1}^{r_m}U^m_{I,K} \xt q^B_{m, I} \circ q^{M_{\text{trunc}}\dagger}_{m, K},\\
    S_{\text{trunc}} & := \sum_{m}\sum_{K = 1}^{r_m}\sigma^m_K \xt q^{M_{\text{trunc}}}_{m, K} \circ q^{M_{\text{trunc}}\dagger}_{m, K},\\
    V_{\text{trunc}} & := \sum_{m, J}\sum_{K = 1}^{r_m}V^m_{K,J} \xt q^{M_{\text{trunc}}}_{m, K}\circ q^{A\dagger}_{m, J}.
\end{align*}
\item The squared truncation error is
\[
\|f-f_{\text{trunc}}\|_F^2
=\sum_m d_m\sum_{K=r_m+1}^{\mathrm{rank}(f^m)}(\sigma_K^m)^2,
\]
where $f_{\text{trunc}}=U_{\text{trunc}}\circ S_{\text{trunc}}\circ V_{\text{trunc}}$.
\end{enumerate}

This procedure could be improved by a lazy algorithm that computes the
truncated SVD directly, avoiding the dense sectorwise SVDs followed by
truncation.

\subsection{\texorpdfstring{$\cR$}{R}-Loop-TNR}\label{sec:r-loop-tnr}
We now generalize Loop-TNR to $\cR$-tensor networks.
Introduced in Ref.~\cite{Yang_2017}, Loop-TNR is a state-of-the-art
tensor-network renormalization scheme that yields accurate and stable RG flows
for critical theories.
\texttt{TensorKit.jl}~\cite{Devos_2025_TensorKit,jutho_2025_15243493}
provides the general $\cR$-tensor operations described above.
Its interface allows computations with general $\cR$-tensor networks to be
written in much the same way as computations with ordinary tensor networks.
\texttt{TNRKit.jl}~\cite{Vanthilt_2026_TNRKit,TNRKit_code} builds on these
operations, so that the same Loop-TNR implementation applies to a general
UFC $\cR$ once its categorical data are supplied through the TensorKit
interface. Following Julia's generic programming model, the TensorKit
ecosystem separates the data defining $\cR$ from the numerical algorithms
that act on its tensors.

A coarse-graining step of $\cR$-Loop-TNR consists of two substeps. The first
approximates a two-site-unit-cell MPS on a loop by a lower-rank
single-site-unit-cell MPS.
The corresponding loop-optimization problem is
\begin{equation}\label{eq:loop}
\operatorname*{argmin}_{S_1,S_2,\cdots,S_8}
\left\|
\raisebox{-32.12853pt}{\includegraphics{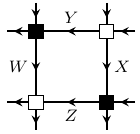}}-
\raisebox{-36.13801pt}{\includegraphics{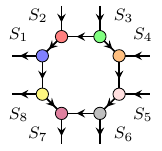}}\right\|_F^2.
\end{equation}

The loop-optimization problem can be solved approximately using alternating
least squares (ALS). We initialize the ALS tensors by decomposing each rank-4
tensor with an SVD in each of the two lattice directions and truncating the
resulting singular-value spectra to the cutoff bond dimension $\chi$.

In the first direction, the truncated SVD of a rank-4 tensor $T$ is obtained by
applying the procedure in Sec.~\ref{sec:decomposition-and-truncation}:
\[
\raisebox{-28.85274pt}{\includegraphics{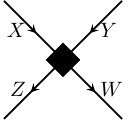}} \approx
\raisebox{-44.07918pt}{\includegraphics{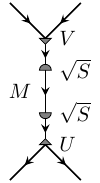}}
\]
where
\[
U^\dagger \circ  U = \id_{M},\quad V \circ V^\dagger = \id_{M},\quad S^\dagger = S,\quad S> 0.
\]

To perform the SVD in the orthogonal direction, we first transpose the fusion
tree into the corresponding basis:
\begin{align}
 &\quad \sum_{m}\sum_{\mu,i\alpha, j\beta} \sum_{\nu,k\gamma,l\delta}T^m_{(\mu,i\alpha, j\beta), (\nu,k\gamma, l\delta)} \nonumber\\
 &\qquad \times \raisebox{-65.43341pt}{\includegraphics{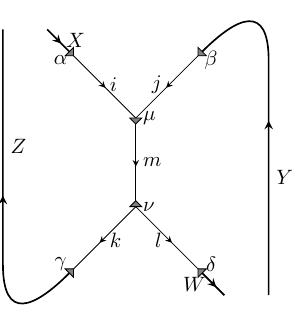}} \nonumber\\
 &= \sum_{m}\sum_{\mu,i\alpha, j\beta} \sum_{\nu,k\gamma,l\delta}T^m_{(\mu,i\alpha, j\beta), (\nu,k\gamma, l\delta)} \nonumber\\
 &\qquad \times \sum_{\xi}\sqrt{\frac{d_m}{d_l}}A^{kl}_{m, \xi\nu}\sum_{n,\zeta,\lambda}F^{\bar{k}ij}_{l, (m, \mu, \xi), (n,\zeta,\lambda)} \nonumber\\
 &\qquad \times \sum_{\epsilon}\sqrt{\frac{d_l}{d_n}}\left(B^{nj}_{l,\epsilon\lambda}\right)^* \nonumber\\
 &\qquad \times \raisebox{-65.43341pt}{\includegraphics{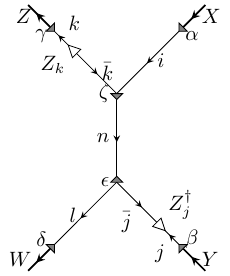}}
\end{align}

We therefore define a set of matrices $\tilde{T}^n$ with elements
\begin{multline}
    \tilde{T}^n_{(\zeta, \bar{k}\gamma, i\alpha), (\epsilon, l\delta, \bar{j}\beta)} := \sum_{m,\mu\nu,\xi\lambda}\sqrt{\frac{d_m}{d_n}}T^m_{(\mu,i\alpha, j\beta), (\nu,k\gamma, l\delta)}\\
    A^{kl}_{m, \xi\nu}F^{\bar{k}ij}_{l, (m,\mu,\xi), (n, \zeta, \lambda)}\left(B^{nj}_{l,\epsilon\lambda}\right)^*.
\end{multline}
Taking an SVD of each matrix $\tilde{T}^n$ yields a second decomposition, with a
new emergent object $\tilde{M}\in\cR$ on the internal leg:
\[
\raisebox{-28.85274pt}{\includegraphics{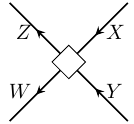}} \approx
\raisebox{-44.07918pt}{\includegraphics{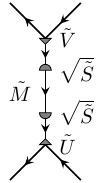}}.
\]
The objects $M$ and $\tilde{M}$ arise from decompositions of the same tensor
along two orthogonal directions and need not coincide.
The morphisms $\tilde{U}$, $\tilde{V}$, and $\tilde{S}$ satisfy the analogous identities
\[
\tilde{U}^\dagger \circ  \tilde{U} = \id_{\tilde{M}},\quad \tilde{V} \circ \tilde{V}^\dagger = \id_{\tilde{M}},\quad \tilde{S}^\dagger = \tilde{S},\quad \tilde{S}> 0.
\]

After the SVD initialization, we minimize the loop cost function using ALS, as
in ordinary Loop-TNR. Varying the cost function in Eq.~\eqref{eq:loop} with
respect to one $S_i$ while holding the other seven tensors fixed yields a linear
equation for $S_i$. We solve these equations successively for
$i=1,2,\ldots,8$. For example, the equation for $S_4$ is
\begin{align}
&\raisebox{-37.31089pt}{\includegraphics{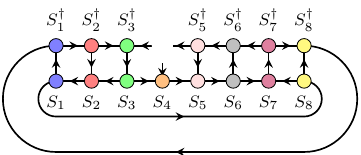}} \nonumber\\
&=
\raisebox{-37.31089pt}{\includegraphics{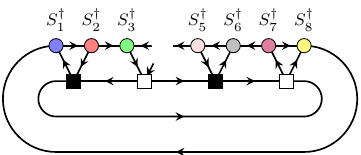}}
\end{align}
As an equation between morphisms in $\cR$, this becomes
\[
\cN_4\circ S_{4} = \cW_4,
\]
where the morphisms $\cN_4$, $S_4$, and $\cW_4$ are specified by the
following diagrams, including the transpositions of their external legs:
\begin{align*}
\cN_4 &= 
\raisebox{-52.61557pt}{\includegraphics{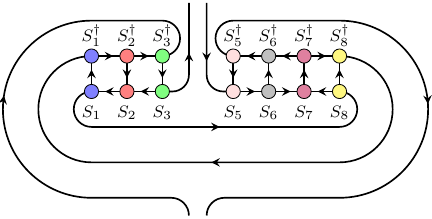}}\ ,\\
S_4 &= 
\raisebox{-29.85253pt}{\includegraphics{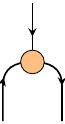}},\\
\cW_4 &= \raisebox{-52.61557pt}{\includegraphics{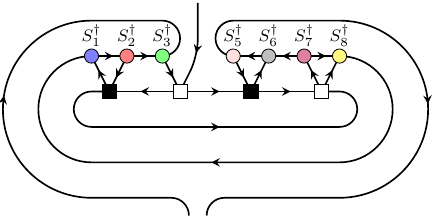}}.
\end{align*}
Decomposing $\cN_4$, $S_4$, and $\cW_4$ into their sectorwise matrices reduces
the morphism equation to
\[
\cN_4^i S_4^i = \cW_4^i,\quad \forall i \in\Irr\cR,
\]
which can be solved sector by sector using standard linear-algebra routines.

In the second and final substep, four rank-3 tensors produced by the loop
optimization are contracted into a new rank-4 tensor. For $\cR$-tensors, these
contractions are simply compositions of morphisms:
\begin{equation}\label{eq:next-ta-tb}
    \begin{aligned}
\raisebox{-28.85274pt}{\includegraphics{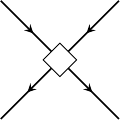}} &=
    \raisebox{-28.85274pt}{\includegraphics{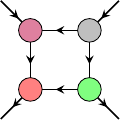}},\\
\raisebox{-28.85274pt}{\includegraphics{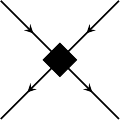}} &=
    \raisebox{-28.85274pt}{\includegraphics{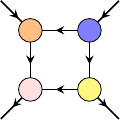}}.
\end{aligned}
\end{equation}

\section{Holographic interpretation and physical data}
\label{sec:holographic-data}

The local $\cR$-tensor operations and the $\cR$-TNR algorithm are collected in
Sec.~\ref{sec:r-tnr}.  We now show that $\cR$-TNR has a natural global holographic interpretation
that motivates the extraction of universal bulk observables.

Section~\ref{sec:holographic-picture} reorganizes the global contraction of an
$\cR$-tensor network as a Turaev--Viro evolution acting on a specific state defined by the dynamics, thereby identifying the network as
a boundary condition for $\fZ(\cR)$ and its relative partition function as a
vector in the torus ground-state space.

The remaining constructions are operational consequences of this picture.
In particular, an $\cR$-tensor
network on a tube corresponds to a time-evolution operator on a solid tube,
with the topological order $\fZ(\cR)$ in its bulk.
Section~\ref{sec:transfer-morphism-diagonalization} generalizes the transfer
matrices of ordinary $\Hilb$-tensor networks to tube-category morphisms for
$\cR$-tensors. Bulk anyon insertions resolve their spectra into
$\Irr\fZ(\cR)$ sectors, giving an algebraic realization of this picture.

Section~\ref{sec:modular-covariance}
derives the modular $T$- and $S$-covariance of these sector-resolved partition
functions. Finally, Sec.~\ref{sec:structure-constants} extends fixed-point
three-point-function methods to $\cR$-tensor networks, yielding sector-resolved structure constants.

\subsection{Holographic picture}\label{sec:holographic-picture}
We now describe the holographic interpretation of an $\cR$-tensor network,
which also appears in the TQFT and symTFT tensor-network constructions of
Refs.~\cite{chen2024cftdtqftd1holographictensor,Cheng:2023kxh,
HungCFTFactory2025,JiChenYangHung2024}.

Instead of contracting an $\cR$-tensor network locally in $\cR$-TNR, we change
the viewpoint by reordering the evaluation. We evaluate the graph states
globally. To avoid
subtleties associated with an infinite tensor network, we consider a finite
network on a torus.
This viewpoint makes explicit
that the $\cR$-tensor network defines a boundary condition for the
$\fZ(\cR)$ topological order, while its relative partition function is a vector
in the corresponding torus ground-state space.

Let $\Gamma$ denote a square network on the torus.
The $\cR$-tensor network on $\Gamma$ can be represented schematically by
\[
Z^{\text{sym}}_{\cR}=
\raisebox{-49.9923pt}{\includegraphics{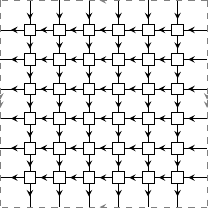}}.
\]
The dashed gray arrows indicate how opposite boundaries are identified to form
the torus, and tensor legs on identified boundaries are contracted pairwise.

For simplicity, consider a tensor network with fewer $\cR$-tensors.
Inserting the completeness relation in
Eq.~\eqref{eq:orthonormal-completeness} on every bond gives
\[
\vcenter{\hbox{\includegraphics[width=.96\columnwidth]{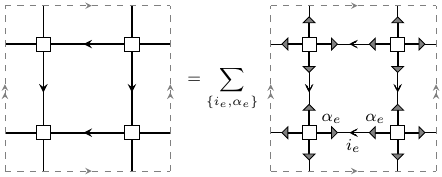}}}.
\]
On each bond $e$ carrying $X_e$, the paired triangles represent
$q^{X_e}_{i_e,\alpha_e}$ and its dagger. The sum runs over the simple
labels $i_e$ and their corresponding multiplicity indices $\alpha_e$
on all bonds. Only one bond's labels are shown. Opposite boundaries
are identified as in the preceding diagram.

As in Eq.~\eqref{eq:rank-4-fusion-tree-expansion}, we further decompose each rank-4
vertex in the fusion-tree basis. To match the convention used in the
Turaev--Viro state sum, we use the isotopy-invariant basis defined in
Eq.~\eqref{eq:isotopy-invariant-fusion-basis}:
\begin{equation}\label{eq:holographic-isotopy-fusion-tree-expansion}
\raisebox{-34.3432pt}{\includegraphics{figures/tikz/r-tensor-channel-expansion-lhs.pdf}}
=\sum_{m,\mu\nu}\widetilde T^{m}_{(\mu,i\alpha,j\beta),(\nu,k\gamma,l\delta)}
\vcenter{\hbox{\includegraphics{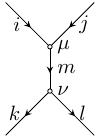}}}.
\end{equation}
The coefficients in the two bases are related by
\[
\widetilde T^{m}_{(\mu,i\alpha,j\beta),(\nu,k\gamma,l\delta)}
=\frac{\sqrt{d_m}}{(d_i d_j d_k d_l)^{1/4}}
T^{m}_{(\mu,i\alpha,j\beta),(\nu,k\gamma,l\delta)}.
\]

For the bulk time evolution discussed below, it is convenient to draw
the network in three dimensions, with a surface germ attached to each
string. This germ is used to indicate the framing of strings in the time
direction. These surface germs indicate the world sheets swept out by the
string-net strings during time evolution. In this representation, a
four-valent vertex labeled by $(m,\mu,\nu)$ resolves into two junctions
labeled by $\mu$ and $\nu$, joined by a face labeled by $m$,
\[
\vcenter{\hbox{\includegraphics[scale=.90]{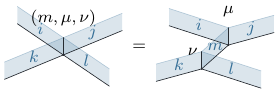}}}.
\]
For each fixed choice of $(i,j,k,l,m,\mu,\nu)$, we collect the
fusion-tree expansion coefficients of
Eq.~\eqref{eq:holographic-isotopy-fusion-tree-expansion} into an ordinary
rank-4 tensor $\mathbf{T}^{m\mu\nu}_{ijkl}$ with multiplicity
indices $\alpha,\beta,\gamma,\delta$:
\begin{equation}\label{eq:ordinary-rank-4-tensor}
\left(\mathbf{T}^{m\mu\nu}_{ijkl}\right)_{\alpha\beta\gamma\delta}
:= \widetilde T^{m}_{(\mu,i\alpha,j\beta),(\nu,k\gamma,l\delta)}.
\end{equation}

Let $\Gamma$ denote the graph of the original square tensor network on
the torus, with a vertex for each tensor and an edge for each bond.
After resolving each vertex in the fusion-tree basis, the resulting
colored trivalent graph is specified by
\[
\Lambda=\bigl(\{i_e\}_e,\{(m_v,\mu_v,\nu_v)\}_v\bigr).
\]
Here $i_e\in\Irr\cR$ labels the original bond $e$.
At each four-valent vertex $v$ of $\Gamma$, $m_v\in\Irr\cR$
labels the internal edge introduced by the resolution, and $\mu_v,\nu_v$
label the two trivalent basis morphisms.

Each colored trivalent graph defines a graph state $|\Gamma,\Lambda\ra$.
The four-valent notation for $|\Gamma,\Lambda\ra$ and its trivalent
resolution are illustrated below, with opposite resolution directions at
neighboring vertices. Labels are shown for two representative neighboring
vertices and their incident bonds.
\[
\vcenter{\hbox{\includegraphics[width=.96\columnwidth]{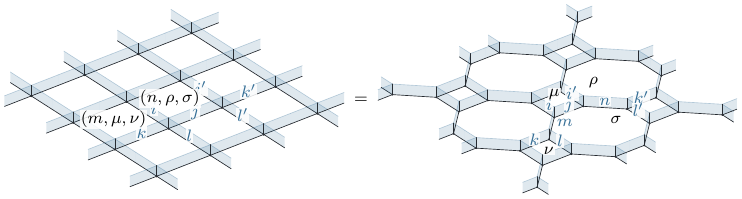}}}
\]
These states span the colored graph space with
a fixed underlying square graph $\Gamma$:
\begin{equation}
\cH_{\Gamma}:=\mathrm{span}_{\C}
\bigl\{|\Gamma,\Lambda\ra: \Lambda\bigr\}.
\end{equation}

For fixed $\Lambda$, let $\mathbf T_v(\Lambda)$ denote the tensor
$\mathbf T^{m_v\mu_v\nu_v}_{ijkl}$ defined in
Eq.~\eqref{eq:ordinary-rank-4-tensor}, where $i,j,k,l$ are the labels
assigned by $\Lambda$ to the four bonds incident on $v$, ordered according
to the chosen local fusion tree.
Contracting the multiplicity indices along the bonds of $\Gamma$ gives
\begin{equation}
A_\Gamma(\Lambda)
:=\tTr_\Gamma\!\left[
\bigotimes_{v\in V(\Gamma)}\mathbf T_v(\Lambda)
\right].
\end{equation}

The tensor network therefore defines a state frozen to the graph $\Gamma$:
\begin{equation}\label{eq:frozen-graph-state}
|Z_\Gamma\ra
=\sum_{\Lambda}
A_\Gamma(\Lambda)
|\Gamma,\Lambda\ra
\in\cH_{\Gamma}.
\end{equation}

This decomposition into graph states and their coefficients is illustrated by
\[
\vcenter{\hbox{\includegraphics[width=.96\columnwidth]{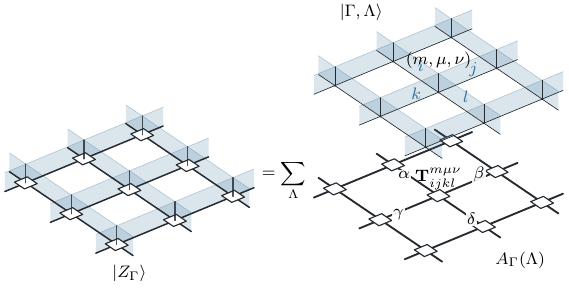}}}.
\]
The graph states $|\Gamma,\Lambda\ra$ form a spanning set of
$\cH_\Gamma$. For each fixed $\Lambda$, the ordinary tensor network in
the lower layer contracts to $A_\Gamma(\Lambda)$, the coefficient of
$|\Gamma,\Lambda\ra$ in the expansion of $|Z_\Gamma\ra$.

At this stage, however, the contraction is only partially complete.
In the graphical picture, composing $\cR$-tensors amounts to composing
colored graph states. The resulting evaluation map~\cite{Lan_2014} is the
wave-function renormalization map of the $\cR$ string-net model
\cite{Levin_2005,Chen_2010,Lan_2014}. These renormalization moves are
generated by local $F$-moves (or $2$--$2$ moves) and bubble moves
(or $2$--$0$ moves).

An $F$-move acts on two neighboring trivalent vertices. In the bulk
picture, it is implemented by a local Turaev--Viro evolution operator
associated with a tetrahedron,
\[
\vcenter{\hbox{\includegraphics[scale=.90]{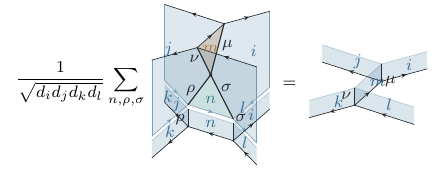}}}.
\]
In this relation, we sum over the internal channel $(n,\rho,\sigma)$ on the
left and fix the output channel $(m,\mu,\nu)$ on the right.
The embedding-tree basis is orthogonal but not normalized with respect to the
metric induced by the spherical trace. We compute the inner product by
gluing one tree to the adjoint of the other and taking the spherical trace,
\begin{equation}\label{eq:holographic-fusion-tree-metric}
\vcenter{\hbox{\includegraphics[width=.88\columnwidth]{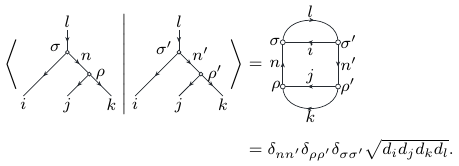}}}
\end{equation}
Each two-vertex tree has squared norm $\sqrt{d_i d_j d_k d_l}$.
We therefore include the inverse metric $1/\sqrt{d_i d_j d_k d_l}$ when
summing over the intermediate basis states.
The labeled tetrahedral branching is evaluated through its spherical link,
\[
\vcenter{\hbox{\includegraphics[width=.96\columnwidth]{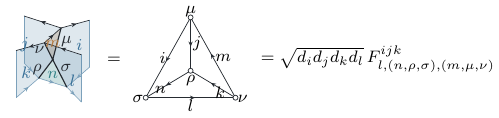}}}.
\]
Here we use the embedding-tree convention $l\to i\otimes j\otimes k$
for the graph states. With the gluing weight above,
the tetrahedral amplitude gives the coefficient
$F^{ijk}_{l,(n,\rho,\sigma),(m,\mu,\nu)}$ in the first diagram.

A bubble move replaces two trivalent vertices connected by two edges with
a single edge joining their remaining external legs. In the
handle-decomposition description of the Turaev--Viro state sum, the
corresponding local evolution is represented by a $1$-handle
\cite{Aasen_2019}.
The three-dimensional local piece in bubble removal is evaluated through
its theta link,
\[
\vcenter{\hbox{\includegraphics[scale=.90]{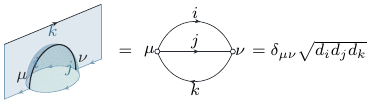}}}.
\]
The closed graphs use the isotopy-invariant vertices of
Eq.~\eqref{eq:isotopy-invariant-fusion-basis}.

For fixed $k$, the open bubble evaluates to
\[
\vcenter{\hbox{\includegraphics[scale=.90]{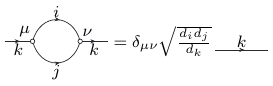}}}.
\]
Taking the spherical trace after composing one bubble's adjoint with the
other gives the pairing
\[
\vcenter{\hbox{\includegraphics[width=.96\columnwidth]{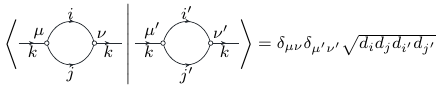}}}.
\]
This Gram matrix has rank one, since the nonzero bubbles are all
proportional to $\id_k$. Thus these diagrams do not form an independent
basis of $\cR(k,k)$.

Multiplying the theta amplitude above by the open bubble gives
$\delta_{\mu\nu}d_i d_j\,\id_k$.
Using rigidity and multiplicativity of quantum dimensions,
\[
\sum_{i,j}N^{ij}_k d_i d_j=d_k D^2,
\qquad D^2:=\sum_{i\in\Irr\cR}d_i^2.
\]
We therefore sum over $i,j\in\Irr\cR$ and
$\mu,\nu=1,\ldots,N^{ij}_k$, keeping $k$ fixed, and divide by $d_kD^2$,
\[
\vcenter{\hbox{\includegraphics[scale=.90]{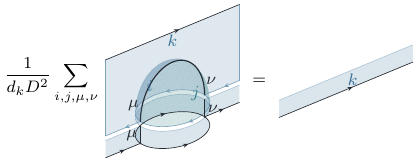}}}.
\]

Repeated application of these local Turaev--Viro evolutions gives a bulk
evolution in time that coarse-grains the lattice degrees of freedom.
Let $\Gamma_0,\Gamma_1,\Gamma_2$ denote the graphs before and after the
first two coarse-graining steps, and let $U_1$ and $U_2$ denote the
corresponding evolution operators,
\[
\vcenter{\hbox{\includegraphics[width=.96\columnwidth]{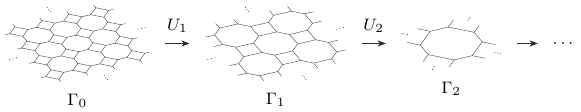}}}.
\]
Each coarse-graining step consists of a layer of $2$--$2$ moves followed
by a layer of $2$--$0$ moves,
\[
U_1=\vcenter{\hbox{\includegraphics[width=.78\columnwidth]{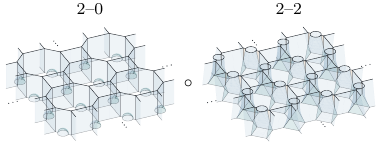}}},
\]
\[
U_2=\vcenter{\hbox{\includegraphics[width=.60\columnwidth]{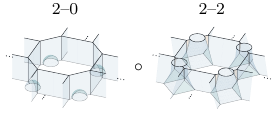}}}.
\]
The second round is shown in a square window rotated by $45^\circ$.

For a finite $\cR$-tensor network, coarse-graining proceeds until the
graph admits no further local reduction. Each $2$--$0$ move decreases
the number of vertices, so only finitely many reducing steps are required.
On the torus, we choose the terminal graph $\Gamma_N$ to consist of a
single four-valent vertex with two edges winding around the two
noncontractible cycles. As in the initial torus network, the dashed gray
boundaries with matching single or double arrows are identified,
\[
\Gamma_N=
\vcenter{\hbox{\includegraphics[scale=.90]{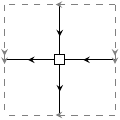}}}.
\]
The resulting bulk evolution is
\[
\begin{aligned}
Z_{\cR\text{-TV}}(t_N,t_0)&=U_N\circ\cdots\circ U_1\\
&:\cH_{\Gamma_0}\longrightarrow\cH_{\Gamma_N},
\end{aligned}
\]
where $\Gamma_0=\Gamma$.

Reaching this terminal graph does not exhaust the string-net relations.
Local moves and isotopies on the torus impose further equivalences among
its labeled graph states. Taking the quotient by these relations gives
the torus ground-state space $\cH_{T^2}$ of the $\cR$ string-net model.
The tube-category construction and the corresponding torus quotient are
described in Appendix~\ref{app:directional-tube-traces}.

Denote this quotient map by
\[
Z_{\cR\text{-TV}}(\mathrm{anyon},t_N):
\cH_{\Gamma_N}\longrightarrow\cH_{T^2}.
\]
Composing it with the finite coarse-graining evolution gives the evaluation
map~\cite{Lan_2014,Turaev:2017uxl},
\[
\begin{aligned}
Z_{\cR\text{-TV}}
&:=Z_{\cR\text{-TV}}(\mathrm{anyon},t_N)
\circ Z_{\cR\text{-TV}}(t_N,t_0)\\
&:\cH_{\Gamma_0}\longrightarrow\cH_{T^2}.
\end{aligned}
\]

The ground-state space $\cH_{T^2}$ has dimension $|\Irr\fZ(\cR)|$.
There is a basis of $\cH_{T^2}$ labeled by the simple anyons in
$\fZ(\cR)$~\cite{Aasen_2019}.
Let $x=(X,\gamma_{X,-})\in\Irr\fZ(\cR)$, where $X\in\cR$ and
$\gamma_{X,-}$ is a half-braiding, and denote the corresponding torus ground
state by $|x\ra$. We define the Turaev--Viro evaluation amplitudes
$E_\Gamma^x(\Lambda)$ by
\begin{equation}\label{eq:tv-evaluation-amplitude}
Z_{\cR\text{-TV}}|\Gamma,\Lambda\ra
=\sum_{x\in\Irr\fZ(\cR)}
E_\Gamma^x(\Lambda)|x\ra.
\end{equation}
Each $E_\Gamma^x(\Lambda)$ is a Turaev--Viro state-sum amplitude built from $F$-symbols
and quantum dimensions. The $F$-symbols supply tetrahedral weights, while the
quantum dimensions supply internal-edge weights.

Applying the evaluation map to Eq.~\eqref{eq:frozen-graph-state} gives
\begin{align}
Z^{\text{sym}}_{\cR}&:=Z_{\cR\text{-TV}}|Z_{\Gamma}\ra
=\sum_{x\in\Irr\fZ(\cR)}Z_x|x\ra,\label{eq:evaluated-partition-state}\\
Z_x&=\sum_{\Lambda}
A_\Gamma(\Lambda)
E_\Gamma^x(\Lambda) = \la x|Z_{\cR\text{-TV}}|Z_\Gamma\ra.
\end{align}
Thus, the $\cR$-tensor network defines a relative partition function. On a
closed surface, it is a vector in the ground state space rather than a scalar
\cite{Ji_2019,Ji_2020,FreedTeleman2014,Fiorenza_2015}.

Equation
\eqref{eq:evaluated-partition-state} also separates two ingredients of the
contraction. The coefficient $A_\Gamma$ is an ordinary tensor-network contraction, whereas
$E_\Gamma^x$ gives the amplitude of the bulk Turaev--Viro evolution.

The construction above applies to arbitrarily large finite networks. As the boundary network grows, the bulk evolution can contain arbitrarily many coarse-graining layers.
In the thermodynamic limit, the number of coarse-graining layers grows
without bound, and the bulk evolution extends to $t\to+\infty$.
With the boundary at $t=0$, formally,
\begin{equation}\label{eq:symmetry-to}
    \boxed{
    Z^{\sym}_{\cR} = Z_{\cR\text{-TV}}(+\infty,0)|Z_{\Gamma}\ra.
    }
\end{equation}
Thus the symmetric partition function $Z^{\sym}_{\cR}$ is the partition function of the bulk topological
order $\fZ(\cR)$ on the upper half-space, with boundary condition specified by
$|Z_\Gamma\ra$. In this sense, a system constrained by internal symmetry
intrinsically contains a bulk topological order.

Equation~\eqref{eq:symmetry-to} provides the starting point for the
symmetry-topological-order duality (symTO)
\cite{Ji_2019,Ji_2020,Chatterjee_2023,Kong_2020_math,Kong_2020_phy}. 
The symTO framework establishes a correspondence between a system
with categorical symmetry $\cR$ and a boundary of the non-chiral topological
order $\fZ(\cR)$, see Appendix~\ref{app:symto-review}. Our construction derives this correspondence explicitly from
the composition rules of the $\cR$-tensor network.

For a finite contractible patch of the $\cR$-tensor network,
the coarse-graining evolution can be represented by a
Turaev--Viro tent with a common bulk apex $v_*$, as illustrated
in Fig.~\ref{fig:holographic-dual-bulk}. Since the bulk evolution
is topological, the successive layers can be reorganized using
the $2$--$3$ and $2$--$0$ relations into this tent representation
of the renormalization process. In this picture, the
$\cR$-tensor network intrinsically carries an $\cR$-Turaev--Viro
TQFT in its bulk. The symmetry-constrained theory is therefore
naturally described as a relative boundary theory, with a
non-invertible gravitational anomaly characterized by the bulk
topological order $\fZ(\cR)$~\cite{Ji_2019,Ji_2020}.

We emphasize that the holographic duality here is symmetry-topological-order
holography, and at least not apparently, an AdS/CFT holography.

\begin{figure}[htb]
    \centering
    \includegraphics[width=.94\columnwidth]{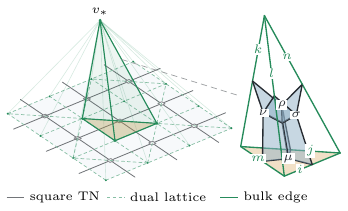}
    \caption{A Turaev--Viro tent over the dual triangulation of the tensor
    network.}
    \label{fig:holographic-dual-bulk}
    \label{fig:holographic-surface-dual-tetrahedron}
\end{figure}

\subsection{Diagonalizing a transfer morphism in the tube category}\label{sec:transfer-morphism-diagonalization}
As reviewed in Refs.~\cite{Vanthilt_2026_TNRKit,TNRKit_code}, at an RG fixed
point, a single local tensor approximates the partition function of a large
block of the microscopic lattice in the continuum limit~\cite{Levin_2007}. Its
indices (legs) carry boundary conditions assigned to the edges of that block.

As an application of this correspondence, contracting two opposite legs corresponds
to placing the tensor network on a tube~\cite{Gu_2009}. At the fixed point, the transfer
matrix becomes an imaginary-time evolution operator on a tube of a CFT or TQFT, and its
eigensystem encodes the energy spectrum and the corresponding eigenstates on a
circle~\cite{Gu_2009}. The spectrum and eigenstates at the RG fixed point are
important gauge-invariant quantities of the phase.

In practice, we divide the transfer-matrix eigenvalues in all sectors by
the same vacuum eigenvalue, shifting the lowest energy to zero and removing
the overall normalization factor. The central charge can be extracted
separately by comparing the leading eigenvalues, before this rescaling,
for different tube geometries constructed from the same fixed-point
tensors, as reviewed in Ref.~\cite{Vanthilt_2026_TNRKit}.

To extract CFT or TQFT data from an $\cR$-tensor network, we likewise need to
construct a transfer matrix from a fixed-point $\cR$-tensor and compute its
spectrum. At first sight, contracting two opposite legs of a rank-4
$\cR$-tensor inevitably causes the tensor legs to cross. If $\cR$ is not
endowed with a braiding, such a contraction goes beyond planar diagrams and
appears to be disallowed in general.

The solution is dictated by the intrinsic holographic interpretation of an
$\cR$-tensor. In Sec.~\ref{sec:holographic-picture}, the apparently nonplanar
torus contraction was completed by the bulk Turaev--Viro evolution rather than
by introducing a braiding in $\cR$. Therefore a braiding is not needed in the first place.

As emphasized in
Refs.~\cite{lan2025tubecategorytensorrenormalization,YBWL2605.06661},
the annular closure naturally takes values in the quotient obtained by
imposing the equivalence relation
\begin{equation}\label{eq:tube-equivalence-relation}
    \scalebox{0.85}{%
  $\raisebox{-61.5575pt}{\includegraphics{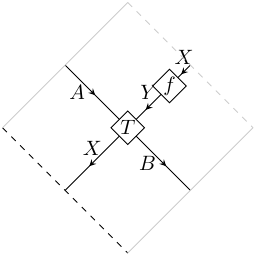}}
  \sim
  \raisebox{-61.5575pt}{\includegraphics{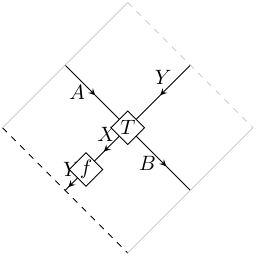}}$%
}
\end{equation}
for all $X,Y\in\cR$, $T\in\cR(A\otimes Y,X\otimes B)$, and
$f\in\cR(X,Y)$. This is precisely the dinaturality relation defining the
coend
\begin{equation*}
    \int^{V\in\cR}\cR(A\otimes V,V\otimes B).
\end{equation*}
Equipping these coend spaces with the composition induced by gluing annuli
gives the tube category. We review this construction in
App.~\ref{app:directional-tube-traces}.

For semisimple $\cR$, a concrete realization of the tube category is given in \cite{HARDIMAN_2019}.
We use the annular orientation convention of
Appendix~\ref{app:directional-tube-traces}.  It is the categorical counterpart of Ocneanu's
tube algebra and was developed in
Refs.~\cite{Ocneanu_1994,mueger2001subfactorscategoriestopologyii,HARDIMAN_2019,Lan_2014,lan2025tubecategorytensorrenormalization}.
\begin{definition}[Vertical tube category~\cite{HARDIMAN_2019}]
    Let $\cR$ be a unitary fusion category. The tube category associated with
    $\cR$, denoted by $\Tubev(\cR)$, is a unitary linear category defined by
    the following data.
    \begin{itemize}
        \item[-] The objects of $\Tubev(\cR)$ are the objects of $\cR$;
        \item[-] For $A,B\in\cR$, the morphism space in $\Tubev(\cR)$ is
        defined by
        \[
        \Tubev(\cR)(A, B) := \bigoplus_{i\in\Irr\cR}\cR(A\otimes i, i\otimes B).
        \]
        That is, any morphism $u$ in $\Tubev(\cR)(A,B)$ is an
        $\Irr\cR$-graded vector of the form
        \[
        \left(\raisebox{-28.65274pt}{\includegraphics{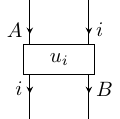}}\right)_{i\in\Irr\cR},
        \]
        where each $u_i$ is a morphism in $\cR(A\otimes i, i\otimes B)$;
        \item[-] For $u\in \Tubev(\cR)(B,C)$ and
        $v\in \Tubev(\cR)(A,B)$, their composition is defined by
        \[
        (u\circ v)_k = \sum_{i,j\in\Irr\cR}\sum_{\alpha = 1}^{N^{ji}_k}
        \raisebox{-57.10547pt}{\includegraphics{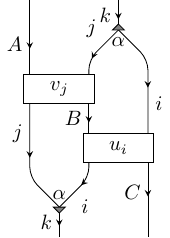}}.
        \]
        Here
        \[
            \left\{q_\alpha\in \cR(k, j\otimes i): \alpha = 1,2,\ldots, N^{ji}_k\right\}
        \]
        is an orthonormal basis of $\cR(k, j\otimes i)$;
        \item[-] The dagger structure~\cite{Green_2024} is defined by
        \[
        \left(\raisebox{-28.65274pt}{\includegraphics{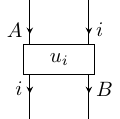}}\right)^\dagger :=
    \raisebox{-28.65274pt}{\includegraphics{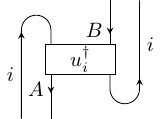}}.
        \]
    \end{itemize}
\end{definition}

We denote the universal dinatural map by
\begin{equation*}
    \pi_V^{A,B}:\cR(A\otimes V,V\otimes B)\longrightarrow\Tubev(\cR)(A,B).
\end{equation*}
Concretely, an arbitrary $\cR$-tensor
\[
T: A\otimes V\rightarrow V\otimes B,
\]
defines a tube closure $\pi_V^{A,B}(T)\in\Tubev(\cR)(A,B)$ obtained by
``contracting'' the legs labeled by $V$:
\[
\pi_V^{A,B}(T) :=
\left(\sum_{\alpha = 1}^{N^{V}_i}
\raisebox{-42.8791pt}{\includegraphics{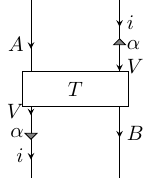}}\right)_{i\in\Irr\cR}.
\]
Here $N_i^V:=\dim\cR(i,V)$, and the triangles labeled by $\alpha$ denote an
orthonormal basis of $\cR(i,V)$ and its adjoint.
When $\cR=\Hilb$, we have $\Tubev(\Hilb)\cong\Hilb$, and
$\pi_V^{A,B}$ reduces to the usual trace over the Hilbert space $V$.
Thus, opposite legs of a rank-4 $\cR$-tensor can be contracted even when
$\cR$ is not braided. The result, however, is no longer an $\cR$-tensor but a
``$\Tubev(\cR)$-matrix,'' that is, a morphism in $\Tubev(\cR)$.

The problem of finding the energy spectrum of an $\cR$-tensor network on a
tube therefore reduces to defining what it means to diagonalize a
$\Tubev(\cR)$-matrix. For every $A\in\Tubev(\cR)$, the endomorphism algebra
$\Tubev(\cR)(A,A)$ is a semisimple $*$-algebra and is therefore $*$-isomorphic
to a direct sum of matrix algebras:
\begin{equation}\label{eq:tube-alg-iso}
    \Tubev(\cR)(A,A)\simeq \bigoplus_{s = 1}^{N_A}\Mat_{n_s(A)}(\C).
\end{equation}
Here $s$ labels the matrix summands, $N_A$ is the number of direct summands,
and $n_s(A)$ is the dimension of the defining representation of the $s$th
matrix algebra $\Mat_{n_s(A)}(\C)$.

To diagonalize a transfer morphism
$f=(f_i)_{i\in\Irr\cR}\in\Tubev(\cR)(A,A)$, we diagonalize its image in
\emph{each} matrix summand $\Mat_{n_s(A)}(\C)$. Any two such
$*$-isomorphisms differ only by a permutation of isomorphic matrix blocks and
unitary conjugation within each block, so the resulting collection of spectra
is independent of the chosen isomorphism.

We therefore seek all irreducible $*$-representations of
$\Tubev(\cR)(A,A)$ and diagonalize the corresponding representation matrices
of $f$. The matrix blocks are canonically labeled by simple objects of
the Drinfeld center $\fZ(\cR)$~\cite{Lan_2014,HARDIMAN_2019,Kitaev_2012, Green_2024}.

Following Hardiman~\cite{Hardiman_2019_graphical}, we will show how objects of
$\fZ(\cR)$ give rise to representations of $\Tubev(\cR)(A,A)$, with simple
anyons labeling the irreducible representations that occur. We first review
the ingredients of Hardiman's representation functor.

An object of $\fZ(\cR)$ is a pair
$x = (X,\gamma_{X,-})$, where $X\in\cR$ and
$\gamma_{X,A}:X\otimes A\rightarrow A\otimes X$ is a natural isomorphism, natural in
$A\in\cR$, called a half-braiding.
For $V\in\cR$, we represent the half-braiding by the positive crossing
\[
\gamma_{X,V}=
\raisebox{-30.95274pt}{\includegraphics{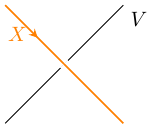}}.
\]
For every $A\in\cR$, we can equip the space $\cR(X,A)$ with the inner product
\begin{equation}\label{eq:inner-product-def}
    \la \phi|\psi \ra:= \frac{1}{d_X}\Tr(\phi^\dagger\circ \psi)
\end{equation}
for $\phi,\psi\in \cR(X,A)$.

We write $\ptr^{\ell}_V$ for the left partial trace, obtained by closing the
left $V$-leg using the duality conventions of Sec.~\ref{sec:r-tensor}.
Its explicit formula is given in Appendix~\ref{app:directional-tube-traces}.

\begin{definition}[Anyon-insertion functor]
For $x=(X,\gamma_{X,-})$ and $A \in \cR$, define $x^\sharp(A):=\cR(X,A)$. For
$A,B\in\cR$, define the action of $x^\sharp$ on tube morphisms by the linear
map
\begin{equation}\label{eq:anyon-insertion-rep}
    x^\sharp_{A,B}:\Tubev(\cR)(A,B)
    \rightarrow\Hilb\bigl(\cR(X,A),\cR(X,B)\bigr)
\end{equation}
determined as follows. For $V\in\cR$ and
$u:A\otimes V\rightarrow V\otimes B$, its action on
$\phi:X\rightarrow A$ is defined algebraically by~\cite{Lan_2014,Green_2024}
\begin{equation}\label{eq:anyon-insertion-partial-trace}
    x^\sharp_{A,B}\left(\pi_V^{A,B}(u)\right)(\phi)
    :=\ptr^{\ell}_V\!\left[
        u\circ(\phi\otimes\id_V)\circ\gamma_{X,V}^{-1}
    \right].
\end{equation}
This action is graphically represented by
\[
\raisebox{-52.21042pt}{\includegraphics{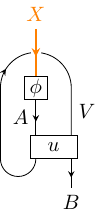}}.
\]
Equation~\eqref{eq:anyon-insertion-partial-trace} uses the inverse half-braiding to move the
$V$-string across the $X$-string.
We call $x^\sharp$ the \emph{anyon-insertion representation functor}, associated with $x$.
\end{definition}

The following result establishes the functoriality of this construction and
relates the resulting representation functors to the Drinfeld center.

\begin{proposition}[Unitary version of Hardiman~\cite{Hardiman_2019_graphical}]
\label{prop:hardiman-tube-center-equivalence}
For every $x\in\fZ(\cR)$, the construction above defines a $*$-functor
$x^\sharp:\Tubev(\cR)\rightarrow\Hilb$. Moreover, the assignment
$x\mapsto x^\sharp$ extends to a unitary equivalence
\begin{equation}
    (-)^\sharp: \fZ(\cR)^{\mathrm{op}}
    \cong \Fun_*\bigl(\Tubev(\cR),\Hilb\bigr),
\end{equation}
where the right-hand side is the category of $*$-functors to finite-dimensional
Hilbert spaces and natural transformations.
\end{proposition}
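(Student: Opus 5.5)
The proof splits into two parts: the $*$-functor property of each $x^\sharp$, and the equivalence. For both I will work with the coend description of $\Tubev(\cR)(A,B)$ instead of the graded direct-sum model. By the dinaturality relation \eqref{eq:tube-equivalence-relation}, the formula \eqref{eq:anyon-insertion-partial-trace} is well defined as soon as it respects dinaturality.

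\textbf{Well-definedness.} For $f\in\cR(X',Y)$ and $T\in\cR(A\otimes Y,X'\otimes B)$, the two sides of \eqref{eq:tube-equivalence-relation} must give the same partial trace. This follows from naturality of the half-braiding, $\gamma_{X,Y}^{-1}\circ(f\otimes\id_X)=(\id_X\otimes f)\circ\gamma_{X,X'}^{-1}$, together with cyclicity of the left partial trace: $f$ slides through the crossing and around the closed loop.

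\textbf{Functoriality.} Composition in $\Tubev(\cR)$ is induced by gluing annuli, i.e.\ $\pi_W(u)\circ\pi_V(v)=\pi_{V\otimes W}(\text{stacked } u,v)$. Its image under $x^\sharp$ equals the successive action because of two facts:
\begin{itemize}
\item[] the half-braiding is multiplicative, $\gamma_{X,V\otimes W}=(\id_V\otimes\gamma_{X,W})(\gamma_{X,V}\otimes\id_W)$;
\item[] partial traces over $V\otimes W$ factor into successive partial traces.
\end{itemize}
The identity is $\pi_\one(\id_A)$, which acts trivially since $\gamma_{X,\one}=\id$.

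\textbf{Compatibility with the dagger.} I must check $\langle\psi|x^\sharp(u)\phi\rangle=\langle x^\sharp(u^\dagger)\psi|\phi\rangle$ for the inner product \eqref{eq:inner-product-def}. Writing both sides as closed diagrams under the spherical trace $\frac1{d_X}\Tr$, sphericality and unitarity of $\gamma$ ($\gamma^{-1}=\gamma^\dagger$ in the unitary center) turn one diagram into the reflection of the other. This is the place where the chosen dagger structure on the tube category and the normalization $1/d_X$ enter. Naturality in $x$ is direct: a morphism $g:x\to y$ in $\fZ(\cR)$ commutes with half-braidings, so precomposition $\phi\mapsto\phi\circ g$ gives a natural transformation $y^\sharp\Rightarrow x^\sharp$. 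This is why the source category is $\fZ(\cR)^{\mathrm{op}}$.

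\textbf{Equivalence.} I will use semisimplicity. Every $*$-functor $\Tubev(\cR)\to\Hilb$ is determined by its values on $\Irr\cR$, i.e.\ by a $*$-module over the finite-dimensional $C^*$-algebra $\bigoplus_{i,j}\Tubev(\cR)(i,j)$ (Ocneanu's tube algebra), which is Morita equivalent to $\Tubev(\cR)$ via the Karoubi completion. The functor category is therefore semisimple, with simples the irreducible modules of the tube algebra. Full faithfulness of $(-)^\sharp$ comes from a Yoneda-type argument. Setting $x^\sharp=\cR(X,-)$, a natural transformation $x^\sharp\Rightarrow y^\sharp$ that commutes only with the $\pi_\one$ part (that is, with $\cR$-morphisms) is precomposition by some $g\in\cR(Y,X)$. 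Commuting additionally with the tube elements $\pi_V(\id_{A\otimes V})$ forces $g$ to intertwine the half-braidings, so $g$ lies in $\fZ(\cR)(y,x)$.

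\textbf{Essential surjectivity (main obstacle).} I have two routes. The first is a dimension count: if $\dim$ of the tube algebra $=\sum_{x\in\Irr\fZ(\cR)}(\dim x^\sharp)^2$, then the $x^\sharp$ exhaust all irreducibles. Here $\dim$ of the tube algebra is $\sum_{a,b,i}\dim\cR(a\otimes i,i\otimes b)$, and $\sum_{a}\dim\cR(X,a)\dim\cR(X',a)$ relates to the adjoint $I:\cR\to\fZ(\cR)$ of the forgetful functor, with $I(\one)=\bigoplus_x \dim\cR(\one,X)\,x$. The second, more conceptual route constructs an explicit inverse. Given a $*$-functor $F$, define $X_F:=\bigoplus_{i}F(i)^*\odot i$, and build a half-braiding on it from the action of the elementary tube morphisms $\pi_V$ on $F$, using the structure morphisms $\pi_V$ to transport across $V$. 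Checking the hexagon axiom and unitarity of this half-braiding is the delicate point. I would try the first route first, citing the standard identity $\bigoplus_{a,b}\Tubev(\cR)(a,b)\cong\End_{\fZ(\cR)}(I(\bigoplus_a a))$ from Refs.~\cite{mueger2001subfactorscategoriestopologyii,HARDIMAN_2019}. Under that isomorphism, $x^\sharp$ is identified with $\fZ(\cR)(x,I(-))^*$, which gives a bijection on irreducibles.
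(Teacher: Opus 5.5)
Your proposal is correct, and its first half is exactly what the paper does. In Appendix~\ref{app:directional-tube-traces} the paper only records two facts: naturality of the half-braiding makes the partial-trace formula dinatural, and the hexagon gives functoriality. These are your well-definedness and functoriality steps.

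For the equivalence itself the paper gives no argument. It imports Hardiman's theorem, which is stated for contravariant representations (hence the $\fZ(\cR)^{\mathrm{op}}$), and asserts its unitary form. You instead prove it from semisimplicity:
\begin{enumerate}
\item Yoneda along $\pi_\one:\cR\to\Tubev(\cR)$, together with the crossing tube elements, gives $\mathrm{Nat}(x^\sharp,y^\sharp)\simeq\fZ(\cR)(y,x)$.
\item Hence the $x^\sharp$ with $x$ simple are pairwise inequivalent irreducibles.
\item A dimension count then shows they exhaust the tube algebra.
\end{enumerate}
The citation is shorter and gives the equivalence wholesale. Your route is self-contained, makes the origin of the op transparent, and also yields the sector decomposition \eqref{eq:tube-sector-decomposition}.

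For the count you only need the numerical identity
\[
\sum_{i\in\Irr\cR}\dim\cR(a\otimes i,i\otimes b)=\sum_{x\in\Irr\fZ(\cR)}N^X_aN^X_b .
\]
It follows from $\fgt\, I(a)\simeq\bigoplus_i i^*\otimes a\otimes i$ together with $\dim\fZ(\cR)(x,I(a))=N^X_a$. This is a sum over $x$ at fixed $a,b$. It is not the quantity $\sum_a\dim\cR(X,a)\dim\cR(X',a)$ you wrote, and $I(\one)$ alone does not give it. Transporting $x^\sharp$ through the algebra isomorphism $\bigoplus_{a,b}\Tubev(\cR)(a,b)\cong\End_{\fZ(\cR)}(I(\bigoplus_a a))$ is unnecessary, and would import essentially the whole theorem. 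If you do use it, the covariant match is $\fZ(\cR)(x,I(-))\simeq\cR(X,-)$, not its dual.

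Two smaller points need fixing.

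First, the notation $\pi_V(\id_{A\otimes V})$ does not typecheck in general. The tube elements that detect the half-braiding should be the crossings $\pi_V^{V\otimes C,\,C\otimes V}(\id_{V\otimes C\otimes V})$. Evaluate them on $\phi=\ev_V^\dagger\otimes\id_X$ with $C=V^*\otimes X$. The partial trace then becomes the mate of $\gamma_{X,V}^{-1}(\id_V\otimes g)-(g\otimes\id_V)\gamma_{Y,V}^{-1}$. Since taking mates is bijective, this difference vanishes.

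Second, the word ``unitary'' in the statement is not covered by your argument. Take the inner product \eqref{eq:inner-product-def}, $g\in\fZ(\cR)(x,y)$, and $g^\sharp:=(-\circ g):y^\sharp\Rightarrow x^\sharp$. Then
\[
(g^\sharp)^\dagger=\frac{d_Y}{d_X}\,(g^\dagger)^\sharp .
\]
The factor $1/d_X$ is a common positive scalar for a fixed $x$, so, contrary to your remark, it plays no role in the $*$-property of a single $x^\sharp$. It does break dagger compatibility of $(-)^\sharp$ on morphisms between center objects of different dimension. To obtain a genuinely unitary equivalence, drop it, or rescale $g^\sharp$ by $\sqrt{d_X/d_Y}$.
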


The physical intuition behind this representation is precisely the
holographic picture of Sec.~\ref{sec:holographic-picture}, expressed in terms
of the string-net renormalization of Refs.~\cite{Lan_2014,Kitaev_2012}, where
\begin{itemize}
    \item an object $x = (X, \gamma_{X, -})$ in $\fZ(\cR)$ is an anyon in the bulk;
    \item a state in that sector is the boundary condition on the plane prepared by the anyon;
    \item a morphism in the tube category is a layer of string-net wave-function renormalization, together with some dynamics-dependent data, acting on that prepared state.
\end{itemize}
When the closed string $V$ of the tube is moved past the inserted anyon, we apply the inverse half-braiding $\gamma_{X,V}^{-1}$ to move the anyon
past the string in Eq.~\eqref{eq:anyon-insertion-partial-trace}.
\[
\raisebox{-21.63954pt}{\includegraphics{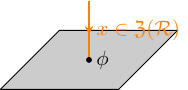}}
\xrightarrow{\mathrm{Log}}
\raisebox{-22.96227pt}{\includegraphics{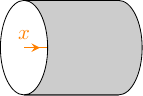}}
\]

We now express the matrix-block decomposition in Eq.~\eqref{eq:tube-alg-iso}
in terms of these representations.

\begin{proposition}[Sector decomposition of tube morphisms]
For all $A,B\in\cR$, the anyon-insertion maps associated with the simple
objects of $\fZ(\cR)$ assemble into a linear isomorphism
\begin{equation}\label{eq:tube-sector-decomposition}
\begin{aligned}
    \Phi_{A,B}:\Tubev(\cR)(A,B)
    &\xrightarrow{\ \simeq\ }
    \bigoplus_{z\in\Irr\fZ(\cR)}
    \Hilb\bigl(z^\sharp(A),z^\sharp(B)\bigr),\\
    u&\longmapsto
    \bigl(z^\sharp_{A,B}(u)\bigr)_{z}.
\end{aligned}
\end{equation}
For $A=B$, this is an isomorphism of finite-dimensional $C^*$-algebras.
\end{proposition}
\begin{proof}
By Proposition~\ref{prop:hardiman-tube-center-equivalence}, the
anyon-insertion functors $z^\sharp$, with $z\in\Irr\fZ(\cR)$, form a complete
set of simple tube representations. The semisimple Yoneda
decomposition of the tube category then gives Eq.~\eqref{eq:tube-sector-decomposition}.
Under this isomorphism, composition and the dagger operation act
componentwise.
\end{proof}

Setting $B=A$ in Eq.~\eqref{eq:tube-sector-decomposition} gives
\begin{equation}\label{eq:tube-alg-sector-iso}
    \Tubev(\cR)(A,A)\simeq
    \bigoplus_{\substack{z\in\Irr\fZ(\cR)\\z^\sharp(A)\neq0}}
    \End\bigl(z^\sharp(A)\bigr).
\end{equation}
Thus, whenever $z^\sharp(A)\neq0$, the representation
$z^\sharp_{A,A}$ is the projection onto a matrix summand and is
therefore irreducible. Conversely, every irreducible representation of
$\Tubev(\cR)(A,A)$ arises in this way, up to unitary equivalence.

Thus the matrix blocks indexed by $s$ in Eq.~\eqref{eq:tube-alg-iso}
can be labeled by simple objects $z$ in $\fZ(\cR)$ for which
$z^\sharp(A)\neq0$, with
$n_s(A)=d_{z^\sharp(A)}$. We refer to every
$z\in\Irr\fZ(\cR)$ as a \emph{sector}. For a fixed $A$, only the
sectors with $z^\sharp(A)\neq0$ occur in Eq.~\eqref{eq:tube-alg-sector-iso}.

More generally, one may insert a non-simple object
$x\in\fZ(\cR)$. The corresponding anyon-insertion functor $x^\sharp$
decomposes as the direct sum of those associated with the simple summands of
$x$. It therefore suffices to consider simple objects of $\fZ(\cR)$ when
resolving sectors.

Non-simple insertions are nevertheless useful when describing a change of
symmetry through anyon condensation
\cite{Ji_2020,Kong_2020_phy,Chatterjee_2023,
christian2023latticemodelcondensationlevinwen,Huston_2023}.
As reviewed in Sec.~\ref{appsec:changing-symmetry}, let
$A\in\fZ(\cR)$ be a condensable algebra, and let $\cR_{A}$
denote the resulting charge category. By the boundary--bulk relation
\cite{kong2015boundarybulkrelationtopologicalorders,Kong_2017,
jones2025localtopologicalorderboundary,
jones2025holographybulkboundarylocaltopological}, its bulk excitations are
labeled by
\[
    \fZ(\cR_{A})
    \cong \fZ(\cR)_{A}^{\localmodules}.
\]
Thus, inserting an object of $\fZ(\cR_{A})$ into an
$\cR_{A}$-tube is equivalently described, in the original
$\cR$-tube category, by inserting the corresponding local
$A$-module
\cite{christian2023latticemodelcondensationlevinwen,Huston_2023,Green_2024}.

In practical calculations, after each RG step of Loop-TNR, consider the
following unit cell of the tensor network:
\[
\raisebox{-28.65274pt}{\includegraphics{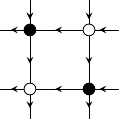}},
\]
From this unit cell, for each $x=(X,\gamma_{X,-})\in\Irr\fZ(\cR)$, we construct
sector-resolved transfer operators by applying the anyon-insertion functor
$x^\sharp$ to the corresponding tube morphisms. Evaluating these morphisms
yields operators on the space $\cR(X,A)$ of junctions between the anyon
and the $\cR$-tensor. This construction generalizes
the transfer matrices of
Refs.~\cite{Yang_2017,Bao:2019hfc,wei2023tensornetworkrenormalizationapplication,Gu_2009}:
\begin{equation}\label{eq:transfer-morphism-action}
    \raisebox{-51.1423pt}{\includegraphics{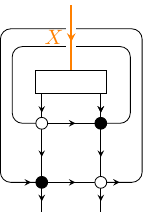}},\quad 
\raisebox{-52.56512pt}{\includegraphics{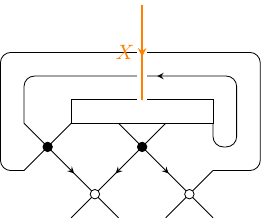}}.
\end{equation}

Two linear maps defined in Eq.~\eqref{eq:transfer-morphism-action} serve as inputs to
\texttt{KrylovKit.jl}~\cite{haegeman_2025_14911568} for computing the eigensystem,
directly generalizing the diagonalization of transfer matrices in
Refs.~\cite{Gu_2009,Yang_2017,wei2023tensornetworkrenormalizationapplication,Bao:2019hfc}.
To suppress irrelevant perturbations, we usually choose the second transfer
morphism during the RG flow.

For a $G$-symmetric system with $\cR=\Rep(G)$, where $G$ is finite,
$\fZ(\Rep(G))$ contains a canonical symmetric fusion subcategory
$\Rep(G)$ consisting of pure charges. For an object
$x=(X,\gamma_{X,-})$ in this subcategory, the half-braiding
$\gamma_{X,V}$ is given by the trivial symmetric braiding of $\Rep(G)$.
Thus, in the anyon-insertion representation, we can use this braiding to
move the $V$-leg across the $X$-leg and then take the partial trace over $V$.
This recovers the charge-resolved transfer matrices used in
\texttt{TNRKit.jl} for the bosonic case. However, a general implementation
of the anyon-insertion construction for resolving spectra in all sectors
$x\in\Irr\fZ(\cR)$ is not yet available in \texttt{TNRKit.jl}.

For a gapless system, the spectrum of a transfer morphism in a sector
$(X,\gamma_{X,-})\in\Irr(\fZ(\cR))$ encodes the conformal spectrum in the
corresponding sector of the symTO
theory, see review in Appendix~\ref{app:symto-review}.

\begin{remark}[Relation to previous sector-resolution methods]
The half-braiding construction for resolving topological sectors was
implemented numerically for the non-braided Haagerup fusion category
$\mathcal H_3$ in Ref.~\cite{Vanhove_2022}. An alternative procedure using
central idempotents of the MPO tube algebra was applied to
strange-correlator transfer matrices in Ref.~\cite{Bal_2018}, building on
Ref.~\cite{Bultinck_2017}. Here we formulate the anyon-insertion construction
as the $*$-functor $x^\sharp:\Tubev(\cR)\to\Hilb$, establish the sector
decomposition of general tube morphisms, and apply it to transfer morphisms
produced by Loop-TNR.
\label{rem:hardiman-numerics}
\end{remark}

\begin{remark}[Relation to defect-line TNR]
Hauru et al.~\cite{Hauru_2016} extracted sector-resolved conformal spectra
using symmetry defects and charge resolution. For $\cR=\Rep(\Z_n)$, the
defect and charge labels $(g,\chi)$ correspond to the simple objects of
$\fZ(\Rep(\Z_n))$. The present construction extends this description to
fusion-category symmetries. The anyon is inserted when constructing the
transfer morphism, so the same renormalized tensors can be used to resolve
all sectors. We do not follow the RG flow of a lattice defect operator,
such as the Kramers--Wannier duality operator.
\label{rem:defect-line-tnr}
\end{remark}

Following the convention in Refs.~\cite{Vanthilt_2026_TNRKit,TNRKit_code}, we denote the geometry
of a lattice tube by $[h,L,\xi]$, where $h$ is its Euclidean-time height, $L$
is its circumference, and $\xi$ is the horizontal displacement used when
gluing its ends (Fig.~\ref{fig:tube-shape-geometry}). The left transfer morphism in Eq.~\eqref{eq:transfer-morphism-action} has shape $[2,2,0]$, while the
right one has shape
$[\sqrt{2},2\sqrt{2},0]$.

\begin{figure}[htb]
    \centering
    \includegraphics[width=0.95\linewidth]{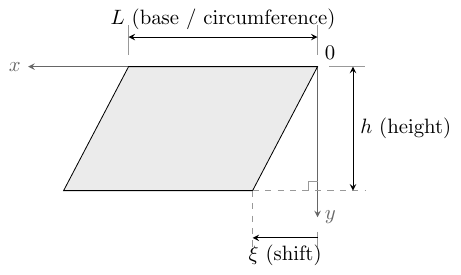}
    \caption{Geometry of a lattice tube with shape $[h,L,\xi]$.
    The positive $x$- and $y$-directions point left and down, respectively.
    The parallelogram has height $h$ and base length $L$. Its lower edge is
    displaced by $\xi$ in the positive $x$-direction.}
    \label{fig:tube-shape-geometry}
\end{figure}

When calculating conformal towers, it is necessary
to resolve higher conformal spins. As reviewed in
Ref.~\cite{Vanthilt_2026_TNRKit}, we use the transfer morphism with shape
$[\frac{4}{\sqrt{10}},2\sqrt{10},\frac{2}{\sqrt{10}}]$. This transfer
morphism, however, requires an additional RG step with a cutoff larger than the
one used during the RG flow. Here $\chi_{\text{RG}}$ denotes the cutoff used
during the RG flow, and $\chi_{\text{transfer}}$ denotes the cutoff used to
construct the transfer morphism. Typically,
$\chi_{\text{transfer}}\approx2\chi_{\text{RG}}$ is needed to maintain high
precision.

\subsection{Modular covariance}\label{sec:modular-covariance}
For a boundary theory defined relative to a $(2+1)$-dimensional topological order,
the sector-resolved torus partition functions form a modular-covariant vector.
This should be contrasted with the partition function of a standalone,
anomaly-free $(1+1)$-dimensional theory, which is modular invariant.
Such modular covariance was derived in Ref.~\cite{Ji_2019}.
Here we derive both the $T$- and $S$-covariance directly from the $\cR$-tensor network construction.

Before discussing modular covariance, we first define the partition function in each sector.
\begin{definition}[Partition vector]
    Let $u \in \Tubev(\cR)(A, A)$ be a morphism. For each
    $x\in\Irr\fZ(\cR)$, define the sector partition function by
    \[
    Z_x(u):=\Tr_{x^\sharp(A)}\left[x^\sharp_{A,A}(u)\right].
    \]
    The partition vector of $u$ is the vector indexed by $\Irr\fZ(\cR)$,
    \[
    Z(u):=\left(Z_x(u)\right)_{x\in\Irr\fZ(\cR)}.
    \]
    Equivalently, $Z(u)$ is obtained by applying the ordinary trace
    componentwise to the sector decomposition $\Phi_{A,A}(u)$.
\end{definition}

See App.~\ref{app:directional-tube-traces} for a more abstract definition.

\subsubsection{Modular $T$-covariance}

We first review how the modular $T$-constraint arises in a $\Hilb$-tensor network.
For a conventional translation-invariant tensor network,
the transfer matrix commutes with the translation operator $\cT$,
as can be illustrated graphically by
\[
\left[
\raisebox{-14.42636pt}{\includegraphics{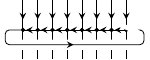}},\ 
\raisebox{-14.42636pt}{\includegraphics{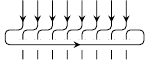}}
\right] = 0.
\]

In this situation, the transfer matrix and the translation operator can be simultaneously diagonalized.
Thus, as in band theory, the energy spectrum is resolved by the eigenvalues of the translation operator.
Moreover, since the translation operator $\cT$ generates a cyclic group $\Z_N$, where $N$ is the total number of bonds (equal to eight in the above graph),
its eigenvalues must be of the form $\ee^{\frac{2\pi \ii m}{N}}$ for $m = 0,\, 1,\,\cdots,\, N-1$.
A larger $N$ yields a finer resolution of the energy spectrum.
In a conformal scaling limit, this quantization implies that all conformal spins $h-\bar h$ are integers.

In the $\cR$-tensor network, however,
the translation operator $\cT$ is promoted to a morphism in $\Tubev(\cR)$ and need not satisfy $\cT^N = \id$.

We now analyze the eigenvalues of the translation operator in the anyon-insertion representation.
Suppose each vertical bond of the transfer operator is labeled by $A\in\cR$.
For $x=(X,\gamma_{X,-}) \in \Irr\fZ(\cR)$, consider the anyon-insertion representation $x^\sharp_{A^{\otimes N},A^{\otimes N}}$ of $\Tubev\left(\cR\right)\left(A^{\otimes N},A^{\otimes N}\right)$.
Let $\phi\in x^\sharp(A^{\otimes N})$ be a nonzero translation eigenvector:
\[
x^\sharp_{A^{\otimes N},A^{\otimes N}}(\cT)\phi = \lambda_N \phi.
\]
Here the subscript $N$ indicates the number of lattice bonds. To constrain
$\lambda_N$, we evaluate the $N$-fold translation:
\[
x^\sharp_{A^{\otimes N},A^{\otimes N}}(\cT^N)\phi =
\raisebox{-31.66034pt}{\includegraphics{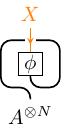}} =\theta_{x} \phi.
\]
Here
\begin{align*}
    \raisebox{-30.95274pt}{\includegraphics{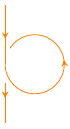}} &= \theta_{x}\id_{x}\\
&= \ee^{2\pi\ii s_{x}}\id_{x},
\end{align*}
which, with the orientation convention shown in the diagram, is the twist of the simple anyon
$x = (X,\gamma_{X,-})\in\fZ(\cR)$, and $s_{x}\in\mathbb{R}/\Z$ is its topological spin.

By functoriality of $x^\sharp$,
\[
\lambda_N^N\phi
= \left[x^\sharp_{A^{\otimes N},A^{\otimes N}}(\cT)\right]^N\phi
= x^\sharp_{A^{\otimes N},A^{\otimes N}}(\cT^N)\phi
= \theta_x\phi,
\]
and hence $\lambda_N^N=\theta_x$.

To relate this constraint to conformal spins, write
$\lambda_N=\ee^{\ii k_N}$, where $k_N$ is the dimensionless lattice momentum.
After choosing a representative of $s_x$, we have
\[
\ee^{\ii Nk_N}=\ee^{2\pi\ii s_x},
\qquad
\frac{Nk_N}{2\pi}=s_x+n_N,\quad n_N\in\Z.
\]
Since $k_N$ is defined modulo $2\pi$, the integer $n_N$ is determined only
modulo $N$. This gives $N$ possible translation eigenvalues.

Now suppose that the critical tensor network defining the transfer morphism
$u$ admits a conformal scaling limit. Let $a$ be the lattice spacing, and
take $a\to0$ and $N\to\infty$ with the circumference $L=Na$ fixed. Consider
a sequence of low-energy lattice states converging to a continuum momentum
eigenstate with finite momentum $p$. Identifying a one-site lattice
translation with a spatial displacement $a$ in the continuum requires a
choice of phase representatives such that
\[
\frac{k_N}{a}\longrightarrow p.
\]
It follows that
\[
Nk_N=L\frac{k_N}{a}\longrightarrow Lp.
\]
Thus, the exact lattice constraint implies
\[
\ee^{\ii Lp}
=\lim_{N\to\infty}\ee^{\ii Nk_N}
=\theta_x
=\ee^{2\pi\ii s_x},
\]
or equivalently,
\[
\frac{Lp}{2\pi}\in s_x+\Z.
\]
This expresses the fact that translating a state once around the spatial
circle produces the topological phase $\theta_x$.

In the continuum CFT, a state with conformal weights $(h,\bar h)$ on a
circle of circumference $L$ has momentum
\[
p=\frac{2\pi}{L}(h-\bar h).
\]
Substituting this relation into the momentum constraint gives
\[
h-\bar h\in s_x+\Z.
\]
Therefore, the fractional part of the conformal spin in sector $x$ is fixed
by the topological spin $s_x$.

Consequently, the sector partition functions obey
\begin{align*}
    Z_x(\tau+1,\bar\tau+1) &= Z_x\left(u \circ \cT^N\right)\\
    &=\theta_{x} Z_x(u)\\
    &= \theta_{x} Z_x(\tau, \bar\tau).
\end{align*}
This is the modular $T$-covariance of the sector partition functions \cite{Ji_2019, Ji_2020}.
We will see this feature in examples given later.

For $\cR=\Hilb$, the Drinfeld center $\fZ(\Hilb)\cong\Hilb$ has only the trivial simple sector,
for which $\theta=1$ and $s=0$ modulo $1$.
The above relations then reduce to $\cT^N=\id$ and $h-\bar h\in\Z$, as expected for the anomaly-free case.

\subsubsection{Modular $S$-covariance}

The modular $S$-invariance of a $\Hilb$-tensor network can be understood from
two evaluations of the same closed tensor contraction:
\[
\raisebox{-28.65274pt}{\includegraphics{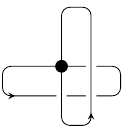}}.
\]
This scalar can be evaluated either as evolution along a vertical cylinder,
followed by gluing its upper and lower edges, or as evolution along a horizontal
cylinder, followed by gluing its left and right edges.  The equality of the two
evaluations is precisely modular $S$-invariance.

For a boundary theory with a nontrivial bulk, the partition vector is generally
not invariant under the modular $S$-transformation but is instead modular
\emph{covariant}~\cite{Ji_2019}.  We now formulate this covariance within the
$\cR$-tensor-network framework.

Orient the meridional and longitudinal cycles
as $(m,\ell)$.  With the tube convention above, changing the cut implements
$(m,\ell)\mapsto(\ell,-m)$ and hence the standard transformation
$\tau\mapsto-1/\tau$.

For a closed tensor contraction, given $T\in\cR(X\otimes A, A\otimes X)$ and $y=(Y,\gamma_{Y,-})\in\Irr\fZ(\cR)$, the partition function obtained from the
vertical cut of the solid torus is \footnote{Slightly different from App.~\ref{app:directional-tube-traces}, here we implicitely embed usual tensors to $\Torus$ in the partition vector notation.}
\[
Z_y^{\mathrm v}(T) =
\frac{1}{d_Y}\sum_{J}
\raisebox{-38.06593pt}{\includegraphics{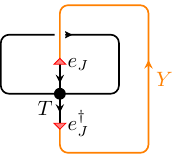}}
.
\]
where $\{e_J\}$ is an orthonormal basis with respect to the inner product in
Eq.~\eqref{eq:inner-product-def}.  Such a basis can be chosen in the form
\begin{equation}\label{eq:forget-embedding}
    \raisebox{-28.85274pt}{\includegraphics{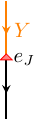}}
\ =\ 
\raisebox{-28.85274pt}{\includegraphics{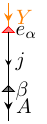}}.
\end{equation}
where $q_{j,\beta}^A$ is normalized as Eq.~\eqref{eq:orthonormal-completeness} and
$e_{\alpha}$ is normalized with respect to
Eq.~\eqref{eq:inner-product-def}.  Thus $J=(j,\alpha,\beta)$.

As in Appendix~\ref{app:directional-tube-traces}, the horizontal tube category is
$\Tubeh(\cR)=\Tubev(\cR^{\rev})$.  Hence a sector
$x=(X,\gamma_{X,-})\in\Irr\fZ(\cR)$ is identified in the horizontal direction
with
$\kappa(x):=(X,\gamma_{X,-}^{-1})\in\Irr\fZ(\cR^{\rev})$.
The horizontal partition function is therefore
\begin{align*}
Z_{\kappa(x)}^{\mathrm h}(T) &=
\frac{1}{d_X}
\sum_I
\raisebox{-42.71556pt}{\includegraphics{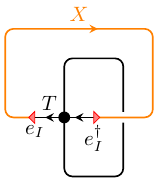}}.
\end{align*}

Modular $S$-covariance of the partition vector was formulated through the
mapping-class-group analysis of Ref.~\cite{Ji_2019}.  We now derive it directly
from the $\cR$-tensor-network construction.  We use the unnormalized Hopf-link
convention for the $S$-matrix~\cite{Turaev:2017uxl,MUGER2003159}:
\[
S_{xy} := \Tr_{X\otimes Y}
      \!\left(\gamma_{Y,X}\circ\gamma_{X,Y}\right).
\]
\begin{proposition}[Modular $S$-covariance]
For every $x=(X,\gamma_{X,-})\in\Irr\fZ(\cR)$ and
$T\in\cR(A\otimes B, B \otimes A)$,
\begin{equation*}
    Z_{\kappa(x)}^{\mathrm h}\!\left(T
    \right)
    =\frac{1}{D_\cR^2}
      \sum_{y\in\Irr\fZ(\cR)}S_{xy}Z_y^{\mathrm v}(T).
\end{equation*}
Graphically, this reads
\begin{multline}
\frac{1}{D_\cR^2}\sum_{y\in\Irr\fZ(\cR)}\frac{S_{xy}}{d_Y}\\
\times \sum_{I}
\raisebox{-25.61763pt}{\includegraphics{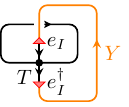}} \\
= \frac{1}{d_X}\sum_J
\raisebox{-30.6559pt}{\includegraphics{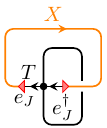}}.
\end{multline}
\end{proposition}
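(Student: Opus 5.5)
The plan is to reduce the statement to a single identity in the tube category, namely a ``change-of-cut'' formula expressing the horizontal closure of the identity tube morphism through vertical anyon projectors. The main tool is the sector decomposition of Eq.~\eqref{eq:tube-sector-decomposition} in its idempotent form. For each $y\in\Irr\fZ(\cR)$, the minimal central idempotent $P_y\in\Tubev(\cR)(A,A)$ projecting onto the $y$-block will be written explicitly as a closed $V$-loop weighted by half-braidings:
\[
P_y=\frac{d_Y}{D_\cR^2}\sum_{i\in\Irr\cR}d_i\,\bigl(\text{$i$-string encircling $A$, dressed by }\gamma_{Y,i}\text{ and traced over }Y\bigr).
\]
This is Ocneanu's formula for tube-algebra central idempotents. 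We verify it by applying $z^\sharp$ using Eq.~\eqref{eq:anyon-insertion-partial-trace}. The closed $i$-loop around the inserted anyon $z$ then evaluates through the half-braidings to $S_{zy}^*/(d_Zd_Y)$-type scalars. Summing over $i$ against $d_i$ turns these into the Verlinde/orthogonality relation $\sum_i d_i\,(\cdots)=\delta_{zy}D_\cR^2/d_Y$, which is the unitarity of the $S$-matrix of $\fZ(\cR)$, with $D_{\fZ(\cR)}=D_\cR^2$. Consequently $Z^{\mathrm v}_y(T)$ equals the pro-tensor (coend) trace of $P_y\circ\pi_B(T)$, i.e.\ $Z_y=\Tr\,[y^\sharp(\cdot)]$ is obtained by closing the tube after inserting $P_y$.

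Next we expand the right-hand side. Substituting the loop formula for $P_y$ into $\frac{1}{D_\cR^2}\sum_y S_{xy}Z^{\mathrm v}_y(T)$, the sum over $y$ of $S_{xy}d_Y\,\Tr_Y(\gamma_{Y,i}\ldots)$ is evaluated by the handle-slide/killing property of the modular category $\fZ(\cR)$. The weighted sum $\sum_y \frac{d_Y}{D^2}\,\Tr_Y$ over the Hopf link with $x$ is the $\omega$-loop; it converts a meridian $y$-loop linked with $x$ into an $x$-labeled line running along the longitude. In other words, it should collapse to a single insertion of the anyon $X$ threading the other cycle, with the $V$-strings now crossing it through $\gamma_{X,-}^{-1}$, because the orientation flips under $(m,\ell)\mapsto(\ell,-m)$. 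That is exactly the definition of $\kappa(x)$. After this step the expression is the horizontal closure: a $\Tubeh(\cR)=\Tubev(\cR^{\rev})$ anyon insertion of $\kappa(x)$. Taking the trace over the horizontal space $\kappa(x)^\sharp(B)$, with the basis of Eq.~\eqref{eq:forget-embedding} and the $1/d_X$ normalization of Eq.~\eqref{eq:inner-product-def}, reproduces $Z^{\mathrm h}_{\kappa(x)}(T)$. Throughout, every manipulation is a local move in the spherical category together with dinaturality of the coend in Eq.~\eqref{eq:tube-equivalence-relation}, so no braiding of $\cR$ itself is ever needed.

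The main obstacle is the middle step: the rigorous exchange of cuts inside the torus coend, together with tracking normalizations and orientations. This involves getting $d_Y$ versus $d_X$, the $\Tr$ on $y^\sharp(A)$ versus the $1/d$ in the inner product, $S$ versus $S^*$, and $\gamma$ versus $\gamma^{-1}$ all consistent. To handle it, I would first prove everything in the universal case $T=\id$-type generators, i.e.\ on the torus space $\Torus(\cR)=\int^{V,W}\cR(V\otimes W,W\otimes V)$. There the vertical and horizontal sector traces give two bases $\{|y\rangle_{\mathrm v}\}$, $\{|\kappa(x)\rangle_{\mathrm h}\}$ of the dual of $\cH_{T^2}$. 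The change-of-basis matrix is then computed once by evaluating on the basis vector ``$x$-loop along one cycle,'' which yields the Hopf link $S_{xy}$ directly. Linearity and dinaturality then extend the identity to arbitrary $T\in\cR(A\otimes B,B\otimes A)$. Finally, I would sanity-check the constant $1/D_\cR^2$ on $\cR=\Hilb$, where both sides reduce to the ordinary trace.
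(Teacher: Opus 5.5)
Your final-paragraph plan (sector idempotents, a basis of the torus space, one change-of-basis computation) has the right shape and matches the paper's appendix. However, the step that carries all the content, the exchange of cuts, is asserted rather than proved.

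Your second step invokes a handle slide with an $\omega$-loop in $\fZ(\cR)$. This framework provides no bulk skein calculus in which such a slide is defined: in $Z^{\mathrm v}_y(T)$ the $Y$-strand is attached to $T$ through an orthonormal basis of $\cR(Y,A)$ and interacts with the rest only through $\gamma_{Y,-}$. Moreover, with weights $S_{xy}$ the sum is not an $\omega$-loop. The killing ring projects the enclosed flux onto the vacuum; it does not turn meridian labels into a longitudinal line. The relation you actually need, that an $S$-weighted sum of meridian sectors equals a single longitudinal $x$-line, is the $S$-transformation of torus states, i.e.\ the proposition itself. Importing it from Turaev--Viro/Reshetikhin--Turaev theory would still require identifying $Z^{\mathrm v}_y$ and $Z^{\mathrm h}_{\kappa(x)}$ with the two TQFT torus bases, which is the same change-of-cut statement.

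A smaller slip is in your first step. The sum $\sum_i d_i$ runs over $\Irr\cR$, so it is not unitarity of the $S$-matrix of $\fZ(\cR)$. Also, $z^\sharp$ of the loop is an operator on $\cR(Z,Y)$, not an $S$-type scalar. The correct input is the center projector of Eq.~\eqref{eq:center-projector}: $z^\sharp$ of $\frac{1}{D_\cR^2}\sum_i d_i\,\pi_i^{Y,Y}(\gamma_{Y,i})$ is the projection $\cR(Z,Y)\to\fZ(\cR)(z,y)$.

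Your fallback does not close the gap. As written, $\int^{V,W}\cR(V\otimes W,W\otimes V)\cong\bigoplus_{a,b\in\Irr\cR}\cR(a\otimes b,b\otimes a)$ is not $\cH_{T^2}$: for $\cR=\Fib$ it is $5$-dimensional, while $|\Irr\fZ(\Fib)|=4$. The missing relations are the trace cyclicity $\iota^{\mathrm v}_A(Q\circ_{\mathrm v}P)=\iota^{\mathrm v}_B(P\circ_{\mathrm v}Q)$ for tube morphisms with nontrivial running labels, e.g.\ $P\in\Tubev(\Fib)(\one,\tau)$ and $Q\in\Tubev(\Fib)(\tau,\one)$. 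This is why the paper uses $\Torusv(\cR)=\int^{A\in\Tubev(\cR)}\Tubev(\cR)(A,A)$. Each $Z^{\mathrm v}_y$ kills these relations by functoriality of $y^\sharp$, but nothing in your argument shows that $Z^{\mathrm h}_{\kappa(x)}$ does. So evaluating on the $|\Irr\fZ(\cR)|$ idempotent classes proves the identity only on their span. $\cR$-dinaturality cannot then extend it to every square $T$, which is what the proposition claims.

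The missing idea is the paper's universal change of cut (Appendix~\ref{app:directional-tube-traces}). Take $P=\pi_X^{A,B}(p)$ and $Q=\pi_Y^{B,A}(q)$. The horizontal closures $\varpi_B^{YX,YX}(\contraction_{\mathrm v}(q\otimes p))$ and $\varpi_A^{XY,XY}(\contraction_{\mathrm v}(p\otimes q))$ are conjugate by the cyclic swap $\swap_{X,Y}=\varpi_X^{YX,XY}(\id_{XYX})$. Its inverse is built from $\ev_X^\dagger$ and $\coev_X^\dagger$, with no braiding in $\cR$. Hence horizontal closure is a cowedge for the vertical trace and descends to a well-defined $\scrS_{\cutlabel}:\Torusv(\cR)\to\Torush(\cR)$. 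Only then does $Z^{\mathrm h}_{\kappa(y)}(\scrS_{\cutlabel}e^{\mathrm v}_x)=S_{yx}/D_\cR^2$ prove the proposition; this value follows from half-braiding naturality and $\Tr_Y(h)=\sum_i d_i\Tr_{\cR(Y,i)}(\phi\mapsto\phi\circ h)$. The test vectors must be the Hardiman classes $e^{\mathrm v}_x$, since the bare loop $\iota^{\mathrm v}_X(\id^{\mathrm v}_X)$ gives $Z^{\mathrm v}_y=\dim\cR(Y,X)$.

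Alternatively, the main-text proof sidesteps the torus quotient altogether. It writes both sides as $\Tr(T\circ M)$ and proves equality of the two environments $M$ as morphisms of $\Tubeh(\cR)$. It does this by testing them on all anyon-insertion functors, which jointly detect tube morphisms by Hardiman's equivalence, with the center projector doing the computation.
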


Equivalently,
\begin{multline}
\Tr\Bigl(T\circ \frac{1}{D_\cR^2}\sum_{y}\frac{S_{xy}}{d_Y}\sum_{I} \Bigr. \\
\Bigl. \raisebox{-28.85274pt}{\includegraphics{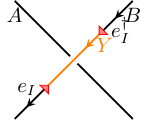}}\Bigr) \\
    = \Tr\Bigl(T\circ \frac{1}{d_X}\sum_J \Bigr. \\
    \Bigl. \raisebox{-28.85274pt}{\includegraphics{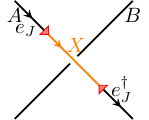}}\Bigr),\quad \forall T.
\end{multline}
Since the trace pairing $(T,M)\mapsto\Tr(T\circ M)$ is nondegenerate on the
relevant finite-dimensional matrix blocks, it suffices to prove the following
lemma.
\begin{lem}
\begin{multline}
\frac{1}{D_\cR^2}\sum_{y}\frac{S_{xy}}{d_Y}\sum_{I}
\raisebox{-28.85274pt}{\includegraphics{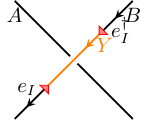}} \\
    = \frac{1}{d_X}\sum_J
\raisebox{-28.85274pt}{\includegraphics{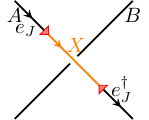}}.
\end{multline}
holds for all $A,B\in\cR$ and
$x=(X,\gamma_{X,-})\in\Irr\fZ(\cR)$.
\end{lem}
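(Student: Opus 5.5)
The plan is to recognize both sides as the same \emph{anyon-loop operator}, written once in the vertical cut and once in the horizontal cut. The link between them is the standard fact that a closed $x$-loop with its half-braiding acts on sector $y$ by the scalar $S_{xy}/d_Y$.

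\textbf{Step 1: the vertical projectors.} For fixed $y=(Y,\gamma_{Y,-})$, I would show that the vertical-cut diagram
\[
P_y:=\frac{1}{d_Y}\sum_I(\cdots)
\]
on the left is the minimal central projector of $\Tubev(\cR)$ onto sector $y$, up to the factor $d_Y$. Here $\{e_I\}$ is an orthonormal basis of $y^\sharp(A)=\cR(Y,A)$ for the inner product of Eq.~\eqref{eq:inner-product-def}, glued to its adjoint through the $Y$-strand.
\begin{itemize}
\item Using the sector decomposition $\Phi_{A,B}$ of Eq.~\eqref{eq:tube-sector-decomposition}, one checks that $z^\sharp$ applied to this diagram vanishes for $z\not\cong y$. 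This uses simplicity and Schur's lemma: a morphism $Y\to Z$ intertwining the half-braidings is zero.
\item On $z=y$ it is $d_Y$ times the identity, because $\sum_I e_I\langle e_I|\cdot\rangle$ is the identity of $\cR(Y,A)$ once the normalization $\frac1{d_Y}\Tr$ is used. The basis choice of Eq.~\eqref{eq:forget-embedding} makes this explicit.
\item Consequently $\sum_y \frac{1}{d_Y}(\cdots)_y$ realizes the identity morphism of the tube, i.e.\ completeness $\sum_y \Pi_y=\id$, where $\Pi_y$ denotes the normalized central projector.
\end{itemize}

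\textbf{Step 2: the $x$-loop in sector $y$.} Let $L_x$ denote the tube morphism obtained by running an $X$-strand once around the vertical tube, crossing the $A$-legs by $\gamma_{X,-}$. This is the diagram on the right-hand side, read in the vertical cut. I would compute its action in each sector $y$ via Eq.~\eqref{eq:anyon-insertion-partial-trace}. Pulling the $X$-loop off the boundary through the inserted $Y$-line produces the double braiding $\gamma_{Y,X}\circ\gamma_{X,Y}$ closed on $X$. Since $Y$ is simple, this partial trace is a scalar multiple of $\id_Y$, and taking traces identifies the scalar as $S_{xy}/d_Y$. Hence $y^\sharp(L_x)=\frac{S_{xy}}{d_Y}\id$, and by Step 1
\[
L_x=\sum_y \frac{S_{xy}}{d_Y}\,\Pi_y .
\]
Expressing $\Pi_y$ through the explicit vertical diagram, with the factor $d_Y/D_\cR^2$ coming from the $\Omega$-loop/induction normalization, then gives exactly the left-hand side with the prefactor $\frac{1}{D_\cR^2}\frac{S_{xy}}{d_Y}$.

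\textbf{Step 3: change of cut.} Finally I would show, by planar isotopy together with naturality of $\gamma_{X,-}$, that $L_x$ written in the vertical cut coincides with the horizontal insertion $\frac1{d_X}\sum_J(\cdots)$ of $\kappa(x)=(X,\gamma^{-1}_{X,-})$. Concretely, one expands the closed $X$-strand with the completeness relation Eq.~\eqref{eq:orthonormal-completeness} on $X\otimes B$, producing the basis $\{e_J\}$ of the horizontal sector. The inverse half-braiding appears because the strand passes the $B$-legs from the opposite side, consistent with $\Tubeh(\cR)=\Tubev(\cR^{\rev})$.

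\textbf{Main obstacle.} I expect the hardest part to be bookkeeping rather than concept: getting the normalizations $d_Y$, $D_\cR^2$ and $1/d_X$ to match, and checking that the orientation conventions produce $S_{xy}$ rather than $S_{x\bar y}$ or $\overline{S_{xy}}$. I would fix these first by testing the case $\cR=\Hilb$, where both sides reduce to the identity, and the trivial sector $x=\one$, where the identity must reduce to completeness $\sum_y\frac{d_Y}{D_\cR^2}\,(\cdots)=\id$.
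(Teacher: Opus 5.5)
Your Steps 2 and 3 point the right way, but Step 1 is false as stated. The idea that repairs it is the one the paper's proof turns on.

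For fixed $A,B$ and fixed $y$, the diagram $P_y=\frac{1}{d_Y}\sum_I(\cdots)$ is a single morphism in $\cR(B\otimes A,A\otimes B)$. Equivalently, it is a tube morphism with one fixed running label $B$, and it is not a central projector. Applying $z^\sharp$ to it gives, schematically, $\phi\mapsto\frac{1}{d_Y}\sum_I e_I\circ\Psi_B(e_I^\dagger\circ\phi)$. Here $\Psi_B$ is the operator on $\cR(Z,Y)$ that encircles a morphism by one $B$-loop using the two half-braidings. Since $e_I^\dagger\circ\phi$ is an arbitrary morphism of $\cR$ and not an intertwiner, Schur's lemma does not apply.

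A concrete counterexample: take $\cR=\Fib$, $A=B=\one$ and $y$ trivial. Then $P_y=\id_\one$ is the identity tube morphism. It acts as the identity in the sector $z=\tau\boxtimes\tau$ as well, whose underlying object $\one\oplus\tau$ contains $\one$.

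Your normalizations are also inconsistent. Step 1 asserts $P_y=d_Y\Pi_y$, while the end of Step 2 needs $\Pi_y=\frac{d_Y}{D_\cR^2}\times(\text{vertical diagram})$. Likewise, Step 3 fails for the unweighted horizontal diagram. Expanding the closed $X$-strand by completeness, then rescaling the isometric embeddings $b\to X$ to the $\frac{1}{d_X}\Tr$-orthonormal basis of $\cR(X,b)$, produces a factor $d_b$ on the running-label component $b$. So $L_x$ equals the $d_b$-weighted direct sum of the horizontal diagrams, not the diagram itself.

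What is missing is the center projector of Eq.~\eqref{eq:center-projector}, together with a legitimate way to introduce it. The lemma's left side contains no $\Omega$-loop, so the weight $d_b/D_\cR^2$ on the closed label (the one the $Y$-strand crosses, $B$ in your notation) must be inserted by hand.

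The paper does this in two steps. First it reduces to simple $A=a$, $B=b$; both sides decompose blockwise over simple summands, and you never perform this reduction. Second, for simple running labels the tube morphism space is the direct sum over those labels. So multiplying the component with label $b$ by the nonzero weight $d_b/D_\cR^2$ loses no information, and an equality of the weighted tube morphisms pulls back componentwise to the fixed-$(A,B)$ identity. Without this you have no way to return from a tube-level identity to the lemma.

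After weighting, the two sides become:
\begin{itemize}
\item the left side becomes $\frac{1}{D_\cR^2}\sum_y\frac{S_{xy}}{d_Y}\Pi_y$, where $\Pi_y$ is now a genuine central projector by M\"uger's lemma;
\item the right side becomes $\frac{1}{D_\cR^2}L_x$.
\end{itemize}
Testing both on all $z^\sharp$ via Proposition~\ref{prop:hardiman-tube-center-equivalence} gives $\frac{S_{xz}}{D_\cR^2\,d_Z}$ on each side. With this weighted direct sum in place, your Steps 2--3 become exactly the paper's argument.

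One smaller fix in Step 2: the double-braiding partial trace is a scalar because $\ptr_X$ of it lies in $\End_{\fZ(\cR)}(y)$. It is not because $Y$ is simple in $\cR$, which it need not be.
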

\begin{proof}
Using the basis $I=(i,\alpha,\beta)$ from
Eq.~\eqref{eq:forget-embedding}, it suffices to consider simple objects
$A=a$ and $B=b$.  Regarding both sides as morphisms in $\Tubeh(\cR)$ and taking
the weighted direct sum $\bigoplus_a d_a/D_\cR^2$, the desired identity becomes
\begin{multline}
\bigoplus_a\frac{d_a}{D_\cR^2}\sum_\alpha\frac{1}{D_\cR^2}
\sum_{y}\frac{S_{xy}}{d_Y} \\
\times \raisebox{-28.65274pt}{\includegraphics{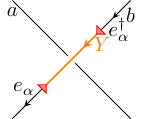}} \\
    =\frac{1}{D_\cR^2}
\raisebox{-30.95274pt}{\includegraphics{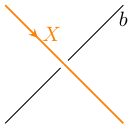}}.
\end{multline}

By Proposition~\ref{prop:hardiman-tube-center-equivalence}, the
anyon-insertion representations jointly detect morphisms in $\Tubeh(\cR)$.
For $z=(Z,\gamma_{Z,-})\in\Irr\fZ(\cR)$, both sides act on
\[
\raisebox{-28.75273pt}{\includegraphics{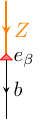}},
\]
as
\[
\frac{1}{D_\cR^2}\frac{S_{xz}}{d_Z}\xt
\raisebox{-28.75273pt}{\includegraphics{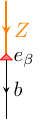}}.
\]
Here we have used the graphical center projector of
Lemma~3.10 of Ref.~\cite{mueger2001subfactorscategoriestopologyii}:
\begin{equation}\label{eq:center-projector}
    \raisebox{-28.85274pt}{\includegraphics{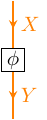}}\mapsto
\sum_{i\in\Irr\cR}\frac{d_i}{D_\cR^2}
\raisebox{-30.95274pt}{\includegraphics{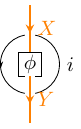}}.
\end{equation}
It projects $\cR(X,Y)$ onto
$\fZ(\cR)\bigl((X,\gamma_{X,-}),(Y,\gamma_{Y,-})\bigr)$.
\end{proof}

We emphasize that the modular covariance is kinematical. It holds for any
$\cR$-tensor $T$, independent of the specific values of the tensor elements, and
is determined completely by the charge category $\cR$. It is not a dynamical
condition that depends on how the interaction is defined.
A more abstract, basis-independent proof in terms of directional tube traces
and the change of cut is given in Appendix~\ref{app:directional-tube-traces}.

\subsection{Computing structure constants}\label{sec:structure-constants}
The numerical extraction of bulk CFT structure constants from a rank-three
fixed-point tensor, without assuming an internal symmetry, was first introduced in Ref.~\cite{Li_2022}.
In Ref.~\cite{Li_2022}, however, the relation between the rank-three fixed-point tensor and structure constants involved an undetermined parameter, termed the \emph{fundamental length}, which had to be fitted numerically.

Ref.~\cite{ueda2023fixedpointtensorfourpointfunction} derived the primary-field
three-point amplitude for a closed-string pair-of-pants geometry, obtaining a
fundamental length $x_S=\ee^{\pi/4}$. The local-coordinate maps and the calculation
of the corresponding geometric factor are reviewed in
Appendix~\ref{appsec:closed-string-pants}.

The extraction starts from the geometric interpretation of a fixed-point
tensor introduced in Ref.~\cite{Levin_2007} and reviewed in
Ref.~\cite{Vanthilt_2026_TNRKit}. At an RG fixed point, a rank-three tensor
represents the partition function of a coarse-grained triangular block, while
its three legs carry the boundary state spaces associated with the three
edges. Graphically,
\begin{equation}\label{eq:three-point-function-tensor}
    \raisebox{-24.58478pt}{\includegraphics{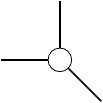}}
=
Z\left(
\raisebox{-31.564pt}{\includegraphics{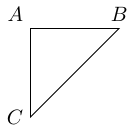}}
\right)
\end{equation}

Contracting each edge of the coarse-grained triangle with a transfer-matrix
eigenstate rolls that edge into a cylindrical end. Under these three
contractions, the vertices $A$, $B$, and $C$ of the triangle in
Eq.~\eqref{eq:three-point-function-tensor} are identified, while the edges
$AB$, $BC$, and $AC$ become the boundary circles of the three cylindrical
ends. The resulting surface is a closed-string pair of pants, with the right
isosceles triangle forming the central region that connects its three
cylindrical ends.

We now generalize this technique to fixed-point $\cR$-tensor networks. Using
the method introduced in Sec.~\ref{sec:transfer-morphism-diagonalization}, states in
each anyon sector of $\Irr\fZ(\cR)$ are obtained by diagonalizing the
corresponding tube transfer morphism. Contracting these sector-resolved states
with the rank-three fixed-point $\cR$-tensor gives access to additional
structure constants that are invisible in a $\Hilb$-tensor network.

Holographically, this contraction glues three anyon-threaded transfer tubes to
a pair of pants. The three bulk worldlines meet at a junction in
$\fZ(\cR)(x\otimes y,z)$. Therefore the resulting amplitude is resolved not
only by the three anyon sectors but also by their fusion-space index.
In examples below, all fusion spaces are one-dimensional, therefore we can omit the fusion-space index when labeling the resulting amplitudes.

Here, using the conventions of
Ref.~\cite{Vanthilt_2026_TNRKit,TNRKit_code}, we choose the transfer geometry with shape $[1, 2, 1]$ to determine the
conformal spin modulo~$2$.

We contract the transfer-morphism eigenstates $|x;a\ra$, $|y;b\ra$, and
$|z;c\ra$, obtained in the three directions, with the three legs of the
fixed-point tensor. The normalization of these states uses closed contraction divided by the quantum dimension of the anyon. Here $x=(X,\gamma_{X,-})$, $y=(Y,\gamma_{Y,-})$, and
$z=(Z,\gamma_{Z,-})$ are anyon sectors in $\Irr\fZ(\cR)$, while $a$, $b$,
and $c$ label eigenstates in the respective sectors. We then apply the center
projector in Eq.~\eqref{eq:center-projector}, which maps the resulting
morphism in $\cR(X\otimes Y,Z)$ to $\fZ(\cR)(x\otimes y,z)$.

With the incoming
and outgoing orientations shown in Fig.~\ref{fig:three-point-func-process},
the result is
\begin{equation}\label{eq:structure-amplitude-hom-space}
    \cA_{abc^*}
    \in\fZ(\cR)(x\otimes y,z).
\end{equation}
We interpret this numerical amplitude in terms of the nonchiral closed-string
three-point function
\[
\la\phi_a^{\closedsector}\phi_b^{\closedsector}\phi_{c^*}^{\closedsector}\ra_{\mathrm{cpants}}.
\]

\begin{figure}[tb]
  \centering
    \[
\raisebox{-33.05421pt}{\includegraphics{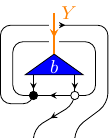}} = \lambda^y_b 
\raisebox{-18.11661pt}{\includegraphics{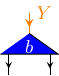}},\quad 
\raisebox{-25.29604pt}{\includegraphics{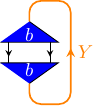}} = d_{Y}.
\]
  \quad
    \[
\raisebox{-26.09192pt}{\includegraphics{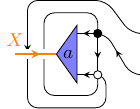}} = \lambda_a^x
\raisebox{-14.14171pt}{\includegraphics{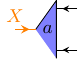}},\quad \raisebox{-25.29604pt}{\includegraphics{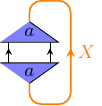}} = d_X.
\]
  \quad
    \[
\raisebox{-46.0006pt}{\includegraphics{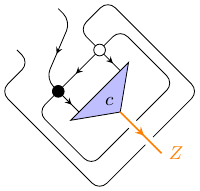}}
=
\lambda_c^z
\raisebox{-23.09233pt}{\includegraphics{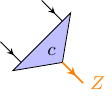}}
,\quad \raisebox{-31.07678pt}{\includegraphics{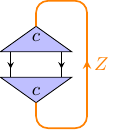}} = d_Z.
\]
\quad
    \[\cA_{abc^*} = 
\sum_{i\in\Irr\cR}\frac{d_i}{D_\cR^2}
\raisebox{-55.46689pt}{\includegraphics{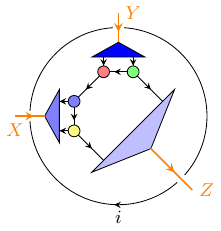}}
\]
  
  \caption{Construction of the bulk three-point amplitude. The first three
  panels show eigenstates of the anyon-insertion representations of the tube
  morphisms together with their normalization conventions, corresponding to primary or descendant states $a$, $b$, and $c$
  in the sectors $x$, $y$, and $z$ in $\fZ(\cR)$, respectively. The third panel uses the
  fixed-point tensor at the next Loop-TNR step to match the gauge of the bond
  entering the three-point contraction. The last panel evaluates
  $\cA_{abc^*}\in
  \fZ(\cR)(x\otimes y,z)$ for the fields
  $\phi_a$, $\phi_b$, and $\phi_c$ in sector $x$, $y$, and $z$, respectively.}
  \label{fig:three-point-func-process}
\end{figure}

In practice, each eigenstate obtained by diagonalizing a tube morphism has a
phase ambiguity. We assume throughout that the relevant transfer eigenspaces
are nondegenerate, so this ambiguity is only a $U(1)$ phase for each
eigenstate. Without a $C_4$-symmetric Loop-TNR implementation, the phase of a
given bulk field may be chosen independently in the up-, left-, and
tilted-contracted states. We therefore keep only the absolute values.

Choose the isotopy-invariant trivalent basis
$\{v_{xy}^{z,\alpha}\}_{\alpha}$ of
$\fZ(\cR)(x\otimes y,z)$ defined by
Eq.~\eqref{eq:isotopy-invariant-fusion-basis}. We decompose the numerical
amplitude into basis as
\begin{equation}\label{eq:structure-amplitude-decomposition}
    \cA_{abc^*}
    =\sum_\alpha \cA_{abc^*; \alpha}v_{xy}^{z,\alpha}.
\end{equation}
Motivated by the closed-string pair-of-pants geometry, we adopt the following
prescription to estimate the magnitudes of primary-field structure constants
from the numerical fixed-point tensor.
\begin{equation}\label{eq:extract-closed-structure-constants}
    \left|\frac{\cA_{abc^*;\alpha}}{\cA_{000}}\right|
    \simeq G_{\mathrm{cpants}}(\Delta_a,\Delta_b,\Delta_{c^*})
      \left|C^{\closedsector}_{abc^*;\alpha}\right|,
\end{equation}
where the complete derivation of
\begin{equation*}
G_{\mathrm{cpants}}(\Delta_a,\Delta_b,\Delta_c)
=\ee^{-\frac{\pi}{4}(\Delta_a+\Delta_b+\Delta_c)}
 (\sqrt{2})^{-\Delta_a-\Delta_b+\Delta_c}
\end{equation*}
is given in Appendix~\ref{appsec:closed-string-pants}. See in particular
Eq.~\eqref{eq:pants-gauge-app}. Here $\mathcal A_{000}$ is the amplitude of
the vacuum primary in the trivial anyon sector and its unique vacuum fusion
channel. Dividing by it removes the overall tensor normalization, including
the nonuniversal free-energy-density contribution.

The geometric factor is derived for the specified continuum geometry and
local-coordinate convention. Its application to the numerical fixed-point
tensor is supported by the agreement observed in our examples. However, the
matching has not been verified for generic states, including Virasoro
descendants.

Using the fundamental length $x_S=\ee^{\pi/4}$ gives good agreement with our numerical
results. However, the agreement is less precise for amplitudes involving higher scaling dimension, and for structure constants in some
sectors. The pair-of-pants description computed \emph{analytically} may require further corrections when
applied to \emph{numerical} fixed-point tensors. In particular, possible contributions from local fields at the conical
singularity with opening angle $4\pi$ or at sewing junctions remain to be
understood. Such fields may be induced by truncation in the algorithm.

The coefficients $C_{abc;\alpha}^{\closedsector}$ are closed-string structure
constants of the full nonchiral CFT, resolved in the bulk anyon sectors of
$\fZ(\cR)$. They depend on the chosen basis of the fusion space and throughout we
use the isotopy-invariant convention of
Eq.~\eqref{eq:isotopy-invariant-fusion-basis}. Equation~\eqref{eq:extract-closed-structure-constants}
determines only their magnitudes. Recovering relative phases requires a
consistent phase convention for the three transfer directions.

For the numerical implementation, we use the enlarged half-$4\xt4$
triangular patch detailed in
Appendix~\ref{app:four-by-four-structure-constants}. This suppresses
irrelevant perturbations while keeps the computation complexity to $\cO(\chi^6)$.

\Needspace{4\baselineskip}
Naively inverting the environment matrices $\cN_i^a$ of
Sec.~\ref{sec:r-loop-tnr} in the local RG updates can substantially degrade
the rotational and reflection symmetries of the renormalized tensors because
division by very small eigenvalues amplifies numerical errors.
The resulting loss of symmetry reduces the accuracy of the extracted
structure constants $C_{abc;\alpha}^{\closedsector}$.
In our tests, Krylov solvers reduce this loss of symmetry and yield
more accurate primary-field structure constants. The local solves can
converge slowly due to the ill-conditioning of the environment matrices, but their updates can still lower the loop cost even when
the prescribed residual tolerance is not reached. In these tests, incomplete
convergence mainly affects the accuracy of the descendant spectrum, while the primary scaling dimension is still correct.

\section{Simple examples with ordinary symmetry and generalized symmetry}\label{sec:ordinary-and-generalized-symmetry-examples}

In this section, we introduce several $\cR$-tensor-network models and compute
their CFT data at criticality. All of these critical points require
fine-tuning. We begin with the standard Ising model protected by $\Z_2$
symmetry in Subsection~\ref{sec:z2-statistical-ising}, using it to illustrate the
methods developed above, including the construction of a $\Rep(\Z_2)$-tensor
network, the extraction of $\fZ(\Rep(\Z_2))$-anyon-resolved spectra, and the
computation of structure constants.

We then consider random-cluster models with generalized symmetries in
Subsection~\ref{sec:random-cluster-models}. This class generalizes the Potts
random-cluster model by replacing the integer parameter $q$ with the quantum
dimension of a Q-system $Q$ in a UFC. The three-state Potts model and the
Fibonacci-symmetric random-cluster model are two examples.
This Q-system construction is closely related to the CFT-factory framework,
where critical lattice models arise from competing anyon condensates and their
couplings are organized by Frobenius-algebra data~\cite{HungCFTFactory2025}.

\subsection{\texorpdfstring{$\Z_2$}{Z2}-protected 2D statistical Ising model on square lattice}
\label{sec:z2-statistical-ising}
We begin with the simplest example, the two-dimensional statistical Ising model
on a square lattice, protected by $\Z_2$ symmetry. We formulate this model as a
$\Rep(\Z_2)$-tensor network and apply $\Rep(\Z_2)$-Loop-TNR to it. We then
extract the conformal towers and structure constants resolved by the sectors
of $\fZ(\Rep(\Z_2))$.

\paragraph{The model}
The degrees of freedom are Ising spins located at the sites
$\mathbf{i}\in\Z^2$ of the square lattice. Each spin $s_\mathbf{i}$ takes a
value in $\{-1,1\}$. The configuration space is therefore the Cartesian product
\[
\scrC = \prod_{\mathbf{i}\in\Z^2}\{-1, 1\},
\]
and a spin configuration is an element of $\scrC$:
\[
\mathbf{s} = (s_{\mathbf{i}})_{\mathbf{i}\in\Z^2} \in \scrC .
\]
The partition function is the sum over all spin configurations weighted by
their Boltzmann factors:
\[
Z(\beta) = \sum_{\mathbf{s}\in \scrC }\prod_{\langle \mathbf{i}\mathbf{j}\rangle}\ee^{\beta s_\mathbf{i}s_\mathbf{j}}.
\]
The critical inverse temperature is
\[
\beta_c = \frac{1}{2}\log(1+\sqrt{2}).
\]

\paragraph{The tensor-network representation}
We use the following tensor-network representation of the partition function
$Z(\beta)$:
\[
\raisebox{-67.43977pt}{\includegraphics{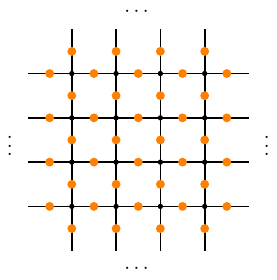}}.
\]
Each black four-leg tensor represents the $\delta$-function map
\begin{gather*}
    \raisebox{-14.62636pt}{\includegraphics{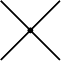}}:\C\{0, 1\} \otimes \C\{0, 1\} \rightarrow\C\{0, 1\} \otimes \C\{0, 1\}\\
e_{i}\otimes e_j \mapsto \delta_{ij}e_i\otimes e_i,\quad i,j = 0,1.
\end{gather*}
The binary labels $i\in\{0,1\}$ encode the Ising spins as
$s_i=(-1)^i$.

Each orange tensor, placed on a bond, represents the linear map
\begin{gather*}
    b(\beta):=\raisebox{-14.62636pt}{\includegraphics{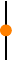}}:\C\{0, 1\} \rightarrow \C\{0, 1\},\\
e_i \mapsto \sum_{j = 0}^1\ee^{\beta s_is_j}e_j,\quad s_j = (-1)^{j}.
\end{gather*}

Combining these maps gives an equivalent representation in terms of the local
rank-four tensor
\[
T(\beta) = \raisebox{-28.85274pt}{\includegraphics{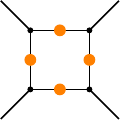}},
\]
where each black three-leg tensor represents the $\delta$-function map
\begin{gather*}
    m := \raisebox{-14.62636pt}{\includegraphics{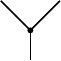}}: \C\{0, 1\} \otimes \C\{0, 1\} \rightarrow \C\{0, 1\},\\
e_{i}\otimes e_j \mapsto \delta_{ij}e_i.
\end{gather*}

\paragraph{The symmetric basis}
The vector space $\C\{0,1\}$ carries a $\Z_2$ representation $\rho$ defined by
\begin{gather*}
    \rho:\Z_2\rightarrow \End(\C\{0,1\}),\\
    i\mapsto \left(e_j\mapsto e_{i+j}\right),
\end{gather*}
where the indices are added modulo $2$. Both $m$ and $b(\beta)$ are
$\Z_2$ intertwiners with respect to this representation.

Consequently, $\C\{0,1\}$ decomposes into the direct sum of the trivial and
nontrivial irreducible $\Z_2$ representations:
\begin{gather*}
    e_+ = \frac{1}{\sqrt{2}}\left(e_0 + e_1\right),\ \mathbf{0} = \C e_+\in \Rep(\Z_2),\\
    e_- = \frac{1}{\sqrt{2}}\left(e_0 - e_1\right),\ \mathbf{1} = \C e_-\in \Rep(\Z_2), \\
    \C\{0,1\} = \mathbf{0} \oplus \mathbf{1} \in \Rep(\Z_2).
\end{gather*}

In this symmetry-adapted basis, we represent the sector $\mathbf{0}$ by dashed
lines and the sector $\mathbf{1}$ by solid lines. The three-leg tensor $m$ then
takes the form
\begin{multline}
m = \frac{1}{\sqrt{2}}\Bigl(
\raisebox{-21.73953pt}{\includegraphics{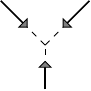}}
+
\raisebox{-21.73953pt}{\includegraphics{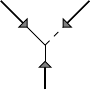}} \\
+
\raisebox{-21.73953pt}{\includegraphics{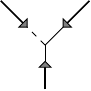}}
+
\raisebox{-21.73953pt}{\includegraphics{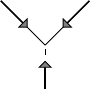}}\Bigr).
\end{multline}
In the same basis, the bond map becomes
\[
b(\beta) = 2\cosh{\beta}\xt \ 
\raisebox{-28.85274pt}{\includegraphics{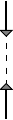}}
+
2\sinh{\beta}\xt\ 
\raisebox{-28.85274pt}{\includegraphics{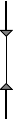}}.
\]

\paragraph{The symmetric partition function}
In the symmetry-adapted basis, the local tensors enforce $\Z_2$ charge
conservation, so only charge-conserving configurations contribute to the
partition function. The lines carrying nontrivial $\Z_2$ charge must therefore
form closed loops. Let $E$ denote the set of edges of the square lattice. The
relevant configuration space is therefore the constrained subset
\[
\scrC^{\sym}
:=\left\{
c\in\{0,1\}^{E}\;\middle|\;
\substack{\text{the occupied edges of $c$}\\
          \text{form closed loops}}
\right\}.
\]

On a closed manifold, $\scrC^{\sym}$ further decomposes into
topological sectors because local tensor contractions and recouplings preserve
the homology classes of the loops. On the torus, the two $\Z_2$ winding
parities label four sectors, and the partition vector is
\begin{multline}
|Z(\beta)\ra = Z_{00}(\beta)\xt \left|
\raisebox{-14.42638pt}{\includegraphics{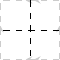}}\right\ra \\
+ Z_{01}(\beta)\xt\left|
\raisebox{-14.42638pt}{\includegraphics{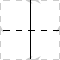}}\right\ra \\
+ Z_{10}(\beta)\xt\left|
\raisebox{-14.42638pt}{\includegraphics{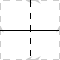}}\right\ra \\
+ Z_{11}(\beta)\xt\left|
\raisebox{-14.42638pt}{\includegraphics{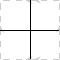}}\right\ra.
\end{multline}
Up to an overall normalization, the coefficient of the basis state $|ij\ra$ is
\[
Z_{ij}(\beta) \propto \sum_{c\in \scrC^{\sym}_{ij}}\tanh^{l(c)}{\beta}.
\]
Here $\scrC^{\sym}_{ij}\subset
\scrC^{\sym}$ denotes the sector with winding parities $(i,j)$,
and $l(c)$ is the total number of occupied links in the loop configuration
$c$. We emphasize that the four components $Z_{ij}$ are distinguished by
$\Z_2$ \emph{charge} lines winding around the two noncontractible cycles,
rather than by the $\Z_2$ \emph{defect} insertions used to define conventional
twisted partition functions.

The states $|00\ra$, $|01\ra$, $|10\ra$, and $|11\ra$ form a
basis for the four-dimensional ground-state subspace of the toric code on the
torus. They are related by a basis transformation to the toric-code anyon
basis $|\one\ra$, $|e\ra$, $|m\ra$, and $|f\ra$. This constrained
configuration space concretely realizes the holographic picture of an
$\cR$-tensor network described in Sec.~\ref{sec:holographic-picture} and
App.~\ref{app:directional-tube-traces}.

\begin{figure*}[tb]
    \centering
    \includegraphics[width=0.98\textwidth]{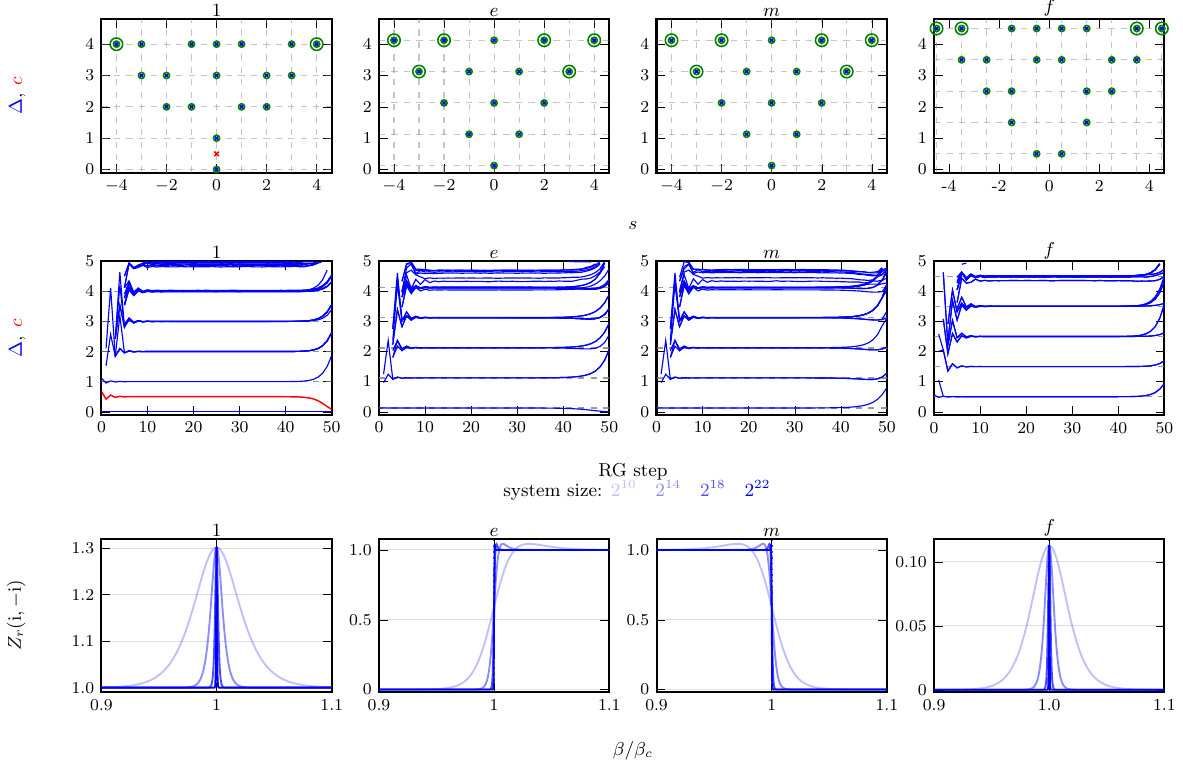}
    \caption{Sector-resolved conformal data of the critical classical Ising
    model with $\Z_2$ symmetry in
    $\Irr\fZ(\Rep(\Z_2))=\{\one,e,m,f=em\}$.
    The top row shows the spin-resolved spectrum at the 20th RG step.
    Its sector-dependent dashed grid marks the exact allowed values of $s$ and
    $\Delta$, and green concentric rings show exact degeneracies. The middle
    row traces the numerical scaling dimensions through the Loop-TNR flow, with
    dashed horizontal lines marking the exact dimensions.
    The third row shows the partition function $Z_x(\ii,-\ii)$ on a square
    across the thermal transition for systems with $2^{10},2^{14},2^{18}$,
    and $2^{22}$ spins. The first two rows use
    $\chi_{\text{RG}}=32$, with $\chi_{\text{transfer}}=40$ for the top row,
    whereas the partition functions in the third row use
    $\chi_{\text{RG}}=16$.}
    \label{fig:ising-statistics}
    \label{fig:ising-phase-diagram}
\end{figure*}

\begin{table*}[tb]
\centering
\caption{Magnitudes $|C^{\closedsector}_{abc}|$ of the closed-string structure
constants of the critical $\Z_2$-symmetric Ising model at
$\chi_{\mathrm{RG}}=32$ and $N_{\mathrm{RG}}=20$, using the $4\times4$
patch with cutoff 32 in both UTD decompositions.
All 56 symmetry-allowed ordered triples are shown.
Mean denotes the average of the two numerical magnitudes related by
$e\leftrightarrow m$.
We use the shorthand $x\bar y$ to denote the primary field in the double Ising
sector $x\boxtimes y$. The subscripts $e$ and $m$ on
$(\sigma\bar\sigma)_{e,m}$ distinguish the two local $A$-module structures.}
\label{tab:structure-constants-statistical-ising}
\begingroup
\fontsize{8.0}{9.6}\selectfont
\setlength{\tabcolsep}{1.2pt}
\renewcommand{\arraystretch}{1.02}
\begin{minipage}[t]{0.486\textwidth}
\vspace{0pt}
\begin{tabular*}{\linewidth}{@{\extracolsep{\fill}}lcrr@{}}
\toprule
$(a,b,c)$ & Theory & Numerical & Abs.\ error \\
\midrule
\multicolumn{4}{c}{$\one\times\one\times\one$} \\
\midrule
$\one\bar\one,\, \one\bar\one,\, \one\bar\one$ & \multirow{1}{*}{$1$} & 1.000000 & 0 \\
\cmidrule(lr){1-4}
$\one\bar\one,\, \one\bar\one,\, \psi\bar\psi$ & \multirow{3}{*}{$0$} & 0.022213 & $2.22\times10^{-2}$ \\
$\one\bar\one,\, \psi\bar\psi,\, \one\bar\one$ &  & 0.044530 & $4.45\times10^{-2}$ \\
$\psi\bar\psi,\, \one\bar\one,\, \one\bar\one$ &  & 0.042181 & $4.22\times10^{-2}$ \\
\cmidrule(lr){1-4}
$\one\bar\one,\, \psi\bar\psi,\, \psi\bar\psi$ & \multirow{3}{*}{$1$} & 1.003632 & $3.63\times10^{-3}$ \\
$\psi\bar\psi,\, \one\bar\one,\, \psi\bar\psi$ &  & 0.994521 & $5.48\times10^{-3}$ \\
$\psi\bar\psi,\, \psi\bar\psi,\, \one\bar\one$ &  & 0.993694 & $6.31\times10^{-3}$ \\
\cmidrule(lr){1-4}
$\psi\bar\psi,\, \psi\bar\psi,\, \psi\bar\psi$ & \multirow{1}{*}{$0$} & 0.021318 & $2.13\times10^{-2}$ \\
\midrule
\multicolumn{4}{c}{$\one\times f\times f$} \\
\midrule
$\one\bar\one,\, \one\bar\psi,\, \one\bar\psi$ & \multirow{3}{*}{$1$} & 1.002066 & $2.07\times10^{-3}$ \\
$\one\bar\psi,\, \one\bar\one,\, \one\bar\psi$ &  & 0.997438 & $2.56\times10^{-3}$ \\
$\one\bar\psi,\, \one\bar\psi,\, \one\bar\one$ &  & 0.997732 & $2.27\times10^{-3}$ \\
\cmidrule(lr){1-4}
$\one\bar\one,\, \one\bar\psi,\, \psi\bar\one$ & \multirow{6}{*}{$0$} & 0.031764 & $3.18\times10^{-2}$ \\
$\one\bar\one,\, \psi\bar\one,\, \one\bar\psi$ &  & 0.031764 & $3.18\times10^{-2}$ \\
$\one\bar\psi,\, \one\bar\one,\, \psi\bar\one$ &  & 0.030837 & $3.08\times10^{-2}$ \\
$\one\bar\psi,\, \psi\bar\one,\, \one\bar\one$ &  & 0.043846 & $4.38\times10^{-2}$ \\
$\psi\bar\one,\, \one\bar\one,\, \one\bar\psi$ &  & 0.030837 & $3.08\times10^{-2}$ \\
$\psi\bar\one,\, \one\bar\psi,\, \one\bar\one$ &  & 0.043846 & $4.38\times10^{-2}$ \\
\cmidrule(lr){1-4}
$\one\bar\one,\, \psi\bar\one,\, \psi\bar\one$ & \multirow{3}{*}{$1$} & 1.002066 & $2.07\times10^{-3}$ \\
$\psi\bar\one,\, \one\bar\one,\, \psi\bar\one$ &  & 0.997438 & $2.56\times10^{-3}$ \\
$\psi\bar\one,\, \psi\bar\one,\, \one\bar\one$ &  & 0.997732 & $2.27\times10^{-3}$ \\
\cmidrule(lr){1-4}
$\one\bar\psi,\, \one\bar\psi,\, \psi\bar\psi$ & \multirow{3}{*}{$0$} & 0.022135 & $2.21\times10^{-2}$ \\
$\one\bar\psi,\, \psi\bar\psi,\, \one\bar\psi$ &  & 0.031097 & $3.11\times10^{-2}$ \\
$\psi\bar\psi,\, \one\bar\psi,\, \one\bar\psi$ &  & 0.030339 & $3.03\times10^{-2}$ \\
\cmidrule(lr){1-4}
$\one\bar\psi,\, \psi\bar\one,\, \psi\bar\psi$ & \multirow{6}{*}{$1$} & 0.999838 & $1.62\times10^{-4}$ \\
$\one\bar\psi,\, \psi\bar\psi,\, \psi\bar\one$ &  & 0.999675 & $3.25\times10^{-4}$ \\
$\psi\bar\one,\, \one\bar\psi,\, \psi\bar\psi$ &  & 0.999838 & $1.62\times10^{-4}$ \\
$\psi\bar\one,\, \psi\bar\psi,\, \one\bar\psi$ &  & 0.999675 & $3.25\times10^{-4}$ \\
$\psi\bar\psi,\, \one\bar\psi,\, \psi\bar\one$ &  & 0.995069 & $4.93\times10^{-3}$ \\
$\psi\bar\psi,\, \psi\bar\one,\, \one\bar\psi$ &  & 0.995069 & $4.93\times10^{-3}$ \\
\cmidrule(lr){1-4}
$\psi\bar\one,\, \psi\bar\one,\, \psi\bar\psi$ & \multirow{3}{*}{$0$} & 0.022135 & $2.21\times10^{-2}$ \\
$\psi\bar\one,\, \psi\bar\psi,\, \psi\bar\one$ &  & 0.031097 & $3.11\times10^{-2}$ \\
$\psi\bar\psi,\, \psi\bar\one,\, \psi\bar\one$ &  & 0.030339 & $3.03\times10^{-2}$ \\
\bottomrule
\end{tabular*}
\end{minipage}\hfill
\begin{minipage}[t]{0.486\textwidth}
\vspace{0pt}
\begin{tabular*}{\linewidth}{@{\extracolsep{\fill}}lcrrr@{}}
\toprule
$(a,b,c)$ & Theory & Numerical & Mean & Abs.\ error \\
\midrule
\multicolumn{5}{c}{$\one\times e\times e\quad/\quad\one\times m\times m$} \\
\midrule
$\one\bar\one,\, (\sigma\bar\sigma)_e,\, (\sigma\bar\sigma)_e$ & \multirow{6}{*}{$1$} & 0.985019 & \multirow{2}{*}{1.000478} & \multirow{2}{*}{$4.78\times10^{-4}$} \\
$\one\bar\one,\, (\sigma\bar\sigma)_m,\, (\sigma\bar\sigma)_m$ & & 1.015938 & & \\
$(\sigma\bar\sigma)_e,\, \one\bar\one,\, (\sigma\bar\sigma)_e$ &  & 0.983856 & \multirow{2}{*}{0.999341} & \multirow{2}{*}{$6.59\times10^{-4}$} \\
$(\sigma\bar\sigma)_m,\, \one\bar\one,\, (\sigma\bar\sigma)_m$ & & 1.014827 & & \\
$(\sigma\bar\sigma)_e,\, (\sigma\bar\sigma)_e,\, \one\bar\one$ &  & 0.977223 & \multirow{2}{*}{0.999448} & \multirow{2}{*}{$5.52\times10^{-4}$} \\
$(\sigma\bar\sigma)_m,\, (\sigma\bar\sigma)_m,\, \one\bar\one$ & & 1.021674 & & \\
\cmidrule(lr){1-5}
$(\sigma\bar\sigma)_e,\, (\sigma\bar\sigma)_e,\, \psi\bar\psi$ & \multirow{6}{*}{$1/2$} & 0.489200 & \multirow{2}{*}{0.500385} & \multirow{2}{*}{$3.85\times10^{-4}$} \\
$(\sigma\bar\sigma)_m,\, (\sigma\bar\sigma)_m,\, \psi\bar\psi$ & & 0.511570 & & \\
$(\sigma\bar\sigma)_e,\, \psi\bar\psi,\, (\sigma\bar\sigma)_e$ &  & 0.461449 & \multirow{2}{*}{0.500627} & \multirow{2}{*}{$6.27\times10^{-4}$} \\
$(\sigma\bar\sigma)_m,\, \psi\bar\psi,\, (\sigma\bar\sigma)_m$ & & 0.539805 & & \\
$\psi\bar\psi,\, (\sigma\bar\sigma)_e,\, (\sigma\bar\sigma)_e$ &  & 0.459041 & \multirow{2}{*}{0.496408} & \multirow{2}{*}{$3.59\times10^{-3}$} \\
$\psi\bar\psi,\, (\sigma\bar\sigma)_m,\, (\sigma\bar\sigma)_m$ & & 0.533775 & & \\
\midrule
\multicolumn{5}{c}{$e\times m\times f$} \\
\midrule
$(\sigma\bar\sigma)_e,\, (\sigma\bar\sigma)_m,\, \one\bar\psi$ & \multirow{6}{*}{$1/\sqrt{2}$} & 0.707216 & \multirow{2}{*}{0.707482} & \multirow{2}{*}{$3.75\times10^{-4}$} \\
$(\sigma\bar\sigma)_m,\, (\sigma\bar\sigma)_e,\, \one\bar\psi$ & & 0.707748 & & \\
$(\sigma\bar\sigma)_e,\, \one\bar\psi,\, (\sigma\bar\sigma)_m$ &  & 0.696211 & \multirow{2}{*}{0.707982} & \multirow{2}{*}{$8.75\times10^{-4}$} \\
$(\sigma\bar\sigma)_m,\, \one\bar\psi,\, (\sigma\bar\sigma)_e$ & & 0.719752 & & \\
$\one\bar\psi,\, (\sigma\bar\sigma)_e,\, (\sigma\bar\sigma)_m$ &  & 0.694469 & \multirow{2}{*}{0.705354} & \multirow{2}{*}{$1.75\times10^{-3}$} \\
$\one\bar\psi,\, (\sigma\bar\sigma)_m,\, (\sigma\bar\sigma)_e$ & & 0.716238 & & \\
\cmidrule(lr){1-5}
$(\sigma\bar\sigma)_e,\, (\sigma\bar\sigma)_m,\, \psi\bar\one$ & \multirow{6}{*}{$1/\sqrt{2}$} & 0.707216 & \multirow{2}{*}{0.707482} & \multirow{2}{*}{$3.75\times10^{-4}$} \\
$(\sigma\bar\sigma)_m,\, (\sigma\bar\sigma)_e,\, \psi\bar\one$ & & 0.707748 & & \\
$(\sigma\bar\sigma)_e,\, \psi\bar\one,\, (\sigma\bar\sigma)_m$ &  & 0.696211 & \multirow{2}{*}{0.707982} & \multirow{2}{*}{$8.75\times10^{-4}$} \\
$(\sigma\bar\sigma)_m,\, \psi\bar\one,\, (\sigma\bar\sigma)_e$ & & 0.719752 & & \\
$\psi\bar\one,\, (\sigma\bar\sigma)_e,\, (\sigma\bar\sigma)_m$ &  & 0.694469 & \multirow{2}{*}{0.705354} & \multirow{2}{*}{$1.75\times10^{-3}$} \\
$\psi\bar\one,\, (\sigma\bar\sigma)_m,\, (\sigma\bar\sigma)_e$ & & 0.716238 & & \\
\bottomrule
\end{tabular*}
\end{minipage}
\endgroup
\end{table*}

\paragraph{Anyon-resolved spectra}
Applying the algorithm described in Subsec.~\ref{sec:r-loop-tnr} to the critical
$\Z_2$-symmetric classical Ising model, with an RG cutoff
$\chi_{\text{RG}}=32$, we obtain the sector-resolved evolution of the scaling
dimensions along the Loop-TNR flow. At the 20th RG step, we compute the
spin-resolved conformal towers in the symTO sectors~\cite{Ji_2019,Ji_2020}
using the transfer morphism of shape
$[\frac{4}{\sqrt{10}},2\sqrt{10},\frac{2}{\sqrt{10}}]$ in the $[h,L,\xi]$
convention of Fig.~\ref{fig:tube-shape-geometry}, with cutoff
$\chi_{\text{transfer}}=40$. See Fig.~\ref{fig:ising-statistics}.

For $\cR=\Rep(G)$ with $G$ finite Abelian, the simple objects of
$\fZ(\cR)$ are labelled by pairs $(g,\chi)\in G\times\widehat G$ and
factorize into charge and flux objects. For non-Abelian $G$, they are
instead labelled by pairs $([g],\pi)$, where $[g]$ is a conjugacy class
and $\pi$ is an irreducible representation of the centralizer $C_G(g)$.
They therefore do not, in general, factorize into independent charge and
flux objects, reflecting the fact that a flux labelled by an individual
$g$ is not $G$-symmetric~\cite{davydov2010modular}.

Since all $F$-symbols of $\Rep(\Z_2)$ can be gauge-fixed to unity, the
bulk Turaev--Viro evolution described in
Sec.~\ref{sec:holographic-picture} is almost trivial. Its remaining
nontrivial content comes from the fusion rules, which take the form of
the nonlocal constraints defining the configuration space
$\scrC^{\sym}$.

Before topological Wick rotation, the condensation from
$\Ising\boxtimes\overline{\Ising}$ to
$\fZ(\Rep(\Z_2))$ is described by the local-module category~\cite{Ji_2020}:
\[
\left(\Ising\boxtimes\overline{\Ising}\right)^{\localmodules}_{A}
\cong \fZ(\Rep(\Z_2)),
\qquad
A=\one\boxtimes\one\oplus\psi\boxtimes\psi.
\]
The simple objects of $\fZ(\Rep(\Z_2))$ are realized as local $A$-modules in
$\Ising\boxtimes\overline{\Ising}$:
\begin{equation}\label{eq:double-ising-toric-code-condensation}
\begin{aligned}
    \one &= \one\boxtimes\one\oplus\psi\boxtimes\psi,\\
    e &= \sigma\boxtimes\sigma,\\
    m &= (\sigma\boxtimes\sigma)_{\mathrm{tw}},\\
    f &= \psi\boxtimes\one\oplus\one\boxtimes\psi.
\end{aligned}
\end{equation}
The label $\mathrm{tw}$ marks the alternative local $A$-module structure, which means
$e$ and $m$ share the underlying object but not the module structure.

After topological Wick rotation, these local $A$-modules are interpreted as
gapless modes on a circle at the Ising critical
point~\cite{KONG2018140,Kong_2020,Kong_2021,Chen_2020},
as reviewed in Appendix~\ref{app:symto-review}.
Let $\scrV^{\mathrm I}_i$ denote the irreducible Ising VOA module labeled by
$i$, and $\scrV_x$ the full CFT Hilbert space in anyon sector $x$.
The sector Hilbert spaces decompose as
\begin{equation}\label{eq:z2-ising-sector-hilbert-spaces}
\begin{aligned}
    \scrV_{\one}
    &= \scrV^{\mathrm I}_{\one}\otimes\overline{\scrV^{\mathrm I}_{\one}}
    \oplus\scrV^{\mathrm I}_{\psi}\otimes\overline{\scrV^{\mathrm I}_{\psi}},\\
    \scrV_e
    &= \scrV^{\mathrm I}_{\sigma}\otimes\overline{\scrV^{\mathrm I}_{\sigma}},\\
    \scrV_m
    &= \left(\scrV^{\mathrm I}_{\sigma}\otimes
       \overline{\scrV^{\mathrm I}_{\sigma}}\right)_{\mathrm{tw}},\\
    \scrV_f
    &= \scrV^{\mathrm I}_{\psi}\otimes\overline{\scrV^{\mathrm I}_{\one}}
    \oplus\scrV^{\mathrm I}_{\one}\otimes\overline{\scrV^{\mathrm I}_{\psi}}.
\end{aligned}
\end{equation}
Here each left-hand side is a full sector Hilbert space and each right-hand
term is an entire chiral--antichiral conformal family of the Ising VOA
$\scrV^{\mathrm I}$.
The spin-resolved spectrum computed using the method of
Sec.~\ref{sec:transfer-morphism-diagonalization} and shown in
Fig.~\ref{fig:ising-statistics} verifies this decomposition into conformal families.

Taking characters of these complete conformal families gives the
sector partition functions~\cite{Ji_2020,Chen_2020,Ji_2019}:
\begin{equation}\label{eq:z2-ising-sector-characters}
\begin{aligned}
    Z_\one &= \chi_0\overline{\chi_0}
    + \chi_{\frac12}\overline{\chi_{\frac12}},\\
    Z_e &= \chi_{\frac1{16}}\overline{\chi_{\frac1{16}}},\\
    Z_m &= \chi_{\frac1{16}}\overline{\chi_{\frac1{16}}},\\
    Z_f &= \chi_0\overline{\chi_{\frac12}}
    + \chi_{\frac12}\overline{\chi_0}.
\end{aligned}
\end{equation}
The vector formed by these sector partition functions satisfies the
modular covariance relations discussed in Sec.~\ref{sec:modular-covariance}.

The conformal tower of the trivial sector also reveals how the
$\Z_2$ symmetry enhances perturbative stability. Excluding the identity,
the only operator in the trivial sector with scaling dimension
$\Delta<2$ is $\phi_\psi\otimes\overline{\phi_\psi}$, so the theory has a single
symmetry-preserving relevant perturbation. In the absence of an imposed
symmetry, the relevant operator $\phi_\sigma\otimes\phi_\sigma$ also belongs
to the single unconstrained sector. Imposing the $\Z_2$ symmetry moves
this conformal family to the $e$ sector, thereby excluding it from
symmetry-preserving perturbations and enhancing the stability of the
critical point~\cite{Chatterjee_2023,
chatterjee2024emergentgeneralizedsymmetrymaximal}.

Varying the inverse temperature $\beta$ across its critical value
$\beta_c$, we compute, with cutoff $\chi_{\text{RG}}=16$, the
sector-resolved torus partition functions
$Z_x(\ii,-\ii)$, with $x\in\{\one,e,m,f\}$, shown in
Fig.~\ref{fig:ising-phase-diagram}. They exhibit the change of condensed
Lagrangian algebra across the phase transition. For $\beta>\beta_c$, the
Ising model is in the $\Z_2$-symmetry-breaking phase, characterized by
$e$ condensation and the Lagrangian algebra $\one\oplus e$. For
$\beta<\beta_c$, it is in the $\Z_2$-symmetric phase, characterized by
$m$ condensation and the Lagrangian algebra
$\one\oplus m$~\cite{Ji_2020,Chen_2020}.

Using the tensor obtained after $N_{\text{RG}}=20$ Loop-TNR steps with
cutoff $\chi_{\text{RG}}=32$, we compute the magnitudes of the closed-string
structure constants for all 56 symmetry-allowed ordered triples using
the $4\times4$ patch of App.~\ref{app:four-by-four-structure-constants},
with cutoff 32 in both UTD decompositions. Table~\ref{tab:structure-constants-statistical-ising}
pairs entries related by exchanging $e$ and $m$
and gives their arithmetic mean. The $\one\times\one\times\one$ and
$\one\times f\times f$ channels are listed separately.

\subsection{Random cluster models with fusion category symmetry}
\label{sec:random-cluster-models}

Let $(Q,m,\iota)$ be a $Q$-system in the charge category $\cR$. We can construct a $\cR$-tensor network from the data $(Q,m,\iota)$ as follows:
\[
\raisebox{-67.43977pt}{\includegraphics{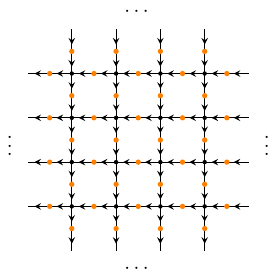}}.\]
The set of all edges of the network is denoted $E$, and each edge $e\in E$ represents the object $Q\in\cR$.
The multiplication map and its dagger are represented graphically as
\[
m =\raisebox{-14.42636pt}{\includegraphics{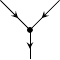}}
,\quad m^\dagger =
\raisebox{-14.42636pt}{\includegraphics{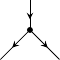}}.
\]
The $Q$-system relations are represented graphically as
\begin{gather*}
\raisebox{-21.53955pt}{\includegraphics{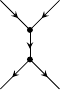}}
=
\raisebox{-21.53955pt}{\includegraphics{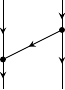}}
=
\raisebox{-21.53955pt}{\includegraphics{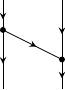}}
\equiv
\raisebox{-14.42636pt}{\includegraphics{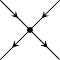}}
,\\
\raisebox{-21.53955pt}{\includegraphics{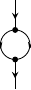}} = \quad
\raisebox{-21.53955pt}{\includegraphics{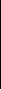}}\quad ,\\
\raisebox{-14.42636pt}{\includegraphics{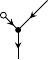}} =
\raisebox{-14.42636pt}{\includegraphics{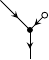}}
=\quad
\raisebox{-14.42636pt}{\includegraphics{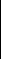}}\quad,\\
\raisebox{-15.84908pt}{\includegraphics{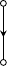}} = d_Q.
\end{gather*}
Each orange tensor is a weighted superposition of two tensors
\[
\raisebox{-14.42636pt}{\includegraphics{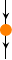}}
 =
 p\xt \raisebox{-14.42636pt}{\includegraphics{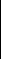}} +
    (1-p)\xt \raisebox{-14.42636pt}{\includegraphics{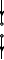}}\ ,
\]
Let $\Omega:E\to\{0,1\}$ denote a bond configuration, where $\Omega(e)=1$ and $0$ correspond to an open and a closed edge, respectively. We take $p\in[0,1]$ to be the bond-occupation parameter, so that the connected and disconnected tensors carry weights $p$ and $1-p$, respectively. The configuration space is therefore $\{0,1\}^{E}$.

Expanding every orange tensor gives a sum over bond configurations
$\Omega\in\{0,1\}^{E}$, weighted by the corresponding bond weights.
By the $Q$-system relations, each connected component evaluates to
$d_Q$. Hence, a configuration with $C(\Omega)$ clusters contributes a
factor $d_Q^{C(\Omega)}$, and the partition function is
\cite{Fortuin_Kasteleyn_1972,Random_Cluster}
\begin{align*}
Z(p)
&=
\sum_{\Omega\in\{0,1\}^{E}}
d_Q^{C(\Omega)}
\prod_{e\in E}p^{\Omega(e)}(1-p)^{1-\Omega(e)}
\\
&=
(1-p)^{|E|}
\sum_{\Omega\in\{0,1\}^{E}}
d_Q^{C(\Omega)}v^{l(\Omega)},
\end{align*}
where $v=p/(1-p)$ and
$l(\Omega)=\sum_{e\in E}\Omega(e)$ is the number of open edges.
This is the random-cluster partition function with cluster weight
$q=d_Q$.

Algebraically, the random-cluster model can be defined for
arbitrary complex $q$ \cite{Sokal_2005}. By contrast, the $Q$-system
construction restricts $q$ to categorical dimensions $d_Q$, which are
algebraic numbers---in fact, positive algebraic integers
\cite{Etingof_Nikshych_Ostrik_2005}. On a planar lattice, the
random-cluster model admits an equivalent Temperley--Lieb loop
formulation with loop weight $\sqrt q$
\cite{Temperley:1971iq,Baxter_1973}.

An important consequence is that the untwisted partition function of
the planar tensor network depends on the $Q$-system $(Q,m,\iota)$ only
through its categorical dimension $d_Q$. Thus, $Q$-systems with the
same value of $d_Q$, even when defined in different charge categories,
give the same random-cluster partition function. Although invisible
to the untwisted partition function, the categorical data
reappear in sector-resolved observables.

One possible four-leg tensor representation of the random-cluster
partition function is
\[
    T =
    \raisebox{-28.85274pt}{\includegraphics{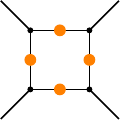}}.
\]

A particularly useful case is the matrix $Q$-system
$Q\simeq W\otimes W^*$, for which $d_Q=d_W^2$, where $W$ is another object in $\cR$. Its normalized
multiplication and unit are
\begin{align*}
m&:W\otimes W^* \otimes W\otimes W^*
\xrightarrow{\id_W \otimes \frac{1}{\sqrt{d_W}}\coev_W^\dagger \otimes \id_{W^*}}
W\otimes W^*,
\\
\iota&:\one \xrightarrow{\sqrt{d_W}\ev_W^\dagger} W\otimes W^*,
\end{align*}
respectively. For this $Q$-system, the random-cluster representation
becomes a loop model with loop weight
$d_W=\sqrt{q}$ \cite{Temperley:1971iq,Baxter_1973}. The four-valent
$Q$-system tensor takes the following corner-double-line form
\cite{Gu_2009}:
\[
\raisebox{-14.42636pt}{\includegraphics{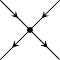}}
\propto
\raisebox{-17.27159pt}{\includegraphics{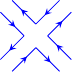}},\quad
\raisebox{-14.42636pt}{\includegraphics{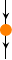}}
= p
\raisebox{-11.58113pt}{\includegraphics{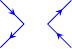}} + (1-p)d_W
\raisebox{-17.27159pt}{\includegraphics{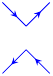}}
\]
Each blue strand is labeled by $W$, and every closed loop evaluates
to $d_W$.

Expanding the orange tensors therefore gives a fully packed
loop model on the medial square lattice:
\[
\raisebox{-60.557pt}{\includegraphics{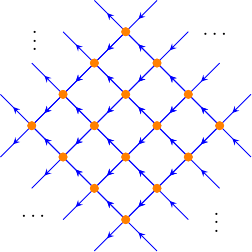}}.
\]
Lattice duality exchanges the two local connectivities. Equalizing
their weights gives the self-dual transition point
\cite{Baxter_1973},
\[
p_c=(1-p_c)d_W,
\qquad
p_c = \frac{d_W}{1+d_W} = \frac{\sqrt{d_Q}}{1+\sqrt{d_Q}}.
\]

The local degree of freedom at each vertex of the medial lattice
selects one of the two noncrossing pairings. At the self-dual point
the two pairings have equal weight, and, after removing an overall
factor, the partition function becomes
\[
Z(p_c)\propto\sum_{\mathcal L}d_W^{N(\mathcal L)},
\]
where the sum runs over fully packed loop configurations $\mathcal L$
and $N(\mathcal L)$ is the number of closed loops.

Through the Temperley--Lieb construction, the square-lattice loop
model is mapped to a staggered six-vertex model
\cite{Temperley:1971iq,Baxter_1973}. Baxter's exact solution at the
self-dual point shows that the transition is continuous for
$d_W\leq2$, equivalently $d_Q\leq4$, and first order for $d_W>2$,
equivalently $d_Q>4$ \cite{Baxter_1973}. Since the loop/vertex
formulation depends on $q$ only through the loop weight
$d_W=\sqrt{q}$, this result applies equally to the noninteger
categorical values $q=d_Q$.

On a pair of adjacent strands ordered as $W\otimes W^*$, define the local
Temperley--Lieb generator by
\[
e_i:=\id^{\otimes(i-1)}\otimes
\bigl(\ev_W^\dagger\circ\ev_W\bigr)\otimes
\id^{\otimes(N-i-1)},
\]
using $\coev_W\circ\coev_W^\dagger$ for the opposite ordering
$W^*\otimes W$. These generators obey
\begin{align*}
e_i^2&=d_W e_i,
&e_i e_{i\pm1}e_i&=e_i,\\
[e_i,e_j]&=0,
&|i-j|&>1.
\end{align*}
The one-dimensional Temperley--Lieb Hamiltonian arises from the
anisotropic limit of the row-to-row transfer matrix of the loop model
\cite{Temperley:1971iq,Baxter_1973}. Up to an overall scale and an
additive constant, it takes the form
\[
H_{\mathrm{TL}}=-\sum_i e_i.
\]
For $d_W\leq2$, where the transition is continuous, the quantum chain
and the isotropic loop model therefore have the same continuum CFT.

For integers $k\geq2$, consider the antiferromagnetic
Temperley--Lieb Hamiltonian in its $A_{k+1}$ RSOS representation,
with loop weight
\[
d_W = [2]_k = \frac{\sin\left(\frac{2\pi}{k+2}\right)}{\sin\left(\frac{\pi}{k+2}\right)} = 2 \cos\left(\frac{\pi}{k+2}\right).
\]
Here the notation $[m]_k$ is defined for $1\leq m\leq k+1$ by
\[
[m]_k=
\frac{\sin\left(\frac{m\pi}{k+2}\right)}
     {\sin\left(\frac{\pi}{k+2}\right)}.
\]
This is the quantum dimension of the spin-$(m-1)/2$ simple object of
$\mathrm{SU}(2)_k$. In this case, the Temperley--Lieb Hamiltonian is integrable, and its continuum limit is the unitary
Virasoro minimal model $\cM(k+2,k+1)$
\cite{Andrews_Baxter_Forrester_1984,Pasquier_1987}.
The first three cases are listed below.
\begin{itemize}
    \item For $k=2$, $d_W=[2]_2=\sqrt{2}$. Choosing
    $W=\sigma\in\Ising$ gives
    \[
    Q=W\otimes W^*=\sigma\otimes\sigma=\one\oplus\psi,
    \]
    and the continuum theory is the critical Ising CFT
    $\cM(4,3)$.

    \item For $k=3$, $d_W=[2]_3=\varphi$, where
    $\varphi=(1+\sqrt{5})/2$. Choosing $W=\tau\in\Fib$ gives
    \[
    Q=W\otimes W^*=\one\oplus\tau,
    \]
    reproducing the golden-chain construction
    \cite{Feiguin_2007}, whose continuum theory is the tricritical
    Ising CFT $\cM(5,4)$.

    \item For $k=4$, $d_W=[2]_4=\sqrt{3}$. Choosing the simple object
    labeled $1$ (or $3$) in $\mathrm{SU}(2)_4$ or
    $\JonesKauffman_4$ gives
    \begin{equation}
    \label{eq:k4-q-system}
    Q=1\otimes1=0\oplus2,
    \end{equation}
    and the continuum theory is the three-state Potts CFT
    $\cM(6,5)$.
\end{itemize}

The symmetry actually realized by the tensor network is the fusion
subcategory generated by the simple summands of $Q$. For $k=2$, this
is the $\Rep(\Z_2)$ subcategory of $\Ising$ generated by $\psi$. For $k=4$, it is
the even subcategory $\{0,2,4\}\cong\Rep(S_3)$ of
$\mathrm{SU}(2)_4$ or $\JonesKauffman_4$. By contrast, for $k=3$ the
$Q$-system $Q=\one\oplus\tau$ generates the full Fibonacci category and
has noninteger dimension $d_Q=\varphi^2$. Since every object of
$\Rep(G)$ has integer categorical dimension, this case cannot be
reduced to an ordinary finite-group symmetry and is genuinely
categorical. We therefore treat it first.

\subsubsection{Fibonacci-symmetric random-cluster model}
\begin{figure*}
    \centering
    \includegraphics[width=\linewidth]{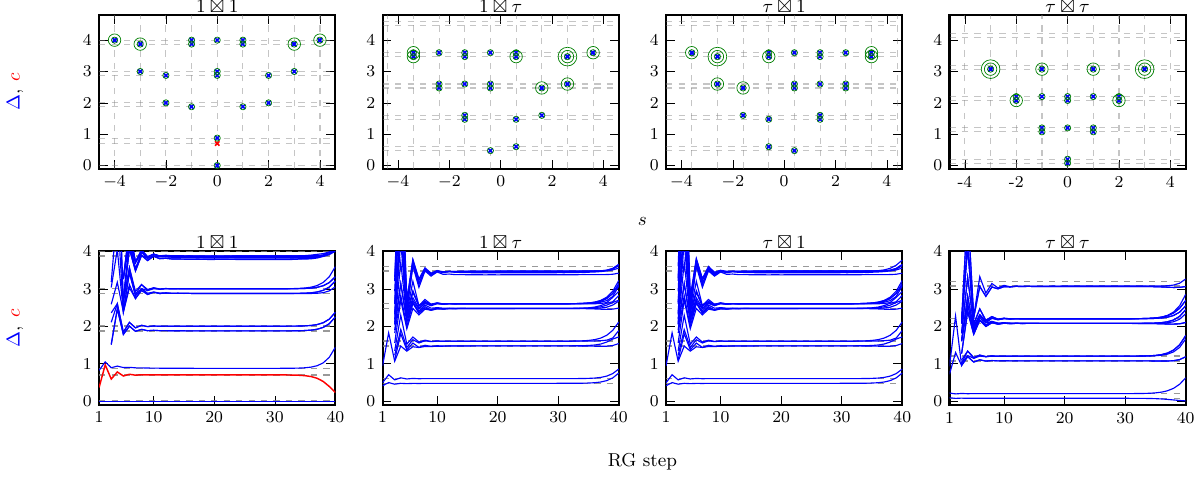}
    \caption{Sector-resolved conformal data of the Fibonacci-symmetric
    random-cluster model at the self-dual point.
    The upper row shows the spin-resolved
    conformal spectrum obtained from the $\frac{4}{\sqrt{10}}\xt
    2\sqrt{10}$ transfer morphism at $N_{\text{RG}}=15$, with
    $\chi_{\text{transfer}}=72$. The lower row shows the RG flow of
    the scaling dimensions as continuous level trajectories, obtained from
    the $\sqrt{2}\xt2\sqrt{2}$ transfer morphism with
    $\chi_{\text{RG}}=36$.}
    \label{fig:fib-self-dual}
\end{figure*}

Taking $\cR=\Fib$ and $W=\tau$, with the conventions of
App.~\ref{appsec:fibonacci-category}, the matrix $Q$-system
\[
Q=W\otimes W^*=\one\oplus\tau
\]
defines a Fibonacci-symmetric random-cluster tensor network on the
square lattice. At the self-dual point
\[
p_c=\frac{\varphi}{1+\varphi},
\]
the $\cR$-Loop-TNR flow converges to the tricritical Ising CFT. See
Fig.~\ref{fig:fib-self-dual}.

The continuum theory is the minimal model $\cM(5,4)$, with central
charge $c=7/10$ and six Virasoro primary sectors. The categorical data
of its rank-$6$ UMTC, including its factorization and the correspondence
between primaries and simple objects, are reviewed in
App.~\ref{appsec:tricritical-ising-category} and summarized in
Tab.~\ref{tab:triising-mtc}~\cite{Wen_2015}.

The categorical origin of the $\Fib$ symmetry follows from the UMTC
factorization
\[
\begin{aligned}
\fZ\left(\mathsf{6}_{7/10}\right)
&\cong \mathsf{6}_{7/10}\boxtimes\overline{\mathsf{6}_{7/10}}\\
&\cong
\overline{\Fib}\boxtimes\Fib
\boxtimes\mathsf{3}_{7/2}\boxtimes\overline{\mathsf{3}_{7/2}}.
\end{aligned}
\]
In the last two factors, let
\[
L_{\diagsector}
=\one\boxtimes\one\oplus\psi\boxtimes\psi\oplus\sigma\boxtimes\sigma
\in\mathsf{3}_{7/2}\boxtimes\overline{\mathsf{3}_{7/2}}
\]
be the diagonal Lagrangian algebra
and define the condensable algebra
\[
A=\one\boxtimes\one\boxtimes L_{\diagsector}
\in\fZ\left(\mathsf{6}_{7/10}\right).
\]
Condensing $A$ removes the doubled Ising-type factor and gives \cite{Kong_2020,Kong_2021,
chatterjee2024emergentgeneralizedsymmetrymaximal}
\[
\fZ\left(\mathsf{6}_{7/10}\right)_{A}^{\localmodules}
\cong\overline{\Fib}\boxtimes\Fib
\cong\fZ(\Fib).
\]
Before the topological Wick rotation reviewed in
Sec.~\ref{appsec:enriched-category-description}, this equation describes an
anyon-condensation domain wall from
$\fZ(\mathsf{6}_{7/10})$ to $\fZ(\Fib)$.

After topological Wick rotation, the four simple objects of
$\fZ(\Fib)\cong\Fib\boxtimes\overline{\Fib}$ label four CFT Hilbert
spaces.  Let $\scrV^{\mathrm{TCI}}$ denote the tricritical Ising VOA and
$\scrV^{\mathrm{TCI}}_i$ its irreducible module labeled by the primary
$i$.  Using the primary notation of Tab.~\ref{tab:triising-mtc} and
the symTO interpretation reviewed in Appendix~\ref{app:symto-review},
the sector decomposition underlying Fig.~\ref{fig:fib-self-dual} is
\begin{align*}
\scrV_{\one\boxtimes\one}
&=\scrA
=\scrV^{\mathrm{TCI}}_{\one}\otimes\overline{\scrV^{\mathrm{TCI}}_{\one}}
 \oplus\scrV^{\mathrm{TCI}}_{\psi'}\otimes\overline{\scrV^{\mathrm{TCI}}_{\psi'}}\\
&\quad\oplus\scrV^{\mathrm{TCI}}_{\sigma'''}\otimes\overline{\scrV^{\mathrm{TCI}}_{\sigma'''}},\\
\scrV_{\one\boxtimes\tau}
&=\scrV^{\mathrm{TCI}}_{\sigma'}\otimes\overline{\scrV^{\mathrm{TCI}}_{\one}}
 \oplus\scrV^{\mathrm{TCI}}_{\epsilon}\otimes\overline{\scrV^{\mathrm{TCI}}_{\psi'}}\\
&\quad\oplus\scrV^{\mathrm{TCI}}_{\sigma''}\otimes\overline{\scrV^{\mathrm{TCI}}_{\sigma'''}},\\
\scrV_{\tau\boxtimes\one}
&=\scrV^{\mathrm{TCI}}_{\one}\otimes\overline{\scrV^{\mathrm{TCI}}_{\sigma'}}
 \oplus\scrV^{\mathrm{TCI}}_{\psi'}\otimes\overline{\scrV^{\mathrm{TCI}}_{\epsilon}}\\
&\quad\oplus\scrV^{\mathrm{TCI}}_{\sigma'''}\otimes\overline{\scrV^{\mathrm{TCI}}_{\sigma''}},\\
\scrV_{\tau\boxtimes\tau}
&=\scrV^{\mathrm{TCI}}_{\sigma'}\otimes\overline{\scrV^{\mathrm{TCI}}_{\sigma'}}
 \oplus\scrV^{\mathrm{TCI}}_{\epsilon}\otimes\overline{\scrV^{\mathrm{TCI}}_{\epsilon}}\\
&\quad\oplus\scrV^{\mathrm{TCI}}_{\sigma''}\otimes\overline{\scrV^{\mathrm{TCI}}_{\sigma''}}.
\end{align*}

In the enriched-category notation of
Refs.~\cite{Kong_2021,KONG2018140,Kong_2020}, this gapless phase is
\[
\left(\scrV^{\mathrm{TCI}}\otimes\overline{\scrV^{\mathrm{TCI}}},
{}^{\fZ(\mathsf{3}_{7/2})}\!\mathsf{3}_{7/2}
\boxtimes{}^{\fZ(\Fib)}\!\fZ(\Fib)\right).
\]

\subsubsection{$\Z_3$-symmetric random-cluster model}
\begin{figure*}[htbp]
    \centering
    \includegraphics[width=0.7\linewidth]{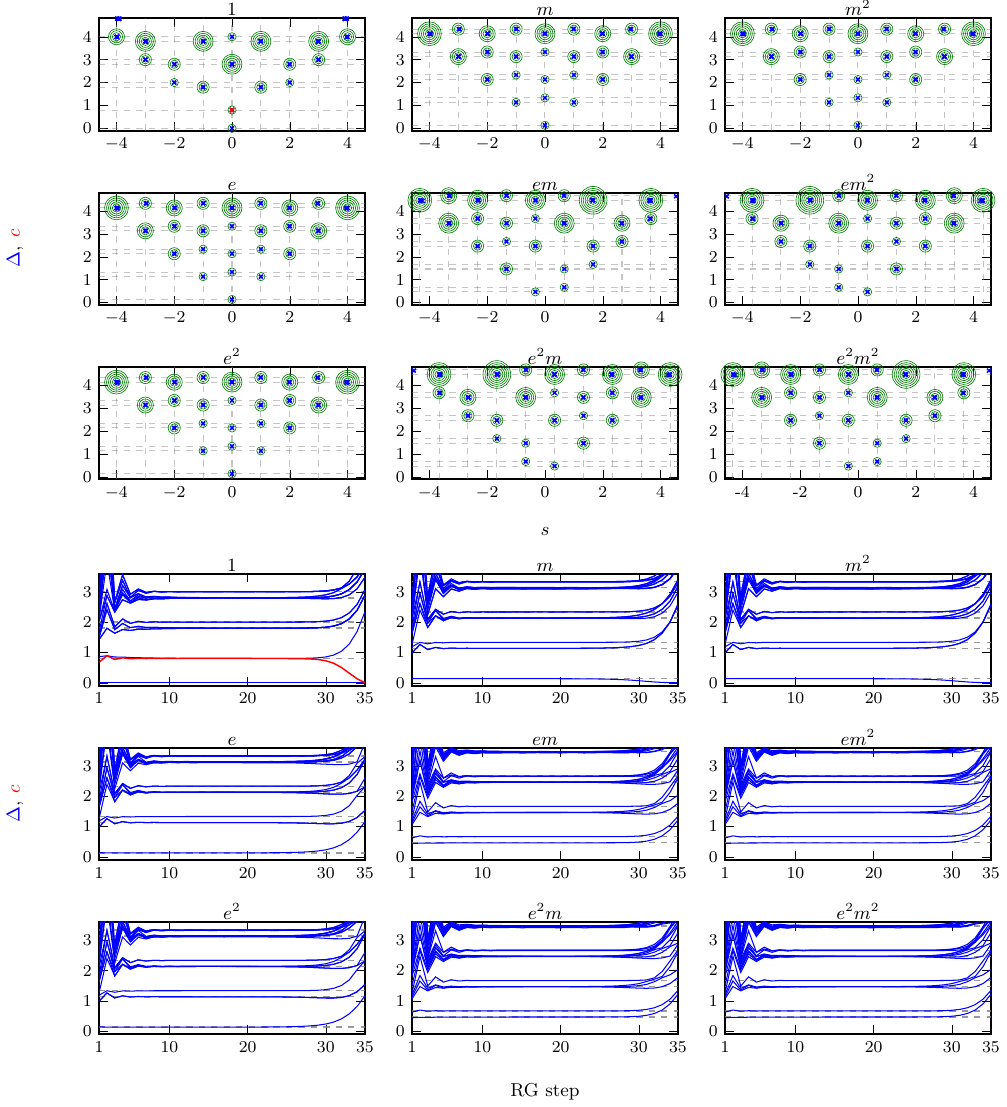}
    \caption{Sector-resolved conformal data of the $\Z_3$-symmetric
    random-cluster model at the self-dual point.
    The upper block shows the spin-resolved
    conformal spectra obtained from the
    $\frac{4}{\sqrt{10}}\xt2\sqrt{10}$ transfer morphism at
    $N_{\text{RG}}=15$, with $\chi_{\text{transfer}}=72$.
    The lower block shows the RG flow of the scaling dimensions as
    continuous level trajectories, obtained from the
    $\sqrt{2}\xt2\sqrt{2}$ transfer morphism with $\chi_{\text{RG}}=36$.}
    \label{fig:z3-potts}
\end{figure*}
As reviewed in the $k=4$ example above,
Eq.~\eqref{eq:k4-q-system}, the simple objects of
$\mathrm{SU}(2)_4$ or $\JonesKauffman_4$ are labeled by
$0,1,2,3,4$. Choosing $W=1$ gives the $Q$-system
\[
    Q=W\otimes W^*=0\oplus2.
\]
The simple summands of $Q$ generate the even fusion subcategory
\[
    \{0,2,4\}\cong\Rep(S_3).
\]
We restrict this charge category along the subgroup
$\Z_3\subset S_3$. Under the restriction functor, $Q$ becomes
\[
    Q\simeq\Fun(\Z_3)
    \simeq\one\oplus\chi\oplus\chi^2\in\Rep(\Z_3),
\]
where $\chi$ and $\chi^2$ are the two nontrivial one-dimensional
representations of $\Z_3$.

Since $d_Q=3$, the resulting categorical
random-cluster model is the standard three-state random-cluster model.
Its self-dual point is
\[
p_c=\frac{\sqrt{3}}{1+\sqrt{3}},
\]
and its continuum limit is the critical three-state Potts CFT. We do
not analyze here the full $S_3$-resolved sector structure, which is
governed by $\fZ(\Rep(S_3))$
\cite{chatterjee2024emergentgeneralizedsymmetrymaximal}.

The critical three-state Potts CFT has central charge $c=4/5$.
Its chiral algebra $\cW_3$ extends the Virasoro algebra of the
minimal model $\cM(6,5)$ by a primary field of conformal weight $3$.
Let $\scrV^{\mathrm P}_h$ denote the irreducible Virasoro module
of conformal weight $h$ in this minimal model, where the
superscript $\mathrm P$ refers to the Potts theory.
The extended vacuum module is
\[
    \scrW^{\mathrm P}_{\one}
    =\scrV^{\mathrm P}_0\oplus\scrV^{\mathrm P}_3.
\]

The extended algebra has six irreducible modules
$\scrW^{\mathrm P}_i$, labeled by
\[
    i\in\{\one,\epsilon,\psi_1,\psi_2,\sigma_1,\sigma_2\}.
\]
Their conformal weights and decompositions into Virasoro modules
are listed in Tab.~\ref{tab:potts-mtc}.
These modules form the UMTC denoted by $\mathsf{6}_{4/5}$.
Its factorization and the correspondence between the primary
labels and simple objects are given in
App.~\ref{appsec:potts-categories}.
The chiral extension and the subsequent condensation producing
the full Potts modular invariant are reviewed in
Sec.~\ref{appsec:partition-vector-and-modular-covariance}.

Since the microscopic charge category is $\Rep(\Z_3)$, the
corresponding symTO is $\fZ(\Rep(\Z_3))$. In the gapless-boundary
construction reviewed in Appendix~\ref{app:symto-review}, we realize
this symTO by a condensable algebra
\[
    A\in\fZ(\mathsf{6}_{4/5})
\]
such that
\[
    \fZ(\mathsf{6}_{4/5})_{A}^{\localmodules}
    \cong\fZ(\Rep(\Z_3))
\]
\cite{Kong_2020,Kong_2021,Ji_2020,
chatterjee2024emergentgeneralizedsymmetrymaximal}.

Let $\mathsf{3}_2$ denote the pointed UMTC with simple objects
$a\in\Z_3$, fusion rule $a\otimes b=a+b\pmod 3$, and twists
$\theta_a=\ee^{2\pi\ii a^2/3}$.
Its braided reverse is denoted by $\overline{\mathsf{3}_2}$.
Forgetting the braiding gives the equivalence of fusion categories
\[
    \overline{\mathsf{3}_2}\cong\Rep(\Z_3).
\]
See App.~\ref{appsec:potts-categories}. Their Drinfeld centers are therefore
braided equivalent. We fix the following correspondence between simple objects:
\begin{equation}
\label{eq:z3-iso}
\begin{aligned}
    \fZ\left(\overline{\mathsf{3}_2}\right)
    &\cong
    \overline{\mathsf{3}_2}\boxtimes\mathsf{3}_2
    \cong\fZ(\Rep(\Z_3)),\\
    i\boxtimes j&\longmapsto e^{i+j}m^{i-j},
    \qquad i,j\in\Z_3.
\end{aligned}
\end{equation}
Here $e$ and $m$ denote the electric and magnetic generators,
respectively, and all exponents are understood modulo $3$. We use the
convention
\[
    \Irr\fZ(\Rep(\Z_3))
    =\{e^a m^b\mid a,b\in\Z_3\}.
\]

Using
\[
    \mathsf{6}_{4/5}
    \cong\Fib\boxtimes\overline{\mathsf{3}_2},
\]
we order the factors of its Drinfeld center as
\[
    \fZ(\mathsf{6}_{4/5})
    \cong
    \Fib\boxtimes\overline{\Fib}
    \boxtimes\overline{\mathsf{3}_2}\boxtimes\mathsf{3}_2.
\]
The diagonal Lagrangian algebra in the doubled Fibonacci factor is
\[
    L_{\diagsector}
    =\one\boxtimes\one\oplus\tau\boxtimes\tau
    \in\Fib\boxtimes\overline{\Fib}.
\]
The required condensable algebra is therefore
\[
    A=L_{\diagsector}\boxtimes0\boxtimes0
    \in\fZ(\mathsf{6}_{4/5}).
\]
Condensing $A$ removes the doubled Fibonacci factor and leaves
\[
    \fZ(\mathsf{6}_{4/5})_A^{\localmodules}
    \cong
    \overline{\mathsf{3}_2}\boxtimes\mathsf{3}_2
    \cong\fZ(\Rep(\Z_3)),
\]
where the last isomorphism is the one fixed in
Eq.~\eqref{eq:z3-iso}.

Before the topological Wick rotation reviewed in
Sec.~\ref{appsec:enriched-category-description}, the condensable algebra $A$
defines an anyon-condensation from
$\fZ(\mathsf{6}_{4/5})$ to $\fZ(\Rep(\Z_3))$. In the doubled
Fibonacci factor, condensation of $L_{\diagsector}$ gives the
canonical gapped boundary of $\Fib\boxtimes\overline{\Fib}$. When the
doubled theory is unfolded, this boundary becomes the transparent identity domain wall
of the Fibonacci topological order. Equivalently, it is the folding
line. The factor
\[
    \overline{\mathsf{3}_2}\boxtimes\mathsf{3}_2
\]
passes through the wall unchanged.

After topological Wick rotation, each simple object
$x\in\Irr\fZ(\Rep(\Z_3))$ labels a CFT Hilbert space $\scrV_x$, as
reviewed in Appendix~\ref{app:symto-review}. Using the $\cW_3$ primary
notation of Tab.~\ref{tab:potts-mtc}, the sector decomposition
underlying Fig.~\ref{fig:z3-potts} is
\begin{align*}
    \scrV_{\one} &= \scrW^{\mathrm P}_{\one} \otimes \overline{\scrW^{\mathrm P}_\one} \oplus \scrW^{\mathrm P}_\epsilon \otimes \overline{\scrW^{\mathrm P}_\epsilon},\\
    \scrV_{e} &= \scrW^{\mathrm P}_{\psi_2}\otimes\overline{\scrW^{\mathrm P}_{\psi_2}} \oplus \scrW^{\mathrm P}_{\sigma_2}\otimes \overline{\scrW^{\mathrm P}_{\sigma_2}},\\
    \scrV_{e^2} &= \scrW^{\mathrm P}_{\psi_1}\otimes\overline{\scrW^{\mathrm P}_{\psi_1}} \oplus \scrW^{\mathrm P}_{\sigma_1}\otimes \overline{\scrW^{\mathrm P}_{\sigma_1}},\\
    \scrV_{m} &= \scrW^{\mathrm P}_{\psi_2}\otimes \overline{\scrW^{\mathrm P}_{\psi_1}}\oplus \scrW^{\mathrm P}_{\sigma_2}\otimes \overline{\scrW^{\mathrm P}_{\sigma_1}},\\
    \scrV_{em} & = \scrW^{\mathrm P}_{\psi_1} \otimes \overline{\scrW^{\mathrm P}_\one}\oplus \scrW^{\mathrm P}_{\sigma_1} \otimes \overline{\scrW^{\mathrm P}_{\epsilon}},\\
    \scrV_{e^2m} & = \scrW^{\mathrm P}_{\one} \otimes \overline{\scrW^{\mathrm P}_{\psi_2}}\oplus \scrW^{\mathrm P}_\epsilon \otimes \overline{\scrW^{\mathrm P}_{\sigma_2}},\\
    \scrV_{m^2} &= \scrW^{\mathrm P}_{\psi_1}\otimes \overline{\scrW^{\mathrm P}_{\psi_2}}\oplus \scrW^{\mathrm P}_{\sigma_1}\otimes \overline{\scrW^{\mathrm P}_{\sigma_2}},\\
    \scrV_{em^2} &= \scrW^{\mathrm P}_\one \otimes \overline{\scrW^{\mathrm P}_{\psi_1}} \oplus \scrW^{\mathrm P}_\epsilon \otimes \overline{\scrW^{\mathrm P}_{\sigma_1}},\\
    \scrV_{e^2m^2} &= \scrW^{\mathrm P}_{\psi_2}\otimes \overline{\scrW^{\mathrm P}_{\one}} \oplus \scrW^{\mathrm P}_{\sigma_2}\otimes \overline{\scrW^{\mathrm P}_{\epsilon}}.
\end{align*}
On the left-hand side, the subscript
$x\in\Irr\fZ(\Rep(\Z_3))$ labels the full sector Hilbert space. On the
right-hand side, $\scrW^{\mathrm P}_i$ denotes the irreducible chiral
$\cW_3$ module labeled by the primary $i$, and
$\overline{\scrW^{\mathrm P}_i}$ denotes
its anti-chiral counterpart. Under the equivalence
\[
    \fZ(\mathsf{6}_{4/5})_A^{\localmodules}
    \cong\fZ(\Rep(\Z_3)),
\]
the Hilbert space $\scrV_x$ is the local $A$-module associated with
the sector $x$.

Therefore, in the theory by \cite{Kong_2021,Kong_2020}, taking the local quantum symmetry as the doubled $\cM(6, 5)$-VOA $\scrV^{\mathrm P}_0 \otimes \overline{\scrV^{\mathrm P}_0}$,
the pair describing the gapless boundary is
\[
\left(\scrV^{\mathrm P}_{0} \otimes \overline{\scrV^{\mathrm P}_0}, {}^{\fZ(\Fib)}\!\Fib \boxtimes {}^{\JonesKauffman_4}\!\left(\JonesKauffman_4\right)_{0 \oplus 4} \boxtimes {}^{\overline{\JonesKauffman_4}}\!\left(\overline{\JonesKauffman_4}\right)_{0 \oplus 4} \right)
\]

The Loop-TNR result of the spin-resolved conformal spectrum is plotted in Fig.~\ref{fig:z3-potts}. For the Fibonacci-symmetric tricritical Ising and $\Z_3$-symmetric Potts
models, some numerically extracted structure constants remain appreciably
nonzero even when their CFT values vanish. This motivates studying
realizations with stronger categorical constraints, as considered in the
following section.

\section{Stable symTO and topological bootstrap}
\label{sec:stable-symto}

In Sec.~\ref{sec:z2-statistical-ising}, we studied the
$\Z_2$-symmetric Ising model with charge category
$\cR=\Rep(\Z_2)$. Its trivial topological sector contains the
$\Z_2$-even thermal operator $\phi_\psi\boxtimes \overline{\phi_\psi}$ and scaling dimension $\Delta=1$.
Since this relevant
perturbation preserves the $\Z_2$ symmetry, its coupling must be tuned
to zero to reach the Ising fixed point.

As reviewed in Sec.~\ref{appsec:ffa-extensions-and-emergent-symto}, the possible
symmetry-preserving local perturbations are encoded by the gapless modes
of the trivial sector of the associated symTO
$\fZ(\cR)$, so the relevant operators in this sector determine the amount of
fine-tuning required. Therefore, we define a stable gapless edge and a
stable symTO below, restating
Definition~\ref{def:stable-full-field-algebra}.

\begin{definition}[Stable symTO]
\label{def:stable-symto}
Let $\cT$ be a \spd{1} CFT realized as a boundary theory of the symTO
$\fZ(\cR)$. We call $\fZ(\cR)$ a \emph{stable symTO for $\cT$} if, apart
from the identity operator, the trivial bulk-anyon sector of $\cT$ contains
no relevant local operator. Equivalently, it contains no nonidentity
local operator with scaling dimension $\Delta<2$.
\end{definition}

Brehm and Runkel's proposed stability condition concerns spinless fields
because their circular hole boundary state is rotationally
invariant~\cite{BrehmRunkel2024}. In our Loop-TNR setting, spatial rotation
symmetry is not explicitly enforced, so symmetry-preserving perturbations
need not be spinless. We therefore impose the absence of relevant operators
in the trivial bulk-anyon sector without restricting their conformal spin.

Marginal perturbations are not excluded by this definition and may produce
slow drift near a fixed point. And their eventual RG behavior requires separate
analysis. For the examples studied here, the only dimension-two operators
in the trivial sector are $T$ and $\overline T$, which deform the background
geometry without opening a gap.

The stability criterion above motivates the \emph{topological
bootstrap} \cite{bao2025tensorcomplextopologicalbootstrap,Cheng:2023kxh}.
\begin{question}[Topological bootstrap]
Given a charge category $\cR$, which rational CFTs can arise as stable
IR fixed points of an $\cR$-symmetric system without fine-tuning?
\end{question}
For a unitary minimal model, one natural choice is to take $\cR$ to be the
underlying fusion category of its chiral UMTC. However, this choice need
not be unique, since a given CFT may admit more than one stable symTO.

We now describe a class of solutions to the topological bootstrap
problem realized by $\cR$-TNR. The construction has the following three stages:
specifying the categorical and geometric input, performing the
symmetry-preserving RG flow, and extracting CFT data from the resulting
fixed point.

\paragraph{Input}
For the class of solutions considered here, we choose a charge category
$\cR$ obtained from a UMTC by forgetting its braiding structure. The
categorical input consists of the simple objects, fusion rules,
$F$-symbols, and quantum dimensions of $\cR$, together with the
topological spins inherited from the chosen UMTC. Constructing the
initial $\cR$-tensor also requires a triangulation, a choice of
local-coordinate geometry, and integer lifts of the topological spins.

Let
\[
    V = \bigoplus_{i\in\Irr(\cR)} i.
\]
We construct a rank-3 $\cR$-tensor
\[
    T\in\cR(V\otimes V,V),
\]
defined by
\begin{multline}\label{eq:initial-tensor-bootstrap}
    T :=
    \raisebox{-18.8943pt}{\includegraphics{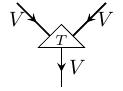}} = \sum_{i,j,k\in\Irr(\cR)}\sum_{\alpha=1}^{N^{ij}_{\bar{k}}}
        \cN_{ijk;\alpha}\\
        \;\xt\;
        G_{\triangle}(h_i,\;h_j,\;h_k)\;\xt\;
        \raisebox{-20.62036pt}{\includegraphics{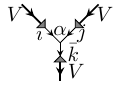}},
\end{multline}
where the ingredients obey the following conditions.
\begin{itemize}
    \item $\cN_{ijk;\alpha}$ are continuous trial parameters assigned to
        admissible triples $(i,j,k)$ and are invariant under $S_3$
        permutations of the three slots;
    \item $G_{\triangle}(h_i,h_j,h_k)$ is the geometric factor
        determined by the chosen local-coordinate geometry and chiral
        weights. With the topological spin of $i$ given by
        \[
            \theta_i=\ee^{2\pi\ii s_i},
        \]
        we choose an integer lift
        \[
            h_i=s_i+n_i,\qquad n_i\in\Z.
        \]
        The integers $n_i$ are additional discrete input because the
        topological spins determine the chiral weights only modulo~1.
\end{itemize}
The tensors $T$ with other orientations are defined analogously.

\begin{figure}[tb]
    \centering
    \raisebox{-40.03366pt}{\includegraphics{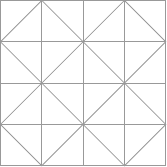}}
    \qquad
    \raisebox{-27.52252pt}{\includegraphics{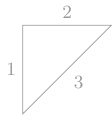}}
    \par\medskip
    \includegraphics{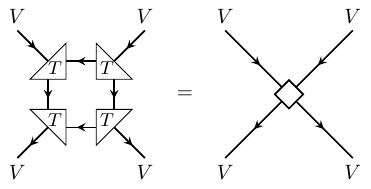}
    \caption{Top: triangulation of the plane by isosceles right triangles, with the
        single building-block triangle shown on the right. Bottom: contraction of
        four copies of the rank-3 tensor $T$ into the initial rank-4 tensor,
        leaving four external legs.}
    \label{fig:triangulation}
\end{figure}

For the fixed triangulation in Fig.~\ref{fig:triangulation}, different
choices of local-coordinate geometry give different geometric factors
$G_{\triangle}$ and hence different microscopic tensors. We use the
nut and pants gauges derived in App.~\ref{app:geometric-factors}, with
geometric factors given in Eqs.~\eqref{eq:nut-gauge-app} and
\eqref{eq:flat-pants-gauge-app}, respectively. Both choices flow to the same
IR fixed point in the examples below and therefore lie in the same basin
of attraction. All numerical results presented in this section use the
nut geometry.

The only continuous freedom in this initial ansatz is the set of trial
parameters $\cN_{ijk;\alpha}$. Their number is much smaller than the number
of degrees of freedom in a generic finite-bond-dimension $\cR$-tensor. The topological
bootstrap therefore starts from a low-dimensional, but not fine-tuned,
family of symmetric tensors.

When the target CFT data are known, one may set $\cN_{ijk;\alpha}$ equal to
the chiral three-point coefficients in the same fusion basis and normalization. As explained in
Sec.~\ref{sec:two-two-move}, this choice gives the primary-state restriction
of the CFT-input tensor and therefore provides a natural initialization near
the fixed point. This matching is
not required for the bootstrap. In our numerical experiments, trial
parameters that differ substantially from these coefficients can still lie
in the same basin of attraction and flow to the same infrared fixed point.

\paragraph{RG flow}
We first contract four copies of the rank-3 tensor $T$ around a square,
leaving four external legs, to obtain the initial rank-4 tensor shown in
Fig.~\ref{fig:triangulation}.
We coarse-grain this rank-4 tensor using the $\cR$-Loop-TNR algorithm
of Subsec.~\ref{sec:r-loop-tnr}, with a finite cutoff $\chi_{\mathrm{RG}}$.

\paragraph{Output}
From the resulting fixed-point tensor, we extract the following data.
\begin{itemize}
    \item the sector-resolved conformal spectra, including scaling
        dimensions and conformal spins in each
        $x\in\Irr\fZ(\cR)$ sector;
    \item the fusion-space-resolved closed-string structure constants
        $C^{\closedsector}_{abc;\alpha}$ of the doubled CFT.
\end{itemize}

When the associated symTO $\fZ(\cR)$ is stable for the target CFT in the
sense of Definition~\ref{def:stable-symto}, we
find in the examples below that a finite region of the trial-parameter
space flows to the same IR fixed point without fine-tuning. Increasing
$\chi_{\mathrm{RG}}$ improves the numerical accuracy while leaving the
IR universality class unchanged. In this sense, the
construction is a \emph{bootstrap}. Finite categorical and geometric
input, together with a low-dimensional family of initial tensors, is
sufficient for the RG flow to recover the fixed-point CFT data.

The full CFT data are not part of the input. Although the integer
lifts $h_i$ entering the geometric factors are specified, neither the
complete sector-resolved conformal spectrum, including descendant
towers and degeneracies, nor the closed-string structure constants are
imposed. These data emerge from the fixed point reached by $\cR$-TNR.

The categorical notation and data used in the examples below are
collected in App.~\ref{app:categorical-data}.

In the remainder of this section, we illustrate the topological bootstrap
with three concrete examples: the Ising CFT with stable symTO
$\fZ(\Ising)$, the tricritical Ising CFT with stable symTO
$\fZ(\overline{\Fib}\boxtimes\mathsf{3}_{7/2})$, and the three-state Potts
model as $\cW_3$-algebra modules with stable symTO
$\fZ(\Fib\boxtimes\overline{\mathsf{3}_{2}})$, corresponding respectively to
the charge categories
$\cR=\Ising$, $\cR=\overline{\Fib}\boxtimes\mathsf{3}_{7/2}$, and
$\cR=\Fib\boxtimes\overline{\mathsf{3}_{2}}$.
The Potts model example demonstrates that the bootstrap also works for full field algebras extended from unitary minimal models.

\subsection{Ising CFT with a stable symTO}
\label{sec:stable-ising}

\begin{figure*}[tb]
    \centering
    \includegraphics[width=0.6\textwidth]{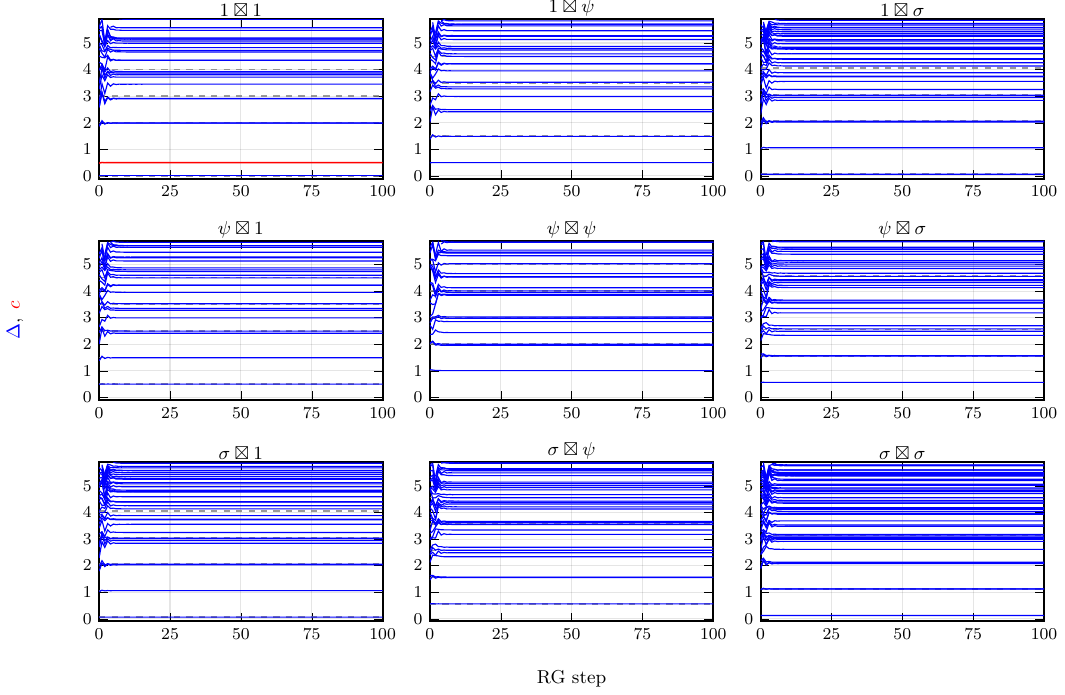}
    \caption{RG flow of the conformal spectra in the nine
    $\fZ(\Ising)\cong\Ising\boxtimes\overline{\Ising}$ sectors, computed with
    $\chi_{\text{RG}}=5$. Each panel is labeled by its sector
    $x\boxtimes y$. The first 100 RG steps are shown.}
    \label{fig:stable-ising-rg-spec}
\end{figure*}

\begin{figure*}[tb]
    \centering
    \includegraphics[width=0.6\textwidth]{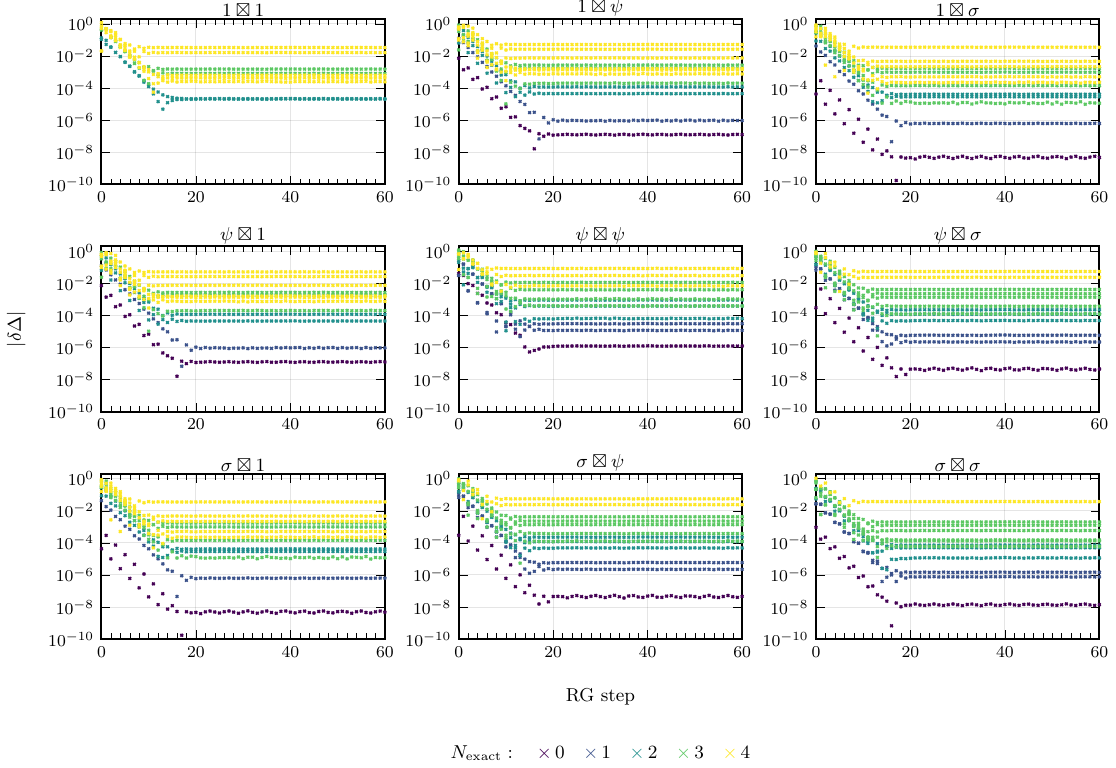}
    \caption{Absolute errors $|\delta\Delta|$ of the scaling dimensions
    relative to the exact Ising CFT values in the nine $\fZ(\Ising)$
    sectors. Each panel is labeled by its sector $x\boxtimes y$.
    The shared key maps the color of each curve to its
    exact descendant level $N_{\mathrm{exact}}$, where
    $\Delta=\Delta_{\mathrm{primary}}+N$.
    The spectra are obtained using nuclear-norm-regularized
    $\Ising$-Loop-TNR~\cite{Homma_2024} with $\chi_{\text{RG}}=32$.}
    \label{fig:stable-ising-cft-errors}
\end{figure*}

\begin{figure*}[tb]
    \centering
    \includegraphics[width=0.6\textwidth]{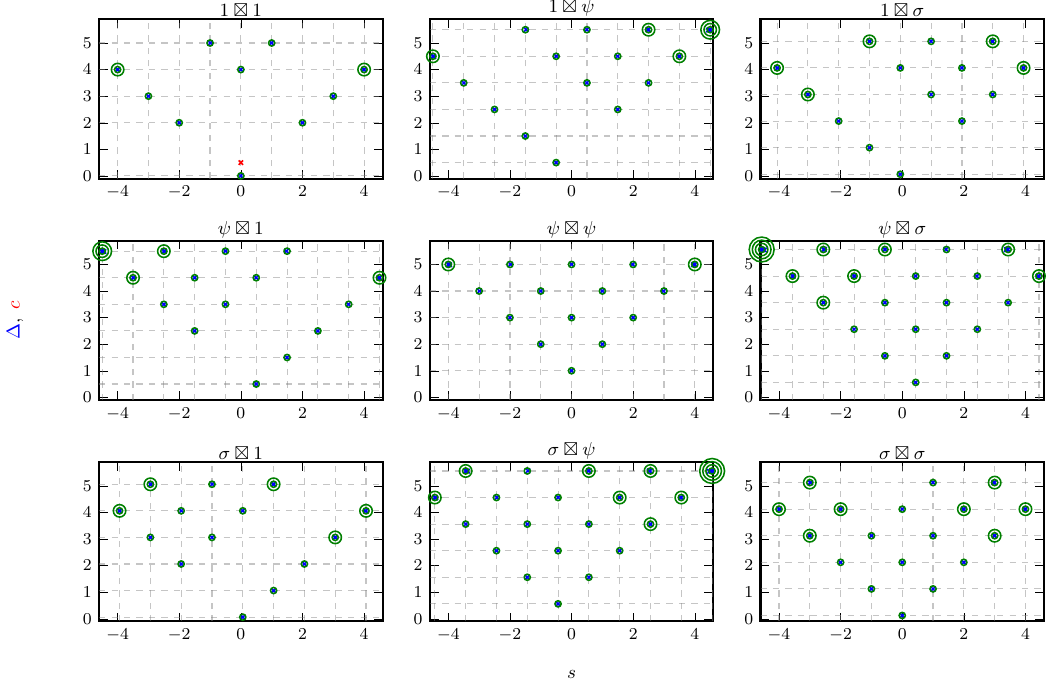}
    \caption{Spin-resolved conformal spectrum of the doubled Ising CFT in
    the nine $\fZ(\Ising)$ sectors at the RG fixed point. Each panel is
    labeled by its sector $x\boxtimes y$. Green concentric rings mark the
    exact CFT positions and their
    multiplicities. In each panel, the dashed grid marks its exact allowed
    values of $s$ and $\Delta$ ($\chi_{\text{RG}}=16$, $\chi_{\text{transfer}} = 40$,
    $N_{\text{RG}}=20$).}
    \label{fig:stable-ising-spin-spec}
\end{figure*}

\begin{table*}[tb]
    \centering
    \scriptsize
    \caption{Magnitudes $|C^{\text{cl}}_{abc}|$ of the closed-string structure constants of the Ising CFT
    with stable symTO $\fZ(\Ising)$ ($\cR=\Ising$), at the RG fixed point
    ($\chi_{\text{RG}}=36$, $N_{\text{RG}}=20$).
    All nine doubled primary fields in
    $\fZ(\Ising)\cong\Ising\boxtimes\overline{\Ising}$ are resolved. The
    full-CFT primary field associated with the sector $x\boxtimes y$ is denoted by
    $x\bar y$. Numerical values include the effective collar correction
    of Eq.~\eqref{eq:ising-collar-amplitude}. The ordered labels $a,b,c$
    correspond to the left, up, and tilted directions. Asterisks mark the
    24 entries used to determine $l_i$. The vacuum channel is normalized
    to unity and is excluded from the fit.}
    \label{tab:structure-constants-stable-ising}

    \begin{minipage}[t]{0.3200\textwidth}
        \vspace{0pt}
        \centering
        \begin{tabular}{cccc}
            \toprule
            $|C^{\text{cl}}_{a,b,c}|$ & Theory & Numerical & Abs. error \\
            \midrule
            $\psi\bar\psi, \sigma\bar\sigma, \sigma\bar\sigma$ & \multirow{3}{*}{$1/2$} & $0.500424$ & $0.000424$ \\
            $\sigma\bar\sigma, \psi\bar\psi, \sigma\bar\sigma$ & & $0.499737$ & $0.000263$ \\
            $\sigma\bar\sigma, \sigma\bar\sigma, \psi\bar\psi$ & & $0.500524$ & $0.000524$ \\
            \midrule
            $\psi\bar\sigma, \sigma\bar\psi, \sigma\bar\sigma$ & \multirow{6}{*}{$1/2$} & $0.499620$ & $0.000380$ \\
            $\psi\bar\sigma, \sigma\bar\sigma, \sigma\bar\psi$ & & $0.499889$ & $0.000111$ \\
            $\sigma\bar\psi, \psi\bar\sigma, \sigma\bar\sigma$ & & $0.499620$ & $0.000380$ \\
            $\sigma\bar\psi, \sigma\bar\sigma, \psi\bar\sigma$ & & $0.499889$ & $0.000111$ \\
            $\sigma\bar\sigma, \psi\bar\sigma, \sigma\bar\psi$ & & $0.499534$ & $0.000466$ \\
            $\sigma\bar\sigma, \sigma\bar\psi, \psi\bar\sigma$ & & $0.499534$ & $0.000466$ \\
            \midrule
            $\one\bar\psi, \one\bar\sigma, \one\bar\sigma$ & \multirow{3}{*}{$1/\sqrt{2}$} & $0.706914$ & $0.000193$ \\
            $\one\bar\sigma, \one\bar\psi, \one\bar\sigma$ & & $0.706446$ & $0.000661$ \\
            $\one\bar\sigma, \one\bar\sigma, \one\bar\psi$ & & $0.707280$ & $0.000174$ \\
            \midrule
            $\one\bar\psi, \psi\bar\sigma, \psi\bar\sigma$ & \multirow{3}{*}{$1/\sqrt{2}$} & $0.706661$ & $0.000446$ \\
            $\psi\bar\sigma, \one\bar\psi, \psi\bar\sigma$ & & $0.706120$ & $0.000987$ \\
            $\psi\bar\sigma, \psi\bar\sigma, \one\bar\psi$ & & $0.706308$ & $0.000799$ \\
            \midrule
            $\one\bar\psi, \sigma\bar\sigma, \sigma\bar\sigma$ & \multirow{3}{*}{$1/\sqrt{2}$} & $0.706883$ & $0.000224$ \\
            $\sigma\bar\sigma, \one\bar\psi, \sigma\bar\sigma$ & & $0.706419$ & $0.000688$ \\
            $\sigma\bar\sigma, \sigma\bar\sigma, \one\bar\psi$ & & $0.707137$ & $0.000030$ \\
            \midrule
            $\one\bar\sigma, \psi\bar\psi, \psi\bar\sigma$ & \multirow{6}{*}{$1/\sqrt{2}$} & $0.706855$ & $0.000252$ \\
            $\one\bar\sigma, \psi\bar\sigma, \psi\bar\psi$ & & $0.707302$ & $0.000195$ \\
            $\psi\bar\psi, \one\bar\sigma, \psi\bar\sigma$ & & $0.707357$ & $0.000250$ \\
            $\psi\bar\psi, \psi\bar\sigma, \one\bar\sigma$ & & $0.707523$ & $0.000416$ \\
            $\psi\bar\sigma, \one\bar\sigma, \psi\bar\psi$ & & $0.707309$ & $0.000203$ \\
            $\psi\bar\sigma, \psi\bar\psi, \one\bar\sigma$ & & $0.707028$ & $0.000079$ \\
            \midrule
            $\one\bar\sigma, \sigma\bar\psi, \sigma\bar\sigma$ & \multirow{6}{*}{$1/\sqrt{2}$} & $0.706498$ & $0.000609$ \\
            $\one\bar\sigma, \sigma\bar\sigma, \sigma\bar\psi$ & & $0.707284$ & $0.000177$ \\
            $\sigma\bar\psi, \one\bar\sigma, \sigma\bar\sigma$ & & $0.706982$ & $0.000125$ \\
            $\sigma\bar\psi, \sigma\bar\sigma, \one\bar\sigma$ & & $0.706970$ & $0.000137$ \\
            $\sigma\bar\sigma, \one\bar\sigma, \sigma\bar\psi$ & & $0.707297$ & $0.000190$ \\
            $\sigma\bar\sigma, \sigma\bar\psi, \one\bar\sigma$ & & $0.706500$ & $0.000607$ \\
            \midrule
            $\psi\bar\one, \sigma\bar\one, \sigma\bar\one$ & \multirow{3}{*}{$1/\sqrt{2}$} & $0.706914$ & $0.000193$ \\
            $\sigma\bar\one, \psi\bar\one, \sigma\bar\one$ & & $0.706446$ & $0.000661$ \\
            $\sigma\bar\one, \sigma\bar\one, \psi\bar\one$ & & $0.707280$ & $0.000174$ \\
            \bottomrule
        \end{tabular}
    \end{minipage}
    \hfill
    \begin{minipage}[t]{0.3200\textwidth}
        \vspace{0pt}
        \centering
        \begin{tabular}{cccc}
            \toprule
            $|C^{\text{cl}}_{a,b,c}|$ & Theory & Numerical & Abs. error \\
            \midrule
            $\psi\bar\one, \sigma\bar\psi, \sigma\bar\psi$ & \multirow{3}{*}{$1/\sqrt{2}$} & $0.706661$ & $0.000446$ \\
            $\sigma\bar\psi, \psi\bar\one, \sigma\bar\psi$ & & $0.706120$ & $0.000987$ \\
            $\sigma\bar\psi, \sigma\bar\psi, \psi\bar\one$ & & $0.706308$ & $0.000799$ \\
            \midrule
            $\psi\bar\one, \sigma\bar\sigma, \sigma\bar\sigma$ & \multirow{3}{*}{$1/\sqrt{2}$} & $0.706883$ & $0.000224$ \\
            $\sigma\bar\sigma, \psi\bar\one, \sigma\bar\sigma$ & & $0.706419$ & $0.000688$ \\
            $\sigma\bar\sigma, \sigma\bar\sigma, \psi\bar\one$ & & $0.707137$ & $0.000030$ \\
            \midrule
            $\psi\bar\psi, \sigma\bar\one, \sigma\bar\psi$ & \multirow{6}{*}{$1/\sqrt{2}$} & $0.707357$ & $0.000250$ \\
            $\psi\bar\psi, \sigma\bar\psi, \sigma\bar\one$ & & $0.707523$ & $0.000416$ \\
            $\sigma\bar\one, \psi\bar\psi, \sigma\bar\psi$ & & $0.706855$ & $0.000252$ \\
            $\sigma\bar\one, \sigma\bar\psi, \psi\bar\psi$ & & $0.707302$ & $0.000195$ \\
            $\sigma\bar\psi, \psi\bar\psi, \sigma\bar\one$ & & $0.707028$ & $0.000079$ \\
            $\sigma\bar\psi, \sigma\bar\one, \psi\bar\psi$ & & $0.707309$ & $0.000203$ \\
            \midrule
            $\psi\bar\sigma, \sigma\bar\one, \sigma\bar\sigma$ & \multirow{6}{*}{$1/\sqrt{2}$} & $0.706982$ & $0.000125$ \\
            $\psi\bar\sigma, \sigma\bar\sigma, \sigma\bar\one$ & & $0.706970$ & $0.000137$ \\
            $\sigma\bar\one, \psi\bar\sigma, \sigma\bar\sigma$ & & $0.706498$ & $0.000609$ \\
            $\sigma\bar\one, \sigma\bar\sigma, \psi\bar\sigma$ & & $0.707284$ & $0.000177$ \\
            $\sigma\bar\sigma, \psi\bar\sigma, \sigma\bar\one$ & & $0.706500$ & $0.000607$ \\
            $\sigma\bar\sigma, \sigma\bar\one, \psi\bar\sigma$ & & $0.707297$ & $0.000190$ \\
            \midrule
            $\one\bar\one, \one\bar\one, \one\bar\one$ & $1$ & $1.000000$ & $0.000000$ \\
            \midrule
            $\one\bar\one, \one\bar\psi, \one\bar\psi$ & \multirow{3}{*}{$1$} & $0.999707^{*}$ & $0.000293$ \\
            $\one\bar\psi, \one\bar\one, \one\bar\psi$ & & $0.999653^{*}$ & $0.000347$ \\
            $\one\bar\psi, \one\bar\psi, \one\bar\one$ & & $0.999140^{*}$ & $0.000860$ \\
            \midrule
            $\one\bar\one, \one\bar\sigma, \one\bar\sigma$ & \multirow{3}{*}{$1$} & $0.999963^{*}$ & $0.000037$ \\
            $\one\bar\sigma, \one\bar\one, \one\bar\sigma$ & & $0.999975^{*}$ & $0.000025$ \\
            $\one\bar\sigma, \one\bar\sigma, \one\bar\one$ & & $0.999862^{*}$ & $0.000138$ \\
            \midrule
            $\one\bar\one, \psi\bar\one, \psi\bar\one$ & \multirow{3}{*}{$1$} & $0.999707^{*}$ & $0.000293$ \\
            $\psi\bar\one, \one\bar\one, \psi\bar\one$ & & $0.999653^{*}$ & $0.000347$ \\
            $\psi\bar\one, \psi\bar\one, \one\bar\one$ & & $0.999140^{*}$ & $0.000860$ \\
            \midrule
            $\one\bar\one, \psi\bar\psi, \psi\bar\psi$ & \multirow{3}{*}{$1$} & $1.000519^{*}$ & $0.000519$ \\
            $\psi\bar\psi, \one\bar\one, \psi\bar\psi$ & & $1.000594^{*}$ & $0.000594$ \\
            $\psi\bar\psi, \psi\bar\psi, \one\bar\one$ & & $1.001572^{*}$ & $0.001572$ \\
            \midrule
            $\one\bar\one, \psi\bar\sigma, \psi\bar\sigma$ & \multirow{3}{*}{$1$} & $0.999809^{*}$ & $0.000191$ \\
            $\psi\bar\sigma, \one\bar\one, \psi\bar\sigma$ & & $0.999787^{*}$ & $0.000213$ \\
            $\psi\bar\sigma, \psi\bar\sigma, \one\bar\one$ & & $0.999409^{*}$ & $0.000591$ \\
            \bottomrule
        \end{tabular}
    \end{minipage}
    \hfill
    \begin{minipage}[t]{0.3200\textwidth}
        \vspace{0pt}
        \centering
        \begin{tabular}{cccc}
            \toprule
            $|C^{\text{cl}}_{a,b,c}|$ & Theory & Numerical & Abs. error \\
            \midrule
            $\one\bar\one, \sigma\bar\one, \sigma\bar\one$ & \multirow{3}{*}{$1$} & $0.999963^{*}$ & $0.000037$ \\
            $\sigma\bar\one, \one\bar\one, \sigma\bar\one$ & & $0.999975^{*}$ & $0.000025$ \\
            $\sigma\bar\one, \sigma\bar\one, \one\bar\one$ & & $0.999862^{*}$ & $0.000138$ \\
            \midrule
            $\one\bar\one, \sigma\bar\psi, \sigma\bar\psi$ & \multirow{3}{*}{$1$} & $0.999809^{*}$ & $0.000191$ \\
            $\sigma\bar\psi, \one\bar\one, \sigma\bar\psi$ & & $0.999787^{*}$ & $0.000213$ \\
            $\sigma\bar\psi, \sigma\bar\psi, \one\bar\one$ & & $0.999409^{*}$ & $0.000591$ \\
            \midrule
            $\one\bar\one, \sigma\bar\sigma, \sigma\bar\sigma$ & \multirow{3}{*}{$1$} & $0.999944^{*}$ & $0.000056$ \\
            $\sigma\bar\sigma, \one\bar\one, \sigma\bar\sigma$ & & $0.999970^{*}$ & $0.000030$ \\
            $\sigma\bar\sigma, \sigma\bar\sigma, \one\bar\one$ & & $0.999776^{*}$ & $0.000224$ \\
            \midrule
            $\one\bar\psi, \psi\bar\one, \psi\bar\psi$ & \multirow{6}{*}{$1$} & $0.999402$ & $0.000598$ \\
            $\one\bar\psi, \psi\bar\psi, \psi\bar\one$ & & $0.999481$ & $0.000519$ \\
            $\psi\bar\one, \one\bar\psi, \psi\bar\psi$ & & $0.999402$ & $0.000598$ \\
            $\psi\bar\one, \psi\bar\psi, \one\bar\psi$ & & $0.999481$ & $0.000519$ \\
            $\psi\bar\psi, \one\bar\psi, \psi\bar\one$ & & $0.999465$ & $0.000535$ \\
            $\psi\bar\psi, \psi\bar\one, \one\bar\psi$ & & $0.999465$ & $0.000535$ \\
            \midrule
            $\one\bar\psi, \sigma\bar\one, \sigma\bar\psi$ & \multirow{6}{*}{$1$} & $0.999622$ & $0.000378$ \\
            $\one\bar\psi, \sigma\bar\psi, \sigma\bar\one$ & & $0.999187$ & $0.000813$ \\
            $\sigma\bar\one, \one\bar\psi, \sigma\bar\psi$ & & $0.999687$ & $0.000313$ \\
            $\sigma\bar\one, \sigma\bar\psi, \one\bar\psi$ & & $0.999655$ & $0.000345$ \\
            $\sigma\bar\psi, \one\bar\psi, \sigma\bar\one$ & & $0.999201$ & $0.000799$ \\
            $\sigma\bar\psi, \sigma\bar\one, \one\bar\psi$ & & $0.999604$ & $0.000396$ \\
            \midrule
            $\one\bar\sigma, \psi\bar\one, \psi\bar\sigma$ & \multirow{6}{*}{$1$} & $0.999687$ & $0.000313$ \\
            $\one\bar\sigma, \psi\bar\sigma, \psi\bar\one$ & & $0.999655$ & $0.000345$ \\
            $\psi\bar\one, \one\bar\sigma, \psi\bar\sigma$ & & $0.999622$ & $0.000378$ \\
            $\psi\bar\one, \psi\bar\sigma, \one\bar\sigma$ & & $0.999187$ & $0.000813$ \\
            $\psi\bar\sigma, \one\bar\sigma, \psi\bar\one$ & & $0.999604$ & $0.000396$ \\
            $\psi\bar\sigma, \psi\bar\one, \one\bar\sigma$ & & $0.999201$ & $0.000799$ \\
            \midrule
            $\one\bar\sigma, \sigma\bar\one, \sigma\bar\sigma$ & \multirow{6}{*}{$1$} & $0.999939$ & $0.000061$ \\
            $\one\bar\sigma, \sigma\bar\sigma, \sigma\bar\one$ & & $0.999836$ & $0.000164$ \\
            $\sigma\bar\one, \one\bar\sigma, \sigma\bar\sigma$ & & $0.999939$ & $0.000061$ \\
            $\sigma\bar\one, \sigma\bar\sigma, \one\bar\sigma$ & & $0.999836$ & $0.000164$ \\
            $\sigma\bar\sigma, \one\bar\sigma, \sigma\bar\one$ & & $0.999849$ & $0.000151$ \\
            $\sigma\bar\sigma, \sigma\bar\one, \one\bar\sigma$ & & $0.999849$ & $0.000151$ \\
            \bottomrule
        \end{tabular}
    \end{minipage}
\end{table*}

\begin{figure*}[tb]
    \centering
    \includegraphics[width=\linewidth]{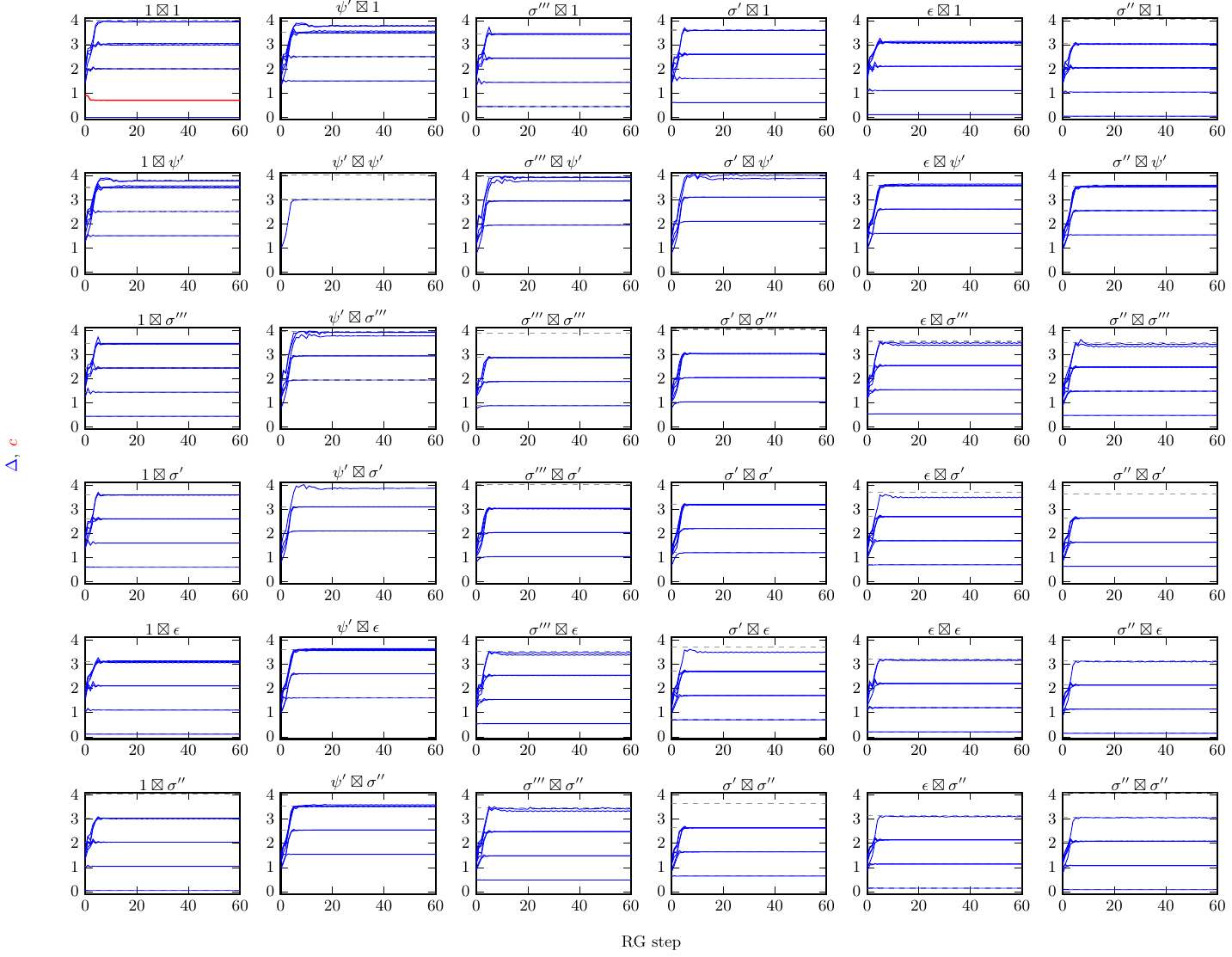}
    \caption{RG flow of the scaling dimensions in all 36 sectors of the
    doubled tricritical Ising CFT
    $\fZ(\cC_{5,4})\cong
    \cC_{5,4}\boxtimes\overline{\cC_{5,4}}$. Each panel is labeled by its sector $x\boxtimes y$.
    Solid blue curves trace the numerical levels, gray dashed lines mark
    the exact scaling dimensions, and red denotes the central charge.
    Rapidly fluctuating high-lying trajectories are omitted.
    The flow is computed with $\chi_{\text{RG}}=48$ and remains stable
    through $N_{\text{RG}}=60$ RG steps.}
    \label{fig:triising-rg-spec}
\end{figure*}

\begin{figure*}[tb]
    \centering
    \includegraphics[width=\linewidth]{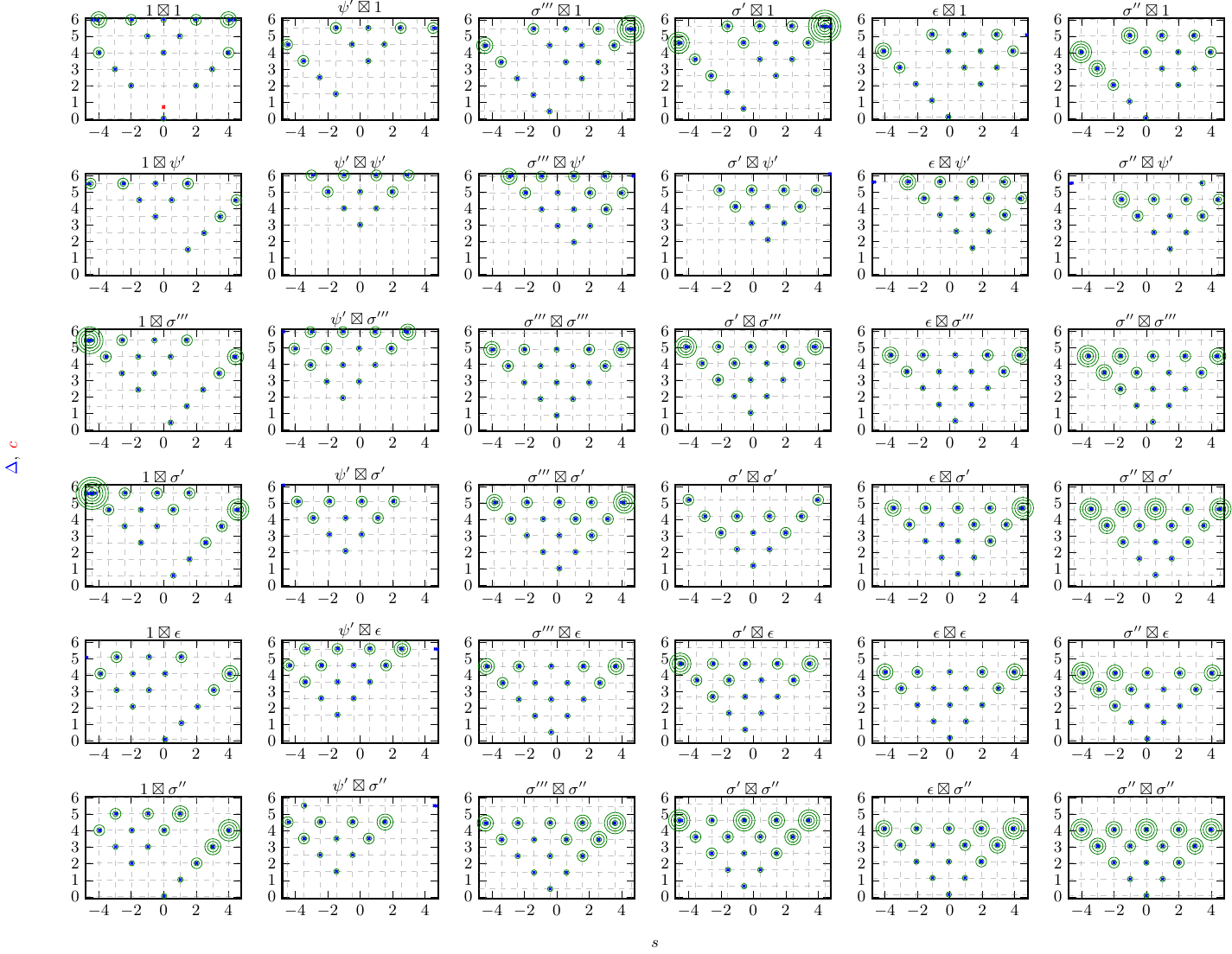}
    \caption{Spin-resolved conformal spectra in all 36 sectors of the doubled
    tricritical Ising CFT $\fZ(\cC_{5,4})$. Each panel is labeled by its sector $x\boxtimes y$. The
    sector-dependent gray dashed grid marks the exact values of $s$ and
    $\Delta$, green concentric circles indicate degenerate levels, and red
    denotes the central charge. The spectra are extracted from the RG
    fixed-point tensor at $N_{\text{RG}}=20$ with $\chi_{\text{RG}}=48$ and
    $\chi_{\text{transfer}}=100$.}
    \label{fig:triising-spin-spec}
\end{figure*}

\begin{table*}[tb]
    \centering
    \footnotesize
    \caption{Magnitudes of the closed-string structure constants
    $|C^{\text{cl}}_{abc}|$ of the tricritical Ising CFT
    $\cM(5,4)$ with stable symTO $\fZ(\cC_{5,4})$, using $\chi_{\text{RG}}=50$ at $N_{\text{RG}}=20$.
    Only diagonal full-CFT primary fields $x\bar x$ are listed.
    The half-$4\times4$ triangular patch was used, with cutoff 50 in
    both subsequent TNR optimizations. Numerical values include the
    effective collar correction of Eq.~\eqref{eq:ising-collar-amplitude}.
    The ordered labels $a,b,c$ correspond to the left, up, and tilted
    directions. Asterisks mark the 12 entries used to determine $l_i$.
    Daggers mark the 15 channels containing $\psi'\bar\psi'$ with
    $\Delta=3$, which are retained in the table but excluded from the fit.
    The vacuum channel is normalized to unity and is excluded from the fit.}
    \label{tab:structure-constants-triising}
    \begin{minipage}[t]{0.4800\textwidth}
        \vspace{0pt}
        \centering
        \begin{tabular}{cccc}
        \toprule
        $|C^{\text{cl}}_{a,b,c}|$ & Theory & Numerical & Abs. error \\
        \midrule
        $\one\bar\one,\one\bar\one,\one\bar\one$ & $1$ & $1.000000$ & $0.000000$ \\
\midrule
$\one\bar\one,\sigma''\bar\sigma'',\sigma''\bar\sigma''$  & \multirow{3}{*}{$1$}  & $1.000218^{*}$ & $0.000218$ \\
$\sigma''\bar\sigma'',\one\bar\one,\sigma''\bar\sigma''$  &  & $1.000276^{*}$ & $0.000276$ \\
$\sigma''\bar\sigma'',\sigma''\bar\sigma'',\one\bar\one$  &  & $1.000569^{*}$ & $0.000569$ \\
\midrule
$\sigma''\bar\sigma'',\sigma''\bar\sigma'',\epsilon\bar\epsilon$  & \multirow{3}{*}{$A_{\mathrm{TCI}}$}  & $0.915905$ & $0.000452$ \\
$\sigma''\bar\sigma'',\epsilon\bar\epsilon,\sigma''\bar\sigma''$  &  & $0.916219$ & $0.000766$ \\
$\epsilon\bar\epsilon,\sigma''\bar\sigma'',\sigma''\bar\sigma''$  &  & $0.916386$ & $0.000933$ \\
\midrule
$\one\bar\one,\epsilon\bar\epsilon,\epsilon\bar\epsilon$  & \multirow{3}{*}{$1$}  & $1.000510^{*}$ & $0.000510$ \\
$\epsilon\bar\epsilon,\one\bar\one,\epsilon\bar\epsilon$  &  & $1.000638^{*}$ & $0.000638$ \\
$\epsilon\bar\epsilon,\epsilon\bar\epsilon,\one\bar\one$  &  & $1.001296^{*}$ & $0.001296$ \\
\midrule
$\sigma''\bar\sigma'',\epsilon\bar\epsilon,\sigma'''\bar\sigma'''$  & \multirow{6}{*}{$1/2$}  & $0.500192$ & $0.000192$ \\
$\sigma''\bar\sigma'',\sigma'''\bar\sigma''',\epsilon\bar\epsilon$  &  & $0.499636$ & $0.000364$ \\
$\epsilon\bar\epsilon,\sigma''\bar\sigma'',\sigma'''\bar\sigma'''$  &  & $0.500050$ & $0.000050$ \\
$\epsilon\bar\epsilon,\sigma'''\bar\sigma''',\sigma''\bar\sigma''$  &  & $0.499772$ & $0.000228$ \\
$\sigma'''\bar\sigma''',\sigma''\bar\sigma'',\epsilon\bar\epsilon$  &  & $0.500856$ & $0.000856$ \\
$\sigma'''\bar\sigma''',\epsilon\bar\epsilon,\sigma''\bar\sigma''$  &  & $0.501060$ & $0.001060$ \\
\midrule
$\sigma''\bar\sigma'',\sigma''\bar\sigma'',\sigma'\bar\sigma'$  & \multirow{3}{*}{$A_{\mathrm{TCI}}/6$}  & $0.152381$ & $0.000195$ \\
$\sigma''\bar\sigma'',\sigma'\bar\sigma',\sigma''\bar\sigma''$  &  & $0.151968$ & $0.000608$ \\
$\sigma'\bar\sigma',\sigma''\bar\sigma'',\sigma''\bar\sigma''$  &  & $0.152713$ & $0.000137$ \\
\midrule
$\epsilon\bar\epsilon,\sigma'\bar\sigma',\epsilon\bar\epsilon$  & \multirow{3}{*}{$2A_{\mathrm{TCI}}/3$}  & $0.608672$ & $0.001630$ \\
$\sigma'\bar\sigma',\epsilon\bar\epsilon,\epsilon\bar\epsilon$  &  & $0.611204$ & $0.000902$ \\
$\epsilon\bar\epsilon,\epsilon\bar\epsilon,\sigma'\bar\sigma'$  &  & $0.610490$ & $0.000188$ \\
\midrule
$\one\bar\one,\sigma'''\bar\sigma''',\sigma'''\bar\sigma'''$  & \multirow{3}{*}{$1$}  & $1.000503^{*}$ & $0.000503$ \\
$\sigma'''\bar\sigma''',\one\bar\one,\sigma'''\bar\sigma'''$  &  & $1.000585^{*}$ & $0.000585$ \\
$\sigma'''\bar\sigma''',\sigma'''\bar\sigma''',\one\bar\one$  &  & $1.001048^{*}$ & $0.001048$ \\
        \bottomrule
        \end{tabular}
    \end{minipage}
    \hfill
    \begin{minipage}[t]{0.4800\textwidth}
        \vspace{0pt}
        \centering
        \begin{tabular}{cccc}
        \toprule
        $|C^{\text{cl}}_{a,b,c}|$ & Theory & Numerical & Abs. error \\
        \midrule
$\sigma''\bar\sigma'',\sigma'''\bar\sigma''',\sigma'\bar\sigma'$  & \multirow{6}{*}{$3/4$}  & $0.751068$ & $0.001068$ \\
$\sigma''\bar\sigma'',\sigma'\bar\sigma',\sigma'''\bar\sigma'''$  &  & $0.749317$ & $0.000683$ \\
$\sigma'''\bar\sigma''',\sigma''\bar\sigma'',\sigma'\bar\sigma'$  &  & $0.750436$ & $0.000436$ \\
$\sigma'\bar\sigma',\sigma''\bar\sigma'',\sigma'''\bar\sigma'''$  &  & $0.750516$ & $0.000516$ \\
$\sigma'''\bar\sigma''',\sigma'\bar\sigma',\sigma''\bar\sigma''$  &  & $0.749412$ & $0.000588$ \\
$\sigma'\bar\sigma',\sigma'''\bar\sigma''',\sigma''\bar\sigma''$  &  & $0.750756$ & $0.000756$ \\
\midrule
$\one\bar\one,\sigma'\bar\sigma',\sigma'\bar\sigma'$  & \multirow{3}{*}{$1$}  & $0.999535^{*}$ & $0.000465$ \\
$\sigma'\bar\sigma',\one\bar\one,\sigma'\bar\sigma'$  &  & $0.999450^{*}$ & $0.000550$ \\
$\sigma'\bar\sigma',\sigma'\bar\sigma',\one\bar\one$  &  & $0.998985^{*}$ & $0.001015$ \\
\midrule
$\sigma''\bar\sigma'',\sigma''\bar\sigma'',\psi'\bar\psi'$  & \multirow{3}{*}{$1/56$}  & $0.017653^{\dagger}$ & $0.000205$ \\
$\sigma''\bar\sigma'',\psi'\bar\psi',\sigma''\bar\sigma''$  &  & $0.017268^{\dagger}$ & $0.000590$ \\
$\psi'\bar\psi',\sigma''\bar\sigma'',\sigma''\bar\sigma''$  &  & $0.017587^{\dagger}$ & $0.000270$ \\
\midrule
        $\sigma'\bar\sigma',\sigma'\bar\sigma',\sigma'\bar\sigma'$ & $2A_{\mathrm{TCI}}/3$ & $0.612321$ & $0.002019$ \\
\midrule
$\epsilon\bar\epsilon,\sigma'\bar\sigma',\psi'\bar\psi'$  & \multirow{6}{*}{$3/7$}  & $0.429427^{\dagger}$ & $0.000855$ \\
$\epsilon\bar\epsilon,\psi'\bar\psi',\sigma'\bar\sigma'$  &  & $0.418328^{\dagger}$ & $0.010244$ \\
$\sigma'\bar\sigma',\epsilon\bar\epsilon,\psi'\bar\psi'$  &  & $0.426360^{\dagger}$ & $0.002211$ \\
$\sigma'\bar\sigma',\psi'\bar\psi',\epsilon\bar\epsilon$  &  & $0.421086^{\dagger}$ & $0.007486$ \\
$\psi'\bar\psi',\epsilon\bar\epsilon,\sigma'\bar\sigma'$  &  & $0.423267^{\dagger}$ & $0.005304$ \\
$\psi'\bar\psi',\sigma'\bar\sigma',\epsilon\bar\epsilon$  &  & $0.425256^{\dagger}$ & $0.003315$ \\
\midrule
$\sigma'''\bar\sigma''',\sigma'''\bar\sigma''',\psi'\bar\psi'$  & \multirow{3}{*}{$7/8$}  & $0.877602^{\dagger}$ & $0.002602$ \\
$\sigma'''\bar\sigma''',\psi'\bar\psi',\sigma'''\bar\sigma'''$  &  & $0.859250^{\dagger}$ & $0.015750$ \\
$\psi'\bar\psi',\sigma'''\bar\sigma''',\sigma'''\bar\sigma'''$  &  & $0.867781^{\dagger}$ & $0.007219$ \\
\midrule
$\one\bar\one,\psi'\bar\psi',\psi'\bar\psi'$  & \multirow{3}{*}{$1$}  & $0.985805^{\dagger}$ & $0.014195$ \\
$\psi'\bar\psi',\one\bar\one,\psi'\bar\psi'$  &  & $0.984646^{\dagger}$ & $0.015354$ \\
$\psi'\bar\psi',\psi'\bar\psi',\one\bar\one$  &  & $1.023516^{\dagger}$ & $0.023516$ \\
\bottomrule
        \end{tabular}
    \end{minipage}
\end{table*}

The first example is the Ising CFT, for which $\fZ(\Ising)$ is a stable symTO
in the sense of Definition~\ref{def:stable-symto}.  It is also
chiral-factorized and maximal~\cite{chatterjee2024emergentgeneralizedsymmetrymaximal} in the sense of
Sec.~\ref{appsec:ffa-extensions-and-emergent-symto}. We choose the charge category $\cR=\Ising$, obtained
from the Ising UMTC by forgetting its braiding structure. The relevant
categorical data are reviewed in App.~\ref{appsec:ising-category}.

Complementary Hamiltonian realizations have also been studied. Roychowdhury
and Wang recently found a symmetric critical phase in an Ising-fusion-category
lattice model whose low-energy theory lies in the Ising universality
class~\cite{Roychowdhury_2026}. A related precursor is the $G$-graded
fusion-category construction of Ning, Mao, and Wang, whose Ising-category
example likewise realizes the critical Ising CFT at low
energy~\cite{Ning:2023qbv}.

We build the initial tensor $T$ according to
Eq.~\eqref{eq:initial-tensor-bootstrap} with
$V=\one\oplus\psi\oplus\sigma$. The integer lifts $n_i$ are fixed so
that the conformal weights $h_i=s_i+n_i$ are
\[
    \left\{0,\frac{1}{2},\frac{1}{16}\right\}.
\]
We apply $\Ising$-Loop-TNR with cutoff $\chi_{\text{RG}}=5$ for
$N_{\text{RG}}=200$ RG steps.

The stability of the theory is supported by five complementary observations.
\begin{itemize}
    \item The RG flow remains stable for 200 steps without divergence.
    The first 100 steps are shown in Fig.~\ref{fig:stable-ising-rg-spec}.
    \item With $\chi_{\text{RG}}=32$, nuclear-norm-regularized Loop-TNR
    gives small absolute errors for the low-lying levels relative to the
    exact CFT spectrum, as shown in Fig.~\ref{fig:stable-ising-cft-errors}.
    \item The initial parameters $\cN_{ijk}$ can be chosen from a large
    region of parameter space while the flow still converges to the Ising
    CFT fixed point. In particular, $\cN_{ijk}=1$ for every admissible
    triple $i,j,k\in\Irr(\Ising)$ is already a valid choice.
    \item After symmetry-preserving random perturbations are added to the
    fixed-point tensor, the ensuing Loop-TNR flow returns to the same fixed
    point.
    \item The spin-resolved transfer spectrum separates all nine
    $\fZ(\Ising)$ sectors into the expected doubled Ising conformal families,
    as shown in Fig.~\ref{fig:stable-ising-spin-spec}.
\end{itemize}

The spin-resolved spectrum obeys the doubled-sector relations
\begin{equation}\label{eq:doubled-sector-spectrum}
    \begin{aligned}
    \Delta_{x\boxtimes y}&=h_x+h_y+n+\bar n,\\
    s_{x\boxtimes y}&=h_x-h_y+n-\bar n
    \equiv s_x-s_y \pmod 1,
    \end{aligned}
\end{equation}
where $n,\bar n\geq0$ are the integer descendant levels of the states in
the two chiral modules. The spin relation confirms modular $T$-covariance. We use
Eq.~\eqref{eq:doubled-sector-spectrum} as the common diagnostic for the
remaining examples.

The observed RG stability can be understood from the spectrum in the trivial
sector $\one\boxtimes\one$.  The level-one vacuum descendants are null, so
every nonidentity local operator has scaling dimension $\Delta\geq2$.
Consequently, there is no nontrivial relevant symmetric perturbation, in
agreement with the criterion of Definition~\ref{def:stable-symto}.

For the stable Ising boundary in Tab.~\ref{tab:gapless-boundary-examples}, the corresponding
condensable algebra in
$\fZ(\Ising)\cong\Ising\boxtimes\overline{\Ising}$ is the tensor unit
$A=\one\boxtimes\one$, hence no nontrivial anyon is condensed before topological Wick rotation.  The stable,
chiral-factorized, and maximal symTOs~\cite{chatterjee2024emergentgeneralizedsymmetrymaximal} reviewed in
Sec.~\ref{appsec:ffa-extensions-and-emergent-symto} therefore all coincide
with $\fZ(\Ising)$ in this example.

We extract the magnitudes of the closed-string structure constants using
the method introduced in Sec.~\ref{sec:structure-constants}. We find that
introducing three additional effective collar lengths $l_i$, with
$i=1,2,3$ corresponding to the left, up, and tilted directions,
substantially improves the agreement with the CFT values. We replace
Eq.~\eqref{eq:extract-closed-structure-constants} by
\begin{equation}\label{eq:ising-collar-amplitude}
\begin{aligned}
\left|\frac{\mathcal A_{abc^*}}{\mathcal A_{000}}\right|
&\simeq \left|C^{\mathrm{cl}}_{abc^*}\right|
e^{-\frac{\pi}{4}(\Delta_a+\Delta_b+\Delta_c)}\\
&\quad\times(\sqrt{2})^{-\Delta_a-\Delta_b+\Delta_c}
e^{-l_1\Delta_a-l_2\Delta_b-l_3\Delta_c}.
\end{aligned}
\end{equation}
Here $0$ denotes the vacuum, $a^*$ is the conjugate of $a$, and
$\Delta_a$ is its scaling dimension. The parameters $l_i$ are
dimensionless shifts relative to the reference pair-of-pants geometry
and are allowed to have either sign.

We first determine $l_i$ from the channels $0aa^*$, $a0a^*$, and
$aa^*0$. Inserting the identity reduces these amplitudes to two-point
functions, for which the normalized structure constants are unity.
Since $\Delta_{a^*}=\Delta_a$, we obtain
\begin{equation}\label{eq:ising-collar-identity-channels}
\begin{aligned}
\left|\frac{\mathcal A_{0aa^*}}{\mathcal A_{000}}\right|
&\simeq e^{-(\pi/2+l_2+l_3)\Delta_a},\\
\left|\frac{\mathcal A_{a0a^*}}{\mathcal A_{000}}\right|
&\simeq e^{-(\pi/2+l_1+l_3)\Delta_a},\\
\left|\frac{\mathcal A_{aa^*0}}{\mathcal A_{000}}\right|
&\simeq e^{-(\pi/2+\log 2+l_1+l_2)\Delta_a}.
\end{aligned}
\end{equation}
We collect the nonvacuum primary fields from all sectors into a set
$\mathcal P$ and determine the three parameters jointly by minimizing
\begin{equation}\label{eq:ising-collar-fit}
\begin{aligned}
& (\widehat l_1,\widehat l_2,\widehat l_3)
=\underset{l_1,l_2,l_3\in\mathbb R}{\operatorname{arg\,min}}\\
&\quad\sum_{a\in\mathcal P}\Bigg\{
\left[
\log\left|\frac{\mathcal A_{0aa^*}}{\mathcal A_{000}}\right|
+\left(\frac{\pi}{2}+l_2+l_3\right)\Delta_a
\right]^2\\
&\qquad+
\left[
\log\left|\frac{\mathcal A_{a0a^*}}{\mathcal A_{000}}\right|
+\left(\frac{\pi}{2}+l_1+l_3\right)\Delta_a
\right]^2\\
&\qquad+
\left[
\log\left|\frac{\mathcal A_{aa^*0}}{\mathcal A_{000}}\right|
+\left(\frac{\pi}{2}+\log 2+l_1+l_2\right)\Delta_a
\right]^2
\Bigg\}.
\end{aligned}
\end{equation}
We use the exact primary scaling dimensions and assign equal weight to
each logarithmic residual. For the eight nonvacuum primaries of the
maximal Ising theory, this gives 24 constraints on three parameters.

Using the half-$4\times4$ patch after 20 RG steps, with cutoff 36 in
both the RG flow and the two subsequent TNR optimizations, we obtain
\begin{equation}\label{eq:ising-collar-lengths}
\begin{aligned}
(\widehat l_1,\widehat l_2,\widehat l_3)
&=(0.00557785,\ 0.00759716,\\
&\qquad -0.00401785).
\end{aligned}
\end{equation}
We then keep these parameters fixed and extract the structure constants
in all sectors. For the 75 ordered channels containing no identity
field, none of which enter the fit, the root-mean-square relative error
is $0.496\%$ when using the geometric factor in
Eq.~\eqref{eq:extract-closed-structure-constants} without collar
corrections. Including the fitted collar corrections reduces this
error to $0.0599\%$.

Table~\ref{tab:structure-constants-stable-ising} summarizes the resulting
100 ordered channels involving all nine doubled primary fields. We
denote the full-CFT primary in the $x\boxtimes y$ sector by $x\bar y$.
Entries used to determine $l_i$ are marked by an asterisk. The vacuum
channel fixes the overall normalization and is excluded from the fit.

\subsection{Tricritical Ising CFT with a stable symTO}
\label{sec:stable-tricritical-ising}

The second example is the tricritical Ising CFT $\cM(5,4)$, with
$c=7/10$.  As in the Ising example of
Sec.~\ref{para:ising-extension-tower}, the Virasoro vacuum module has no
nonidentity state below conformal weight two, and the minimal full field
subalgebra coincides with the maximal chiral-factorized subalgebra.
Consequently, the stable, chiral-factorized, and maximal symTOs~\cite{chatterjee2024emergentgeneralizedsymmetrymaximal} all coincide
with $\fZ(\cC_{5,4})$.
We take $\cR=\cC_{5,4}\cong\mathsf{6}_{7/10}$. Its six simple objects,
their conformal weights, and the UMTC factorization are collected in
App.~\ref{appsec:tricritical-ising-category} and Tab.~\ref{tab:triising-mtc}. Thus
$\fZ(\cC_{5,4})$ has 36 doubled sectors. We apply
nuclear-norm-regularized $\cR$-Loop-TNR~\cite{Homma_2024} with
$\chi_{\text{RG}}=48$.

The numerical results are summarized by three observations.
\begin{itemize}
    \item The RG flow remains stable for 60 steps and produces
    converged spectra in all 36 $\fZ(\cC_{5,4})$ sectors, as shown in
    Fig.~\ref{fig:triising-rg-spec}.
    \item At $N_{\text{RG}}=20$, the spin-resolved spectrum contains one
    conformal family in each sector, reproduces the corresponding doubled
    $\cM(5,4)$ tower, and satisfies
    Eq.~\eqref{eq:doubled-sector-spectrum}. See
    Fig.~\ref{fig:triising-spin-spec}.
    \item Within the tested range, varying the initial parameters
    $\cN_{ijk}$ does not change the fixed point reached by the flow.
\end{itemize}

We extract the magnitudes of the structure constants for diagonal primary
fields using the collar correction introduced for the Ising example in
Eq.~\eqref{eq:ising-collar-amplitude}. We use the half-$4\times4$ patch
after 20 RG steps, with cutoff 50 in both the RG flow and the two
subsequent TNR optimizations. We display only diagonal primary fields
$x\bar x$ to keep the table manageable, since listing all allowed
channels in the 36 doubled sectors would make it considerably longer.

The channels involving $\psi'\bar\psi'$, whose scaling dimension is
$\Delta=3$, are less accurate than those involving lower-dimensional
fields in the present calculation. We therefore exclude this field
when determining the effective collar lengths. Applying the joint fit
in Eq.~\eqref{eq:ising-collar-fit} to the 12 identity channels associated
with the remaining four nonvacuum diagonal primaries gives
\begin{equation}\label{eq:triising-collar-lengths}
\begin{aligned}
(\widehat l_1,\widehat l_2,\widehat l_3)
&=(0.01707758,\ 0.03078702,\\
&\qquad -0.01175532).
\end{aligned}
\end{equation}
We then keep these parameters fixed and extract all 50 ordered channels
among the six diagonal primary fields, including those involving
$\psi'\bar\psi'$.

For the 22 ordered channels containing neither the identity nor the
dimension-three field, none of which enter the fit, the root-mean-square
relative error decreases from $2.486\%$ without collar corrections to
$0.157\%$ with the fitted corrections. The maximum relative error
decreases from $5.047\%$ to $0.398\%$. Including the dimension-three
field gives a root-mean-square relative error of $0.943\%$ over all 34
nonidentity channels, compared with $5.004\%$ without the correction.
Since the higher-dimensional field is less reliably resolved, we use
the lower-dimensional channels as the main accuracy benchmark.

Table~\ref{tab:structure-constants-triising} summarizes the resulting
structure constants. Here we define the constant~\cite{DOTSENKO1985291}
\begin{equation}
    A_{\mathrm{TCI}}
    =\sqrt{\frac{\Gamma(4/5)\Gamma(2/5)^3}
    {\Gamma(1/5)\Gamma(3/5)^3}}
    =0.915453111\ldots .
\label{eq:tci-structure-constant}
\end{equation}
Entries used to determine $l_i$ are marked by an asterisk. The vacuum channel fixes the overall
normalization and is excluded from the fit.

\subsection{Three-state Potts model as \texorpdfstring{$\cW_3$}{W3} algebra modules}
\label{sec:potts-w3}

\begin{figure*}[tb]
    \centering
    \includegraphics[width=\linewidth]{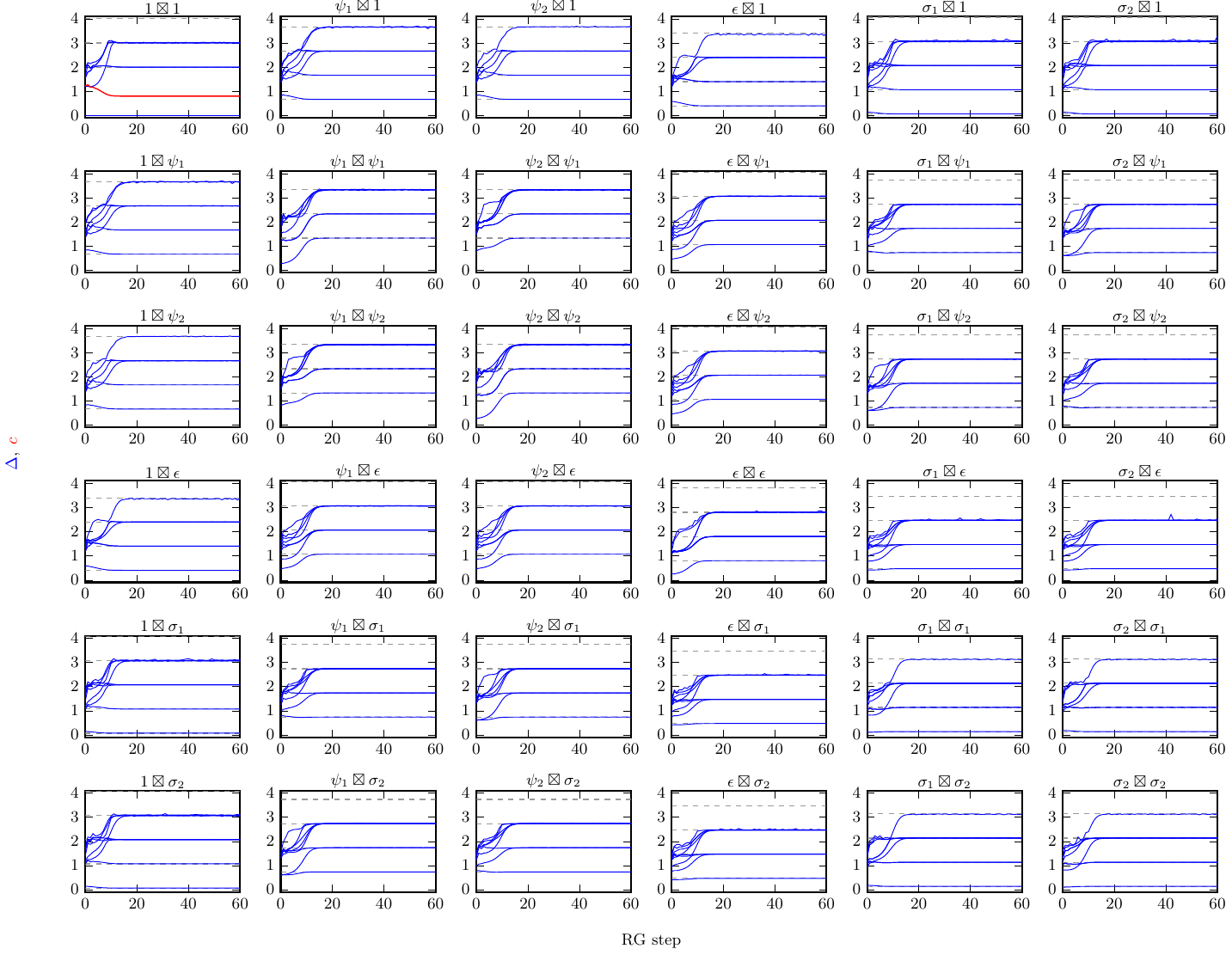}
    \caption{RG flow of the scaling dimensions in all 36 sectors of the
    doubled $\cW_3$-algebra CFT
    $\fZ(\mathsf{6}_{4/5})\cong
    \mathsf{6}_{4/5}\boxtimes\overline{\mathsf{6}_{4/5}}$.
    Each panel is labeled by its sector $x\boxtimes y$. Solid blue curves trace the numerical levels, gray dashed
    lines mark the exact scaling dimensions, and red denotes the central
    charge. Rapidly fluctuating high-lying trajectories are omitted.
    The flow is computed with $\chi_{\text{RG}}=48$ and remains stable
    through $N_{\text{RG}}=60$ RG steps.}
    \label{fig:w3-rg-spec}
\end{figure*}

\begin{figure*}[tb]
    \centering
    \includegraphics[width=\linewidth]{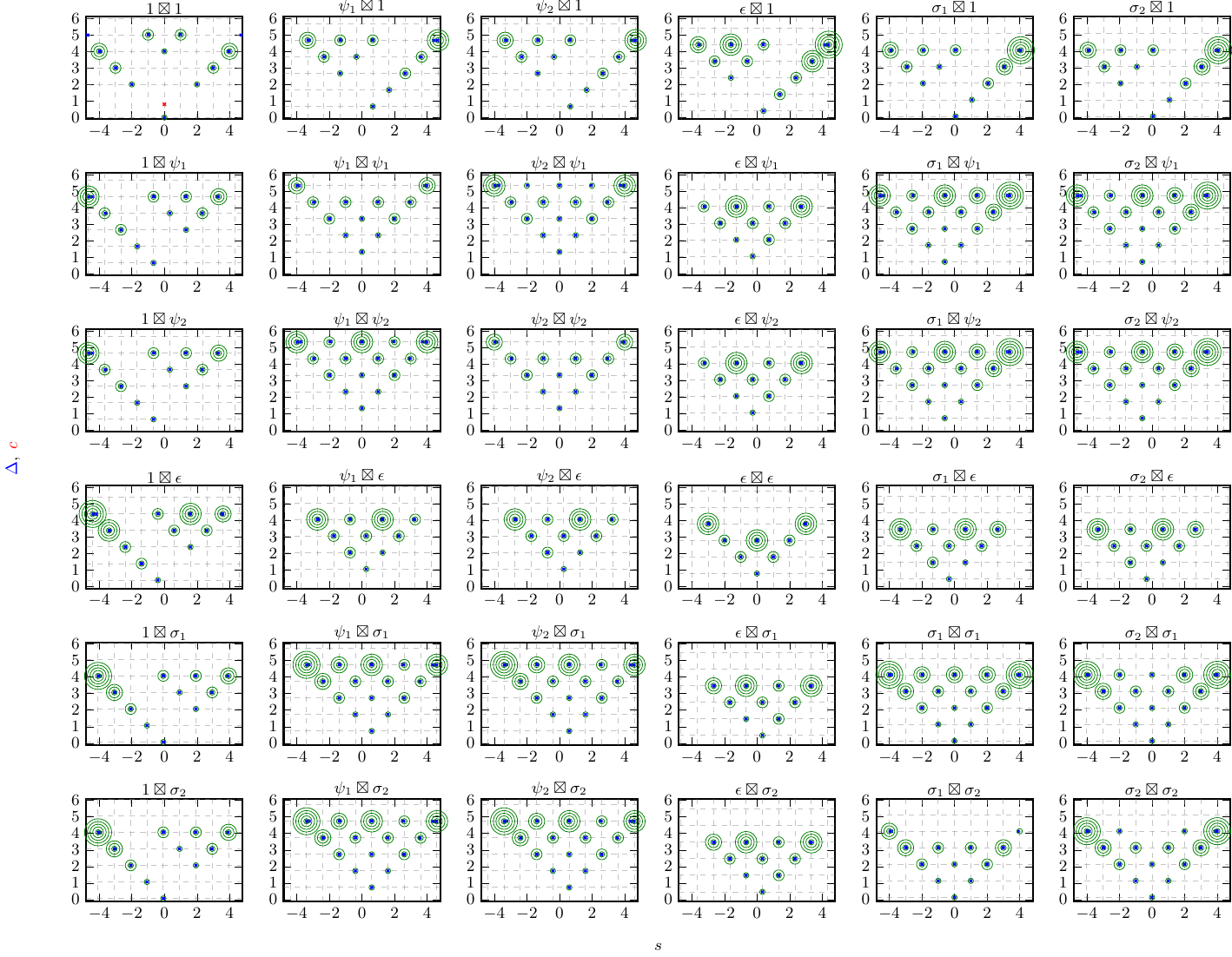}
    \caption{Spin-resolved conformal spectra in all 36 sectors of the doubled
    $\cW_3$-algebra CFT $\fZ(\mathsf{6}_{4/5})$. Each panel is labeled by its sector $x\boxtimes y$. The
    sector-dependent gray dashed grid marks the exact values of $s$ and
    $\Delta$, green concentric circles indicate degenerate levels, and red
    denotes the central charge. The spectra are extracted from the RG
    fixed-point tensor at $N_{\text{RG}}=23$ with $\chi_{\text{RG}}=48$ and
    $\chi_{\text{transfer}}=100$.}
    \label{fig:w3-spin-spec}
\end{figure*}

\begin{table*}[tb]
    \centering
    \scriptsize
    \caption{Magnitudes of the closed-string structure constants
    $|C^{\text{cl}}_{abc}|$ of the three-state Potts CFT in the $\cW_3$-primary
    basis, using $\chi_{\text{RG}}=50$ at $N_{\text{RG}}=20$. Only diagonal
    full-CFT primary fields $x\bar x$ are listed. The half-$4\times4$ triangular patch was used, with cutoff 50 in
    both subsequent TNR optimizations. Numerical values include the
    effective collar correction of Eq.~\eqref{eq:ising-collar-amplitude}.
    The ordered labels $a,b,c$ correspond to the left, up, and tilted
    directions. Asterisks mark the 15 entries used to determine $l_i$.
    The vacuum channel is normalized to unity and is excluded from the fit.}
    \label{tab:structure-constants-potts}
    \begin{minipage}[t]{0.4800\textwidth}
        \vspace{0pt}
        \centering
        \begin{tabular}{cccc}
        \toprule
        $|C^{\text{cl}}_{a,b,c}|$ & Theory & Numerical & Abs. error \\
        \midrule
        $\epsilon\bar\epsilon,\epsilon\bar\epsilon,\epsilon\bar\epsilon$ & $0$ & $0.000186$ & $0.000186$ \\
        \midrule
        $\psi_1\bar\psi_1,\sigma_1\bar\sigma_1,\sigma_1\bar\sigma_1$ & \multirow{3}{*}{$1/3$} & $0.334528$ & $0.001194$ \\
        $\sigma_1\bar\sigma_1,\psi_1\bar\psi_1,\sigma_1\bar\sigma_1$ & & $0.332169$ & $0.001164$ \\
        $\sigma_1\bar\sigma_1,\sigma_1\bar\sigma_1,\psi_1\bar\psi_1$ & & $0.331551$ & $0.001782$ \\
        \midrule
        $\psi_2\bar\psi_2,\sigma_2\bar\sigma_2,\sigma_2\bar\sigma_2$ & \multirow{3}{*}{$1/3$} & $0.334525$ & $0.001192$ \\
        $\sigma_2\bar\sigma_2,\psi_2\bar\psi_2,\sigma_2\bar\sigma_2$ & & $0.332152$ & $0.001181$ \\
        $\sigma_2\bar\sigma_2,\sigma_2\bar\sigma_2,\psi_2\bar\psi_2$ & & $0.331553$ & $0.001780$ \\
        \midrule
        $\epsilon\bar\epsilon,\sigma_1\bar\sigma_1,\sigma_2\bar\sigma_2$ & \multirow{6}{*}{$A_{\mathrm P}$} & $0.549040$ & $0.002863$ \\
        $\epsilon\bar\epsilon,\sigma_2\bar\sigma_2,\sigma_1\bar\sigma_1$ & & $0.549039$ & $0.002862$ \\
        $\sigma_1\bar\sigma_1,\epsilon\bar\epsilon,\sigma_2\bar\sigma_2$ & & $0.547606$ & $0.001428$ \\
        $\sigma_1\bar\sigma_1,\sigma_2\bar\sigma_2,\epsilon\bar\epsilon$ & & $0.545665$ & $0.000512$ \\
        $\sigma_2\bar\sigma_2,\epsilon\bar\epsilon,\sigma_1\bar\sigma_1$ & & $0.547607$ & $0.001430$ \\
        $\sigma_2\bar\sigma_2,\sigma_1\bar\sigma_1,\epsilon\bar\epsilon$ & & $0.545664$ & $0.000513$ \\
        \midrule
        $\epsilon\bar\epsilon,\psi_1\bar\psi_1,\sigma_2\bar\sigma_2$ & \multirow{6}{*}{$2/3$} & $0.666999$ & $0.000332$ \\
        $\epsilon\bar\epsilon,\sigma_2\bar\sigma_2,\psi_1\bar\psi_1$ & & $0.667369$ & $0.000703$ \\
        $\psi_1\bar\psi_1,\epsilon\bar\epsilon,\sigma_2\bar\sigma_2$ & & $0.669846$ & $0.003180$ \\
        $\psi_1\bar\psi_1,\sigma_2\bar\sigma_2,\epsilon\bar\epsilon$ & & $0.669175$ & $0.002508$ \\
        $\sigma_2\bar\sigma_2,\epsilon\bar\epsilon,\psi_1\bar\psi_1$ & & $0.668211$ & $0.001544$ \\
        $\sigma_2\bar\sigma_2,\psi_1\bar\psi_1,\epsilon\bar\epsilon$ & & $0.666803$ & $0.000137$ \\
        \midrule
        $\epsilon\bar\epsilon,\epsilon\bar\epsilon,\one\bar\one$ & \multirow{3}{*}{$1$} & $1.005334^{*}$ & $0.005334$ \\
        $\epsilon\bar\epsilon,\one\bar\one,\epsilon\bar\epsilon$ & & $1.001890^{*}$ & $0.001890$ \\
        $\one\bar\one,\epsilon\bar\epsilon,\epsilon\bar\epsilon$ & & $1.001731^{*}$ & $0.001731$ \\
        \midrule
        $\one\bar\one,\one\bar\one,\one\bar\one$ & $1$ & $1.000000$ & $0.000000$ \\
        \bottomrule
        \end{tabular}
    \end{minipage}
    \hfill
    \begin{minipage}[t]{0.4800\textwidth}
        \vspace{0pt}
        \centering
        \begin{tabular}{cccc}
        \toprule
        $|C^{\text{cl}}_{a,b,c}|$ & Theory & Numerical & Abs. error \\
        \midrule
        $\epsilon\bar\epsilon,\psi_2\bar\psi_2,\sigma_1\bar\sigma_1$ & \multirow{6}{*}{$2/3$} & $0.666973$ & $0.000306$ \\
        $\epsilon\bar\epsilon,\sigma_1\bar\sigma_1,\psi_2\bar\psi_2$ & & $0.667367$ & $0.000700$ \\
        $\psi_2\bar\psi_2,\epsilon\bar\epsilon,\sigma_1\bar\sigma_1$ & & $0.669837$ & $0.003170$ \\
        $\psi_2\bar\psi_2,\sigma_1\bar\sigma_1,\epsilon\bar\epsilon$ & & $0.669169$ & $0.002502$ \\
        $\sigma_1\bar\sigma_1,\epsilon\bar\epsilon,\psi_2\bar\psi_2$ & & $0.668216$ & $0.001549$ \\
        $\sigma_1\bar\sigma_1,\psi_2\bar\psi_2,\epsilon\bar\epsilon$ & & $0.666766$ & $0.000099$ \\
        \midrule
        $\one\bar\one,\psi_1\bar\psi_1,\psi_2\bar\psi_2$ & \multirow{6}{*}{$1$} & $0.999432^{*}$ & $0.000568$ \\
        $\one\bar\one,\psi_2\bar\psi_2,\psi_1\bar\psi_1$ & & $0.999380^{*}$ & $0.000620$ \\
        $\psi_1\bar\psi_1,\one\bar\one,\psi_2\bar\psi_2$ & & $0.999360^{*}$ & $0.000640$ \\
        $\psi_1\bar\psi_1,\psi_2\bar\psi_2,\one\bar\one$ & & $0.998147^{*}$ & $0.001853$ \\
        $\psi_2\bar\psi_2,\one\bar\one,\psi_1\bar\psi_1$ & & $0.999343^{*}$ & $0.000657$ \\
        $\psi_2\bar\psi_2,\psi_1\bar\psi_1,\one\bar\one$ & & $0.998184^{*}$ & $0.001816$ \\
        \midrule
        $\one\bar\one,\sigma_1\bar\sigma_1,\sigma_2\bar\sigma_2$ & \multirow{6}{*}{$1$} & $1.000751^{*}$ & $0.000751$ \\
        $\one\bar\one,\sigma_2\bar\sigma_2,\sigma_1\bar\sigma_1$ & & $1.000752^{*}$ & $0.000752$ \\
        $\sigma_1\bar\sigma_1,\one\bar\one,\sigma_2\bar\sigma_2$ & & $1.000825^{*}$ & $0.000825$ \\
        $\sigma_1\bar\sigma_1,\sigma_2\bar\sigma_2,\one\bar\one$ & & $1.002405^{*}$ & $0.002405$ \\
        $\sigma_2\bar\sigma_2,\one\bar\one,\sigma_1\bar\sigma_1$ & & $1.000826^{*}$ & $0.000826$ \\
        $\sigma_2\bar\sigma_2,\sigma_1\bar\sigma_1,\one\bar\one$ & & $1.002404^{*}$ & $0.002404$ \\
        \midrule
        $\sigma_1\bar\sigma_1,\sigma_1\bar\sigma_1,\sigma_1\bar\sigma_1$ & $2A_{\mathrm P}$ & $1.094652$ & $0.002297$ \\
        \midrule
        $\sigma_2\bar\sigma_2,\sigma_2\bar\sigma_2,\sigma_2\bar\sigma_2$ & $2A_{\mathrm P}$ & $1.094652$ & $0.002297$ \\
        \midrule
        $\psi_1\bar\psi_1,\psi_1\bar\psi_1,\psi_1\bar\psi_1$ & $4/3$ & $1.345729$ & $0.012395$ \\
        \midrule
        $\psi_2\bar\psi_2,\psi_2\bar\psi_2,\psi_2\bar\psi_2$ & $4/3$ & $1.345640$ & $0.012307$ \\
        \bottomrule
        \end{tabular}
    \end{minipage}
\end{table*}

The third example is the critical three-state Potts CFT, regarded as a
$c=4/5$ rational CFT with chiral algebra $\cW_3$. The center of
$\cR=\mathsf{6}_{4/5}\cong
\Fib\boxtimes\overline{\mathsf{3}_2}$~\cite{Wen_2015} is a stable symTO for
this CFT in the sense of Definition~\ref{def:stable-symto}, since the factorized
$\cW_3$ vacuum algebra has no nonidentity local operator of scaling dimension
below two.

Unlike the Ising and tricritical Ising examples, this stable
symTO is the minimal chiral-factorized symTO,
whereas the maximal symTO~\cite{chatterjee2024emergentgeneralizedsymmetrymaximal} is
$\cC_{6,5}\boxtimes\overline{\cC_{6,5}}$, as shown in
Sec.~\ref{para:potts-extension-tower}. The correspondence between
the six simple objects of $\cR$ and the $\cW_3$ primaries is given in
App.~\ref{appsec:potts-categories} and Tab.~\ref{tab:potts-mtc}, and its center again
contains 36 doubled sectors.

The same calculation can also be performed with $\cR=\cC_{6,5}$, whose
center is the maximal symTO~\cite{chatterjee2024emergentgeneralizedsymmetrymaximal} and satisfies the same stability criterion.
We do not present this case here because its 100 simple objects give too
many sectors to display.

Figures~\ref{fig:w3-rg-spec} and~\ref{fig:w3-spin-spec} show that the flow is
stable through 60 steps with $\chi_{\text{RG}}=48$ and, at
$N_{\text{RG}}=23$, resolves one doubled $\cW_3$ conformal family in each
sector, with dimensions and spins satisfying
Eq.~\eqref{eq:doubled-sector-spectrum}.

Unlike the preceding examples, the relevant chiral algebra is not the
Virasoro minimal-model VOA of $\cM(6,5)$, but its $\cW_3$ extension. The VOA
extension and the associated two-step condensation are reviewed in
Sec.~\ref{appsec:partition-vector-and-modular-covariance}. Accordingly,
we compare three-point coefficients in the $\cW_3$-primary basis of
Tab.~\ref{tab:potts-mtc}.

We extract the magnitudes of the structure constants in the $\cW_3$-primary
basis using the collar correction introduced for the Ising example in
Eq.~\eqref{eq:ising-collar-amplitude}. We use the half-$4\times4$ patch
after 20 RG steps, with cutoff 50 in both the RG flow and the two
subsequent TNR optimizations. We display only diagonal primary fields
$x\bar x$ to keep the table manageable, since listing all allowed
channels in the 36 doubled sectors would make it considerably longer.
Table~\ref{tab:structure-constants-potts} reports 45 ordered channels
involving the six diagonal primary fields. We define~\cite{McCabe_1998}
\begin{equation}
\begin{aligned}
    A_{\mathrm P}
    &=\frac{\sin^{3/2}(2\pi/5)}
    {2\sin(\pi/5)}
    \frac{\Gamma(3/5)^2}{\Gamma(2/5)\Gamma(4/5)} \\
    &\quad\times
    \left[\sin(2\pi/5)+\sin(\pi/5)\right]^{-1/2} \\
    &=0.546177618\ldots .
\end{aligned}
\label{eq:potts-structure-constant}
\end{equation}

The five nonvacuum diagonal primaries provide 15 identity channels for
the joint fit in Eq.~\eqref{eq:ising-collar-fit}. Including both members
of each conjugate pair, we obtain
\begin{equation}\label{eq:potts-collar-lengths}
\begin{aligned}
(\widehat l_1,\widehat l_2,\widehat l_3)
&=(0.02437255,\ 0.04272924,\\
&\qquad -0.01573058).
\end{aligned}
\end{equation}
We then keep these parameters fixed when extracting the remaining
structure constants.

For the 28 nonidentity channels with nonzero theoretical coefficients,
none of which enter the fit, the root-mean-square relative error
decreases from $3.836\%$ without collar corrections to $0.399\%$ with
the fitted corrections. The maximum relative error decreases from
$7.125\%$ to $0.930\%$. All 28 channels improve after the correction.
The channel involving three $\epsilon\bar\epsilon$ fields has a
vanishing theoretical coefficient and is evaluated separately. Its
corrected numerical magnitude is $1.86\times10^{-4}$.

Entries used to determine $l_i$ are marked by an asterisk. The vacuum
channel fixes the overall normalization and is excluded from the fit.

\section{The mechanism of CFT fixed point tensor: A category theory perspective}
\label{sec:cft-fixed-point-mechanism}

The examples above show that tensors initialized by the topological bootstrap
flow under $\cR$-Loop-TNR to long-lived approximate fixed points whose transfer
spectra and three-point coefficients reproduce those of the target RCFTs. This
agreement raises two distinct questions. Why is the bootstrap tensor close to
satisfying the local fixed-point equations, and why does the resulting RG flow
remain stable under repeated coarse-graining?

To address these questions, we construct ideal fixed-point tensors using
CFT three-point functions on a pair of pants. Their bond spaces contain
infinite towers of conformal descendants, reflecting the presence of gapless
modes.

The fixed-point tensors should satisfy both 2–2 and 2–0 move invariance.
In our construction, the CFT \(F\)-matrix associated with a change of conformal-block channel cancels the symmetry \(F\)-matrix associated with charge recoupling,
yielding 2–2 move invariance.
The \(\mathcal R\)-tensor contractions automatically generate a cloaking boundary condition,
yielding 2–0 move invariance in the shrinking-hole limit.

However, the hole-shrinking argument alone does not explain the numerical
stability observed in Sec.~\ref{sec:stable-symto}, including the robustness
against numerical roundoff errors. Using the killing-ring identity of UMTCs,
we further argue that the $\cR$ symmetry constraints project numerical
perturbations onto the vacuum conformal family, thereby excluding relevant
perturbations in the examples studied here.

Throughout this discussion, tensor-contraction identities are understood
up to an overall nonzero scalar independent of the external state labels.
Such factors are absorbed into the tensor normalization and divided out
in Loop-TNR.

\subsection{2--2 move invariance and the diagonal Lagrangian algebra}
\label{sec:two-two-move}

In a chiral rational CFT, sewing two three-point blocks produces a four-point
block, with the sewn leg carrying the intermediate chiral sector. The $s$- and
$t$-channel sewings give two bases of four-point blocks, related by the
$F$-matrix $F^{\mathrm{CFT}}$, which is part of the Moore--Seiberg data of the
RCFT~\cite{Moore:1988qv}.
The conformal families and their conformal blocks form a unitary modular
tensor category $\cR$. Its $F$-symbols give precisely the $F$-matrix
$F^{\cR}=F^{\mathrm{CFT}}$ relating the $s$- and $t$-channel bases above.

In Ref.~\cite{Cheng:2023kxh}, the fixed-point tensors are constructed by pairing conformal three-point functions with
boundary structure constants to ensure 2-2 move invariance. Sewing two such pairs and summing over the
intermediate labels produces a four-point function. Under a change of channel,
the boundary structure constants contribute an $F$-matrix that cancels its
conformal counterpart.

Here, we simplify this construction by directly pairing categorical fusion
channels with the corresponding conformal families. Before presenting this
construction, we first review the canonical algebra in a fusion category.
\begin{proposition}[Canonical (Longo--Rehren) Q-system~\cite{LongoRehren1995}]\label{prop:diagonal-algebra}
    Let $\cC$ be a unitary fusion category, and let $\cC^{\rev}$
    denote the category with reversed tensor product,
    $X \otimes^{\rev} Y := Y \otimes X$. Set
    \[
        A := \bigoplus_{i\in\Irr(\cC)} i^* \boxtimes i,
        \qquad D := \sqrt{\sum_{i\in\Irr(\cC)} d_i^2}.
    \]
    Let $q^A_{i^*\boxtimes i}\colon i^*\boxtimes i\to A$ be the
    orthonormal summand embeddings. For each $i,j,k\in\Irr(\cC)$, choose
    an orthonormal basis
    $\{q^{i\otimes j}_{k,\alpha}\}_{\alpha=1}^{N^{ij}_k}$ of
    $\cC(k,i\otimes j)$. Define $m\colon A\otimes A\to A$ by
\begin{align}
	&\left(q^A_{k^*\boxtimes k}\right)^\dagger\circ m\circ
    \left(q^A_{i^*\boxtimes i}\otimes q^A_{j^*\boxtimes j}\right) \nonumber\\
	&\quad := \frac{1}{D}\sqrt{\frac{d_id_j}{d_k}} \sum_{\alpha}\left(q^{i \otimes j}_{k,\alpha}\right)^*\boxtimes \left(q^{i\otimes j}_{k,\alpha}\right)^\dagger \nonumber\\
	&\quad = \frac{1}{D}\sqrt{\frac{d_id_j}{d_k}}\sum_{\alpha}
    \raisebox{-20.11682pt}{\includegraphics{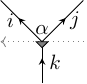}}
    \boxtimes
    \raisebox{-20.11682pt}{\includegraphics{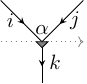}}.
\end{align}
    Here $*$ denotes the dual morphism, $\dagger$ denotes the adjoint,
    and the dotted arrows indicate the tensor-product directions. In particular,
    \[
    \left(q^{i \otimes j}_{k,\alpha}\right)^*
    :=
    \raisebox{-20.11682pt}{\includegraphics{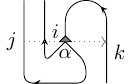}}.
    \]
    Define the unit $\iota\colon\one\boxtimes\one\to A$ by
    \[
        \iota := D\,q^A_{\one^*\boxtimes\one}.
    \]
    Then $(A,m,\iota)$ is a Q-system in $\cC^{\rev}\boxtimes\cC$,
    with normalization
    \[
        m\circ m^\dagger=\id_A,
        \qquad \iota^\dagger\circ\iota=D^2\,\id_{\one\boxtimes\one}.
    \]
\end{proposition}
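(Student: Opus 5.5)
The plan is to reduce every Q-system axiom to finite identities among the blocks $m^{k}_{ij}:=(q^A_{k^*\boxtimes k})^\dagger\circ m\circ(q^A_{i^*\boxtimes i}\otimes q^A_{j^*\boxtimes j})$, and then to check each identity with the completeness relation \eqref{eq:orthonormal-completeness}, unitarity of the $F$-symbols, and the rigidity identity $\sum_{i,j}N^{ij}_k d_id_j=d_kD^2$ from Sec.~\ref{sec:holographic-picture}. We use that the tensor product on $\cC^{\rev}\boxtimes\cC$ is componentwise, so $(i^*\boxtimes i)\otimes(j^*\boxtimes j)=(j^*\otimes i^*)\boxtimes(i\otimes j)$. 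This explains why the first tensor factor carries the dual vertex $(q^{i\otimes j}_{k,\alpha})^*\colon j^*\otimes i^*\to k^*$. The point of the construction is that the two factors carry the same fusion tree, one dualized and one daggered. Every axiom therefore pairs an $F$-move in $\cC^{\rev}$ with the conjugate $F$-move in $\cC$, and the two cancel.

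\textbf{Step 1: Unit.} Take $j=\one$. The space $\cC(i,i\otimes\one)$ is one-dimensional, and the blocks $m^{i}_{i\one}$ and $m^{i}_{\one i}$ carry the factor $\frac1D\sqrt{d_i/d_i}=\frac1D$. With $\iota=D\,q^A_{\one^*\boxtimes\one}$, the composite $m\circ(\iota\otimes\id_A)$ restricted to each summand gives $\id_{i^*\boxtimes i}$, up to the phase conventions of the unit trivalent vertex. Those vertices can be chosen as the unitors, so the phase is $1$. The identity $\iota^\dagger\iota=D^2$ is immediate from $q^\dagger q=\id$.

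\textbf{Step 2: Separability, $m\circ m^\dagger=\id_A$.} Restrict to the summand $k^*\boxtimes k$ and sum over the intermediate pairs $(i,j)$:
\[
\sum_{i,j}\frac{d_id_j}{D^2d_k}\sum_{\alpha,\beta}\bigl((q^{i\otimes j}_{k,\alpha})^*\circ((q^{i\otimes j}_{k,\beta})^*)^\dagger\bigr)\boxtimes\bigl((q^{i\otimes j}_{k,\alpha})^\dagger\circ q^{i\otimes j}_{k,\beta}\bigr).
\]
Orthonormality makes the second factor $\delta_{\alpha\beta}\id_k$. The first factor then becomes $((q^\dagger q)^*)=\id_{k^*}$. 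The remaining scalar is $\sum_{i,j}N^{ij}_kd_id_j/(D^2d_k)=1$.

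\textbf{Step 3: Associativity.} We compare $m\circ(m\otimes\id)$ and $m\circ(\id\otimes m)$ on the summand with inputs $i,j,l$ and output $k$. The left composite is
\[
\frac{1}{D^2}\sqrt{\frac{d_id_jd_l}{d_k}}\sum_{n,\alpha,\beta}\bigl(\text{tree}^{(ij)l}_{n\alpha\beta}\bigr)^*\boxtimes\bigl(\text{tree}^{(ij)l}_{n\alpha\beta}\bigr)^\dagger,
\]
because the intermediate $d_n$ factors cancel. The right composite has the same form with the $i(jl)$ trees. Expanding each basis by the unitary $F$-move gives $\sum_{n}\text{tree}^{(ij)l}_{n}\otimes\overline{\text{tree}^{(ij)l}_n}$. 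This expression is basis independent: the complex conjugation introduced by the dual and the dagger means that the tree and its paired copy transform under $F$ and $\bar F$. Unitarity of $F$ then equates the two sides.

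\textbf{Step 4: Frobenius/Q-system condition.} With unitarity and $mm^\dagger=\id$ established, the relation $(m\otimes\id)(\id\otimes m^\dagger)=m^\dagger m$ is the standard consequence for a special associative $*$-algebra in a unitary fusion category; see Longo--Rehren. Alternatively it can be checked directly with the same $F$/$\bar F$ cancellation.

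\textbf{Main obstacle.} The main difficulty is in Step 3, and secondarily in Step 1. One has to check that dualizing a fusion tree in $\cC$ and reading it in $\cC^{\rev}$ really produces the complex-conjugate $F$-symbol. In particular, the Frobenius--Schur phases $\varkappa$ and the $A$/$B$ moves from Sec.~\ref{sec:r-tensor-presentation} must cancel and not leave a residual phase. I would first try to handle this without bases, by writing $\sum_\alpha(q_\alpha)^*\boxtimes q_\alpha^\dagger$ as the canonical copairing element of $\cC(k,i\otimes j)\otimes\cC(k,i\otimes j)^*$. Since it is canonical, associativity follows from the canonical composition isomorphism $\bigoplus_n\cC(n,i\otimes j)\otimes\cC(k,n\otimes l)\cong\cC(k,i\otimes j\otimes l)$. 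Phases then never appear.
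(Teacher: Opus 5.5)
Your proposal is correct and follows essentially the same route as the paper: blockwise checks of the unit law, of $m\circ m^\dagger=\id_A$ via $\sum_{i,j}N^{ij}_k d_id_j=d_kD^2$, of associativity via the invariance of the paired sum $\sum(\text{tree})^*\boxtimes(\text{tree})^\dagger$ under the unitary $F$-move (the paper's well-definedness remark is your copairing observation), and of the Frobenius relations as the standard consequence of unitality, associativity and $m\circ m^\dagger=\id_A$ in a $C^*$ tensor category. Your worry about Frobenius--Schur phases is unnecessary: dualization is linear and contravariant, so it carries the $F$-coefficients through unchanged while the dagger conjugates them, and this is exactly the cancellation the paper uses.
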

\begin{proof}
    A unitary change of fusion basis acts on the dual and adjoint factors
    by mutually conjugate matrices, so the sum defining $m$ is unchanged and therefore $m$ is well-defined.

    We depict $m$ and $m^\dagger$ by black trivalent vertices, and
    $\iota$ and $\iota^\dagger$ by white univalent vertices. In the
    Q-system relations below, all strands carry $A$ and compositions
    are read from top to bottom.

    Associativity is the graphical identity
    \[
        \raisebox{-18pt}{\includegraphics{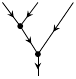}}
        =
        \raisebox{-18pt}{\includegraphics{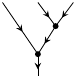}}.
    \]
    To verify it, fix the external labels $i,j,k,l$.
    Both compositions of $m$ have the same normalization factor
    $D^{-2}\sqrt{d_id_jd_k/d_l}$. Keeping this factor explicit,
    the paired recoupling gives
    \begin{align*}
        &\frac{1}{D^2}\sqrt{\frac{d_id_jd_k}{d_l}}
        \sum_{e,\alpha,\beta}
        \raisebox{-29pt}{\includegraphics{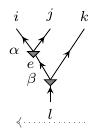}}
        \boxtimes
        \raisebox{-29pt}{\includegraphics{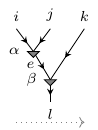}}
        \\
        &\quad =\frac{1}{D^2}\sqrt{\frac{d_id_jd_k}{d_l}}
        \sum_{f,\gamma,\delta}\sum_{g,\rho,\sigma}
        \\*
        &\qquad\times\Biggl[
        \sum_{e,\alpha,\beta}
        F^{ijk}_{l,(f,\gamma,\delta),(e,\alpha,\beta)}
        \overline{F^{ijk}_{l,(g,\rho,\sigma),(e,\alpha,\beta)}}
        \Biggr]
        \\*
        &\qquad\times
        \raisebox{-29pt}{\includegraphics{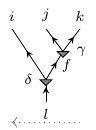}}
        \boxtimes
        \raisebox{-29pt}{\includegraphics{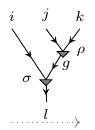}}
        \\
        &\quad =\frac{1}{D^2}\sqrt{\frac{d_id_jd_k}{d_l}}
        \sum_{f,\gamma,\delta}
        \raisebox{-29pt}{\includegraphics{figures/tikz/qsystem-f-right-dual-q.pdf}}
        \boxtimes
        \raisebox{-29pt}{\includegraphics{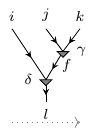}}.
    \end{align*}
    As in the multiplication diagram, reversed strand arrows and the
    dotted tensor-product direction indicate the dual factor. Duality
    preserves the recoupling coefficient, while taking the adjoint
    complex conjugates it. The last equality follows from unitarity
    \[
    \begin{aligned}
        &\sum_{e,\alpha,\beta}
        F^{ijk}_{l,(f,\gamma,\delta),(e,\alpha,\beta)}
        \overline{F^{ijk}_{l,(g,\rho,\sigma),(e,\alpha,\beta)}}\\
        &\qquad=\delta_{fg}\,\delta_{\gamma\rho}\,\delta_{\delta\sigma}.
    \end{aligned}
    \]
    Thus the intermediate object and both multiplicity labels agree
    in the two factors, proving associativity.

    When either input is the tensor unit, the coefficient in $m$ is $D^{-1}$.
    The factor $D$ in $\iota$ therefore gives both unit laws
    \[
        \raisebox{-14.42636pt}{\includegraphics{figures/tikz/frobenius-unit-insertion-left.pdf}}
        =
        \raisebox{-14.42636pt}{\includegraphics{figures/tikz/frobenius-unit-insertion-right.pdf}}
        =
        \raisebox{-14.42636pt}{\includegraphics{figures/tikz/frobenius-unit-reduced-strand.pdf}}.
    \]

    Orthogonality of the fusion bases gives, on the $k^*\boxtimes k$ summand,
    \[
        \left.m\circ m^\dagger\right|_{k^*\boxtimes k}
        =\frac{1}{D^2}\sum_{i,j}N^{ij}_k\frac{d_id_j}{d_k}
        \,\id_{k^*\boxtimes k}
        =\id_{k^*\boxtimes k}.
    \]
    The last equality uses
    $\sum_j N^{ij}_k d_j=d_i d_k$, which follows from rigidity and
    multiplicativity of quantum dimensions. The normalization of $\iota$
    follows from its definition, and $d_A=\sum_i d_i^2=D^2$.
    Graphically, these normalization conditions are
    \[
        \raisebox{-21.53955pt}{\includegraphics{figures/tikz/frobenius-bubble-removal.pdf}}
        =
        \raisebox{-21.53955pt}{\includegraphics{figures/tikz/frobenius-identity-strand.pdf}},
        \qquad
        \raisebox{-16pt}{\includegraphics{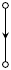}}
        =D^2.
    \]

    In a $C^*$ tensor category, unitality, associativity, and
    $m\circ m^\dagger=\id_A$ imply the Frobenius relations
    \[
    \begin{aligned}
        (m\otimes\id_A)\circ(\id_A\otimes m^\dagger)
        &=m^\dagger\circ m,\\
        (\id_A\otimes m)\circ(m^\dagger\otimes\id_A)
        &=m^\dagger\circ m.
    \end{aligned}
    \]
    The two Frobenius relations are represented by
    \[
        \raisebox{-21.53955pt}{\includegraphics{figures/tikz/frobenius-left-associated-tree.pdf}}
        =
        \raisebox{-21.53955pt}{\includegraphics{figures/tikz/frobenius-associativity-tree.pdf}}
        =
        \raisebox{-21.53955pt}{\includegraphics{figures/tikz/frobenius-right-associated-tree.pdf}}.
    \]
    Thus $(A,m,\iota)$ satisfies the Q-system relations in the normalization
    used in Sec.~\ref{sec:random-cluster-models}. In particular, $m^\dagger$
    is an $A$-bimodule section of $m$, so $A$ is separable.
\end{proof}

The underlying object can also be written as $A\simeq\otimes^\vee(\one)$,
where $\otimes^\vee\colon\cC\to\cC^{\rev}\boxtimes\cC$ is the right
adjoint of the linear functor
$\otimes\colon\cC^{\rev}\boxtimes\cC\to\cC$ sending
$X\boxtimes Y$ to $X\otimes Y$.

Below we suppress the dotted arrows and draw all string diagrams in the
original category.

\begin{proposition}[Condensability and the Lagrangian property]
\label{prop:diagonal-condensable}
    Let $\cC$ be a unitary braided fusion category with braiding $\beta$.
    Equip $\cC^{\rev}$ with the braiding
    \[
        \beta^-_{X,Y}\colon
        X\otimes^{\rev}Y=Y\otimes X
        \xrightarrow{\beta_{X,Y}^{-1}}
        X\otimes Y=Y\otimes^{\rev}X.
    \]
    Then the canonical Q-system $A$ of
    Proposition~\ref{prop:diagonal-algebra} is a condensable algebra,
    namely a connected commutative separable algebra, in
    $(\cC^{\rev},\beta^-)\boxtimes\cC$.
    Under the braided equivalence
    \[
        (\cC^{\rev},\beta^-)\cong\overline{\cC},
        \qquad \overline\beta_{X,Y}:=\beta_{Y,X}^{-1},
    \]
    it defines a condensable algebra in $\overline{\cC}\boxtimes\cC$.
    If $\cC$ is a UMTC, this algebra is Lagrangian.
\end{proposition}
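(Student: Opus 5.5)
Since Proposition~\ref{prop:diagonal-algebra} already shows that $(A,m,\iota)$ is a Q-system, and hence separable, three properties remain: connectedness, commutativity with respect to $\beta^-\boxtimes\beta$, and, for $\cC$ a UMTC, the Lagrangian condition. Connectedness is immediate. The summand $\one^*\boxtimes\one$ occurs in $A$ with multiplicity one, and $i^*\boxtimes i\cong\one\boxtimes\one$ forces $i=\one$, so $\dim\operatorname{Hom}(\one\boxtimes\one,A)=1$.

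The substantive step is commutativity, $m\circ c_{A,A}=m$, where $c$ is the braiding of $(\cC^{\rev},\beta^-)\boxtimes\cC$. I would check this blockwise, exactly as associativity was checked. Fix $i,j,k$ and restrict to the component $(i^*\boxtimes i)\otimes(j^*\boxtimes j)\to k^*\boxtimes k$.

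On the $\cC$ factor, $c$ acts by $\beta_{i,j}$. Precomposing $(q^{i\otimes j}_{k,\alpha})^\dagger$ with it can be expanded in the basis $(q^{j\otimes i}_{k,\alpha'})^\dagger$ through the $R$-matrix $R^{ij}_k$, which is unitary because $\cC$ is unitary braided. On the $\cC^{\rev}$ factor, the braiding $\beta^-$ is $\beta^{-1}$ acting on the dual labels. The dual morphism $(q^{i\otimes j}_{k,\alpha})^*$ therefore picks up the inverse $R$-matrix. Using naturality of the braiding with respect to duals, together with the fact that the reversed tensor product swaps the order of the legs, this coefficient should be the complex conjugate of the one in the $\cC$ factor.

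In the sum over $\alpha$, the two factors then recombine by unitarity, $\sum_\alpha R_{\alpha'\alpha}\overline{R_{\alpha''\alpha}}=\delta_{\alpha'\alpha''}$. The prefactor $D^{-1}\sqrt{d_id_j/d_k}$ is symmetric in $i$ and $j$, so the result is exactly the $(j,i)$-component of $m$. This is the same ``conjugate recoupling coefficients cancel'' argument as in the associativity step, with $F$ replaced by $R$.

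I expect this to be the main obstacle. One has to track carefully that the dual-and-reversed leg in $\cC^{\rev}$ really produces $\overline{R}$ and not $R$ or a Frobenius--Schur-twisted version. I would first verify it with the spherical/pivotal identity $(\beta_{X,Y})^*=\beta_{Y^*,X^*}$ under our duality conventions, then compare the resulting coefficient with the adjoint used in the $\cC$ factor. The braided equivalence $(\cC^{\rev},\beta^-)\cong\overline{\cC}$ is the identity functor on objects, with the monoidal structure given by the braiding. Transporting $A$ along it gives the second statement, since braided monoidal equivalences preserve condensable algebras.

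Finally, for the Lagrangian property I would use the criterion that a condensable algebra in a UMTC $\cD$ is Lagrangian iff $d_A^2=\dim\cD$. Here $d_A=\sum_i d_i^2=D^2$ and $\dim(\overline{\cC}\boxtimes\cC)=D^4$, so the criterion is satisfied. Alternatively, $\overline{\cC}\boxtimes\cC\cong\fZ(\cC)$ for modular $\cC$, and under this equivalence $A$ corresponds to the canonical Lagrangian algebra $\otimes^\vee(\one)=I(\one)$, consistent with the remark after Proposition~\ref{prop:diagonal-algebra}.
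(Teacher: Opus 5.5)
Your proposal is correct and follows the paper's proof essentially step for step. It has the same four ingredients: connectedness from the multiplicity-one unit summand, separability inherited from Proposition~\ref{prop:diagonal-algebra}, and commutativity by a blockwise computation in which the braiding matrix $R^{ij}_k$ on the adjoint factor cancels against $(R^{ij}_k)^{-1}=(R^{ij}_k)^\dagger$ on the dual factor. Finally, the Lagrangian property follows from $d_A=D_\cC^2=D_{\overline{\cC}\boxtimes\cC}$.

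Two index details need fixing when you write it out. On the block $(i^*\boxtimes i)\otimes(j^*\boxtimes j)$, it is the $(j,i)$-component $(q^{j\otimes i}_{k,\alpha})^\dagger$ that is precomposed with $\beta_{i,j}$. The duality identity you need is $(\beta_{X,Y})^*=\beta_{X^*,Y^*}$, with $(X\otimes Y)^*\cong Y^*\otimes X^*$. No Frobenius--Schur phases enter, because $m$ is defined directly through the duals $(q^{i\otimes j}_{k,\beta})^*$.
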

\begin{proof}
    The tensor unit $\one\boxtimes\one$ occurs in $A$ with multiplicity
    one, so $A$ is connected. Separability follows from
    Proposition~\ref{prop:diagonal-algebra}.
    For commutativity, fix $i,j,k\in\Irr(\cC)$ and define the
    braiding coefficients in the chosen orthonormal fusion bases by
    \[
        \raisebox{-29pt}{\includegraphics{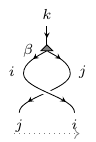}}
        =\sum_{\alpha}
        R^{ij}_{k,\alpha\beta}
        \raisebox{-22pt}{\includegraphics{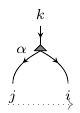}}.
    \]
    On these summands, the braiding of $A$ is
    $\beta_{i^*,j^*}^{-1}\boxtimes\beta_{i,j}$.
    The inverse braiding and braiding act on the two factors as
    \[
    \begin{aligned}
        &\left(q^{j\otimes i}_{k,\alpha}\right)^*
        \circ\beta_{i^*,j^*}^{-1}=\sum_{\gamma=1}^{N^{ij}_k}
        (R^{-1})^{ij}_{k,\gamma\alpha}
        \left(q^{i\otimes j}_{k,\gamma}\right)^*,\\[3pt]
        &\left(q^{j\otimes i}_{k,\alpha}\right)^\dagger
        \circ\beta_{i,j}=\sum_{\beta=1}^{N^{ij}_k}
        R^{ij}_{k,\alpha\beta}
        \left(q^{i\otimes j}_{k,\beta}\right)^\dagger.
    \end{aligned}
    \]
    The corresponding graphical calculation is
    \begin{align*}
        &\frac{1}{D}\sqrt{\frac{d_id_j}{d_k}}
        \sum_{\alpha=1}^{N^{ji}_k}
        \raisebox{-29pt}{\includegraphics{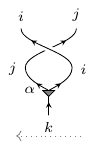}}
        \boxtimes
        \raisebox{-29pt}{\includegraphics{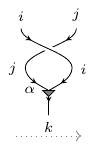}}
        \\
        &\quad=\frac{1}{D}\sqrt{\frac{d_id_j}{d_k}}
        \sum_{\gamma,\beta}
        \Biggl[\sum_{\alpha}
        (R^{-1})^{ij}_{k,\gamma\alpha}
        R^{ij}_{k,\alpha\beta}\Biggr]\\*
        &\qquad\times
        \raisebox{-22pt}{\includegraphics{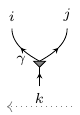}}
        \boxtimes
        \raisebox{-22pt}{\includegraphics{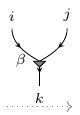}}
        \\
        &\quad=\frac{1}{D}\sqrt{\frac{d_id_j}{d_k}}
        \sum_{\beta=1}^{N^{ij}_k}
        \raisebox{-22pt}{\includegraphics{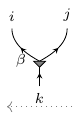}}
        \boxtimes
        \raisebox{-22pt}{\includegraphics{figures/tikz/qsystem-braid-adjoint-plain-beta.pdf}}.
    \end{align*}
    The first diagram uses the reversed strand arrows and tensor-product
    direction of the dual factor. The coefficient in brackets reduces to
    \[
        \sum_{\alpha=1}^{N^{ji}_k}
        (R^{-1})^{ij}_{k,\gamma\alpha}
        R^{ij}_{k,\alpha\beta}
        =\delta_{\gamma\beta}.
    \]
    The remaining paired sum is the corresponding component of $m$.
    Hence $m\circ\beta_{A,A}=m$, and $A$ is condensable.

    If $\cC$ is modular, $\overline{\cC}\boxtimes\cC$ is a UMTC, and
    \[
        d_A=\sum_{i\in\Irr(\cC)}d_i^2
        =D_{\cC}^2=D_{\overline{\cC}\boxtimes\cC}.
    \]
    Hence $A$ is Lagrangian.
\end{proof}

Applying Proposition~\ref{prop:diagonal-condensable} to the chiral category
$\cR$ gives the diagonal Lagrangian algebra
\begin{equation}\label{eq:diag-lagrangian-algebra}
    L_{\diagsector}
    =
    \bigoplus_{i\in\Irr(\cR)}
    i^*\boxtimes i
    \;\in\;
    \overline{\cR}\boxtimes\cR
    \cong
    \fZ(\cR).
\end{equation}
Its physical interpretation depends on whether the two factors are realized
topologically or conformally.

\paragraph{Topological interpretation---a wormhole.}

Condensing $L_{\diagsector}$ in the doubled topological order
$\overline{\cR}\boxtimes\cR\cong\fZ(\cR)$ produces a gapped boundary to
the vacuum~\cite{Kong:2013aya}. After unfolding, this boundary becomes the
identity domain wall of the topological order $\cR$, across which an anyon in $\cR$ passes without changing its
type~\cite{Huston_2023}. In the folded description, this transmission pairs
the sector $i$ in $\cR$ with $i^*$ in the oppositely oriented layer
$\overline{\cR}$. A circular boundary joining the two layers therefore forms
a \emph{wormhole} through the double.
\begin{center}
    \includegraphics[width=.62\columnwidth]{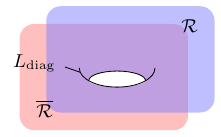}
\end{center}

In this geometric picture, shrinking the circular boundary to a point
replaces the hole by a bulk insertion labeled by
$\bigoplus_{i\in\Irr(\cR)}i^*\boxtimes i=L_{\diagsector}$.
This gives a dimensional-reduction interpretation of the diagonal algebra
within the framework of factorization homology~\cite{Ai_2017}.
The condensation also admits a microscopic realization in an extension of
the string-net model~\cite{christian2023latticemodelcondensationlevinwen}.

\paragraph{Conformal interpretation---the charge-conjugation full CFT.}

For a strongly rational unitary VOA $\scrV$ with $\Mod_\scrV\cong\cR$,
the diagonal algebra $L_{\diagsector}$ defines the charge-conjugation full
CFT. It pairs each chiral conformal family $\scrV_i$ with the antichiral
family $\overline{\scrV_{i^*}}$, giving the graded state space
\begin{equation}\label{eq:diagonal-full-field-space}
    \scrL_{\diagsector}
    =
    \bigoplus_{i\in\Irr(\cR)}
    \overline{\scrV_{i^*}}\otimes \scrV_i .
\end{equation}
The multiplication of $L_{\diagsector}$, realized by intertwining
operators in the vertex tensor category $\Mod_\scrV$, equips this space with
its genus-zero OPE and hence with the structure of a full field
algebra~\cite{Huang_2007,Kong_2007}. The associativity of $L_{\diagsector}$
ensures that the four-point functions obtained by sewing two pairs of pants
are invariant under changes of channel. The braided commutativity of
$L_{\diagsector}$ ensures the commutativity of the full-field OPE.

Under the standard rationality and nondegeneracy assumptions, the
Lagrangian condition of the algebra is equivalent to genus-one modular
invariance~\cite{HuangKong2010,KongRunkel2009}. The resulting torus
partition function is
\[
    Z_{\mathrm{cc}}(\tau,\bar\tau)
    =
    \sum_{i\in\Irr(\cR)}
    \chi_i(\tau)\,\overline{\chi_{i^*}(\tau)}.
\]
This is the charge-conjugation modular invariant.

\paragraph{Mixed interpretation---a gapless hole.}

In our construction, $i^*$ in each summand $i^*\boxtimes i$ of
$L_{\diagsector}$ is regarded as the antichiral conformal family
$\overline{\scrV_{i^*}}$, while $i$ remains an anyon in $\cR$.
We denote the resulting bond degrees of freedom by
\begin{equation}\label{eq:mixed-bond-object}
    \mathfrak L
    =
    \bigoplus_{i\in\Irr(\cR)}
    \overline{\scrV_{i^*}}\odot i.
\end{equation}
Here $\odot$ indicates that the conformal family supplies the multiplicity
space of the anyon $i$. The infinite tower of local states in each anyon
sector reflects the gapless modes in that sector. In
Sec.~\ref{sec:geometric-picture}, we interpret $\mathfrak L$ as a gapless
hole in a chiral topological order.

Let $\phi_{i,\Lambda}$ denote the field corresponding to
a state $\Lambda$ in the conformal family paired with sector $i$.
For a chosen pair-of-pants geometry, we define the rank-three tensor
$m_{\text{pants}}$ by specifying its components using the orthonormal
embeddings of Subsec.~\ref{sec:r-tensor-basis}:
\begin{multline}\label{eq:mixed-pants-map}
    \left(q^{\mathfrak{L}}_{k,\Theta}\right)^\dagger \circ m_{\text{pants}}\circ \left(q^{\mathfrak{L}}_{i,\Lambda} \otimes q^{\mathfrak{L}}_{j,\Xi}\right) \\
=\frac{1}{D}\sum_{\alpha}
\raisebox{-45.3187pt}{\includegraphics{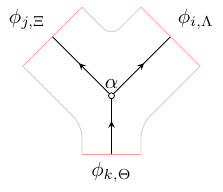}}
\xt
\raisebox{-28.65274pt}{\includegraphics{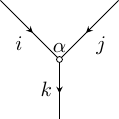}}
\end{multline}
Each summand pairs an antichiral three-point conformal block with a basis
morphism in the corresponding fusion space in $\cR$.

For a fixed $\alpha$, the first diagram
represents a three-point function of the
fields $\phi_{i,\Lambda}$, $\phi_{j,\Xi}$, and $\phi_{k,\Theta}$ on a flattened pair
of pants, with the three-point conformal block labeled by $\alpha$.
The gray lines mark the folds obtained by flattening, while the red lines
mark equal-time spatial slices on which the states corresponding to
$\phi_{i,\Lambda}$, $\phi_{j,\Xi}$, and $\phi_{k,\Theta}$ are specified.
Equivalently, the amplitude on this geometry can be interpreted as an
open-string amplitude with the trivial Cardy boundary condition on the
open-string boundaries.
The lines labeled by $i,j,k$ become Verlinde lines in the bulk of the doubled
CFT.
The second diagram represents the element labeled by $\alpha$ of an
isotropic basis in $\cR(i \otimes j, k)$.

For the square tensor network, we use the flat-pants geometry shown below,
whose central region is an isosceles right triangle. The open-string boundaries
carry the trivial Cardy condition and have infinitesimal length.
The conformal map and induced local coordinates for this geometry were computed in
Ref.~\cite{Cheng:2023kxh} and are reviewed in
Appendix~\ref{app:geometric-factors}.
Exactly sharp corners give rise to convergence issues and require
regularization~\cite{Brehm:2022,BrehmRunkel2024}. We discuss this further
in Sec.~\ref{sec:two-zero-move}.
\[
\raisebox{-36.98572pt}{\includegraphics{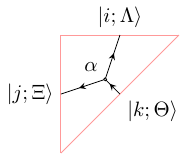}}.
\]

Sewing two three-point amplitudes, or equivalently contracting two rank-three
tensors along an intermediate leg, produces a four-point amplitude. The two
choices of diagonal in the square give two contraction orders. If the result
is independent of the choice of diagonal, the rank-three tensor satisfies
the 2--2 move invariance condition.
\[
\raisebox{-34.5434pt}{\includegraphics{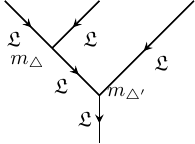}}
=
\raisebox{-34.5434pt}{\includegraphics{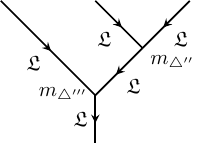}}
\]

Algebraically, this is the same dual-basis cancellation used to prove the
associativity of the Longo--Rehren algebra in
Proposition~\ref{prop:diagonal-algebra}. Recoupling transforms the conformal
block and its paired fusion-space basis morphism by mutually inverse
matrices, leaving their product invariant.

Ref.~\cite{Cheng:2023kxh} uses the same dual-basis cancellation. The difference
lies in the tensor realization. That construction passes to a tensor complex
whose plaquettes carry
module-category labels, which are summed over upon contraction. Here the
conformal blocks are paired directly with basis morphisms in the fusion spaces
of $\cR$, so no plaquette indices are introduced. The relation between the two descriptions
is explained in Sec.~\ref{sec:tensor-complex}.

Projecting each conformal family onto the primary state selected by the
bootstrap reduces $\mathfrak L$ to the finite bond object
$V=\bigoplus_{i\in\Irr(\cR)}i$ and $m_{\text{pants}}$ to the pants-gauge
topological-bootstrap tensor in Eq.~\eqref{eq:initial-tensor-bootstrap}.
For the selected primary fields $\phi_i$ of conformal weights $h_i$,
\[
    \raisebox{-30pt}{\includegraphics{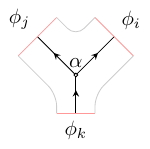}}
    =C_{ijk;\alpha}\,G_{\triangle}(h_i,h_j,h_k).
\]
Here $C_{ijk;\alpha}$ is the chiral three-point coefficient in the chosen
fusion basis, and $G_{\triangle}$ is the flat-pants geometric factor in
Eq.~\eqref{eq:flat-pants-gauge-app}. Thus
$\cN_{ijk;\alpha}=C_{ijk;\alpha}$, with the overall factor $1/D$
absorbed into the tensor normalization.
After this projection, the exact 2--2 relation becomes an approximate
fixed-point equation.

\subsection{2--0 move invariance and the cloaking boundary condition}
\label{sec:two-zero-move}

For the zero-collar pants geometry, higher-descendant amplitudes can decay
slowly, and sewing sums can require regularization. Brehm and Runkel
regularize the geometry by excising neighborhoods of the triangle vertices,
producing triangular pieces with clipped corners~\cite{Brehm:2022,BrehmRunkel2024}.
At a finite hole radius, high-energy contributions to amplitudes are
suppressed. In our geometric description, this suppression can be modeled
approximately by
attaching finite collars to the triangle edges, whose Euclidean evolution
damps higher-energy states.

However, when several such pieces are sewn together, the excised corners form holes
in the resulting world sheet. A hole with an elementary Cardy boundary
condition can contribute relevant bulk fields through its boundary-state
expansion. These contributions can destabilize the original CFT and drive
the flow away from its fixed point.

Brehm and Runkel introduced the cloaking boundary
condition~\cite{Brehm:2022,BrehmRunkel2024}, and Cheng et al.\ used a weighted
sum over boundary conditions in their fixed-point tensor
construction~\cite{Cheng:2023kxh}. For the full Cardy defect category, this
weighted sum is proportional to the vacuum Ishibashi state, whose subleading
contributions are irrelevant. Therefore, at IR fixed points, the hole shrinks
to a vacuum insertion, and the 2--0 move holds~\cite{Cheng:2023kxh}.

We now show how the categorical contraction implements this projection and yields the 2--0 move in the shrinking-hole limit.
Note that the 4--1 move discussed in Ref.~\cite{Cheng:2023kxh} is a specific realization of the 2--0 move,
related by a sequence of 2--2 moves.
The following argument does not require a specific choice of collar or
pants geometry.

Let $m_{\text{pants}}$ and $m_{\text{pants}'}$ denote the corresponding
collar-regularized pants maps, with the adjoint represented by reflection.
Contracting two rank-three $\cR$-tensors produces a tensor $T$
\[
    T:= m_{\text{pants}} \circ m_{\text{pants}'}^\dagger =
\raisebox{-43.0791pt}{\includegraphics{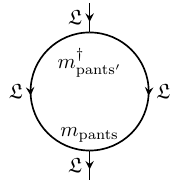}}.
\]
The tensor $T$ factorizes as
\begin{multline}
    \left(q^{\mathfrak{L}}_{k,\Xi}\right)^\dagger\circ T \circ q^{\mathfrak{L}}_{k,\Lambda} \\
	= \frac{1}{D^2} \sum_{ij}\sum_{\mu\nu} 
    \raisebox{-53.99657pt}{\includegraphics{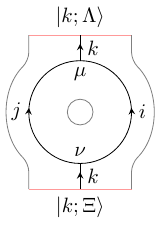}}
        \;\times\; 
    \raisebox{-37.74738pt}{\includegraphics{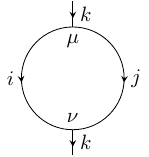}}   .
\end{multline}
In contrast to the purely algebraic derivation based on separability, the
gapless hole obstructs bubble cancellation in the CFT part.
Instead, we perform the bubble move of the symmetry part first
\[
\frac{1}{D^2}\sum_{ij,\mu} 
    \raisebox{-59.54506pt}{\includegraphics{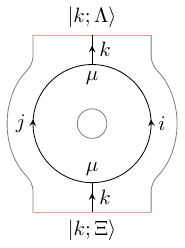}} 
    \times
    \sqrt{\frac{d_id_j}{d_k}} \id_k.
\]

We then use the corresponding graphical calculation from the
string-net model~\cite{Levin_2005}:
\begin{align*}
    &\sum_{ij,\mu}\sqrt{\frac{d_id_j}{d_k}}
    \raisebox{-42.8791pt}{\includegraphics{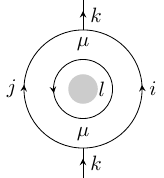}}\\
    &= 
     \sum_{im}
    \sum_{j,\mu\nu} \sqrt{\frac{d_id_m}{d_kd_l}}
    \raisebox{-42.8791pt}{\includegraphics{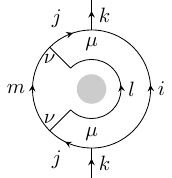}}\\
    &=
    \sum_{im}
    \sum_{n,\rho\sigma} \sqrt{\frac{d_id_m}{d_kd_l}}
    \raisebox{-42.8791pt}{\includegraphics{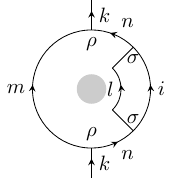}}\\
    &=
    \sum_{mn,\rho} \sqrt{\frac{d_m}{d_kd_n}}\sum_{i}N^{li}_{n}d_i
    \raisebox{-42.8791pt}{\includegraphics{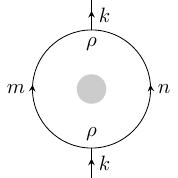}}\\
    &= d_l
    \sum_{mn,\rho} \sqrt{\frac{d_md_n}{d_k}}
    \raisebox{-42.8791pt}{\includegraphics{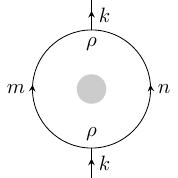}}.
\end{align*}
This identity implies that, under the sum, an $l$-labeled loop surrounding
the shaded region evaluates to a factor of $d_l$.

Let $O_l$ denote the insertion of an $l$-labeled TDL around the hole,
\[
O_l := \raisebox{-15.42638pt}{\includegraphics{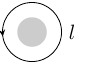}},
\]
where the shaded disk denotes the hole. Acting on the weighted sum above,
the projector $\frac{1}{D^2}\sum_l d_l O_l$ multiplies the amplitude by
$\frac{1}{D^2}\sum_l d_l^2=1$. It can therefore be inserted without changing
this amplitude. Bringing the $l$-loop to the hole boundary changes the
trivial Cardy condition to the $l$-Cardy condition.
\begin{align*}
    &\frac{1}{D^2}\sum_{ij,\mu} 
    \raisebox{-59.54506pt}{\includegraphics{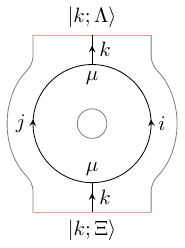}} 
    \times
    \sqrt{\frac{d_id_j}{d_k}} \id_k\\
    &=
    \frac{1}{D^2}\sum_{ij,\mu} \frac{1}{D^2}\sum_{l}d_l
    \raisebox{-59.54506pt}{\includegraphics{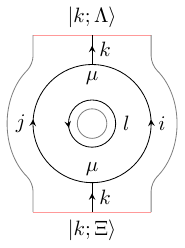}} 
    \times
    \sqrt{\frac{d_id_j}{d_k}} \id_k\\
    &=
    \frac{1}{D^2}\sum_{l}d_l
    \raisebox{-59.54506pt}{\includegraphics{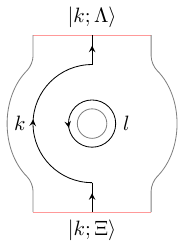}} 
    \times \id_k.
\end{align*}

The weighted projector thus cloaks the hole~\cite{Brehm:2022}.
After dividing out the overall factor $R^{-c/6}$, the resulting
vacuum-normalized boundary state is the cloaking boundary state
of Brehm and Runkel~\cite{Brehm:2022}:
\begin{equation}
    |\psi, R\rangle = R^{L_0 + \bar{L}_0}\,|0\rangle\!\rangle,
\end{equation}
where $|0\rangle\!\rangle$ is the vacuum Ishibashi state, i.e.\ a specific
superposition of Cardy states $|l\rangle$ weighted by the modular $S$-matrix.
$|0\rangle\!\rangle \propto \sum_l d_l|l\rangle$. Here $R$ is the radius of the hole.
The key property of the cloaking boundary condition is that it makes the hole
\emph{transparent} to all topological line defects of the CFT \cite{Brehm:2022}.

Throughout this argument, tensors are understood up to an overall nonzero
scalar. In this sense, the hole shrinks to a vacuum insertion as $R\to0$.
Only the vacuum $|0,0\rangle$, whose scaling dimension is $\Delta = 0$, survives.
For minimal models, the leading correction is the stress tensor $T\bar{T}$ at $\Delta = 4$,
which is highly irrelevant.
Hence the hole collapses and the 2--0 move holds.

\subsection{The killing ring}

Although the preceding hole-cancellation argument explains the 2--0 move
in the shrinking-hole limit, it does not explain the general numerical
stability of the Loop-TNR flow at
small cutoff dimensions, nor its apparent insensitivity to machine-precision
errors.

Before discussing this stability, we recall the killing-ring identity for
UMTCs. The basic identity appears in
Ref.~\cite[Corollary~3.1.11]{bakalov2001lectures}. The graphical formulation
and its extension to general objects used below are given in
Ref.~\cite[Lemma~4.5 and Corollary~4.6]{Hardiman_2019_graphical}.
\begin{proposition}[Killing ring]
    Let $\cC$ be a UMTC and let $X$ be an object in $\cC$.  The following
    identity holds:
    \[
    \frac{1}{D^2}\sum_{i\in\Irr\cC} d_i
    \raisebox{-29.90274pt}{\includegraphics{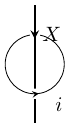}}
    =
    \sum_{\alpha}\quad
    \raisebox{-28.85274pt}{\includegraphics{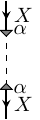}}.
    \]
    Here $\{q_\alpha: \one \rightarrow X\}$ is an orthonormal basis of $\cC(\one, X)$.
\end{proposition}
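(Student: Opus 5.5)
The plan is to reduce to simple $X$, then evaluate the ring on each simple strand with the $S$-matrix. Both sides are additive in $X$ in the following sense. Decompose $\id_X=\sum_{a,\beta}q^X_{a,\beta}\circ q^{X\dagger}_{a,\beta}$ using Eq.~\eqref{eq:orthonormal-completeness}. The left-hand side is the $X$-strand encircled by the weighted loop $\frac{1}{D^2}\sum_i d_i\,(\text{$i$-loop})$. Naturality of the braiding lets the encircling loop slide past the embeddings $q^X_{a,\beta}$, so the left-hand side equals
\[
\sum_{a,\beta} q^X_{a,\beta}\circ(\text{ring around } a)\circ q^{X\dagger}_{a,\beta}.
\]
The right-hand side is the projector onto the $\one$-isotypic part, $\sum_{\beta}q^X_{\one,\beta}\circ q^{X\dagger}_{\one,\beta}$. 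This uses the fact that an orthonormal basis of $\cC(\one,X)$ may be taken to be the embeddings $q^X_{\one,\beta}$, and that the sum is basis independent. It therefore suffices to show that the ring around a simple $a$ equals $\delta_{a,\one}\,\id_a$.

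For simple $a$, the $i$-loop encircling the $a$-strand is an endomorphism of the simple object $a$, so it equals $\frac{S_{ia}}{d_a}\,\id_a$. Here $S$ is the unnormalized Hopf-link matrix, in the convention used in Sec.~\ref{sec:modular-covariance} for $\fZ(\cR)$. Consequently the ring around $a$ equals
\[
\frac{1}{D^2 d_a}\Bigl(\sum_i d_i S_{ia}\Bigr)\id_a .
\]
Next I would use $d_i=S_{i\one}=S_{\one i}$ and the unitarity of $S/D$ in a UMTC, $\sum_i S_{\one i}\overline{S_{ia}}=D^2\delta_{\one a}$. Since $\overline{S_{ia}}=S_{ia^*}$, this gives $\sum_i d_i S_{ia}=D^2\delta_{a,\one}$, provided the loop orientation is adjusted by $i\leftrightarrow i^*$ where needed; the weights $d_i=d_{i^*}$ are symmetric under this relabeling. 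For $a=\one$ the factor becomes $D^2/(D^2\cdot 1)=1$, which is consistent.

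The main obstacle I expect is bookkeeping rather than substance, and it has two parts. First, I need to check that the orientation and crossing conventions of the drawn ring match the $S_{ia}$ or $S_{ia^*}$ convention, and that complex conjugation does not spoil the orthogonality sum. Second, I need to justify that the ring commutes with the embeddings $q^X_{a,\beta}$. This follows from the naturality of the braiding in a UMTC, since the loop's over- and under-crossings can be pulled through any morphism on the enclosed strand. The essential input is nondegeneracy of $S$; for a merely premodular $\cC$ the identity would fail on transparent objects, which is exactly why the UMTC hypothesis is needed.
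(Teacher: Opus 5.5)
Your proof is correct and is essentially the standard argument behind the references the paper cites in place of a proof: Bakalov--Kirillov for the simple case, and Hardiman for the extension to general $X$. Naturality of the double braiding reduces everything to simple summands, Schur's lemma turns the ring around a simple $a$ into the scalar $\frac{1}{D^2 d_a}\sum_i d_i S_{ia}$, and orthogonality of $S$ together with $S_{\one i}=d_i$ gives $\delta_{a,\one}$; orientation conventions do not matter because $d_i=d_{i^*}$ and $D^2\delta_{a,\one}$ is real.

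One small remark: you call nondegeneracy of $S$ the essential input, but you actually invoke unitarity of $S/D$. If you want to use invertibility alone, note that the weighted ring absorbs any concentric $j$-loop (since $\sum_i d_i N_{ij}^k=d_jd_k$). A nonzero scalar on $a$ would then force $S_{ja}=d_j d_a$ for all $j$, i.e.\ the $a$-column of $S$ would be $d_a$ times the $\one$-column, which is impossible unless $a=\one$.
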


The killing-ring property states that the operator
$\frac{1}{D^2}\sum_l d_l O_l$ acts as a projector onto the tensor-unit
sector. It annihilates all nontrivial simple-object sectors while leaving the
tensor-unit sector invariant.

The cloaking-defect construction of Brehm and Runkel, which corresponds to
the killing ring here, also applies to regions containing general field insertions~\cite{Brehm:2022,BrehmRunkel2024}.
Here we show that, for a sufficiently large fusion-category symmetry
$\cR$, the corresponding projection is automatically implemented by the
contraction of $\cR$-tensors.

Now suppose that the level truncation of $m_{\text{pants}}$, together with other
perturbations induced by numerical roundoff and algorithmic approximations, can
be effectively represented by a deformed pair of pants with additional field
insertions:
\[
\frac{1}{D}\sum_{\alpha}
\raisebox{-45.3187pt}{\includegraphics{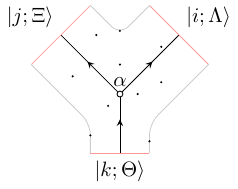}}
\xt
\raisebox{-28.65274pt}{\includegraphics{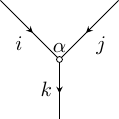}}.
\]

Here each point represents an unknown field insertion induced by algorithmic
and numerical perturbations.  The anyon lines remain topological away from
these insertion points, but they cannot, in general, be freely deformed across
them, since the inserted fields need not be transparent to the anyon lines.

Contracting two such pairs of pants produces a hole together with many unknown
field insertions:
\[
\frac{1}{D^2} \sum_{ij}\sum_{\mu\nu}
    \raisebox{-53.99657pt}{\includegraphics{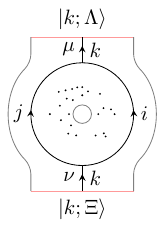}}
        \;\times\;
    \raisebox{-37.74738pt}{\includegraphics{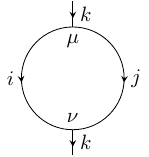}}.
\]
We assume that the hole and all additional insertions can be enclosed by a
common contour, with the surrounding fusion network lying in an annular
region containing no insertions.
This assumption is not restrictive, since insertions elsewhere can be
handled using nearby plaquettes of the $\cR$-tensor network.
The recouplings in the preceding subsection
can then be performed in this annular region without moving a TDL across
an insertion. This calculation shows that the weighted ring can be inserted
without changing the summed amplitude. Here the ring surrounds the hole
and the additional insertions together, giving
\[
\frac{1}{D^2} \sum_{ij}\sum_{\mu}\sum_{l} \frac{1}{D^2}d_l
    \raisebox{-53.99657pt}{\includegraphics{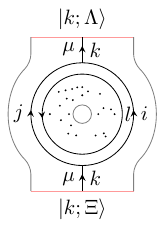}}
    \times
    \sqrt{\frac{d_id_j}{d_k}}\id_k.
\]

Under the folding/doubling interpretation of the open channel, the bulk field
insertions in the doubled CFT are represented by anyon worldlines between
the two chiral boundary layers. The ring links all these worldlines and projects
their total fusion channel onto the tensor unit. When the enclosed region
is expanded as an effective local insertion, this channel corresponds to
the vacuum conformal family in this interpretation.

The following discussion is restricted to unitary Virasoro minimal models or other VOAs without level-1 states.
More general CFTs lie outside the scope of this work. After normalizing the
vacuum coefficient to one, the effective insertion therefore has the form
\[
    |\psi\ra
=
\left(\one+a_1L_{-2}+a_2\overline{L}_{-2}
+a_3L_{-2}\overline{L}_{-2}
+\cdots\right)|0,0\ra.
\]

These coefficients are further constrained if the tensor has spatial $D_4$
symmetry, generated by $\pi/2$ rotations and reflections of its legs.
Under a $\pi/2$ rotation, an operator of conformal spin $s$ acquires a phase
$e^{\mathrm{i}\pi s/2}$. Exact spatial $D_4$ symmetry therefore excludes the
spin-$\pm2$ components $T$ and $\overline{T}$, giving $a_1=a_2=0$.
For an approximately $D_4$-symmetric Loop-TNR unit cell, these coefficients
can remain small but nonzero, with reflection symmetry and reality
conditions relating them. Schematically,
\[
|a_1|,|a_2| \approx 0,
\qquad
a_2\approx a_1^*.
\]
For a translation-invariant tensor network, the residual stress-tensor
terms describe a uniform deformation of the effective geometry. In
particular, changes in aspect ratio and shear modify the torus modular
parameter, so the tensor may represent a parallelogram rather than square
geometry. Such a deformation preserves the gaplessness of the CFT.

Within the Virasoro vacuum family, $D_4$ symmetry allows the scalar descendant
$L_{-2}\overline{L}_{-2}|0,0\ra$, corresponding to $T\overline{T}$, as well as
the spin-$\pm4$ descendants $L_{-4}|0,0\ra$, $\overline{L}_{-4}|0,0\ra$,
$L_{-2}^2|0,0\ra$, and $\overline{L}_{-2}^2|0,0\ra$.  All of these operators have
scaling dimension four and are therefore irrelevant near the CFT fixed point.
This provides a mechanism for the numerical stability of the gapless RG flow.

\subsection{Relation to the geometric picture}
\label{sec:geometric-picture}

The CFT-input tensor described above admits a geometric interpretation
through the symTO correspondence
\cite{Ji_2020,Kong_2020_phy,Kong_2020_math}. We interpret its bond degrees
of freedom as a chiral gapless hole and its three-point amplitudes as
pair-of-pants junctions between such holes.

Consider a chiral $\cR\cong\Mod_\scrV$ topological order with a hole whose
boundary supports the chiral gapless theory described by the VOA $\scrV$.
We denote this gapless hole by $\mathsf{hole}(\scrV)$ and ask what
$\cR$-anyon content it carries.
\[
    \raisebox{-21.33955pt}{\includegraphics{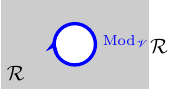}}.
\]

To determine this anyon content, place $\mathsf{hole}(\scrV)$ on a sphere
and insert a bulk anyon $i^*$.  The state space of the resulting
configuration is the antichiral conformal family
$\overline{\scrV_{i^*}}$, which therefore appears as the multiplicity
space of the sector $i$.  This gives
\[
    \mathsf{hole}(\scrV)
    \simeq
    \bigoplus_{i\in\Irr(\cR)}
    \overline{\scrV_{i^*}}\odot i.
\]
The infinite-dimensional multiplicity spaces encode the gapless boundary
modes. This is the geometric realization of the bond degrees of freedom
$\mathfrak L$ in Eq.~\eqref{eq:mixed-bond-object}.
We view $\mathsf{hole}(\scrV)$ as a vertex operator algebra internal to
the braided fusion category $\cR$. The mathematical details will be
presented in a companion paper.

Folding the chiral $\cR$-topological order produces the doubled order
$\fZ(\cR)\cong\overline{\cR}\boxtimes\cR$.  Under this folding,
$\mathsf{hole}(\scrV)$ becomes a gapless slit along the fold.  We
conjecture that condensing this slit opens the fold.

In spacetime, this
condensation proliferates the histories of $\mathsf{hole}(\scrV)$ into a
network whose junctions are given by the pair-of-pants operations, precisely
reproducing the fixed-point gapless tensor-network construction
above~\cite{kong2025highercondensationtheory}.
The two sides of the opened fold carry the chiral and antichiral theories
$\Mod_\scrV$ and $\overline{\Mod_\scrV}$.  Their stacking realizes the
doubled target CFT.

The numerical transfer spectra in
Figs.~\ref{fig:stable-ising-spin-spec},
\ref{fig:triising-spin-spec}, and~\ref{fig:w3-spin-spec} are consistent
with this interpretation.  They resolve the sectors $x\boxtimes y$ with
$\Delta_{x\boxtimes y}=h_x+h_y+n+\bar n$ and
$s_{x\boxtimes y}=h_x-h_y+n-\bar n$, where $n,\bar n$ are the descendant levels,
as summarized in
Eq.~\eqref{eq:doubled-sector-spectrum}, exactly as expected for a stack
of two oppositely oriented chiral boundaries.
\begin{equation}
    \raisebox{-22.22256pt}{\includegraphics{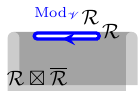}}\xrightarrow{\text{condensation}}
\raisebox{-23.07251pt}{\includegraphics{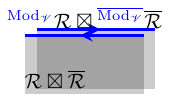}}
\end{equation}

The picture above suggests that the fixed-point $\cR$-tensor may define
a gapless analogue of a Lagrangian algebra.  More broadly, it points
toward a gapless anyon-condensation theory in which
$\mathsf{hole}(\scrV)$, its propagation, and its pair-of-pants junctions
constitute the elementary condensation data.  Such a theory should
provide a unified description of arbitrary admissible gapless boundaries
and domain walls, together with their condensation, fusion, and junctions.
We leave its systematic formulation to future work.

\subsection{Tensor-complex realization}
\label{sec:tensor-complex}

The tensor-complex construction of Ref.~\cite{Cheng:2023kxh} gives a
(gravitational) anomaly-free realization of a CFT whose torus partition
function is modular invariant rather than modular covariant.  It can be
obtained from an $\cR$-tensor network by applying a generalized reference
frame reviewed in Sec.~\ref{appsec:changing-symmetry}, namely a map between
the corresponding theory spaces induced by a monoidal functor
\[
F:\cR\longrightarrow\Fun(\cM,\cM),
\]
where $\cM$ is a semisimple $\cR$-module category.  Its target is a
Morita-trivial matrix category whose simple objects are labeled by
ordered pairs $(m,n)\in\Irr(\cM)\times\Irr(\cM)$.

When $\cM$ is semisimple, the multi-fusion category $\Fun(\cM,\cM)$ is
equivalent to a 2-matrix algebra, the categorification of a matrix algebra.
{\setlength{\arraycolsep}{2pt}
\[
\Fun(\cM, \cM)\cong \Mat_n(\Hilb) := \begin{pmatrix}
    \Hilb & \Hilb & \cdots & \Hilb \\
    \Hilb & \Hilb & \cdots & \Hilb \\
    \vdots & \vdots & \ddots & \vdots\\
    \Hilb & \Hilb & \cdots & \Hilb \\
\end{pmatrix}_{n\xt n},
\]
}
where $n$ is the number of simple objects in $\cM$.
An object $V$ in $\Mat_n(\Hilb)$ is a matrix of vector spaces:
\[
V := \begin{pmatrix}
    V_{11} & V_{12} & \cdots & V_{1n} \\
    V_{21} & V_{22} & \cdots & V_{2n} \\
    \vdots & \vdots & \ddots & \vdots\\
    V_{n1} & V_{n2} & \cdots & V_{nn} \\
\end{pmatrix}
\]
and a morphism $f:V\rightarrow W$ is a matrix of linear maps:
\[
f :=
\begin{pmatrix}
    f_{11} & f_{12} & \cdots & f_{1n} \\
    f_{21} & f_{22} & \cdots & f_{2n} \\
    \vdots & \vdots & \ddots & \vdots\\
    f_{n1} & f_{n2} & \cdots & f_{nn} \\
\end{pmatrix}.
\]
The composition of $f:V\rightarrow W$ and $g:W\rightarrow X$ is defined elementwise:
\[
(g\circ f)_{ij} := g_{ij}\circ f_{ij}.
\]
The dagger is defined similarly by $(f^\dagger)_{ij} := (f_{ij})^\dagger$.
The tensor product $V\otimes W$ is the matrix-multiplication version of the tensor product:
\[
(V\otimes W)_{ij} := \bigoplus_{k = 1}^{n}V_{ik}\otimes W_{kj}.
\]
The same rule applies to morphisms:
\[
(f\otimes g)_{ij} := \bigoplus_{k = 1}^{n}f_{ik}\otimes g_{kj}.
\]
The tensor product is natural:
\[
(f\otimes g) \circ (f'\otimes g') = (f\circ f')\otimes (g\circ g').
\]
Therefore, a tensor complex is nothing but a morphism in $\Mat_n(\Hilb)$.

We now give an example to see the connection more explicitly. Consider a morphism
\[
f : V \otimes W \rightarrow X
\]
in $\Mat_n(\Hilb)$. By definition, $f$ is a collection of linear maps
\[
f_{ikj} : V_{ik} \otimes W_{kj} \rightarrow X_{ij},\quad i,j,k = 1,\cdots, n.
\]
For fixed $i,j,k$, the three regions between the legs carry the
plaquette labels, while each leg carries the Hilbert space specified by
its two neighboring labels,
\[
f_{ikj}=\raisebox{-28pt}{\includegraphics{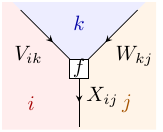}}.
\]
For example, composing $f$ with $f^\dagger$ contracts the $V$ and $W$
legs and encloses a plaquette and become a matrix in $\Mat_n(\Hilb)$.
By definition of the composition rule of $\Mat_n(\Hilb)$,
for fixed $i$ and $j$, the composition gives the $(i,j)$ entry of the resulting matrix by
\[
\begin{aligned}
    (f\circ f^\dagger)_{ij}
    &=\sum_{k=1}^{n}
    \raisebox{-40pt}{\includegraphics{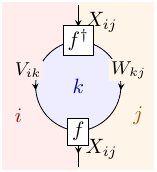}}\\
    &=\sum_{k=1}^{n}f_{ikj}\circ f_{ikj}^\dagger.
\end{aligned}
\]
For each fixed $k$, the tensor indices on $V_{ik}$ and $W_{kj}$
are contracted in the usual way. The additional sum over $k$ is the
sum over the internal plaquette label. In a larger network, each internal
plaquette carries one such label, shared by all tensors along its boundary
and summed independently of the other plaquette labels.

The module action
\[
    \triangleright: \cR \xt \cM \rightarrow \cM
\]
comes with a module associator, which for $V,W\in\cR$ and $q\in\cM$ is
\[
\alpha^{\cM}_{V,W,q}:(V\otimes W)\triangleright q
\longrightarrow V\triangleright(W\triangleright q).
\]
In orthonormal embedding-tree bases, the associator defines the module
$F$-symbols by
\[
\vcenter{\hbox{\includegraphics[width=.98\columnwidth]{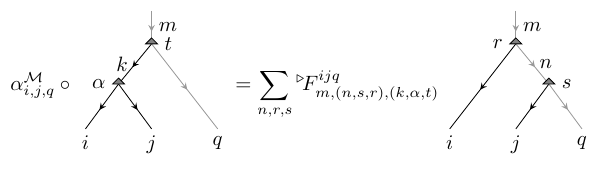}}}.
\]
Here $i,j,k\in\Irr(\cR)$ and $m,n,q\in\Irr(\cM)$ are simple objects.
The index $\alpha$ labels the embedding $q^{i\otimes j}_{k,\alpha}$, while
$r,s,t$ label the module-action embeddings
$q^{i\triangleright n}_{m,r}:m\to i\triangleright n$,
$q^{j\triangleright q}_{n,s}:n\to j\triangleright q$, and
$q^{k\triangleright q}_{m,t}:m\to k\triangleright q$, respectively.
Gray lines denote module objects, and black lines denote objects of $\cR$.
The index order follows the ordinary $F$-symbol convention of
Sec.~\ref{sec:r-tensor-presentation}.

We now describe the image of an $\cR$-tensor under the monoidal functor $F$.
For $X\in\cR$, the matrix entries of $F(X)$ are
\[
F(X)_{mn} = \cM(m, X \triangleright n),
\qquad m,n\in\Irr(\cM).
\]
A morphism $f:X\rightarrow Y$ in $\cR$ is mapped to $F(f)$, whose entries
are the linear maps
\begin{gather*}
    F(f)_{mn}: F(X)_{mn} \rightarrow F(Y)_{mn},\\
    g_{mn} \mapsto F(f)_{mn}(g_{mn}) := (f \triangleright \id_n) \circ g_{mn}.
\end{gather*}

For a rank-3 $\cR$-tensor $f:V\otimes W\rightarrow X$, the induced map
from $F(V)\otimes F(W)$ to $F(X)$ has components
\[
\bigoplus_{n\in\Irr(\cM)}F(V)_{mn}\otimes F(W)_{nq}
\longrightarrow F(X)_{mq}.
\]
For $g\in F(V)_{mn}$ and $h\in F(W)_{nq}$, it is defined by
\[
\begin{aligned}
g\otimes h\longmapsto{}&
(f\triangleright\id_q)\circ\left(\alpha^{\cM}_{V,W,q}\right)^{-1}\\
&\circ(\id_V\triangleright h)\circ g.
\end{aligned}
\]
The inverse module associator combines the successive actions of $V$ and
$W$ before applying $f\triangleright\id_q$.

Applying the map to a rank-3 $\cR$-tensor yields the tensor-complex construction of Ref.~\cite{Gu_2009_PEPS} for PEPS representations of string-net ground states:
\[
\raisebox{-28.65274pt}{\includegraphics{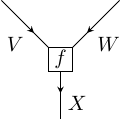}}\mapsto
\bigoplus_{m,n,q\in\Irr(\cM)}
\raisebox{-38.54242pt}{\includegraphics{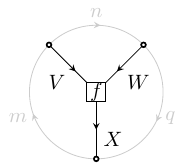}}.
\]
The gray lines carry the simple objects $m,n,q\in\cM$, which label the three
plaquettes adjacent to the tensor. For each fixed $m,n,q$, the diagram
represents a linear map from $F(V)_{mn}\otimes F(W)_{nq}$ to $F(X)_{mq}$.
The white vertices indicate the module-action spaces on these three legs,
whose state indices remain open.
The direct sum collects these maps for all choices of the plaquette labels.

To express this map in components, we need to choose bases in the three Hilbert spaces.
The simple-object decomposition of $V$ gives
\begin{align*}
F(V)_{mn}&=\cM(m,V\triangleright n)\\
&\cong\bigoplus_i V_i\otimes\cM(m,i\triangleright n),
\end{align*}
where $V_i=\cR(i,V)$ is the multiplicity space of $i$ in $V$.
The same decomposition applies to $F(W)_{nq}$ and $F(X)_{mq}$.
A basis state on the first leg is therefore labeled by a triple
$(i,\beta,r)$, where $\beta$ labels a basis of $V_i$ and $r$ a basis of
$\cM(m,i\triangleright n)$ with embedding $q^{i\triangleright n}_{m,r}$.
Using the orthonormal embeddings of
Sec.~\ref{sec:r-tensor-presentation}, this state is represented by
\[
\raisebox{-38pt}{\includegraphics{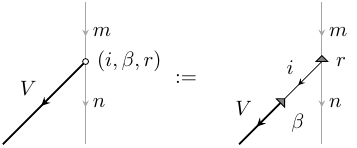}}.
\]
The gray module line runs from $m$ to $n$ through the triangle labeled $r$,
which represents $q^{i\triangleright n}_{m,r}$.
The branch labeled $i$ is mapped into $V$ by the embedding
$q^V_{i,\beta}$, represented by the triangle labeled $\beta$.
Equivalently,
$(i,\beta,r)=(q^V_{i,\beta}\triangleright\id_n)\circ q^{i\triangleright n}_{m,r}$.
The other two legs similarly have basis states
$(j,\gamma,s)\in F(W)_{nq}$ and $(k,\delta,t)\in F(X)_{mq}$.

Taking the dagger reverses the map, giving the dual basis element
$(i,\beta,r)^\dagger:V\triangleright n\to m$,
\[
\raisebox{-30pt}{\includegraphics{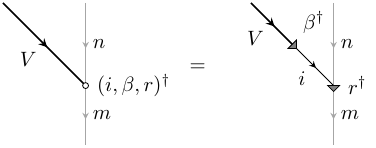}}.
\]
The triangles labeled $\beta^\dagger$ and $r^\dagger$ represent the
adjoints of the two orthonormal embeddings. Their composition is
\[
(i,\beta,r)^\dagger
=(q^{i\triangleright n}_{m,r})^\dagger\circ
\bigl((q^V_{i,\beta})^\dagger\triangleright\id_n\bigr).
\]

Expanding the three basis states separates the embeddings on the
$\cR$-lines from the module-action vertices.
\[
\raisebox{-43.5pt}{\includegraphics{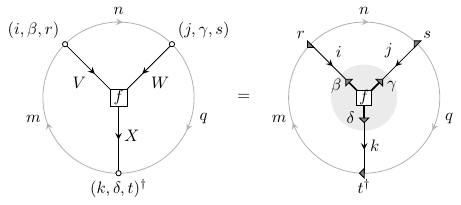}}.
\]
The shaded part is the composition of $f$ with the input embeddings and
the adjoint output embedding. It defines a morphism $i\otimes j\to k$,
whose expansion in the trivalent basis is
\begin{multline}
\left(q^X_{k,\delta}\right)^\dagger\circ f\circ
\left(q^V_{i,\beta}\otimes q^W_{j,\gamma}\right)\\
=\sum_\alpha \left(f^{k\delta}_{i\beta,j\gamma}\right)_\alpha
\left(q^{i\otimes j}_{k,\alpha}\right)^\dagger.
\label{eq:tensor-complex-multiplicity-coefficients}
\end{multline}
Here $\alpha$ labels the fusion channel $k\to i\otimes j$.

Substituting the expansion of $f$ into the diagram converts
successive actions into the fused channel. Since the module $F$-matrix is
unitary, the corresponding coefficient is the complex conjugate of the
module $F$-symbol defined above.
The triangles labeled
$r$, $s$, and $t^\dagger$ represent $q^{i\triangleright n}_{m,r}$,
$q^{j\triangleright q}_{n,s}$, and $(q^{k\triangleright q}_{m,t})^\dagger$, respectively.
The diagram denotes the coefficient in the $t$ output basis, with that
vertex taken in the dual basis:
\[
\raisebox{-46pt}{\includegraphics{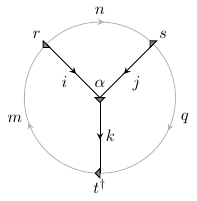}}
=\left({}^{\triangleright}\!F^{ijq}_{m,(n,s,r),(k,\alpha,t)}\right)^*.
\]
Evaluating the map in the input states $(i,\beta,r)$, $(j,\gamma,s)$
and the dual output state $(k,\delta,t)^\dagger$ gives
\begin{multline}
\mathcal T^{mnq}_{(i,\beta,r)(j,\gamma,s)(k,\delta,t)}\\*
=\sum_\alpha
\left(f^{k\delta}_{i\beta,j\gamma}\right)_\alpha
\left({}^{\triangleright}\!F^{ijq}_{m,(n,s,r),(k,\alpha,t)}\right)^*.
\label{eq:tensor-complex-factorization}
\end{multline}
The first factor contains the coefficients of the original $\cR$-tensor,
while the second depends only on the module category and the chosen bases.

For $f=m_{\text{pants}}$ in Eq.~\eqref{eq:mixed-pants-map}, the first
factor contains the conformal three-point blocks, with
$\beta,\gamma,\delta$ corresponding to the conformal-state labels
$\Lambda,\Xi,\Theta$. The plaquette labels $m,n,q$ specify conformal
boundary conditions, and $r,s,t$ distinguish boundary-changing primary
fields in the sectors $i,j,k$. In matching bases and normalizations, the
complex conjugates of the module $F$-symbols are the boundary OPE structure
constants, with the outgoing field paired with its
dual~\cite{Runkel1999BoundaryStructureConstants,FuchsRunkelSchweigert2002}.
Thus
Eq.~\eqref{eq:tensor-complex-factorization} is the decomposition of a
boundary CFT three-point function into conformal blocks and boundary
structure constants.

This separation gives the strange-correlator description of the CFT
tensor complex~\cite{Bal_2018,chen2024cftdtqftd1holographictensor,Cheng:2023kxh,bao2025tensorcomplextopologicalbootstrap}.
Contracting the numerical
coefficients gives the conformal part, while the complex conjugates of the
module $F$-symbols form the topological part. Summing their shared fusion labels, together with
the face weights of the tensor-complex construction, expresses the
partition function as their overlap. The fixed-point tensor of
Ref.~\cite{Cheng:2023kxh} thus combines the conformal three-point blocks
with the boundary structure constants supplied by the module category.

This tensor-complex description is obtained by choosing a generalized
reference frame. Allowing arbitrary deformations in $\Mat_n(\Hilb)$
relaxes the symmetry constraints imposed by $\cR$. In our approach, we
take tensors and their approximations directly in the morphism spaces
of $\cR$, preserving its symmetry constraints throughout the RG flow.

\section{Conclusion and outlook}
\label{sec:conclusion-and-outlook}

We formulated $\cR$-TNR, which preserves the categorical constraints encoded
by a unitary fusion category (UFC) $\cR$ throughout the RG flow. From the
fixed-point tensors, we extracted conformal spectra and structure constants
resolved by the bulk topological sectors. For the Ising, tricritical Ising,
and three-state Potts CFTs, the topological bootstrap ansatz produces stable
flows to gapless fixed points over a broad range of initial parameters.
These examples show how categorical constraints can support gapless phases
without fine-tuning and improve the accuracy of the extracted conformal data.
Of course, our framework and numerical method can also be used to study ordinary fine-tuned
critical points. Perturbing the corresponding fixed-point tensors and
following their RG flows would allow us to investigate how relevant
perturbations gap out the critical theory and which phases they lead to.
Relating these flows to the perturbing operators of the CFT
\cite{Ueda_2023_finite_bond_dimension} could improve
our understanding of phase transitions in two-dimensional systems.
Studying such examples may also help develop a mathematical theory of
these transitions.

The stable topological bootstrap examples studied here are restricted to minimal models and extensions of their VOAs. Although the framework described in Sec.~\ref{appsec:ffa-extensions-and-emergent-symto} is more general, extending the topological bootstrap beyond minimal models requires addressing the relevant perturbations carried by dimension-one currents in affine Lie algebra CFTs. These currents lie in the chiral-factorized vacuum sector, so the direct construction does not satisfy our stability criterion. A suitable orbifold can remove the weight-one fields from the chiral algebra, as discussed by Brehm and Runkel~\cite[Sec.~2.2, footnote~2]{BrehmRunkel2024}. We can then use the resulting charge category $\cR_{\mathrm{orb}}$ to impose the symmetry constraints. In the symTO description, $\fZ(\cR_{\mathrm{orb}})$ admits an anyon condensation to the symTO before orbifolding. Implementing this construction in $\cR$-Loop-TNR would provide a route to extending the topological bootstrap beyond minimal models.

The boundary-CFT construction of fixed-point tensors in
Ref.~\cite{Cheng:2023kxh} motivates studying a boundary version of
$\cR$-Loop-TNR and its fixed-point tensors. In general, boundary tensor
legs can be objects of an $\cR$-module category $\cM$, with boundary tensors
given by morphisms in $\cM$ and coupled to bulk tensors through the module
action.
Following the coupled bulk and boundary RG flows could clarify the
boundary-CFT structure of fixed-point tensors and help compare their
analytic and numerical constructions. It could also reveal how boundary
behavior changes across bulk phase transitions.
These boundary questions also motivate a more general algebraic
description of the gapless slits used in the fixed-point construction.
In Sec.~\ref{sec:geometric-picture}, we view a chiral gapless hole as a
vertex operator algebra internal to a braided fusion category and obtain
a gapless slit by folding. Extending this description to gapless slits
that need not arise from folding a chiral hole could lead to a
complete theory of gapless anyon condensation that unifies the description
of gapless boundaries and domain walls.

Further generalizations of $\cR$-TNR include relaxing unitarity, changing the base field, allowing infinitely many simple objects, and extending the framework to higher categories.
For nonunitary theories such as the Yang--Lee minimal model, a
central problem is to define a suitable norm for tensor approximation errors
and formulate a corresponding loop-optimization procedure.
Another direction is to allow
unitary tensor categories with infinitely many simple objects to develop
a symTO description of continuous symmetries.
A further related question is whether an AdS/CFT description can be recovered from
full field algebra data alone, and which classes of full field algebras admit such a description~\cite{ChenHungJiangLao_2025_Liouville,BaoHungJiangLiu_2024_SymQRG}.
Exploring this possibility, together with the questions above, motivates
the development of systematic algorithms and software for computing full field algebra data.

Finally, extending $\cR$-TNR to unitary fusion 2-categories could provide a framework for studying higher-dimensional symTOs and gapless systems. In particular, such an extension could be applied to three-dimensional classical systems and their CFT limits, whose bulk symTO would be four-dimensional TQFTs, described by the unitary braided fusion 2-category $\fZ(\cR)$.

\begin{acknowledgments}
We thank Dong-Yu Bao, Gong Cheng, Ling-Yan Hung, Kaixin Ji, Liang Kong,
Chenjie Wang, Ying-Jie Wei, Holiverse Yang, Shunyao Yu, Gen Yue, and
Jing-Ren Zhou for helpful discussions. 
C.Q.M. thanks Jacob Bridgeman, Boris De Vos, Lukas Devos, Jutho Haegeman,
and Zheng-Yuan Yue for guidance on using the TensorKit ecosystem.
C.Q.M. also thanks Adwait Naravane, Atsushi Ueda, and Victor Vanthilt for
their collaboration on TNRKit and discussions on Loop-TNR, and Hang Chen,
Haoxin Wang, and Zheng-Yuan Yue for discussions on programming concepts.
This work is supported by funding from Hong Kong’s Research Grants Council (RFS2324-4S02, CRF C7015-24G, CRS HKU701/24).
\end{acknowledgments}

\appendix

\section{Review of key results of the symTO}\label{app:symto-review}
The symTO describes a system with symmetry as a boundary of a topological
order in one higher dimension. This appendix reviews the correspondence
and the results used in this paper.

The description of symmetry by a topological theory in one higher dimension
has been studied through categorical descriptions of defects and
boundary--bulk relations~%
\cite{davydov2011fieldtheoriesdefectscentre,Kitaev_2012,kong2015boundarybulkrelationtopologicalorders},
noninvertible anomalies and categorical symmetries~%
\cite{Ji_2019,Ji_2020,Kong_2020_phy,Chatterjee_2023},
and symmetry topological field theories~%
\cite{Fiorenza_2015,Gaiotto_2015,Freed_2022,freed2024topologicalsymmetryquantumfield,Kaidi:2022cpf}.

For finite symmetries, the symmetry/topological-order
correspondence relates symmetry constraints and the allowed gapped and
gapless phases to bulk topological data and boundary conditions~%
\cite{Kong_2020_math,Kong_2022_liquids,chatterjee2024emergentgeneralizedsymmetrymaximal,KONG2018140,Kong_2020,Kong_2021,Chen_2020,Lu_2023}.

From a microscopic perspective, a quantum system with symmetry can be
characterized by a symmetric local algebra.
The superselection structure of such algebras has been studied using patch
operators, quantum currents, and DHR bimodules~%
\cite{Chatterjee_2023,Lan_2024_Quantum_Current,jones2024dhrbimodulesquasilocalalgebras}.
For fusion spin chains with charge category $\cR$, the DHR category is
braided equivalent to the Drinfeld center $\fZ(\cR)$~%
\cite{jones2024dhrbimodulesquasilocalalgebras,jones2025quantumcellularautomatacategorical},
reflecting the boundary--bulk relation of the symTO.

Tensor network methods have been extended to study anyon chains protected
by fusion category symmetries~%
\cite{Feiguin_2007,Pfeifer_2012,Aasen_2016,Lootens_2023,Ning:2023qbv,Sahinoglu_2021,Bultinck_2017,You_2014,Bal_2018,Cheng:2023kxh}.
The tensor network software package \texttt{TensorKit.jl} implements tensors
as general morphisms in unitary fusion categories and unitary braided fusion
categories, together with operations on these morphisms~%
\cite{Devos_2025_TensorKit,jutho_2025_15243493}.
Within this framework, a numerical study of a lattice model built from the
Haagerup fusion category resolved the conformal spectrum into bulk anyon
sectors and provided evidence for a CFT with $c=2$~\cite{Vanhove_2022}.

\subsection{Map between theories}

The symmetry/topological-order correspondence relates quantum systems whose
charges form $\cR$ to boundary theories of the topological order
$\fZ(\cR)$ \cite{Ji_2020,Kong_2020_phy,Chatterjee_2023},
\begin{equation}\label{eq:1d2d-correspondence}
\begin{gathered}
\text{(1+1)-dimensional quantum systems}\\
\text{with charge category $\cR$}\\
\longleftrightarrow\\
\text{(1+1)-dimensional boundary theories}\\
\text{on the boundary $(\cR,\fgt)$}\\
\text{of the (2+1)-dimensional topological order $\fZ(\cR)$}.
\end{gathered}
\end{equation}
Here
\[
\fgt\colon\fZ(\cR)\longrightarrow\cR,
\qquad
(X,\gamma_{X,-})\longmapsto X,
\]
is the canonical central functor that sends a center object to its underlying
object in $\cR$ by forgetting the half-braiding.

\subsubsection{Symmetric operators and boundary operators}

Let $\cR$ be the fusion category of symmetry charges. For example, an
ordinary finite-group symmetry $G$ has charge category $\cR=\Rep(G)$
\cite{Ji_2020,Kong_2020_phy}.
Consider a one-dimensional lattice model with sites indexed by
$i\in\mathbb Z$, whose local degrees of freedom are specified by
objects $D_i\in\cR$. The symmetric operators supported on a finite
lattice interval $I=[m,n]$ form
$\End_{\cR}(D_m\otimes\cdots\otimes D_n)$. Their inductive limit defines
the quasi-local algebra $\mathfrak R_{\cR}$
\cite{jones2024dhrbimodulesquasilocalalgebras}.
A center object $(X,\gamma_{X,-})\in\fZ(\cR)$ determines a quantum-current
MPO with virtual bond $X$ and local tensor $\gamma_{X,D_i}$. It commutes
with symmetric operators supported strictly inside its interval,
satisfying the transparency condition of patch operators
\cite{Lan_2024_Quantum_Current,Chatterjee_2023}.
For a uniform chain with a strong tensor-generating object $D$, the DHR
category is braided equivalent to the center
\cite{jones2024dhrbimodulesquasilocalalgebras,jones2025quantumcellularautomatacategorical},
\[
\mathsf{DHR}(\mathfrak R_{\cR},D)\cong\fZ(\cR).
\]

On the canonical boundary of $\fZ(\cR)$, whose excitations form $\cR$,
the local operators that do not create bulk excitations form a boundary algebra
$\mathfrak B_{\cR}$ \cite{Ji_2020,jones2025localtopologicalorderboundary}.
The symTO correspondence gives an isomorphism between the two local operator algebras
\cite{Kong_2020_phy},
\[
\mathfrak R_{\cR}\simeq\mathfrak B_{\cR}.
\]
Under this isomorphism, charges in $\cR$ become
boundary excitations, symmetric local operators become boundary
interactions, and quantum-current sectors become bulk anyons in
$\fZ(\cR)$ \cite{Kitaev_2012,Lan_2024_Quantum_Current}.

Fixing $\cR$ while varying the microscopic degrees of freedom and symmetric
couplings defines the $\cR$-symmetric theory space. The algebra isomorphism
maps symmetry-preserving deformations of these systems to local boundary
deformations \cite{Lan:2023oyu,jones2024dhrbimodulesquasilocalalgebras}.

\subsubsection{Changing symmetry}\label{appsec:changing-symmetry}

A monoidal functor between charge categories
\begin{equation}\label{eq:symmetry-change-monoidal-functor}
    \Omega:\cR\longrightarrow\cR'
\end{equation}
implements a change of symmetry from $\cR$ to $\cR'$ by sending the site
objects $D_i$ and local operators of an $\cR$-symmetric system to their
images in $\cR'$. Preservation of composition and tensor
products makes these maps compatible with the local operator algebras.
For example, restriction
$\operatorname{Res}^G_H:\Rep(G)\to\Rep(H)$ for a subgroup $H\subset G$
regards a $G$-symmetric system as an $H$-symmetric one and allows
perturbations that preserve $H$ but break $G$.

The bulk description of $\Omega$ uses its relative center
\cite{Kong_2017,Lan_2024},
\[
    \fZ(\Omega)=
    \Fun_{\cR\mid\cR'}({}_{\Omega}\cR',{}_{\Omega}\cR').
\]
Here ${}_{\Omega}\cR'$ is the $\cR$--$\cR'$ bimodule category whose
left action is induced by $\Omega$ and whose right action is tensor
product in $\cR'$. The relative center describes a gapped wall between
$\fZ(\cR)$ and $\fZ(\cR')$. Fusing this wall with the canonical
$\cR$ boundary gives the canonical $\cR'$ boundary, with the induced
map on boundary excitations equal to $\Omega$. Following
Ref.~\cite{Lan_2024}, we depict this fusion as
\[
    \includegraphics{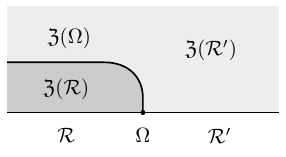}
\]

In general, an indecomposable gapped wall between $\fZ(\cR)$
and $\fZ(\cR')$ is specified by condensable algebras
$A\in\fZ(\cR)$ and $B\in\fZ(\cR')$, together with a braided
equivalence $\varphi$ between $\fZ(\cR)_A^{\localmodules}$ and
$\fZ(\cR')_B^{\localmodules}$
\cite{Huston_2023,Xu_Yang_2025}.
The wall can be pulled open into two condensation interfaces joined
by the invertible interface determined by this equivalence, as in
Fig.~\ref{fig:symmetry-wall-factorization}.
This is the wall decomposition for Witt-equivalent bulks, applied
here to Drinfeld centers.

\begin{figure}[!htb]
    \centering
    \includegraphics[width=0.98\linewidth]{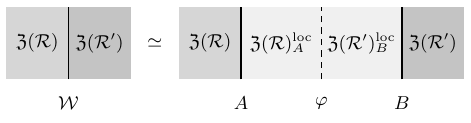}
    \caption{Spatial decomposition of an indecomposable gapped wall
    $\mathcal W$ between $\fZ(\cR)$ and $\fZ(\cR')$, following
    Ref.~\cite{Xu_Yang_2025}.
    The solid interfaces are the condensation walls specified by $A$
    and $B$. The dashed interface is the invertible wall determined
    by the braided equivalence $\varphi$ between the two condensed
    bulks. The symbol $\simeq$ denotes the equivalence of the two wall configurations.}
    \label{fig:symmetry-wall-factorization}
\end{figure}

We first consider the change of symmetry induced by an anyon
condensation wall. Let $A\in\fZ(\cR)$ be a condensable algebra and
let $U:\fZ(\cR)\to\cR$ be the forgetful functor. Let $\cR_A$
be the category of right $U(A)$-modules in $\cR$. The half-braiding
of $A$ equips $\cR_A$ with a tensor product over $U(A)$, and the
free-module functor is monoidal,
\[
    \cR\longrightarrow\cR_A,\qquad X\longmapsto X\otimes U(A).
\]
The category $\cR_A$ can be multifusion, and its center is the
condensed bulk \cite{Kong:2013aya,Davydov_Muger_Nikshych_Ostrik_2013},
\[
    \fZ(\cR_A)\cong\fZ(\cR)_A^{\localmodules}.
\]
We call this change of symmetry \emph{symmetry relaxation}.
For Lagrangian $A$, the condensed bulk is trivial.

We now consider special changes of symmetry for which $A$ is Lagrangian
and the resulting symTO is trivial. In this case, the condensation
interface becomes a boundary to the vacuum. Together with the dynamical
boundary, it forms a sandwich enclosing the original topological bulk,
as in the symTFT construction~\cite{freed2024topologicalsymmetryquantumfield}.
The other boundary serves as a \emph{reference frame} against which
different phases on the dynamical boundary are measured.

A choice of fiber functor $\fiberfunctor:\cR\to\Hilb$ equips
$\cR$ with a \emph{local structure}
\cite{Lan_2024,Kong_2020_phy}.
In the sandwich picture, this choice specifies an ordinary
reference frame. The target $\Hilb$ imposes no symmetry constraints.
The allowed local deformations therefore extend from symmetric operators
to the full local operator algebra, including perturbations that break
the original symmetry
\cite{Kong_2020_phy,Zhang_Cordova_2023,Meng:2024nxx}.
Such a functor exists precisely for anomaly-free fusion-category
symmetries \cite{thorngren2019fusioncategorysymmetryi,Thorngren:2021yso}.
For $\cR=\Rep(G)$, the standard fiber functor forgets the $G$-action.

More generally, a finite semisimple $\cR$-module category $\cM$ defines a
generalized reference frame through its action
$F:\cR\to\Fun(\cM,\cM)$, even when no ordinary fiber functor exists~%
\cite{Kong_2017,Lan_2024}.
In the lattice construction of Lootens et al., different choices of
module category give dual realizations of the same local symmetric
operator algebra~\cite{Lootens_2023,Lootens_2024_Topological_Sectors}.
Taking $\cM=\Hilb$ recovers an ordinary reference frame.

\subsection{Map between gapped phases}\label{appsec:gapped-phase-map}

The symTO duality in Eq.~\eqref{eq:1d2d-correspondence} maps
symmetry-preserving deformations of the lattice system to local deformations
of the corresponding boundary theory.  It therefore induces a correspondence
between gapped phases in the two descriptions.

\subsubsection{The duality map}

A gapped phase in the $\cR$-symmetric theory space is specified by an
indecomposable separable algebra $B\in\cR$. In the boundary description, $B$ condenses
boundary defects, leaving the defect category ${}_B\cR_B$
\cite{Lan_2024_Quantum_Current,Meng:2024nxx,Inamura_2022,Kong:2013aya}.
Equivalently, its full center is a Lagrangian algebra in the bulk,
\[
    L:=Z_{\mathrm{full}}(B)\in\fZ(\cR),
    \qquad \fZ(\cR)_L\cong{}_B\cR_B.
\]
The algebras $B$ and $L$ therefore specify the same gapped phase in the
boundary and bulk descriptions \cite{Kong:2013aya,Ai_2017,Lan_2024_Quantum_Current}.
The bulk $\fZ(\cR)$ is fixed by the symmetry, whereas $L$ specifies the
choice of phase.

Geometrically, shrinking a hole carrying this gapped boundary maps its
vacuum sector to $L$. The multiplication $m:L\otimes L\to L$ is the
pair-of-pants operation merging two holes
\cite{Ai_2017,kong2025highercondensationtheory,gaiotto2025condensationshighercategories}.
Figure~\ref{fig:gapped-duality-map} shows the shrinking operation when the
condensed bulk need not be trivial.
\begin{figure}[!ht]
    \centering
    \[
    \raisebox{-28.45274pt}{\includegraphics{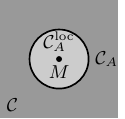}}
    \xrightarrow{\text{shrinking}}
    \raisebox{-28.45274pt}{\includegraphics{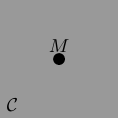}}
    \]
    \caption{Let $\cC$ be a bulk UMTC and $A\in\cC$ a condensable algebra.
    Shrinking a hole containing the $A$-condensed phase maps
    $M\in\cC_A^{\localmodules}$ to its underlying object in $\cC$.
    In particular, the vacuum maps to $A\in\cC$.}
    \label{fig:gapped-duality-map}
\end{figure}

\subsubsection{The effect of choosing a reference frame}

A reference frame fixes the local structure and the local order
parameters used to interpret a gapped phase as SPT or SSB
\cite{Kong_2020_phy,Meng:2024nxx,Lan_2024}.
Consider $G=\Z_2\times\Z_2$. Let $\beta\in Z^2(G,U(1))$ represent the
nontrivial class in $H^2(G,U(1))\simeq\Z_2$, and let $W$ carry the
two-dimensional irreducible $\beta$-projective representation of $G$,
with $\rho(g)\rho(h)=\beta(g,h)\rho(gh)$. The algebras
$B_0=\Fun(G)$ and $B_\beta=\End(W)$ in $\Rep(G)$ have the same underlying
object $\bigoplus_{\chi\in\widehat G}\chi$ but different multiplications.
Here $G$ acts on $\End(W)$ by conjugation,
$g\cdot T=\rho(g)T\rho(g)^{-1}$
\cite{yue2025condensationcompletiondefects21d}.
Despite their different multiplications, the two algebras are related by
a monoidal autoequivalence $\pi$ of $\Rep(G)$, with
$\pi(B_0)\simeq B_\beta$ as algebras. However, this equivalence
does not preserve the reference frame specified by the standard fiber
functor $\fgt:\Rep(G)\to\Hilb$ because $\fgt\circ\pi$ is not monoidally
isomorphic to $\fgt$. Under $\fgt$, the two algebras become
$\Fun(G)$ and $\Mat_2(\C)$, with four and one simple blocks, respectively.
These blocks label the vacuum sectors of the respective phases.
The four symmetry-related vacua of $B_0$ give the fully symmetry-breaking phase.
The algebra $B_\beta$ gives a unique vacuum on a closed chain and a
nontrivial projective edge representation, and thus describes the
nontrivial SPT phase
\cite{Meng:2024nxx,Gu_2009}.

The ground-state space on a circle depends on both the phase and the
reference frame. Let
$L,N\in\fZ(\cR)$ be the Lagrangian algebras specifying the phase and the
reference frame, and set
$L_x=\fZ(\cR)(x^*,L)$ and $N_x=\fZ(\cR)(x^*,N)$.
The annulus with these two boundary conditions has state space
\begin{equation}\label{eq:gapped-annulus-state-space}
    \cV_{S^1}(L;N)
    \simeq\bigoplus_{x\in\Irr\fZ(\cR)}L_x\otimes N_{x^*}
    \simeq\fZ(\cR)(\one,L\otimes N).
\end{equation}
The dual $x^*$ accounts for the opposite boundary orientations. The
ground-state degeneracy is
$\operatorname{GSD}_{S^1}(L;N)=\sum_x\dim L_x\,\dim N_{x^*}$
\cite{Ai_2017,Lan_2015,BhardwajBoundarySymTFT2025}.

\subsection{Map between gapless phases}\label{appsec:gapless-phase-map}
Under the symmetry--topological-order correspondence in
Eq.~\eqref{eq:1d2d-correspondence}, bosonic gapless phases in the
$\cR$-symmetric theory space correspond to gapless boundary phases of the
non-chiral topological order $\fZ(\cR)$ on its canonical boundary
$(\cR,\fgt)$~\cite{Ji_2020}.

\subsubsection{Enriched category description}
\label{appsec:enriched-category-description}

We denote a bulk topological order by $(\cC,c)$, where the UMTC $\cC$
describes the bulk anyons and $c$ is the chiral central charge.
A CFT-like gapless boundary is described by a pair
$(\scrV,{}^{\Mod_\scrV}\!\cS)$
\cite{KONG2018140,Kong_2020,Kong_2021}.
The first component, $\scrV$, is a unitary rational full field algebra
specifying the local quantum symmetry. The second component,
${}^{\Mod_\scrV}\!\cS$, is a fusion category enriched in the category
$\Mod_\scrV$ of $\scrV$-modules. The objects of $\cS$ describe topological
boundary defects, and the enriched morphism objects describe local operators
between them.
A VOA
$\scrV$ is included as the full field algebra $\scrV\otimes\C$ with
trivial antichiral factor. The bulk anyon category is the enriched center
of the boundary defect category~\cite{Kong_2017,Kong_2024_enriched},
\[
    \fZ({}^{\Mod_\scrV}\!\cS)\cong\cC.
\]

On a gapped boundary, the local operators between two defects form a
finite-dimensional vector space. On a gapless boundary they instead
form a $\scrV$-module, with composition given by the OPE. These modules
are the morphism objects of the enriched category. For the canonical
chiral Ising boundary, the operators that change the trivial defect into
a defect labeled by an anyon $i$ form the conformal family
$\scrV^{\mathrm I}_i$ of the Ising VOA $\scrV^{\mathrm I}$
\cite{Kitaev_2006,Kong_2020}.
For trivial local quantum symmetry $\scrV=\C$, the enrichment reduces to
ordinary vector spaces and $\cS$ is the gapped boundary defect category.

A systematic way to construct an enriched fusion category is to start
with a fusion category $\cS$ and a braided functor
$\cD\to\fZ(\cS)$. The canonical construction then equips $\cS$ with
a $\cD$-enrichment, yielding ${}^{\cD}\!\cS$~%
\cite{morrison2017monoidalcategoriesenrichedbraided}.
This construction covers the enriched fusion categories describing the
rational gapless boundaries considered here~\cite{Kong_2020,Kong_2021}.
With $\cD=\Mod_\scrV$, start from a gapped wall $\cS$ between
$\Mod_\scrV$ and $\cC$. The canonical construction yields the
enriched gapless boundary ${}^{\Mod_\scrV}\!\cS$ with bulk $\cC$.
Geometrically, this construction can be pictured as rotating the
$\Mod_\scrV$ layer into an internal space, a picture known as
\emph{topological Wick rotation}~%
\cite{KONG2018140,Kong_2020,Kong_2021}.
The enriched center of ${}^{\Mod_\scrV}\!\cS$ recovers $\cC$,
so the construction satisfies the boundary--bulk relation.
Geometrically,
\[
\raisebox{-18.2194pt}{\includegraphics{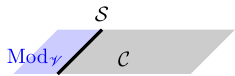}}
\leadsto
\raisebox{-21.73956pt}{\includegraphics{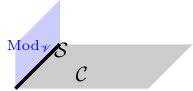}}.
\]

The classification of gapped walls reviewed in
Sec.~\ref{appsec:changing-symmetry} expresses any indecomposable wall as
a condensation interface, an invertible interface implementing a braided
equivalence, and a reversed condensation interface~%
\cite{Huston_2023,Xu_Yang_2025}.
Applied to the wall from $\Mod_\scrV$ to $\cC$, this decomposition gives
a corresponding description of gapless boundaries through topological
Wick rotation. We focus on the first step, condensation from
$\Mod_\scrV$, and call the resulting gapless boundary \emph{injective}.
The remaining two operations, the braided equivalence and the
reversed condensation, can both be regarded as gapped processes
within the gapped bulk. The study of gapless boundaries can
therefore be reduced to the injective case, with these two
bulk operations treated separately.

\begin{definition}[Injective gapless boundary]
\label{def:injective-gapless-boundary}
A boundary $(\scrV,{}^{\Mod_\scrV}\!\cS)$ of $\cC$ is \emph{injective} if
\[
    \cC\cong(\Mod_\scrV)_{\scrA}^{\localmodules},
    \qquad \cS\cong(\Mod_\scrV)_{\scrA}
\]
for a condensable algebra $\scrA\in\Mod_\scrV$.
\end{definition}
By the extension theorem of Huang--Kirillov--Lepowsky
\cite{Huang_2015} and the full-field algebra formulation of
Kong~\cite{Kong_2007}, an extension
$\scrV\hookrightarrow\scrA$ is described by a commutative
algebra object $\scrA$ in $\Mod_\scrV$.
For the condensable algebras considered here, this gives an
anyon condensation in $\Mod_\scrV$~\cite{Kong:2013aya}.
Modules of the extended algebra are local $\scrA$-modules,
so
\[
    \Mod_\scrA\cong(\Mod_\scrV)_{\scrA}^{\localmodules}.
\]
Thus the representation category of the extended full field
algebra describes the condensed bulk, while all $\scrA$-modules form the
condensation wall $(\Mod_\scrV)_{\scrA}$ whose enrichment describes the
gapless boundary. Consequently, to study an injective gapless boundary,
it suffices to study a pair of full field algebras related by an extension
$\scrV\hookrightarrow\scrA$.

A physical boundary may admit several choices of $\scrV$ with
the same trivial-sector algebra $\scrA$ and bulk $\Mod_{\scrA}$.
Extending this local quantum symmetry can reduce the allowed TDLs, which
must commute with it \cite{Kong_2020,Kong_2021}.

\subsubsection{Partition vector and modular covariance}
\label{appsec:partition-vector-and-modular-covariance}

For an injective gapless boundary, a bulk anyon $x\in\Irr(\cC)$ determines
an $\scrA$-module $\scrV_x$. This module is the Hilbert space of
low-energy edge modes on a disk with $x$ inserted in the bulk
\cite{Ji_2020,Chen_2020,chatterjee2024emergentgeneralizedsymmetrymaximal}.
In particular, the trivial sector is $\scrV_\one=\scrA$.

The trace over $\scrV_x$ gives the sector partition function
\[
    Z_x(\tau,\bar\tau)
    =\Tr_{\scrV_x}\!\left(q^{L_0-c_L/24}
      \bar q^{\overline L_0-c_R/24}\right),
\]
where $q=\ee^{2\pi\ii\tau}$, $\bar q=\ee^{-2\pi\ii\bar\tau}$,
and $c_L,c_R$ are the left- and right-moving central charges of the boundary CFT.
The components of the partition vector $Z=(Z_x)_{x\in\Irr(\cC)}$ obey
the modular covariance relations \cite{Levin_2013,Ji_2019}
\begin{align*}
    Z_x(\tau+1,\bar\tau+1)&=\ee^{-2\pi\ii(c_L-c_R)/24}\theta_x Z_x(\tau,\bar\tau),\\
    Z_x(-1/\tau,-1/\bar\tau)&=\frac1{D_\cC}\sum_y S_{xy}Z_y(\tau,\bar\tau).
\end{align*}
Here $S$ is the Hopf-link matrix and $D_\cC^2=\sum_a d_a^2$.
For the doubled theories considered below, $c_L=c_R=c$, so the phase arising
from the difference between the left- and right-moving central charges is unity.

Table~\ref{tab:gapless-boundary-examples} collects the extensions of full field algebras and
partition vectors for chiral, Ising, and Potts conformal field theories with different symmetries ~\cite{Kong_2007,Kong:2013aya}.

\begin{table*}[t]
\caption{Examples of full field algebra extensions and their enriched
boundary categories, bulks, and partition functions.
Category and sector labels follow Appendix~\ref{app:categorical-data}.
Here $\scrV^{\mathrm I}$ and $\scrV^{\mathrm P}$ are the Ising and
$\cM(6,5)$ Virasoro VOAs. The irreducible module $\scrV_h$ has conformal weight
$h$, with $\scrV_0=\scrV$.
The corresponding Virasoro characters are $\chi_h^{\mathrm I}$ and
$\chi_h^{\mathrm P}$, with argument $\tau$ omitted.
Overlines denote antichiral counterparts.
In the last two rows, $\scrV$ is a chiral VOA and $\scrL$ is a
Lagrangian full field algebra in $\fZ(\Mod_\scrV)$.}
\label{tab:gapless-boundary-examples}
\centering
\small
\setlength{\abovedisplayskip}{4pt}
\setlength{\belowdisplayskip}{4pt}
\begin{minipage}{\textwidth}
The extension algebras are
\begin{align}
\scrW^{\mathrm P}&=\scrV^{\mathrm P}_0\oplus\scrV^{\mathrm P}_3,
\qquad
\scrA=\scrV^{\mathrm I}_0\otimes\overline{\scrV^{\mathrm I}_0}
\oplus\scrV^{\mathrm I}_{1/2}\otimes\overline{\scrV^{\mathrm I}_{1/2}},\notag\\
\scrL_{\diagsector}&=\bigoplus_{h\in\{0,1/2,1/16\}}
\scrV^{\mathrm I}_h\otimes\overline{\scrV^{\mathrm I}_h},\qquad
H_{6,5}=\{0,1/40,1/15,1/8,2/5,21/40,2/3,7/5,13/8,3\},\notag\\
\scrL_{\mathrm{tI}}&=\bigoplus_{h\in H_{6,5}}
\scrV^{\mathrm P}_h\otimes\overline{\scrV^{\mathrm P}_h},
\label{eq:tetracritical-ising-full-field-algebra}\\
\scrA_{\mathrm P}&=\scrW^{\mathrm P}\otimes\overline{\scrW^{\mathrm P}}
\oplus(\scrV^{\mathrm P}_{2/5}\oplus\scrV^{\mathrm P}_{7/5})
\otimes(\overline{\scrV^{\mathrm P}_{2/5}}\oplus\overline{\scrV^{\mathrm P}_{7/5}}),\notag\\
\scrL_{\mathrm{3Potts}}&=\scrA_{\mathrm P}
\oplus2\scrV^{\mathrm P}_{2/3}\otimes\overline{\scrV^{\mathrm P}_{2/3}}
\oplus2\scrV^{\mathrm P}_{1/15}\otimes\overline{\scrV^{\mathrm P}_{1/15}},
\label{eq:potts-full-field-algebra}
\end{align}
The $\cW_3$ characters are
\begin{align*}
\widehat\chi_0&=\chi_0^{\mathrm P}+\chi_3^{\mathrm P},\qquad
\widehat\chi_{2/5}=\chi_{2/5}^{\mathrm P}+\chi_{7/5}^{\mathrm P},\qquad
\widehat\chi_{2/3}=\chi_{2/3}^{\mathrm P},\qquad
\widehat\chi_{1/15}=\chi_{1/15}^{\mathrm P}.
\end{align*}
\end{minipage}
\smallskip
\setlength{\tabcolsep}{2pt}
\renewcommand{\arraystretch}{1.1}
\setlength{\aboverulesep}{2pt}
\setlength{\belowrulesep}{2pt}
\begin{tabular}{@{}>{\raggedright\arraybackslash}p{0.13\textwidth}p{0.18\textwidth}p{0.20\textwidth}p{0.10\textwidth}p{0.335\textwidth}@{}}
\toprule
Theory & Extension & Enriched category & Bulk & Partition functions \\
\midrule
Chiral Ising
& $\scrV^{\mathrm I}\hookrightarrow\scrV^{\mathrm I}$
& ${}^{\Ising}\!\Ising$ & $\Ising$
& $(Z_\one,Z_\psi,Z_\sigma)=(\chi_0^{\mathrm I},\chi_{1/2}^{\mathrm I},\chi_{1/16}^{\mathrm I})$ \\
\midrule[0.2pt]
Chiral $\cW_3$ (Virasoro symmetry)
& $\scrV^{\mathrm P}\hookrightarrow\scrW^{\mathrm P}$
& ${}^{\cC_{6,5}}\!(\cC_{6,5})_{\scrW^{\mathrm P}}$
& $\mathsf{6}_{4/5}$
& $\begin{aligned}[t]
Z_\one&=\widehat\chi_0,\quad Z_\epsilon=\widehat\chi_{2/5},\\
Z_{\psi_1}=Z_{\psi_2}&=\widehat\chi_{2/3},\\
Z_{\sigma_1}=Z_{\sigma_2}&=\widehat\chi_{1/15}.
\end{aligned}$ \\
\midrule[0.2pt]
Chiral $\cW_3$ ($\cW_3$ symmetry)
& $\scrW^{\mathrm P}\hookrightarrow\scrW^{\mathrm P}$
& ${}^{\mathsf{6}_{4/5}}\!\mathsf{6}_{4/5}$ & $\mathsf{6}_{4/5}$
& $\begin{aligned}[t]
Z_\one&=\widehat\chi_0,\quad Z_\epsilon=\widehat\chi_{2/5},\\
Z_{\psi_1}=Z_{\psi_2}&=\widehat\chi_{2/3},\\
Z_{\sigma_1}=Z_{\sigma_2}&=\widehat\chi_{1/15}.
\end{aligned}$ \\
\midrule
Stable Ising symTO
& $\begin{gathered}[t]\scrV^{\mathrm I}\otimes\overline{\scrV^{\mathrm I}}\\
\hookrightarrow\scrV^{\mathrm I}\otimes\overline{\scrV^{\mathrm I}}\end{gathered}$
& ${}^{\fZ(\Ising)}\!\fZ(\Ising)$ & $\fZ(\Ising)$
& $\begin{gathered}[t]Z_{h,h'}=\chi_h^{\mathrm I}\overline{\chi_{h'}^{\mathrm I}},\\h,h'\in\{0,1/2,1/16\}.\end{gathered}$ \\
\midrule[0.2pt]
$\Z_2$-Ising symTO
& $\begin{gathered}[t]\scrV^{\mathrm I}\otimes\overline{\scrV^{\mathrm I}}\\\hookrightarrow\scrA\end{gathered}$
& ${}^{\fZ(\Ising)}\!\fZ(\Ising)_{\scrA}$
& $\fZ(\Rep(\Z_2))$
& $\begin{aligned}[t]
Z_\one&=|\chi_0^{\mathrm I}|^2+|\chi_{1/2}^{\mathrm I}|^2,\\
Z_e=Z_m&=|\chi_{1/16}^{\mathrm I}|^2,\\
Z_f&=\chi_0^{\mathrm I}\overline{\chi_{1/2}^{\mathrm I}}+\chi_{1/2}^{\mathrm I}\overline{\chi_0^{\mathrm I}}.
\end{aligned}$ \\
\midrule[0.2pt]
$\Z_3$-Potts symTO
& $\begin{gathered}[t]
\scrV^{\mathrm P}\otimes\overline{\scrV^{\mathrm P}}\\
\hookrightarrow\scrW^{\mathrm P}\otimes\overline{\scrW^{\mathrm P}}\\
\hookrightarrow\scrA_{\mathrm P}
\end{gathered}$
& ${}^{\fZ(\cC_{6,5})}\!\fZ(\cC_{6,5})_{\scrA_{\mathrm P}}$
& $\fZ(\Rep(\Z_3))$
& $\begin{aligned}[t]
Z_\one&=|\widehat\chi_0|^2+|\widehat\chi_{2/5}|^2,\\
Z_e=Z_{e^2}&=Z_m=Z_{m^2}\\
&=|\widehat\chi_{2/3}|^2+|\widehat\chi_{1/15}|^2,\\
Z_{em}=Z_{e^2m^2}&=\widehat\chi_{2/3}\overline{\widehat\chi_0}
+\widehat\chi_{1/15}\overline{\widehat\chi_{2/5}},\\
Z_{e^2m}=Z_{em^2}&=\widehat\chi_0\overline{\widehat\chi_{2/3}}
+\widehat\chi_{2/5}\overline{\widehat\chi_{1/15}}.
\end{aligned}$ \\
\midrule
Double Ising CFT
& $\begin{gathered}[t]\scrV^{\mathrm I}\otimes\overline{\scrV^{\mathrm I}}\\\hookrightarrow\scrL_{\diagsector}\end{gathered}$
& ${}^{\fZ(\Ising)}\!\Ising$ & $\Hilb$
& $Z=\displaystyle\sum_{h\in\{0,1/2,1/16\}}|\chi_h^{\mathrm I}|^2$ \\
\midrule
Tetracritical Ising CFT
& $\begin{gathered}[t]\scrV^{\mathrm P}\otimes\overline{\scrV^{\mathrm P}}\\\hookrightarrow\scrL_{\mathrm{tI}}\end{gathered}$
& ${}^{\fZ(\cC_{6,5})}\!\cC_{6,5}$ & $\Hilb$
& $Z=\displaystyle\sum_{h\in H_{6,5}}|\chi_h^{\mathrm P}|^2$ \\
\midrule[0.2pt]
Three-state Potts CFT
& $\begin{gathered}[t]
\scrV^{\mathrm P}\otimes\overline{\scrV^{\mathrm P}}\\
\hookrightarrow\scrW^{\mathrm P}\otimes\overline{\scrW^{\mathrm P}}\\
\hookrightarrow\scrL_{\mathrm{3Potts}}
\end{gathered}$
& ${}^{\fZ(\cC_{6,5})}\!\fZ(\cC_{6,5})_{\scrL_{\mathrm{3Potts}}}$ & $\Hilb$
& $\begin{aligned}[t]
Z&=|\chi_0^{\mathrm P}+\chi_3^{\mathrm P}|^2
+|\chi_{2/5}^{\mathrm P}+\chi_{7/5}^{\mathrm P}|^2\\
&\quad+2|\chi_{2/3}^{\mathrm P}|^2+2|\chi_{1/15}^{\mathrm P}|^2.
\end{aligned}$ \\
\midrule
Anomaly-free CFT
& $\scrV\otimes\overline{\scrV}\hookrightarrow\scrL$
& ${}^{\fZ(\Mod_\scrV)}\!\fZ(\Mod_\scrV)_{\scrL}$ & $\Hilb$ & $Z=\Tr_{\scrL}\!\left(q^{L_0-c/24}\bar q^{\overline L_0-c/24}\right)$ \\
\midrule[0.2pt]
Anomaly-free CFT (full local symmetry)
& $\scrL\hookrightarrow\scrL$
& ${}^{\Hilb}\!\Hilb$ & $\Hilb$ & $Z=\Tr_{\scrL}\!\left(q^{L_0-c/24}\bar q^{\overline L_0-c/24}\right)$ \\
\bottomrule
\end{tabular}
\end{table*}

Equal partition vectors need not imply equal defect categories.
For example, the two chiral $\cW_3$ descriptions in
Tab.~\ref{tab:gapless-boundary-examples} have the same partition vector.
Their defect categories differ because, after enlarging the local quantum
symmetry from $\scrV^{\mathrm P}$ to $\scrW^{\mathrm P}$, only local
$\scrW^{\mathrm P}$-modules are allowed.
Likewise, the toric-code $e$ and $m$ sectors have equal characters but
different $\scrA$-module structures~\cite{Chen_2020}.
With $\scrV^{\mathrm P}\otimes\overline{\scrV^{\mathrm P}}$ fixed as the
local quantum symmetry, the Lagrangian extensions $\scrL_{\mathrm{tI}}$ and
$\scrL_{\mathrm{3Potts}}$ specify different reference boundaries, yielding
the tetracritical Ising and three-state Potts CFTs, respectively.
The two-step Potts extension corresponds to successive chiral and
diagonal condensations~\cite{Huang_2015,Kong:2013aya}.

\subsubsection{Changing the symmetry of a gapless boundary}
\label{appsec:changing-gapless-boundary}

Let $(\scrV,{}^{\Mod_\scrV}\!\cS)$ describe a gapless boundary of
$\cC=\fZ(\cR)$. Changing symmetry is implemented by fusing the boundary
with a gapped wall, using the relative Deligne tensor product over
$\cC$~\cite{KONG2018140,Kong_2020,Kong_2021}.
For a condensable algebra $B\in\cC$, the wall is $\cC_B$, and fusion gives
\[
    ({}^{\Mod_\scrV}\!\cS)\boxtimes_{\cC}\cC_B
    \cong{}^{\Mod_\scrV}\!(\cS\boxtimes_{\cC}\cC_B).
\]
The new bulk is $\cC_B^{\localmodules}$, while the local quantum symmetry
$\scrV$ is unchanged.

For injective boundaries, successive changes of symmetry are encoded by
an extension tower of full field algebras
\[
    \scrV\hookrightarrow\scrA\hookrightarrow\scrB.
\]
The first extension gives the boundary
${}^{\Mod_\scrV}\!(\Mod_\scrV)_{\scrA}$ of $\Mod_{\scrA}$.
The second gives the wall $(\Mod_{\scrA})_{\scrB}$ from
$\Mod_{\scrA}$ to $\Mod_{\scrB}$. Their composition simplifies to
\begin{equation}\label{eq:injective-gapless-wall-fusion}
\begin{aligned}
    \bigl({}^{\Mod_\scrV}\!(\Mod_\scrV)_{\scrA}\bigr)
    &\boxtimes_{\Mod_{\scrA}}(\Mod_{\scrA})_{\scrB}\\
    &\cong{}^{\Mod_\scrV}\!(\Mod_\scrV)_{\scrB}.
\end{aligned}
\end{equation}
Here $\scrB$ is viewed as an algebra in $\Mod_\scrV$ by restriction along
$\scrV\hookrightarrow\scrA$. Thus composing extensions gives the
injective boundary associated with the composite extension
$\scrV\hookrightarrow\scrB$~%
\cite{Davydov_Muger_Nikshych_Ostrik_2013,Kong:2013aya}.
The same rule iterates along a longer extension tower, with enrichment
in $\Mod_\scrV$ and final bulk $\Mod_{\scrB}$.

Let $\scrV_x$ denote the sector Hilbert space
labeled by $x\in\Irr\cC$. Let
$U_B:\cC_B^{\localmodules}\to\cC$ be the forgetful functor and
$M_{x^*}=\cC(x,U_B(M))$ the multiplicity space of $x$.
The same wall fusion gives
\begin{equation}\label{eq:gapless-sector-after-condensation}
    \scrV_M\simeq\bigoplus_{x\in\Irr\cC}\scrV_x\otimes M_{x^*},
\end{equation}
the gapless analogue of Eq.~\eqref{eq:gapped-annulus-state-space}.

As an example, consider the maximal Ising gapless boundary~\cite{chatterjee2024emergentgeneralizedsymmetrymaximal}
\[
    \left(\scrV^{\mathrm I}\otimes\overline{\scrV^{\mathrm I}},
    {}^{\cC}\!\cC\right),
    \qquad \cC=\fZ(\Ising).
\]
The two changes of symmetry are described by the extension tower
\[
    \scrV^{\mathrm I}\otimes\overline{\scrV^{\mathrm I}}
    \hookrightarrow\scrA\hookrightarrow\scrL_{\diagsector},
\]
where
\begin{align*}
    \scrA&=\scrV^{\mathrm I}_0\otimes\overline{\scrV^{\mathrm I}_0}
    \oplus\scrV^{\mathrm I}_{1/2}\otimes\overline{\scrV^{\mathrm I}_{1/2}},\\
    \scrL_{\diagsector}&=\scrA\oplus
    \scrV^{\mathrm I}_{1/16}\otimes\overline{\scrV^{\mathrm I}_{1/16}}.
\end{align*}
Their module categories give the bulk sequence
\[
    \cC\condensesTo{\scrA}\Mod_{\scrA}\cong\mathsf{TC}
    \condensesTo{\scrL_{\diagsector}}\Mod_{\scrL_{\diagsector}}\cong\Hilb.
\]
Applying Eq.~\eqref{eq:injective-gapless-wall-fusion} at each step gives
\begin{align*}
    ({}^{\cC}\!\cC)\boxtimes_{\cC}\cC_{\scrA}
    &\cong{}^{\cC}\!\cC_{\scrA},\\
    ({}^{\cC}\!\cC_{\scrA})\boxtimes_{\mathsf{TC}}
    (\Mod_{\scrA})_{\scrL_{\diagsector}}
    &\cong{}^{\cC}\!\cC_{\scrL_{\diagsector}}
    \cong{}^{\cC}\!\Ising.
\end{align*}
The first boundary is the $\Z_2$-Ising gapless boundary~\cite{Chen_2020}.
In the second step, $\scrL_{\diagsector}$ viewed as a
$\scrA$-module is the Lagrangian algebra $B=\one\oplus e$ in
$\mathsf{TC}$, so its wall is a reference boundary to the vacuum.
The composite is the direct extension
$\scrV^{\mathrm I}\otimes\overline{\scrV^{\mathrm I}}
\hookrightarrow\scrL_{\diagsector}$, with the same local quantum
symmetry and enrichment as the initial boundary.

For the second step, $U_B(B)=\one\oplus e$, so
Eq.~\eqref{eq:gapless-sector-after-condensation} gives
\[
    \scrV_B=\scrV_\one\oplus\scrV_e
    =\bigoplus_{h\in\{0,1/2,1/16\}}
      \scrV^{\mathrm I}_h\otimes\overline{\scrV^{\mathrm I}_h}
    =\scrL_{\diagsector}.
\]
This gives the anomaly-free Ising CFT with partition function
$Z=|\chi_0^{\mathrm I}|^2+|\chi_{1/2}^{\mathrm I}|^2+|\chi_{1/16}^{\mathrm I}|^2$.
The composite wall realizes the generalized reference frame associated
with the regular module category $\cM=\Ising$,
$F:\Ising\to\Fun(\Ising,\Ising)$, acting by $x\mapsto x\otimes-$.
The same rule combines the sectors for the Potts reference boundary in
Eq.~\eqref{eq:potts-full-field-algebra}.

\subsection{Full field algebra extensions and emergent symTO}
\label{appsec:ffa-extensions-and-emergent-symto}

An injective gapless boundary is specified by a full field algebra
extension $\scrV\hookrightarrow\scrA$, as in
Definition~\ref{def:injective-gapless-boundary}. Its bulk topological order is
described by $\Mod_{\scrA}$. Symmetry-preserving local perturbations
lie in the trivial bulk-anyon sector
\cite{chatterjee2024emergentgeneralizedsymmetrymaximal}, which here is
$\scrA$. If this sector contains relevant fields,
their couplings must be tuned to reach the critical boundary theory.

Every injective gapless boundary specified by an extension
$\scrB\hookrightarrow\scrA$ can be obtained from the self-enriched
boundary ${}^{\Mod_{\scrB}}\!\Mod_{\scrB}$, with trivial condensation,
by the condensation-type symmetry change associated with this extension.
We interpret $\Mod_{\scrB}$ as a larger emergent symTO of the gapless
boundary with bulk $\Mod_{\scrA}$. On the self-enriched boundary,
symmetry-preserving local perturbations belong to $\scrB$. Requiring
this emergent symmetry therefore excludes relevant fields of $\scrA$
outside $\scrB$. Smaller full field subalgebras give larger emergent
symTOs in the condensation order and can reduce the space of allowed
relevant perturbations. This motivates studying the stability of gapless
boundaries through full field algebra extension towers.

We assume that the module categories carry the required braided vertex
tensor structures, but do not require them to be finite.

\subsubsection{Extension tower}
\label{appsec:full-field-extension-tower}

A full field algebra $\scrB$ is \emph{chiral-factorized} if
$\scrB\simeq\scrV_L\otimes\overline{\scrV_R}$ for chiral VOAs
$\scrV_L$ and $\scrV_R$. We also call its symTO $\Mod_{\scrB}$
chiral-factorized.

Let $L(-1)$ and $\overline{L(-1)}$ denote the holomorphic and
antiholomorphic translation operators of $\scrA$.
By Proposition~3.12 of Moriwaki~\cite{Moriwaki_2023_full_vertex_algebra},
the kernels
\[
    \scrV_L^{\max}=\ker\left(\overline{L(-1)}|_{\scrA}\right),
    \qquad \scrV_R^{\max}=\ker\left(L(-1)|_{\scrA}\right)
\]
form chiral and antichiral vertex subalgebras of $\scrA$.
Every chiral or antichiral subalgebra lies in the corresponding kernel,
since its fields are annihilated by the opposite translation operator.
Proposition~3.18 of the same reference gives the canonical full vertex
algebra homomorphism
\[
    \scrV_L^{\max}\otimes\overline{\scrV_R^{\max}}
    \longrightarrow\scrA,
    \qquad a\otimes\overline b\longmapsto a(-1,-1)b.
\]
Assuming its source is simple, this map is injective.
The two factors of any chiral-factorized subalgebra lie in these kernels,
so that subalgebra is contained in the image of this map.
The image is therefore the largest chiral-factorized subalgebra of $\scrA$.
The energy--momentum tensors $\omega_L$ and $\omega_R$ of $\scrA$
generate its minimal full field subalgebra. Together, these give the
extension tower
\begin{equation}\label{eq:full-field-extension-tower}
\begin{aligned}
    \operatorname{Vir}(\omega_L)\otimes
      \overline{\operatorname{Vir}(\omega_R)}
    &\hookrightarrow\scrV_L^{\max}\otimes\overline{\scrV_R^{\max}}\\
    &\hookrightarrow\scrA.
\end{aligned}
\end{equation}
The left and right chiral algebras need not be isomorphic.
Other full field algebras may lie between the minimal full field
subalgebra and the largest chiral-factorized subalgebra. Of particular
interest are those with no relevant fields, the stable full field algebras
defined below.

The largest chiral-factorized subalgebra gives what we call the
\emph{minimal chiral-factorized symTO} associated with $\scrA$,
\[
    \Mod_{\scrV_L^{\max}}\boxtimes
      \overline{\Mod_{\scrV_R^{\max}}}.
\]
The minimal full field subalgebra in Eq.~\eqref{eq:full-field-extension-tower}
defines what we call the \emph{maximal symTO associated with $\scrA$}~\cite{chatterjee2024emergentgeneralizedsymmetrymaximal},
\[
    \Mod_{\operatorname{Vir}(\omega_L)}\boxtimes
      \overline{\Mod_{\operatorname{Vir}(\omega_R)}}.
\]
The condensation order on UMTCs has no largest element.
Here we restrict to condensations induced by full field algebra
extensions $\scrB\hookrightarrow\scrA$ with $\scrA$ fixed.
The minimal full field subalgebra of $\scrA$ then determines
the largest symTO in this restricted order.
For unitary Virasoro minimal models, this module category is a finite
UMTC and agrees with the maximal symTO of
Ref.~\cite{chatterjee2024emergentgeneralizedsymmetrymaximal}.
Outside the Virasoro minimal models, this minimal full field algebra
already has infinitely many inequivalent irreducible modules.

For a full field algebra
$\scrB=\bigoplus_{h,\bar h}\scrB_{h,\bar h}$ graded by chiral and antichiral conformal weights, define
\[
    \operatorname{Rel}(\scrB)
    :=\bigoplus_{0<h+\bar h<2}\scrB_{h,\bar h}.
\]

\begin{definition}[Stable full field algebra]
\label{def:stable-full-field-algebra}
The algebra $\scrB$ is \emph{stable} if
$\operatorname{Rel}(\scrB)=0$. When $\Mod_{\scrB}$ has the required
braided vertex tensor structure, we call the symTO $\Mod_{\scrB}$ stable
if $\scrB$ is stable.
\end{definition}

Here stability refers to the absence of relevant perturbations. Marginal
perturbations are outside the scope of this definition.

Although a stable full field algebra need not be chiral-factorized, for a
chiral-factorized algebra $\scrB=\scrV_L\otimes\overline{\scrV_R}$ of
CFT type, the stability condition can be simplified since both chiral components are nonnegatively $\Z$-graded.
The only possible nonidentity relevant fields then have weights $(1,0)$
or $(0,1)$, so stability is equivalent to
\[
    (\scrV_L)_1=(\scrV_R)_1=0.
\]

\subsubsection{Ising and Potts}

\paragraph{Ising.}\label{para:ising-extension-tower}
For the anomaly-free Ising full field algebra $\scrL_{\diagsector}$ of
Sec.~\ref{appsec:changing-gapless-boundary}, the Virasoro and maximal chiral
subalgebras coincide. The tower reduces to
\[
    \scrV^{\mathrm I}\otimes\overline{\scrV^{\mathrm I}}
    \hookrightarrow\scrL_{\diagsector}.
\]
The factorized vacuum algebra has no weight-one states and hence no
nonidentity field of scaling dimension below two. Its symTO
$\fZ(\Ising)$ is therefore stable, and the minimal chiral-factorized and
maximal symTOs coincide~\cite{chatterjee2024emergentgeneralizedsymmetrymaximal}.

\paragraph{Potts.}\label{para:potts-extension-tower}
For the three-state Potts full field algebra in
Eq.~\eqref{eq:potts-full-field-algebra}, the tower is
\[
    \scrV^{\mathrm P}\otimes\overline{\scrV^{\mathrm P}}
    \hookrightarrow\scrW^{\mathrm P}\otimes\overline{\scrW^{\mathrm P}}
    \hookrightarrow\scrL_{\mathrm{3Potts}}.
\]
The maximal and minimal chiral-factorized symTOs~\cite{chatterjee2024emergentgeneralizedsymmetrymaximal} are, respectively,
\[
    \cC_{6,5}\boxtimes\overline{\cC_{6,5}},
    \qquad \mathsf{6}_{4/5}\boxtimes\overline{\mathsf{6}_{4/5}}.
\]
The $\cW_3$ vacuum algebra has no weight-one states, so the minimal
chiral-factorized symTO is already stable.

\subsection{Summary of three uses of condensation}
\label{appsec:three-condensations-summary}

Table~\ref{tab:three-condensations} summarizes the uses of condensation
in this appendix. A condensation wall can either describe a change of
bulk topological order or act on an existing boundary to change its
symmetry protection.
\begin{table*}[t]
    \centering
    \small
    \renewcommand{\arraystretch}{1.2}
    \begin{tabular}{@{}>{\raggedright\arraybackslash}p{0.20\textwidth}
                        >{\raggedright\arraybackslash}p{0.39\textwidth}
                        >{\raggedright\arraybackslash}p{0.34\textwidth}@{}}
        \toprule
        Construction & Categorical data & Physical role \\
        \midrule
        Gapped condensation
        & A condensable algebra $A\in\cC$, with condensed bulk $\cC_A^{\localmodules}$
          and wall $\cC_A$.
        & Changes the bulk topological order. For Lagrangian $A$, the wall
          is a gapped boundary to the vacuum. \\
        \addlinespace
        Injective gapless boundary
        & $\scrV\hookrightarrow\scrA$, with
          $\cC\cong(\Mod_\scrV)_{\scrA}^{\localmodules}$.
        & Topological Wick rotation turns $(\Mod_\scrV)_{\scrA}$
          into an enriched gapless boundary with trivial sector $\scrA$. \\
        \addlinespace
        Change of symmetry
        & Composition of an existing boundary with a condensation wall.
        & Changes the symmetry protection and recombines sector Hilbert
          spaces by Eq.~\eqref{eq:gapless-sector-after-condensation}. \\
        \bottomrule
    \end{tabular}
    \caption{Three uses of condensation in the symTO description.}
    \label{tab:three-condensations}
\end{table*}

\section{Categorical data and factorizations for the examples}
\label{app:categorical-data}

This appendix collects the fusion-category and UMTC data used by the
examples in the main text.

We use $\cM(p,q)$ for the Virasoro minimal model and $\cC_{p,q}$ for
the UMTC of its chiral sectors. When no ambiguity arises,
$\mathsf{N}_c$ denotes a rank-$N$ UMTC with chiral central charge
$c$ modulo $8$. The symbol $\boxtimes$ denotes the Deligne tensor product.
The Drinfeld center of $\cC$ is denoted by $\fZ(\cC)$.
For a braided category $\cC$, let $\overline{\cC}$ denote its braided reverse.
It has the same objects and fusion data as $\cC$,
but inverse braiding and twists. Note that the bar is attached to the
category, not to its simple-object labels.

\subsection{Ising category}
\label{appsec:ising-category}

The Ising UMTC
$\Ising\cong\cC_{4,3}\cong\mathsf{3}_{1/2}$ has central charge
$c=\frac{1}{2}$ and total quantum dimension $D=2$. Its three simple
objects are $\one$, $\psi$, and $\sigma$. Their fusion rules and
topological data are listed in Tab.~\ref{tab:ising-mtc}.

\begin{table}[tb]
    \centering
    \caption{Simple objects, topological data, and fusion rules of the
    Ising UMTC.}
    \label{tab:ising-mtc}
    \begin{tabular}{cccc}
        \toprule
        Simple object & $d$ & Topological spin $s$ & Conformal weight $h$ \\
        \midrule
        $\one$   & $1$   & $0$        & $0$ \\
        $\psi$   & $1$   & $1/2$      & $1/2$ \\
        $\sigma$ & $\sqrt{2}$ & $1/16$ & $1/16$ \\
        \bottomrule
    \end{tabular}
    \qquad
    \begin{tabular}{c}
        \toprule
        Fusion rules \\
        \midrule
        $\psi\otimes\psi = \one$ \\
        $\sigma\otimes\sigma = \one\oplus\psi$ \\
        $\psi\otimes\sigma = \sigma\otimes\psi = \sigma$ \\
        \bottomrule
    \end{tabular}
\end{table}

In an orthonormal gauge, the only nonscalar $F$-matrix is
$F^{\sigma\sigma\sigma}_\sigma$. In the basis $\{\one,\psi\}$ of
intermediate fusion channels,
\begin{equation}
    F^{\sigma\sigma\sigma}_\sigma
    = \frac{1}{\sqrt{2}}
    \begin{pmatrix}
        1 & 1 \\
        1 & -1
    \end{pmatrix}.
\end{equation}
The remaining nonzero $F$-symbols are scalars. We use the standard
braiding convention
\begin{equation}
\begin{aligned}
    R^{\sigma\sigma}_\one &= \ee^{-\ii\pi/8}, &
    R^{\sigma\sigma}_\psi &= \ee^{3\ii\pi/8},\\
    R^{\psi\psi}_\one &= -1, &
    R^{\psi\sigma}_\sigma=R^{\sigma\psi}_\sigma &= -\ii.
\end{aligned}
\end{equation}
Braiding with the tensor unit is trivial.

Since $\Ising$ is modular,
$\fZ(\Ising)\cong\Ising\boxtimes\overline{\Ising}$. Its nine simple objects
are $x\boxtimes y$, where $x,y\in\{\one,\psi,\sigma\}$. Here $y$ belongs to
the braided-reversed factor $\overline{\Ising}$. The object itself is
not decorated with a bar. The forgetful functor sends
$x\boxtimes y$ to $x\otimes y$. Its half-braiding with $a\in\Ising$ combines
the braiding of $x$ with $a$ and the inverse braiding of $a$ with $y$.

The modular matrices factorize accordingly:
\[
    S^{\fZ(\Ising)} = S^{\Ising}\otimes \overline{S^{\Ising}},\qquad
    T^{\fZ(\Ising)} = T^{\Ising}\otimes \overline{T^{\Ising}}.
\]
The topological spin of the sector $x\boxtimes y$ is
$s_x-s_y$ modulo $1$, and agrees with the conformal spin of the
corresponding doubled primary. The anyon-insertion operators described
in Sec.~\ref{sec:modular-covariance} implement the modular $S$ action.

\subsection{Fibonacci category}
\label{appsec:fibonacci-category}

The Fibonacci UMTC $\Fib$ has simple objects
$\one$ and $\tau$. Both are self-dual, and
\[
    \tau\otimes\tau=\one\oplus\tau,
    \qquad
    d_\tau=\varphi=\frac{1+\sqrt{5}}{2}.
\]
In an orthonormal gauge, its nonscalar $F$-matrix and nontrivial
braiding eigenvalues are
\begin{align}
    F^{\tau\tau\tau}_\tau & =
    \begin{pmatrix}
        \varphi^{-1} & \varphi^{-1/2} \\[2pt]
        \varphi^{-1/2} & -\varphi^{-1}
    \end{pmatrix},\\
    R^{\tau\tau}_\one & = \ee^{-4\pi\ii/5}, &
    R^{\tau\tau}_\tau & = \ee^{3\pi\ii/5}.
\end{align}
The topological spin of $\tau$ is $s_\tau=2/5$. Reversing the
braiding gives $\overline{\Fib}$, for which
$s_\tau=-2/5$.

\subsection{Tricritical Ising category}
\label{appsec:tricritical-ising-category}

The tricritical Ising CFT $\cM(5,4)$ has central charge $c=7/10$
and six Virasoro primaries. Its rank-$6$ UMTC factorizes as
\[
\cC_{5,4}\cong\mathsf{6}_{7/10}
\cong\overline{\Fib}\boxtimes\mathsf{3}_{7/2}.
\]
The factor $\mathsf{3}_{7/2}$ is one of the eight Ising-type UMTCs
in Kitaev's 16-fold way. It has the same fusion rules and $F$-symbols
as $\Ising$, but different braiding. Its simple objects carry spins
\[
s_{\one} = 0,\quad s_{\psi} = \frac{1}{2},\quad s_{\sigma} = \frac{7}{16},
\]
and has central charge $c=7/2$.

We use the notation and primary--object correspondence listed in
Tab.~\ref{tab:triising-mtc}.

The UMTC factorization does not extend to the chiral CFT. Indeed,
$\sigma'$ and $\psi'$ correspond to $\tau\boxtimes\one$ and
$\one\boxtimes\psi$, whereas $\epsilon$ corresponds to $\tau\boxtimes\psi$.
Their conformal weights satisfy
\[
    h_{\epsilon}=\frac{1}{10}\neq
    h_{\sigma'}+h_{\psi'}=\frac{3}{5}+\frac{3}{2}
    =\frac{21}{10}.
\]
The UMTC determines $h_i$ only modulo $1$ through the topological
spin $s_i$. The integer shifts $n_i=h_i-s_i$ must therefore be
specified separately in the bootstrap.

\begin{table}[tb]
    \centering
    \caption{The six Virasoro primaries of the tricritical Ising CFT
    $\cM(5,4)$ and their correspondence with simple objects under
    $\cC_{5,4}\cong\overline{\Fib}\boxtimes\mathsf{3}_{7/2}$.}
    \label{tab:triising-mtc}
    \begin{tabular}{cccccc}
        \toprule
        $\cM(5,4)$ field & Kac label & $h$ & $s$ & $d$ & Factorization \\
        \midrule
        $\one$      & $(1,1)$ & $0$    & $0$      & $1$               & $\one\boxtimes\one$ \\
        $\epsilon$  & $(1,2)$ & $1/10$ & $1/10$   & $\varphi$         & $\tau\boxtimes\psi$ \\
        $\sigma'$   & $(1,3)$ & $3/5$  & $-2/5$   & $\varphi$         & $\tau\boxtimes\one$ \\
        $\psi'$     & $(1,4)$ & $3/2$  & $1/2$    & $1$               & $\one\boxtimes\psi$ \\
        $\sigma''$  & $(2,2)$ & $3/80$ & $3/80$   & $\sqrt{2}\varphi$ & $\tau\boxtimes\sigma$ \\
        $\sigma'''$ & $(2,4)$ & $7/16$ & $7/16$   & $\sqrt{2}$        & $\one\boxtimes\sigma$ \\
        \bottomrule
    \end{tabular}
\end{table}

\subsection{Three-state Potts categories}
\label{appsec:potts-categories}

The critical three-state Potts CFT has central charge
$c=\frac{4}{5}$. We use its six irreducible $\cW_3$ modules, whose
UMTC factorizes as
\[
    \mathsf{6}_{4/5}
    \cong \Fib\boxtimes\overline{\mathsf{3}_2}.
\]
The extension from the $\cM(6,5)$ Virasoro algebra to the $\cW_3$
algebra is reviewed in Sec.~\ref{appsec:partition-vector-and-modular-covariance}.

The simple objects of $\mathsf{3}_2$ are labelled by
$a\in\Z_3$, with
\[
    a\otimes b=a+b \pmod 3.
\]
In the convention used here, the associator is trivial. Forgetting the
braiding therefore gives
\[
    \mathsf{3}_2\cong\Hilb_{\Z_3}\cong\Rep(\Z_3)
\]
as fusion categories. This equivalence is not braided.
$\Rep(\Z_3)$ has symmetric braiding and $\theta_a=1$, whereas
$\mathsf{3}_2$ has
\[
    c_{a,b}=\ee^{2\pi\ii ab/3},
    \qquad
    \theta_a=\ee^{2\pi\ii a^2/3}.
\]
Thus $\theta_1=\theta_2=\ee^{2\pi\ii/3}$, and the corresponding
$S$-matrix is non-degenerate. The braided reverse
$\overline{\mathsf{3}_2}$ has conjugate twists
and $s_1=s_2=-\frac{1}{3}$. The remaining data are listed in
Tab.~\ref{tab:3-2-data}.

\begin{table}[tb]
    \centering
    \caption{Topological data of the pointed UMTC $\mathsf{3}_2$.}
    \label{tab:3-2-data}
    \begin{tabular}{cccc}
        \toprule
        Charge $a$ & $d_a$ & $\theta_a$ & $s_a$ \\
        \midrule
        $0$ & $1$ & $1$               & $0$ \\
        $1$ & $1$ & $\ee^{2\pi\ii/3}$ & $\frac{1}{3}$ \\
        $2$ & $1$ & $\ee^{2\pi\ii/3}$ & $\frac{1}{3}$ \\
        \bottomrule
    \end{tabular}
\end{table}

We use the $\cW_3$ primary notation and primary--object
correspondence listed in Tab.~\ref{tab:potts-mtc}. The factorization
of $\mathsf{6}_{4/5}$ is a statement about the UMTC, not the chiral
CFT. The fields $\epsilon$ and $\psi_1$ correspond to
$\tau\boxtimes 0$ and $\one\boxtimes 1$, while their tensor product
$\tau\boxtimes 1$ corresponds to $\sigma_1$. However,
\[
    h_{\sigma_1}=\frac{1}{15},
    \qquad
    h_\epsilon+h_{\psi_1}
    =\frac{2}{5}+\frac{2}{3}
    =\frac{16}{15}.
\]
These conformal weights differ by $1$ and hence give the same
topological spin. The integer shifts $n_i=h_i-s_i$ must therefore be
specified separately in the bootstrap.

For comparison, the ten Virasoro sectors form
\begin{equation}
\label{eq:m65-category-factorization}
    \cC_{6,5}\cong\Fib\boxtimes\JonesKauffman_4,
\qquad
    \JonesKauffman_4\cong\overline{\mathsf{5}^{\mathrm b}_2},
\end{equation}
where $\JonesKauffman_4$ is the rank-$5$ Jones--Kauffman category
\cite{Levaillant:2015nia}. Its fusion rules and quantum dimensions
agree with those of $\mathrm{SU}(2)_4$, but the braiding is different.
the charge-$2$ twist is $\ee^{-2\pi\ii/3}$ in $\JonesKauffman_4$ and
$\ee^{2\pi\ii/3}$ in $\mathrm{SU}(2)_4$. The doubled category
$\JonesKauffman_4\boxtimes\overline{\JonesKauffman_4}$ also appears as the
enhanced symTO of self-dual $S_3$ and $\Rep(S_3)$ spin chains
\cite{Chatterjee_2024_JK4}. Its simple objects are listed in
Tab.~\ref{tab:jk4data}~\cite{Wen_2015}.

\begin{table}[tb]
    \centering
    \caption{Quantum dimensions and topological twists of
    $\JonesKauffman_4\cong\overline{\mathsf{5}^{\mathrm b}_2}$.}
    \label{tab:jk4data}
    \begin{tabular}{ccc}
        \toprule
        Charge $a$ & $d_a$ & $\theta_a$ \\
        \midrule
        $0$ & $1$        & $1$ \\
        $1$ & $\sqrt{3}$ & $-\ee^{\ii\pi/4}$ \\
        $2$ & $2$        & $\ee^{-2\pi\ii/3}$ \\
        $3$ & $\sqrt{3}$ & $\ee^{\ii\pi/4}$ \\
        $4$ & $1$        & $1$ \\
        \bottomrule
    \end{tabular}
\end{table}

The Potts modular invariant is diagonal relative
to the $\cW_3$ algebra and non-diagonal relative to the Virasoro
algebra. The associated condensation is described in
Sec.~\ref{appsec:partition-vector-and-modular-covariance}. In Tab.~\ref{tab:potts-mtc},
the repeated Virasoro sectors $(1,3)$ and $(3,3)$ split into distinct
$\cW_3$ modules.

\begin{table}[tb]
    \centering
    \caption{The six irreducible $\cW_3$ modules of the critical
    three-state Potts CFT, their Virasoro content in $\cM(6,5)$, and
    their simple-object labels under
    $\mathsf{6}_{4/5}\cong\Fib\boxtimes\overline{\mathsf{3}_2}$.}
    \label{tab:potts-mtc}
    \begin{tabular}{cccccc}
        \toprule
        $h$ & $s$ & $d$ & $\cW_3$ field & $\cM(6,5)$ field & Factorization \\
        \midrule
        $0$        & $0$      & $1$        & $\one$      & $(1, 1) \oplus (4, 1)$ & $\one\boxtimes 0$ \\
        $2/5$      & $2/5$    & $\varphi$  & $\epsilon$  & $(2, 1) \oplus (3, 1)$ & $\tau\boxtimes 0$ \\
        $2/3$      & $-1/3$   & $1$        & $\psi_1$    & $(1, 3)$               & $\one\boxtimes 1$ \\
        $2/3$      & $-1/3$   & $1$        & $\psi_2$    & $(1, 3)$               & $\one\boxtimes 2$ \\
        $1/15$     & $1/15$   & $\varphi$  & $\sigma_1$  & $(3, 3)$               & $\tau\boxtimes 1$ \\
        $1/15$     & $1/15$   & $\varphi$  & $\sigma_2$  & $(3, 3)$               & $\tau\boxtimes 2$ \\
        \bottomrule
    \end{tabular}
\end{table}

\section{The \texorpdfstring{$4\xt 4$}{4 x 4} patch for computing structure constants}\label{app:four-by-four-structure-constants}
This appendix gives the implementation details of the enlarged triangular
patch used to construct $\mathcal A_{abc}$ in
Sec.~\ref{sec:structure-constants}.

To further suppress irrelevant perturbations, we choose a larger triangular
patch than in Ref.~\cite{Li_2022}. Starting from the tensors $T_A,T_B$ at the
chosen RG step, we solve the loop-optimization problem in
Eq.~\eqref{eq:loop} and keep all eight tensors $S_1,\ldots,S_8$.
We use these tensors to represent one triangular half of a $4\xt4$ block,
\[
\raisebox{-59.95091pt}{\includegraphics{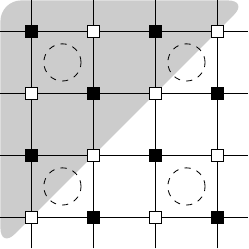}} \approx
\raisebox{-54.40263pt}{\includegraphics{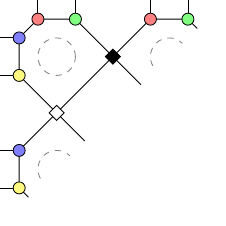}}.
\]
Here the black and white tensors are the local Loop-TNR tensors at the next RG
step, as in Eq.~\eqref{eq:next-ta-tb}. The dashed circles mark the plaquettes
used in the loop optimization. Colors distinguish the eight tensor types,
and arrows on tensor legs are omitted.

For the contractions below, we group the optimized tensors into three
four-leg tensors $U,T,D$, with the numbering of Eq.~\eqref{eq:loop},
\[
\vcenter{\hbox{\includegraphics[width=.98\linewidth]{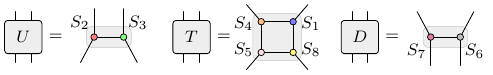}}}.
\]
The shaded regions specify the contractions.
We call this grouping UTD. The tensor $T$ joins adjacent plaquettes and is
one of the coarse-grained tensors in Eq.~\eqref{eq:next-ta-tb}.
The up and left groupings differ by a quarter turn and use the similar
optimized tensors.

The transfer eigenstates in the horizontal and vertical directions are
defined by
\[
\raisebox{-41.36166pt}{\includegraphics{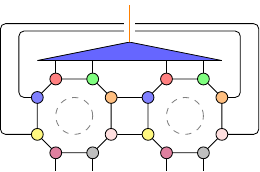}} = \lambda^{y}_{b}
\raisebox{-18.08296pt}{\includegraphics{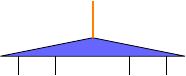}},
\]
\[
\raisebox{-62.44035pt}{\includegraphics{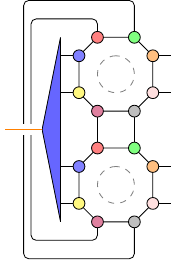}} = \lambda^{x}_{a}
\raisebox{-44.65741pt}{\includegraphics{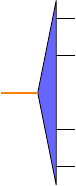}},
\]

The transfer action contracts the shaded regions with the state in the
numbered order,
\[
\vcenter{\hbox{\includegraphics[width=.90\linewidth]{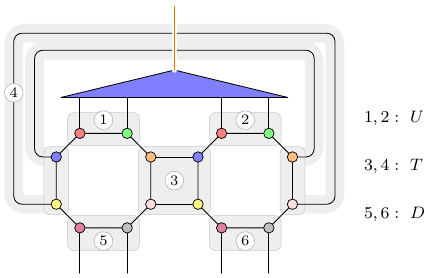}}}.
\]
Region 4 is a single $T$ across the periodic cut. Each step contracts
two lattice legs and leaves four, giving a cost $\cO(\chi^6)$ at fixed
anyon label.
This action is passed to the Krylov eigensolver of
\texttt{KrylovKit.jl}~\cite{haegeman_2025_14911568}.

The plaquette choice for the tilted direction follows Bao's observation
that Loop-TNR gives more accurate results when optimization loops at
successive RG steps are concentric~\cite[Sec.~2.1 and Fig.~2.2(g)]{Bao:2019hfc}.
We therefore require each loop in the second optimization to share its
center with a loop in the first. This fixes the second-step plaquettes
and gives the staggered arrangement below, shifted relative to a rotated
copy of the horizontal or vertical transfer diagram. One optimized loop
lies within the strip, while the other closes across the periodic cut.

The two $T$ tensors from the up and left groupings form the coarse-grained
lattice, with legs rotated as in Eq.~\eqref{eq:next-ta-tb}. We optimize
the selected plaquettes with an independent cutoff and group the resulting
tensors into $U,T,D$, obtaining the tilted transfer operator,
\[
\raisebox{-71.49681pt}{\includegraphics{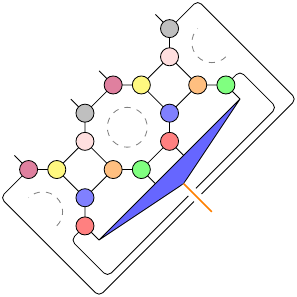}} =
\lambda^{z}_{c}
\raisebox{-37.54594pt}{\includegraphics{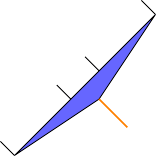}}
\]
The second loop optimization is an additional approximation in the tilted
transfer operator.

\Needspace{12\baselineskip}
The contraction order is marked directly in
Eq.~\eqref{eq:four-by-four-structure-amplitude}. In each shaded region 1,
we contract the eigenstate with two $U$ tensors and then the adjacent $T$.
Joining the two shared lattice bonds of these prepared states gives
region 2, bounded by the dashed contour. Its four remaining lattice bonds
are then contracted with the tilted state in step 3. A half-braiding places
the tilted state's periodic cut at the position shown. The leading cost is
$\cO(\chi^6)$ at fixed anyon labels. Finally, the surrounding $i$ loop
applies the center projector,
\begin{equation}\label{eq:four-by-four-structure-amplitude}
\cA_{abc^*}
=\frac{1}{D_\cR^2}\sum_{i\in\Irr\cR}d_i
\raisebox{-77.02759pt}{\includegraphics{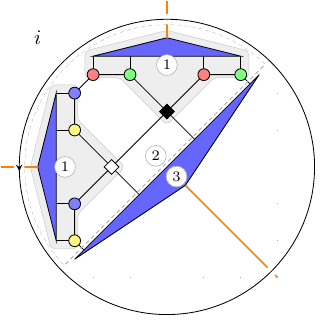}}
\end{equation}
Projecting Eq.~\eqref{eq:four-by-four-structure-amplitude} onto the basis
$v_{xy}^{z,\alpha}$ gives the components
$\mathcal A_{abc^*;\alpha}$ in
Eq.~\eqref{eq:structure-amplitude-decomposition}.

\section{Computation of geometric factors}
\label{app:geometric-factors}

In this appendix we derive the planar geometric factors $G_{\text{nut}}$ and
$G_{\text{pants}}$ used for the initial tensors in the topological-bootstrap
construction of Sec.~\ref{sec:stable-symto}.  We also derive the distinct
closed-string-pants factor $G_{\mathrm{cpants}}$ used only to extract bulk
structure constants from fixed-point tensors in
Sec.~\ref{sec:structure-constants}.  The derivation follows
Refs.~\cite{ueda2023fixedpointtensorfourpointfunction,Cheng:2023kxh}.

We use $\mathrm{pants}$ to label the flat chiral closed-string amplitudes,
equivalently described as open-string amplitudes in the construction
considered here. We reserve $\mathrm{cpants}$ for the full nonchiral
closed-string amplitudes.

\subsection{General principle}

The geometric factors in this appendix arise from chiral three-point
amplitudes on a sphere with specified local coordinates.  Let
\begin{equation*}
    (\Sigma;p_1,p_2,p_3;\xi_1,\xi_2,\xi_3)
\end{equation*}
be a Riemann sphere with three ordered punctures.  Each local
coordinate is defined only in a neighborhood $U_i$ of its puncture:
\begin{equation*}
    \xi_i:U_i\longrightarrow\C,
    \qquad
    \xi_i(p_i)=0.
\end{equation*}
Choose the unique biholomorphism
\begin{equation*}
    \psi:\widehat{\C}\longrightarrow\Sigma,
    \qquad
    \psi(0)=p_1,\quad
    \psi(1)=p_2,\quad
    \psi(\infty)=p_3.
\end{equation*}
Throughout this appendix, $\phi_i$ denotes an inserted state, or the
corresponding field under the state--field correspondence, whereas $\xi_i$
denotes an abstract local coordinate and $\varphi_i$ its explicit realization
in a particular geometry.  We use $h_i$ for chiral conformal weights and
$\Delta_i=h_i+\bar h_i$ for bulk scaling dimensions.  We reserve $\cO(\cdot)$
exclusively for Landau remainder terms.  We use the same formal parameter $x$
for the three local disks.

At
the first puncture, $f_1(x)$ denotes the germ of
$\xi_1\circ\psi$ expressed by setting $z=x$ near $0$.  At the
second puncture, $f_2(x)$ denotes the germ of
$\xi_2\circ\psi$ expressed by setting $z=1+x$ near $1$.  At the
outgoing puncture, $f_3(x)$ denotes the germ of
$\xi_3\circ\psi$ expressed by setting $z=x^{-1}$ near $\infty$.
Thus each $f_i$ is a holomorphic germ from $(\C,0)$ to $(\C,0)$,
although the three germs describe neighborhoods of different points
on $\widehat{\C}$.  

We use the same symbol $f_i$ for the substitution
operator that this germ induces on $\C[[x]]$. Its action on the
coordinate function $x$ is
\[
    f_i(x) = a^{(i)}_0\left(x+a^{(i)}_1x^2+a^{(i)}_2x^3+\cdots\right)
\]
where $a^{(i)}_0\neq0$.

To express the nonlinear part of each germ in Virasoro coordinates, define
$A_n^{(i)}$ recursively by
\begin{equation}\label{eq:formal-coordinate-change}
    x+a^{(i)}_1x^2+a^{(i)}_2x^3+\cdots
    =
    \exp\left(-\sum_{n\geq1}A_n^{(i)}l_n\right)x,
\end{equation}
where
\begin{equation*}
    l_n=-x^{n+1}\frac{\partial}{\partial x},
    \qquad
    [l_n,l_m]=(n-m)l_{n+m}.
\end{equation*}
Since
\[
l_nx^k=-k x^{k+n},
\]
the generator $l_n$ raises the $x$-adic order by $n$.  Consequently, the
coefficient of any fixed power $x^N$ in
Eq.~\eqref{eq:formal-coordinate-change} receives contributions from only
finitely many generators and their products.  The coefficients
$A_n^{(i)}$ can therefore be determined recursively, to any prescribed
finite order, by a finite triangular computation.

Equivalently, as an identity of substitution operators on $\C[[x]]$,
\[
f_i(x) = \exp\left(-\sum_{n\geq1}A_n^{(i)}l_n\right)\left(a^{(i)}_0\right)^{-l_0}x.
\]
More generally, every invertible formal germ
\[
f(z)=a_0z+a_1z^2+\cdots,\qquad a_0\neq0,
\]
determines a unique element of the completed nonnegative Witt group.  We
denote its pullback action on $\C[[z]]$ by $\cW(f)$:
\[
\cW(f)H(z):=H(f(z)),\qquad H\in\C[[z]].
\]
In particular, $\cW(f)z=f(z)$.  If $g$ is another such germ, then
\[
\begin{aligned}
\cW(f)\cW(g)H(z)
&=\cW(f)H(g(z))\\
&=H(g(f(z)))\\
&=\cW(g\circ f)H(z).
\end{aligned}
\]
Thus the pullback action is contravariant:
\[
\cW(f)\cW(g)=\cW(g\circ f).
\]
The formal composition is well defined because both germs have zero
constant term. The coefficient of every fixed power of $z$ receives only
finitely many contributions.

The nonnegative Witt algebra acts on the chiral state space through the
corresponding Virasoro operators.  The Virasoro central term vanishes on
$L_m,L_n$ with $m,n\geq0$, so this action exponentiates on the graded
completion without a projective correction.  The pullback operator
$\cW(f_i)$ is represented by the chiral coordinate-change operator
\begin{equation}\label{eq:chiral-coordinate-change-operator}
    \hat{\cW}(f_i)
    :=
    \exp\left(-\sum_{n\geq1}A_n^{(i)}L_n\right)\left(a^{(i)}_0\right)^{-L_0}.
\end{equation}
Here $L_n$ are the Virasoro operators, and
$\left(a^{(i)}_0\right)^{-L_0}$ is defined using a chosen logarithm.
For positive $a^{(i)}_0$, we use the real logarithm.
To keep the logarithm choices compatible with composition, we define the
logarithmic lift of a composite coordinate change as the sum of the
individual lifts.
With the same contravariant convention,
\[
\hat{\cW}(f)\hat{\cW}(g)=\hat{\cW}(g\circ f).
\]

We use the overall-normalization convention of Sec.~\ref{sec:cft-fixed-point-mechanism},
with amplitudes understood up to a nonzero scalar independent of the
external state labels. State-dependent local-coordinate factors are kept
explicit.

Let $\phi_1\in W_1$, $\phi_2\in W_2$, and
$\phi_3\in W_3^\prime$, where the third puncture is outgoing.  For an
intertwining operator $\cY$ of type
$\binom{W_3}{W_2\;W_1}$, the amplitude associated with the decorated sphere is
\[
\left\langle
    \hat{\cW}(f_3)\phi_3,
    \cY\bigl(\hat{\cW}(f_2)\phi_2,1\bigr)
    \hat{\cW}(f_1)\phi_1
\right\rangle_{\widehat{\C}}.
\]
The pairing contracts the output of $\cY$ with the state in the contragredient
module at the outgoing puncture.

If $\phi_i$ are primary states of conformal weights $h_i$, then
$L_n\phi_i=0$ for every $n>0$, and the nonlinear coordinate changes
drop out.  With
\begin{equation*}
    C_{123}
    :=
    \langle\phi_3,\cY(\phi_2,1)\phi_1\rangle_{\widehat{\C}},
\end{equation*}
the amplitude reduces to
\begin{equation}
    \prod_{i = 1}^3\left(a_0^{(i)}\right)^{-h_i}C_{123}.
\end{equation}
The geometric factor is
\[
G_{\triangle}(h_1, h_2, h_3) = \prod_{i = 1}^3\left(a_0^{(i)}\right)^{-h_i}.
\]

\subsection{Closed-string pants}
\label{appsec:closed-string-pants}

Following \cite{ueda2023fixedpointtensorfourpointfunction}, we review a flat
realization of the thrice-punctured sphere and derive the closed-string pants
factor in Eq.~\eqref{eq:pants-gauge-app}.

This subsection concerns bulk three-point functions in a doubled,
nonchiral CFT.  The closed-string pants are used in this work as a readout geometry for the
doubled-CFT structure constants of a fixed-point tensor.

Since any three points on the Riemann sphere can be mapped to
$\{0,1,\infty\}$ by a global conformal transformation, we may choose, without
loss of generality,
\begin{equation}
z = \frac{L}{2\pi}\bigl(- \log w - \ii \log (w - 1)\bigr),
\end{equation}
where $L>0$ fixes an overall scale that is immaterial in a scale-invariant
theory.  The logarithms make $z$ multivalued.  Their periods specify the
circumferences of the flat cylinders, while the branch cuts chosen below
select a fundamental domain for this geometry.

The asymptotic behavior is
\begin{align*}
w&\rightarrow 0 \;\Longrightarrow\; z\rightarrow \infty,\\
w&\rightarrow 1 \;\Longrightarrow\; z \rightarrow \ii \infty,\\
w&\rightarrow\infty \;\Longrightarrow\; z\rightarrow (-1-\ii)\infty.
\end{align*}
The differential
\[
\dd z = \frac{L}{2\pi}\Bigl(-\frac{1}{w} - \frac{\ii}{w-1}\Bigr)\dd w
\]
has simple poles at the three punctures.  Its periods around positively
oriented puncture loops are
\[
\begin{array}{c|ccc}
\text{puncture}&0&1&\infty\\ \hline
\Delta z&-\ii L&L&-(1-\ii)L\\
\end{array}.
\]
The three period vectors sum to zero, as required by the residue theorem.
They determine the directions and circumferences of the three semi-infinite
flat cylinders. Two have circumference $L$, while the diagonal one has
circumference $\sqrt{2}L$. These period data fix the shape used here.

To fix the branch cuts, let $z(w)=u(x,y)+\ii v(x,y)$ on a chosen branch.
Near $w=0$, the level sets of $u$ are the closed circles around the first
cylindrical end. Near $w=1$, it is instead the level sets of $v$ that form
closed circles, while the level sets of $u$ run radially.  The corresponding
critical level sets meet at the point $w_0$ where $\nabla u=0$.  By the
Cauchy--Riemann equations,
$\nabla u = 0$ if and only if $z'(w) = 0$, giving
\[
w_0 = \frac{1}{2}(1-\ii).
\]
The same $w_0$ is critical for $v$ and for $u+v$.
The critical level sets provide a natural choice of branch cuts.
\begin{itemize}
    \item the iso-$u$ line connecting $w_0$ to $1$;
    \item the iso-$v$ line connecting $w_0$ to $0$;
    \item and the iso-$u+v$ line connecting $w_0$ to $\infty$.
\end{itemize}

After cutting along these three lines, the resulting fundamental domain is
mapped by $z(w)$ to a flat pair of pants
(Fig.~\ref{fig:pants-conformal-map}).  The critical point $w_0$ is the trivalent junction
of the critical graph.  Changing branches merely selects adjacent copies of
the same flat fundamental domain.

\begin{figure}[ht]
\centering
\includegraphics[width=0.8\linewidth]{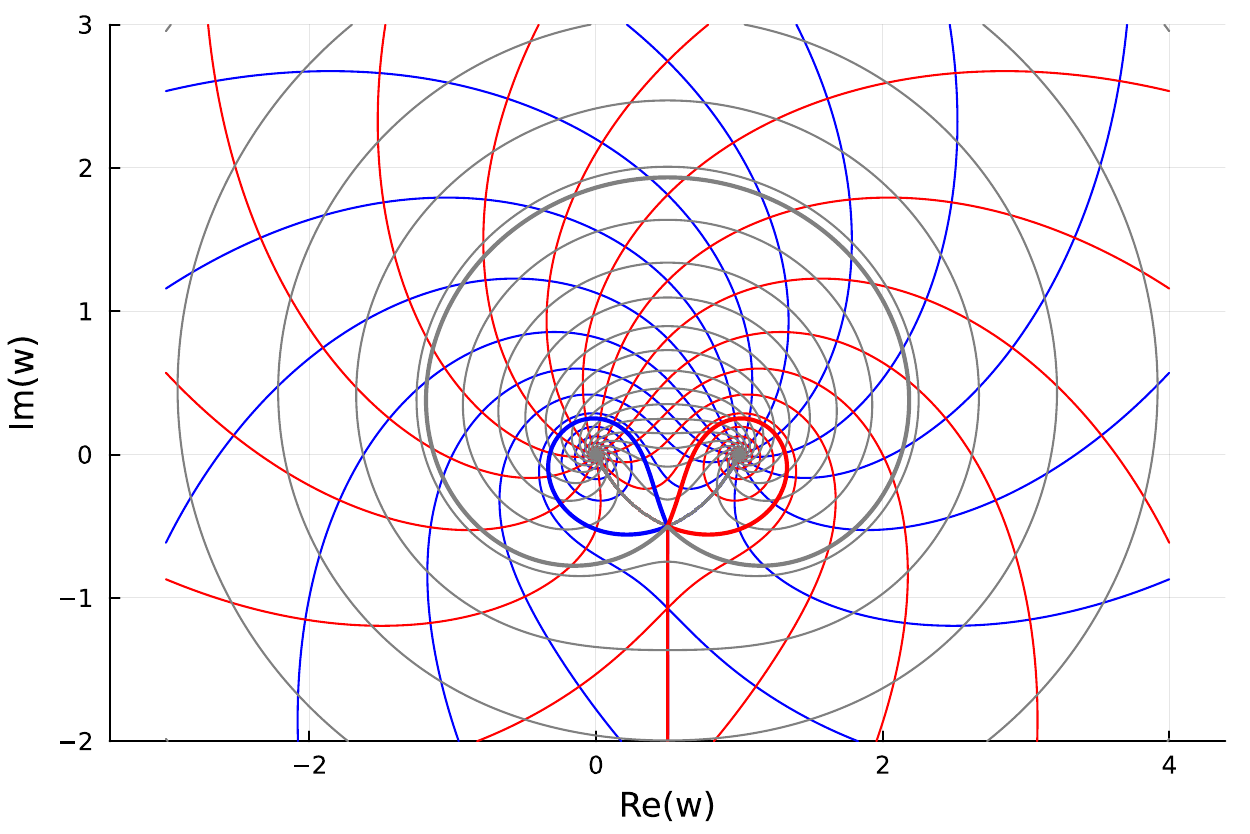}
\qquad$\mapsto$\qquad
\raisebox{-71.53186pt}{\includegraphics{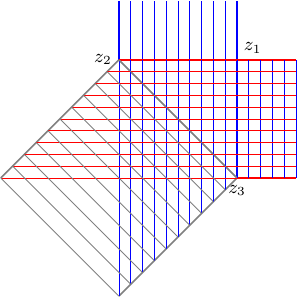}}
\caption{Cutting the Riemann sphere along the three colored lines (left)
produces a fundamental domain that is mapped conformally to a flat pair of
pants (right).  The two banks of each cut map to a pair of identified seams on
the corresponding cylindrical leg.}
\label{fig:pants-conformal-map}
\end{figure}

The three vertices $z_1,z_2,z_3$ of the pants are the images of $w_0$ approached
through the three sectors separated by the cuts:
\begin{align*}
z_1 &= z(w_0+\ii 0^+)
     = \frac{L}{2\pi}\Bigl[\frac{1}{2}(1+\ii)\log 2 +\frac{\pi}{4}(5+\ii)\Bigr],\\
z_2 &= z(w_0 + 0^+),\qquad z_3 = z(w_0 - 0^+),
\end{align*}
satisfying $z_2 = z_1 - L$ and $z_3 = z_1 - \ii L$.
Thus the compact core is a right isosceles triangle with side vectors $L$,
$-\ii L$, and $-(1-\ii)L$. 

We next normalize the local coordinates at the three cylindrical ends
$+\infty$, $\ii\infty$, and $(-1-\ii)\infty$ so that the selected cutoff
circles at their mouths are mapped to unit circles. Using the offsets
$z_1,z_2$, we choose the local coordinates as follows.
\begin{align*}
\varphi_1(z) &= \ee^{-\frac{2\pi (z-z_1)}{L}},\\
\varphi_2(z) &= \ee^{\frac{2\pi\ii (z-z_1)}{L}},\\
\varphi_3(z) &= \ee^{-\frac{2\pi \ii(z - z_2)}{(1-\ii)L}}.
\end{align*}
The offsets fix the remaining scale and phase information.
Expanding in the standard local parameters $w$, $w-1$, and $1/w$ at the three
punctures, respectively, gives
\begin{align*}
\varphi_1(z(w)) &= \exp\Bigl[\frac{1}{4}(1+\ii)\pi+\frac{1}{2}(1+\ii)\log 2\Bigr]\,w
            + \cO(w^2),\\
\varphi_2(z(w)) &= (-1+\ii)\,2^{-\ii/2}\ee^{\pi/4}\,(w-1) + \cO((w-1)^2),\\[4pt]
\varphi_3(z(w)) &= \exp\Bigl[\frac{1}{4}(1+6\ii)\pi-\frac{1}{2}\log 2\Bigr]
            \,\frac{1}{w} + \cO(w^{-2}).
\end{align*}
For closed-string primary fields
$\phi^{\closedsector}_1,\phi^{\closedsector}_2,\phi^{\closedsector}_3$, the norm of the three-point
function picks up the geometric factor
\[
\left|\la\phi^{\closedsector}_1\phi^{\closedsector}_2\phi^{\closedsector}_3\ra_{\mathrm{cpants}}\right|
= \left|C_{123}^{\closedsector}\right|G_{\mathrm{cpants}}(\Delta_1, \Delta_2, \Delta_3).
\]
Here we define
\begin{equation}\label{eq:pants-gauge-app}
    G_{\mathrm{cpants}}(\Delta_1, \Delta_2, \Delta_3) = \ee^{-\frac{\pi}{4}(\Delta_1+\Delta_2+\Delta_3)}\,
  (\sqrt{2})^{-\Delta_1-\Delta_2 + \Delta_3}.
\end{equation}
The geometric factor $G_{\mathrm{cpants}}$ derived for this continuum geometry
is used in the approximate prescription for extracting primary-field
structure constants from numerical fixed-point tensors in
Sec.~\ref{sec:structure-constants}.

\subsection{Nut}
\label{appsec:nut}

We now connect the conformal map of App.~\ref{appsec:closed-string-pants} to the
fixed-point tensor construction of \cite{Cheng:2023kxh}. The ``nut'' diagram is a
specific triangulation of the plane by isosceles right triangles, shown in
Fig.~\ref{fig:nut-diagram-app}.

\begin{figure}[ht]
\centering
\raisebox{-60.10948pt}{\includegraphics{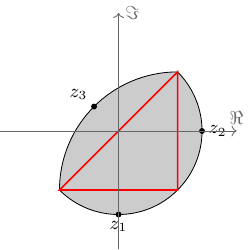}}
\caption{The ``nut'' diagram has three field insertions at $z_1,z_2,z_3$ on
the boundary of a region triangulated by isosceles right triangles.}
\label{fig:nut-diagram-app}
\end{figure}

The three insertion points are
\[
z_1 = -\sqrt{2}\ii,\qquad z_2 = \sqrt{2},\qquad
z_3 = 1-\sqrt{2} + (\sqrt{2}-1)\ii.
\]
We choose local coordinates such that the upper rim of the arc is mapped to the
unit semicircle.
\[
\raisebox{-24.84215pt}{\includegraphics{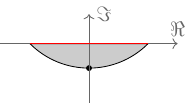}}
\rightarrow
\raisebox{-31.95532pt}{\includegraphics{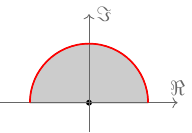}}
\]

Define
\begin{align*}
\xi &= \frac{1 + z}{1-z},\qquad
\eta = \ii\xi^2,\qquad
w = \frac{\eta-1}{\eta+1},\\
\varphi(z) &= w\circ\eta\circ\xi(z)
     = \ii + \frac{4z}{1 + z(2\ii + z)},
\end{align*}
which sends the arc to the unit circle. The three local coordinates are then
\begin{align*}
\varphi_1(z) &= \varphi(z+\ii),\qquad
\varphi_2(z) = \varphi(-\ii(z-1)),\\
\varphi_3(z) &= \varphi\Bigl(\frac{1}{\sqrt{2}}\ee^{\frac{3\pi\ii}{4}}z\Bigr).
\end{align*}

Following \cite{Cheng:2023kxh}, we map the nut to the upper half-plane.
Define
\begin{align*}
u &= -\ii\frac{z+1+\ii}{z-1-\ii},\qquad
v = u^{\frac{4}{3}},\\
\chi(z) &= v\circ u(z)
        = \Bigl(-\ii\frac{z+1+\ii}{z-1-\ii}\Bigr)^{\!\frac{4}{3}}.
\end{align*}
The insertion points map to
\[
v_1 = (\sqrt{2}-1)^{\frac{4}{3}},\qquad
v_2 = (\sqrt{2}+1)^{\frac{4}{3}},\qquad
v_3 = -1.
\]
The inverse map is
\[
z = \chi^{-1}(v) = (1+\ii)\frac{v^{\frac{3}{4}}-\ii}{v^{\frac{3}{4}}+\ii}.
\]
The local coordinates on the upper half-plane are
$f_i = \varphi_i\circ\chi^{-1}$ ($i=1,2,3$). Their derivatives at the insertion
points are
\begin{align*}
f_1'(v_1) &= \frac{3}{4}\Bigl(2+\frac{3}{\sqrt{2}}\Bigr)(\sqrt{2}+1)^{\frac{1}{3}},\\
f_2'(v_2) &= \frac{3}{4}\frac{1}{\sqrt{2}}(\sqrt{2}-1)^{\frac{1}{3}},\\
f_3'(v_3) &= \frac{3}{4}.
\end{align*}
Assembling the conformal factors from the three local coordinates gives the geometric
factor for the nut gauge:
\begin{multline}\label{eq:nut-gauge-app}
G_{\text{nut}}(h_1,h_2,h_3)
= \Bigl[\frac{4\sqrt{2}}{3}\frac{1}{A(A^{\frac{4}{3}}-A^{-\frac{4}{3}})}
   \Bigr]^{h_1+h_2} \\
   \times\Bigl[\frac{4}{3}\frac{A^{\frac{4}{3}}-A^{-\frac{4}{3}}}
               {(1+A^{-\frac{4}{3}})(1+A^{\frac{4}{3}})}
   \Bigr]^{h_3},
\qquad A = \sqrt{2}+1.
\end{multline}
Together with the pants gauge in Eq.~\eqref{eq:flat-pants-gauge-app}, this yields
the two planar gauges listed in the topological-bootstrap construction of
Sec.~\ref{sec:stable-symto}.

\subsection{Flat pants}
\label{appsec:flat-pants}

We consider another geometry of a three-punctured Riemann surface:
\[
\raisebox{-74.63445pt}{\includegraphics{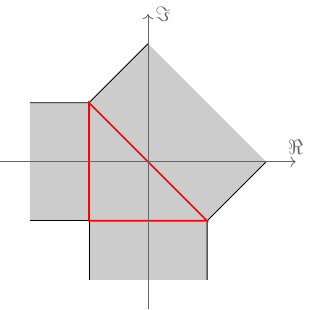}}
\]
Throughout this subsection, $\zeta$ denotes the upper-half-plane
coordinate, $t=t(\zeta)$ is the auxiliary coordinate used to rationalize
the Schwarz--Christoffel differential, and $u$ denotes the flat coordinate
on the polygon.  The global conformal map is $u=\rho(\zeta)$.  Its expression
in the auxiliary coordinate will be denoted by $g(t)$, so that
$\rho(\zeta)=g(t(\zeta))$.  The explicit local
coordinates $\varphi_i$ below are functions of $u$.

In the flat coordinate $u$, the three punctures are at $-\infty$,
$-\ii\infty$, and $\ee^{\pi\ii/4}\infty$, respectively.  The exponential
map sends each semi-infinite strip to a half-disk:
\[
\raisebox{-31.95532pt}{\includegraphics{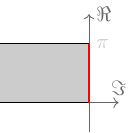}}
\rightarrow
\raisebox{-31.95532pt}{\includegraphics{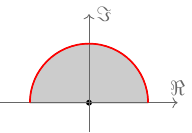}}
\]

Three local coordinates are
\begin{align*}
    \varphi_1(u) &= \exp\left(\frac{\pi}{2}(u+1+\ii)\right),\\
    \varphi_2(u) &= \exp\left(\frac{\pi}{2}(-\ii u+1+\ii)\right),\\
    \varphi_3(u) &= \exp\left(\frac{\pi}{2}\left(\frac{\ee^{\frac{3\pi\ii}{4}}}{\sqrt{2}}u + \ii\right)\right).
\end{align*}

\paragraph{Global map and normalization}

The upper half-plane can be mapped to the shaded polygon by a
Schwarz--Christoffel map of the form
\[
\rho(\zeta) = B+A\int^\zeta
\frac{s^{\frac12}(s-a)^{\frac14}(s+a)^{\frac14}}{s^2-1}\dd s,
\]
where $a>1$.  The fractional powers are defined by analytic continuation in
the upper half-plane.  Since all their branch points lie on the real axis,
the integrand is single-valued and holomorphic in the open upper half-plane.
Consequently, the integral is independent of the contour within this domain.
The constants $A$ and $B$ fix the overall scale, rotation, and translation.
We impose the following boundary correspondence:
\[
\begin{array}{c|cccccc}
\zeta
&-a&-1&0&1&a&\infty\\ \hline
\rho(\zeta)
&-1+\ii&-\infty&-1-\ii&-\ii\infty&1-\ii
&\ee^{\pi\ii/4}\infty
\end{array}
\]

The Schwarz--Christoffel differential becomes rational after the substitution
\[
t=t(\zeta)=\frac{(\zeta-a)^{\frac14}(\zeta+a)^{\frac14}}{\zeta^{\frac12}},
\qquad t^4=1-\frac{a^2}{\zeta^2}.
\]
With the upper-half-plane branches chosen above, this map sends the upper
half plane to
\[
\Omega_t=
\left\{t:-\frac{\pi}{4}<\arg t<\frac{\pi}{4}\right\}
\setminus(0,1].
\]
\begin{align*}
    \raisebox{-28.65273pt}{\includegraphics{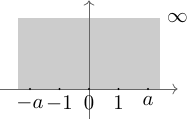}}\xrightarrow{t}
&\raisebox{-46.0567pt}{\includegraphics{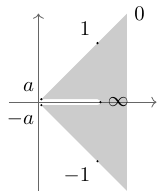}}\\
& \downarrow\\
&
\raisebox{-58.07178pt}{\includegraphics{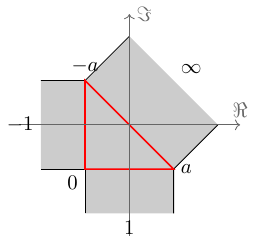}}
\end{align*}

Let
\[
b:=\lim_{\epsilon\downarrow0}t(1+\ii\epsilon)
=\ee^{\pi\ii/4}(a^2-1)^{\frac14}.
\]
In the $t$ coordinate it becomes
\[
\dd u
=A\left(\frac{2}{1-t^4}
-\frac{2}{1-(t/b)^4}\right)\dd t.
\]
Hence, with the choice of base point absorbed into $B$,
\[
\rho(\zeta)=B+A\int^{t(\zeta)}
\left(\frac{2}{1-\tau^4}
-\frac{2}{1-(\tau/b)^4}\right)\dd\tau,
\]
where the contour lies in $\Omega_t$ and the boundary values are obtained by
continuation from its interior.  The poles $t^4=1$ and $t^4=b^4$ encode the
logarithmic ends of the flat pants, while the phase of the differential along
each boundary component determines the direction of the corresponding edge.

To fix $a$, we require the regularized oriented displacement along the
direction of the ray $\{\ee^{\frac{\pi\ii}{4}}\tau:\tau>0\}$ to vanish.
Equivalently, after subtracting the common logarithmic divergence at
$\tau=(a^2-1)^{1/4}$, we impose the principal-value condition
\[
\begin{aligned}
0
&=\operatorname{PV}\int_0^{\infty}
\left(\frac{2}{1+\tau^4}
-\frac{2}{1-\tau^4/(a^2-1)}\right)\dd\tau\\
&=\frac{\pi}{\sqrt{2}}
-\frac{\pi}{2}(a^2-1)^{\frac14}.
\end{aligned}
\]
We therefore obtain $a=\sqrt{5}$ and hence $b=1+\ii$.  Comparing the
transverse displacement across the logarithmic end then fixes the
multiplicative constant, and the condition $\rho(0)=-1-\ii$ fixes the additive
constant.  The normalized map is
\begin{align*}
g(t):={}&-1-\ii+\frac{2}{\pi}(1+\ii)
\int_{\infty}^{t}\\
&\quad\left(\frac{2}{1-\tau^4}
-\frac{2}{1+\tau^4/4}\right)\dd\tau,\\
\rho(\zeta)={}&g(t(\zeta)).
\end{align*}
The contour is defined as a boundary limit from the interior of $\Omega_t$.
In particular, when oriented from $t=0$ to $t=\infty$, it passes below the
pole $t=1+\ii$. The contour in the displayed integral has the reverse
orientation.

\paragraph{Common primitive and residues}

For the local expansions, define the rational integrand once and for all by
\[
\omega(t):=\frac{2}{1-t^4}-\frac{2}{1+t^4/4}.
\]
It has simple poles at
\[
\mathcal P=\{\pm1,\ \pm\ii,\ \pm(1+\ii),\ \pm(1-\ii)\}.
\]
The residues $r_p:=\operatorname{Res}_{t=p}\omega(t)$ are
\[
\begin{array}{|c|c|c|c|c|c|c|c|c|}
\hline
p&1&-1&\ii&-\ii&1+\ii&-1-\ii&1-\ii&-1+\ii\\ \hline
r_p&-\frac12&\frac12&-\frac{\ii}{2}&\frac{\ii}{2}
&\frac{1+\ii}{2}&-\frac{1+\ii}{2}
&\frac{1-\ii}{2}&\frac{-1+\ii}{2}\\
\hline
\end{array}.
\]
Since $\sum_{p\in\mathcal P}r_p=0$, the primitive normalized at infinity is
\[
g(t)=-1-\ii+\frac{2}{\pi}(1+\ii)
\sum_{p\in\mathcal P}r_p
\log_{\Omega_t}\left(1-\frac{p}{t}\right),
\]
where all logarithms are continued from infinity inside $\Omega_t$.
Only three poles are relevant for the local coordinates.  The three
calculations below follow the rows of this table:
\[
\begin{array}{c|c|c|c}
\zeta_\ast&t_\ast&\text{local parameter}
&\text{leading local coordinate}\\ \hline
-1&1-\ii&x_1=\zeta+1&\varphi_1\circ\rho\sim\kappa_1x_1\\
1&1+\ii&x_2=\zeta-1&\varphi_2\circ\rho\sim\kappa_2x_2\\
\infty&1&x_3=-1/\zeta&\varphi_3\circ\rho\sim\kappa_3x_3
\end{array}.
\]
Label these rows by \(i=1,2,3\), in the displayed order, and set
\[
A:=\frac{2(1+\ii)}{\pi},
\qquad B:=-1-\ii,
\qquad r_i:=r_{t_i}.
\]
Each local-coordinate map has the form
\[
\varphi_i(u)=\exp(s_i u+\delta_i),
\]
where
\[
\begin{aligned}
(s_1,s_2,s_3)
&=\left(
\frac{\pi}{2},-\frac{\pi\ii}{2},
\frac{\pi\ee^{3\pi\ii/4}}{2\sqrt2}
\right),\\
(\delta_1,\delta_2,\delta_3)
&=\left(
\frac{\pi}{2}(1+\ii),\frac{\pi}{2}(1+\ii),
\frac{\pi\ii}{2}
\right).
\end{aligned}
\]

\paragraph{Local data at the three punctures}

The changes of variables at the three punctures have the common leading form
\[
t(\zeta)-t_i=c_i x_i^{m_i}
\bigl[1+\cO(x_i)\bigr],
\]
with
\[
\begin{aligned}
(t_1,t_2,t_3)&=(1-\ii,\,1+\ii,\,1),\\
(r_1,r_2,r_3)&=\left(\frac{1-\ii}{2},\,
\frac{1+\ii}{2},\,-\frac12\right),\\
(c_1,c_2,c_3)&=\left(
\frac{5(1-\ii)}8,\,-\frac{5(1+\ii)}8,\,-\frac54
\right),\\
(m_1,m_2,m_3)&=(1,1,2).
\end{aligned}
\]

After inserting \(t-t_i=c_i x_i^{m_i}[1+\cO(x_i)]\), we have
\[
g(t(\zeta))
=K_i+A r_i m_i\log x_i
+\cO(x_i),
\]
where
\[
\begin{aligned}
K_i
={}&B+A\Biggl[\sum_{q\in\mathcal P\setminus\{t_i\}}r_q
\log_{\Omega_t}\left(1-\frac q{t_i}\right)\\
&\quad+r_i\log_{\Omega_t}\left(\frac{c_i}{t_i}\right)
\Biggr].
\end{aligned}
\]
After composing with the local coordinates, we have
\[
\begin{aligned}
\varphi_i\circ g \circ t(\zeta)
&=\kappa_i x_i
\bigl[1+\cO(x_i)\bigr],\\
\log\kappa_i&=s_i K_i+\delta_i.
\end{aligned}
\]
Here the identity
\[
s_i A r_i m_i=1\qquad(i=1,2,3)
\]
ensures the linearity of the leading term of the expansion.

It remains only to evaluate the regular parts.  All logarithms inherit their
branches from analytic continuation in \(\Omega_t\). The points \(t_1\) and \(t_2\) are
approached along
\(\epsilon\ee^{\pi\ii/4}\) and
\(\epsilon\ee^{-\pi\ii/4}\), respectively, while the branch at \(t_3=1\)
comes from the upper half-plane.  In particular,
\(\log(-5/4)=\log(5/4)-\pi\ii\).  The residue sums reduce to
\[
(K_1, K_2, K_3)
=\bigl(
\lambda-\ii,\,
1+\ii\lambda,\,
(1+\ii)(\lambda_\infty+\ii)
\bigr),
\]
where the two remaining quantities are real:
\[
\begin{aligned}
\lambda
&:=-\frac12+
\frac{\log(125/256)-2\arctan(1/3)}{\pi},\\
\lambda_\infty
&:=-\frac12+
\frac{\log(8/25)+2\arctan(1/2)}{\pi}.
\end{aligned}
\]
Substituting these residue sums into the expressions for $\log\kappa_i$
shows that their imaginary parts cancel:
\[
\begin{aligned}
\log\kappa_1
&=\frac{\pi}{2}(\lambda-\ii)
+\frac{\pi}{2}(1+\ii)
=\frac{\pi}{2}(1+\lambda)\in\R,\\
\log\kappa_2
&=-\frac{\pi\ii}{2}(1+\ii\lambda)
+\frac{\pi}{2}(1+\ii)
=\frac{\pi}{2}(1+\lambda)\in\R,\\
\log\kappa_3
&=-\frac{\pi}{2}(\lambda_\infty+\ii)
+\frac{\pi\ii}{2}
=-\frac{\pi}{2}\lambda_\infty\in\R.
\end{aligned}
\]
Finally,
\[
\begin{aligned}
\kappa_1=\kappa_2
&=\ee^{\pi(1+\lambda)/2}
=\frac{5\sqrt5}{16}\exp\left(\arctan\frac12\right),\\
\kappa_3
&=\ee^{-\pi\lambda_\infty/2}
=\frac{5}{2\sqrt2}\exp\left(\arctan\frac13\right).
\end{aligned}
\]

\paragraph{Three-point geometric factor}

We can now evaluate the three-point function.  For chiral primary fields of
conformal weights $h_1,h_2,h_3$ inserted at
$\zeta=-1,1,\infty$, respectively, our normalization on the Riemann sphere is
\[
\left\langle
\phi_1(-1)\phi_2(1)\phi_3(\infty)
\right\rangle_{\widehat{\C}}
=2^{-h_1-h_2+h_3}C_{123}.
\]
Pulling the fields back through the three local coordinates contributes the
inverse linear factors $\kappa_i^{-h_i}$.  The pants three-point function is
\[
\left\langle
\phi_1\phi_2\phi_3
\right\rangle_{\mathrm{pants}}
=G_{\mathrm{pants}}(h_1,h_2,h_3)C_{123},
\]
where
\begin{equation}\label{eq:flat-pants-gauge-app}
\begin{split}
G_{\mathrm{pants}}(h_1,h_2,h_3)
={}&
\left[
\frac{8\ee^{-\arctan(1/2)}}{5\sqrt{5}}
\right]^{h_1+h_2}\\
&\times
\left[
\frac{4\sqrt{2}\,\ee^{-\arctan(1/3)}}{5}
\right]^{h_3}\\
\approx{}&
(0.450065146826)^{h_1+h_2}\\
&\times
(0.820106954075)^{h_3}.
\end{split}
\end{equation}

\paragraph{Adding finite legs}

We finally add finite legs whose lengths are measured directly in the flat
$u$-metric.  Let $d_i>0$ be the Euclidean distance by which the mouth of the
$i$th leg is displaced outward along the axis of that leg.  The transverse
widths of the three semi-infinite strips are
\[
W_1=W_2=2,\qquad W_3=2\sqrt{2}.
\]
Since a strip of width $W_i$ is mapped to a half-disk by an exponential with
radial rate $\pi/W_i$, extending the leg by $d_i$ rescales its local coordinate
as
\[
\varphi_i\longmapsto \ee^{\pi d_i/W_i}\varphi_i.
\]
Equivalently, the new unit contour in the original local coordinate is
\[
|\varphi_i|=\ee^{-\pi d_i/W_i}.
\]
The resulting geometry is represented schematically by
\[
\raisebox{-77.86612pt}{\includegraphics{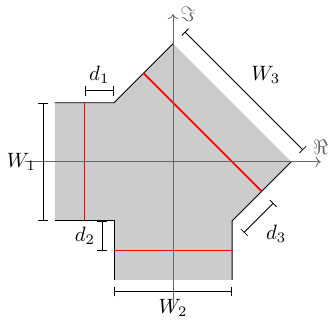}}.
\]
Multiplying the target coordinate by $\ee^{\pi d_i/W_i}$ composes $f_i$ on the
left with a dilation.  The contravariant composition law therefore gives
\[
\hat{\cW}(\ee^{\pi d_i/W_i}f_i)
=\hat{\cW}(f_i)\ee^{-(\pi d_i/W_i)L_0}.
\]
Consequently, if $\phi_i$ is a homogeneous descendant of a primary of
weight $h_i$, at level $N_i$, the rescaled amplitude obeys
\[
\begin{aligned}
&\left\langle\phi_1\phi_2\phi_3\right\rangle_{d_1,d_2,d_3}\\
&\quad=
\prod_{i=1}^3\exp\left[-\frac{\pi(h_i+N_i)d_i}{W_i}\right]
\left\langle\phi_1\phi_2\phi_3\right\rangle_{0,0,0}.
\end{aligned}
\]
In particular, for primary fields,
\[
\begin{split}
&G_{\mathrm{pants}}
(h_1,h_2,h_3;d_1,d_2,d_3)\\
&\qquad=
\exp\left[-\frac{\pi}{2}(h_1d_1+h_2d_2)\right]\\
&\qquad\quad\times
\exp\left(-\frac{\pi h_3d_3}{2\sqrt{2}}\right)
G_{\mathrm{pants}}(h_1,h_2,h_3),
\end{split}
\]
where the last factor is the zero-leg-length result computed above.  In terms
of the doubled cylinders, whose circumferences are $C_i=2W_i$, the same radial
factor is $\exp(-2\pi d_i/C_i)$.  This makes explicit the usual $2\pi$
normalization while keeping $d_i$ as the geometric leg length.

\section{Directional tube traces and modular \texorpdfstring{$S$}{S}-covariance}
\label{app:directional-tube-traces}

\subsection{Setup}

This appendix supplements the componentwise proof in the main text with a
basis-independent change-of-cut argument. Let $\cR$ be a unitary fusion
category with its canonical spherical structure, and use the duality
conventions fixed in Sec.~\ref{sec:r-tensor}. For
\begin{equation*}
    f:U\otimes X\longrightarrow V\otimes X,
    \qquad
    g:X\otimes U\longrightarrow X\otimes V,
\end{equation*}
define the right and left partial traces by
\begin{align*}
    \ptr_X^r(f)
    &:={}
      (\id_V\otimes\ev_X)
      \circ(f\otimes\id_{X^*})
      \circ(\id_U\otimes\ev_X^\dagger),\\
    \ptr_X^\ell(g)
    &:={}
      (\coev_X^\dagger\otimes\id_V)
      \circ(\id_{X^*}\otimes g)
      \circ(\coev_X\otimes\id_U).
\end{align*}
Graphically, these are
\begin{equation*}
    \includegraphics[scale=.8]{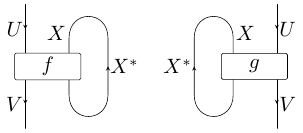}
\end{equation*}
We use $\Tr$ for both the resulting spherical trace and the ordinary trace of
a finite-dimensional vector space. Its meaning is determined by the object
being traced.  We use the total quantum dimension convention
\begin{equation*}
    D_\cR^2:=\sum_{i\in\Irr\cR}d_i^2,
\end{equation*}
All coends below are taken in $\mathsf{Vect}_{\C}$.

\begin{definition}[Directional tube categories~\cite{lan2025tubecategorytensorrenormalization}]
The objects of $\Tubev(\cR)$ are the objects of $\cR$, and its morphism
spaces are
\begin{equation*}
    \Tubev(\cR)(A,B)
    :=\int^{X\in\cR}\cR(A\otimes X,X\otimes B).
\end{equation*}
Denote the universal dinatural maps by
\begin{equation*}
    \pi_X^{A,B}:\cR(A\otimes X,X\otimes B)
    \longrightarrow\Tubev(\cR)(A,B).
\end{equation*}
Vertical composition is determined by the universal property of the coends
and the commutative diagram

\begin{equation*}
\resizebox{\columnwidth}{!}{%
\includegraphics{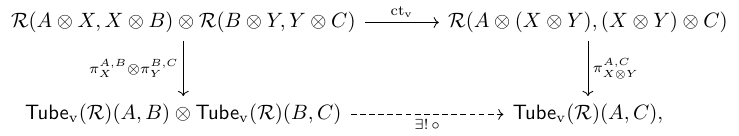}%
}
\end{equation*}

where
\begin{equation*}
    \contraction_{\mathrm v}(f\otimes g)
    :=(\id_X\otimes g)\circ(f\otimes\id_Y).
\end{equation*}
The identity morphism of $A$ is
\begin{equation*}
    \id_A^{\mathrm v}:=\pi_\one^{A,A}(\id_A),
\end{equation*}
where the unitors are suppressed.  Dinaturality makes the induced composition
well defined, while monoidal coherence gives associativity and the unit laws.
As usual, we suppress all associators and unitors.

Equip the reverse category $\cR^{\rev}$ with the tensor product
$A\otimes_{\cR^{\rev}}B:=B\otimes_\cR A$, and define the horizontal
tube category by
\begin{equation*}
    \Tubeh(\cR):=\Tubev(\cR^{\rev}).
\end{equation*}
In particular,
\begin{equation*}
    \Tubeh(\cR)(X,Y)
    \simeq\int^{A\in\cR}\cR(A\otimes X,Y\otimes A).
\end{equation*}
Denote the corresponding universal dinatural maps by
\begin{equation*}
    \varpi_A^{X,Y}:\cR(A\otimes X,Y\otimes A)
    \longrightarrow\Tubeh(\cR)(X,Y).
\end{equation*}
The horizontal composition and identities are those of
$\Tubev(\cR^{\rev})$. In particular,
\begin{equation*}
    \id_X^{\mathrm h}:=\varpi_\one^{X,X}(\id_X).
\end{equation*}
Thus passing from one direction to the other reverses the tensor product. It
is not merely a relabeling of objects.
\end{definition}
The two compositions have the following cut-square representatives. Opposite
dashed edges are identified, as in Eq.~\eqref{eq:tube-equivalence-relation}.
\begin{equation*}
    \includegraphics[scale=.8]{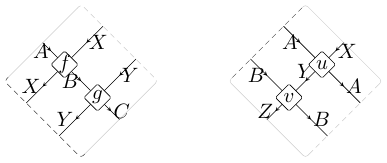}
\end{equation*}
On the right, $u:A\otimes X\to Y\otimes A$ and
$v:B\otimes Y\to Z\otimes B$. The running objects of the vertical and
horizontal composites are $X\otimes Y$ and $B\otimes A$, respectively.

Henceforth $d\in\{\mathrm v,\mathrm h\}$ denotes a direction, and a symbol
carrying the superscript or subscript $d$ refers to the corresponding vertical
or horizontal construction.

\subsection{Directional categorical traces}

Define
\begin{align*}
    \Torusv(\cR)
    &:=\int^{A\in\Tubev(\cR)}\Tubev(\cR)(A,A),\\
    \Torush(\cR)
    &:=\int^{X\in\Tubeh(\cR)}\Tubeh(\cR)(X,X).
\end{align*}
Denote the universal dinatural maps by
\begin{align*}
    \iota_A^{\mathrm v}:\End_{\Tubev(\cR)}(A)
    &\longrightarrow\Torusv(\cR),\\
    \iota_X^{\mathrm h}:\End_{\Tubeh(\cR)}(X)
    &\longrightarrow\Torush(\cR).
\end{align*}
Equivalently, they satisfy the cyclicity relation~\cite{Beliakova_2014_trace}
\begin{equation*}
    \iota_U^d(QP)=\iota_V^d(PQ),
    \qquad
    P:U\longrightarrow V,
    \quad
    Q:V\longrightarrow U
\end{equation*}
in $\Tube_d(\cR)$.
For $d=\mathrm v$, this relation is shown below on representatives
$P=\pi_X^{U,V}(p)$ and $Q=\pi_Y^{V,U}(q)$, with arbitrary running
objects $X,Y\in\cR$. Dashed edges are identified in the tube coend,
and dotted edges by the categorical trace.
\begin{equation*}
    \includegraphics[scale=.8]{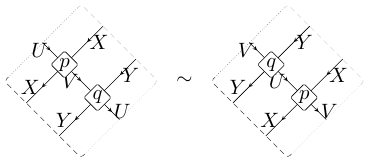}
\end{equation*}

\subsection{Algebraic change of cut}

A square morphism
\begin{equation*}
    u:A\otimes X\longrightarrow X\otimes A
\end{equation*}
has both a vertical and a horizontal closure.  We construct the change of
cut from the two universal coends, rather than first declaring a map on their
quotients.  For $A,X\in\cR$, set
\begin{equation*}
\begin{aligned}
    \Theta_{A,X}:\cR(A\otimes X,X\otimes A)&\longrightarrow\Torush(\cR),\\
    u&\longmapsto\iota_X^{\mathrm h}\!\left(\varpi_A^{X,X}(u)\right).
\end{aligned}
\end{equation*}

\begin{proposition}[Universal change of cut]
For fixed $A$, the family $\Theta_{A,X}$ is dinatural in $X$ and hence
induces a unique linear map
\begin{equation*}
\begin{aligned}
    \Theta_A:\End_{\Tubev(\cR)}(A)&\longrightarrow\Torush(\cR),\\
    \Theta_A\pi_X^{A,A}&=\Theta_{A,X}.
\end{aligned}
\end{equation*}
The family $(\Theta_A)_{A\in\Tubev(\cR)}$ is a cowedge for the categorical
trace of $\Tubev(\cR)$.  Consequently the universal property of that trace
induces a unique isomorphism
\begin{equation*}
    \scrS_{\cutlabel}:
    \Torusv(\cR)\xrightarrow{\ \simeq\ }\Torush(\cR),
\end{equation*}
characterized on square generators by
\begin{equation}
    \scrS_{\cutlabel}
    \!\left(\iota_A^{\mathrm v}\!\left(\pi_X^{A,A}(u)\right)\right)
    =\iota_X^{\mathrm h}\!\left(\varpi_A^{X,X}(u)\right).
    \label{eq:appendix-change-of-cut}
\end{equation}
Its inverse is induced by the same construction with the two directions
interchanged.
\end{proposition}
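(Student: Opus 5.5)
The argument runs in three layers, each a dinaturality check followed by the universal property of a coend. At every stage I will reduce to relations between explicit representatives in $\cR$, so the work becomes string-diagram identities on the cut square.

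\emph{Step 1: dinaturality in $X$.} Fix $A$. For $f\in\cR(X,Y)$ and $T\in\cR(A\otimes Y,X\otimes A)$, I must show
\[
\iota_X^{\mathrm h}\bigl(\varpi_A^{X,X}((f\otimes\id_A)\circ T)\bigr)=\iota_Y^{\mathrm h}\bigl(\varpi_A^{Y,Y}(T\circ(\id_A\otimes f))\bigr).
\]
In $\Tubeh(\cR)$, $f$ defines a morphism $F:=\varpi_\one^{X,Y}(f):X\to Y$. Similarly $T$ defines $\tau:=\varpi_A^{Y,X}(T):Y\to X$. By the horizontal composition rule with one running object trivial, $\varpi_A^{X,X}((f\otimes\id)\circ T)=F\circ\tau$ and $\varpi_A^{Y,Y}(T\circ(\id\otimes f))=\tau\circ F$, up to the order convention. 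The cyclicity relation $\iota^{\mathrm h}_X(\tau F)=\iota^{\mathrm h}_Y(F\tau)$ then gives the claim. The coend universal property therefore yields $\Theta_A$.

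\emph{Step 2: cowedge for the vertical trace.} I need $\Theta_U(QP)=\Theta_V(PQ)$ for $P=\pi_X^{U,V}(p)$ and $Q=\pi_Y^{V,U}(q)$, since such representatives span. By the vertical composition formula, $QP=\pi^{U,U}_{X\otimes Y}\bigl((\id_X\otimes q)\circ(p\otimes\id_Y)\bigr)$, so $\Theta_U(QP)=\iota^{\mathrm h}_{X\otimes Y}(\varpi_U(\cdots))$. The key observation is that the horizontal tube morphism $\varpi_U^{X\otimes Y,X\otimes Y}\bigl((\id_X\otimes q)(p\otimes\id_Y)\bigr)$ factors as a horizontal composite of two tubes with running objects $U$ and $V$. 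These are built from $p$ and $q$ after bending legs with $\ev,\coev$, effectively the squares $p$ and $q$ placed side by side. In the cut-square picture this is the statement that a $2\times1$ strip can be cut either way.

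I will write the factorization as $\varpi(\tilde p)\circ\varpi(\tilde q)$, where $\tilde p$ and $\tilde q$ are the appropriate partial transposes. Cyclicity of $\iota^{\mathrm h}$ swaps the two factors, and re-gluing gives $\varpi_V^{Y\otimes X,\ldots}\bigl((\id_Y\otimes p)(q\otimes\id_X)\bigr)$, i.e.\ $\Theta_V(PQ)$. I expect this to be the main obstacle. It requires carefully matching the bookkeeping of running objects: vertical uses $X\otimes Y$, horizontal uses $B\otimes A$ with reversed tensor order. The snake identities will absorb the bends. I would first try verifying it with $A,X,Y$ simple and using the dinaturality relation of $\varpi$ to slide the bent legs.

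\emph{Step 3: isomorphism.} The universal property of $\Torusv(\cR)$ gives the map $\scrS_{\cutlabel}$ satisfying \eqref{eq:appendix-change-of-cut}. The symmetric construction, interchanging directions and using $\Tubev(\cR)=\Tubeh(\cR^{\rev})$, gives $\scrS'\colon\Torush\to\Torusv$ with $\scrS'\iota_X^{\mathrm h}\varpi_A^{X,X}(u)=\iota_A^{\mathrm v}\pi_X^{A,A}(u)$. Both composites fix the square generators $\iota\pi(u)$ and $\iota\varpi(u)$. These generators span, because the coend maps are jointly surjective. Hence both composites are the identity, and uniqueness follows from the universal properties.
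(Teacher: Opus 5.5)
Your Steps 1 and 3 are essentially the paper's argument. Step 1 has one index slip: $(f\otimes\id_A)\circ T$ lies in $\cR(A\otimes Y,Y\otimes A)$ and $T\circ(\id_A\otimes f)$ lies in $\cR(A\otimes X,X\otimes A)$. Their closures are therefore $\iota^{\mathrm h}_Y\varpi_A^{Y,Y}$ and $\iota^{\mathrm h}_X\varpi_A^{X,X}$, the reverse of what you wrote, but the content is unaffected.

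The gap is in Step 2, which is the heart of the proposition. Set $E_U:=\varpi_U^{XY,XY}(\contraction_{\mathrm v}(p\otimes q))\in\End_{\Tubeh(\cR)}(XY)$ and $E_V:=\varpi_V^{YX,YX}(\contraction_{\mathrm v}(q\otimes p))\in\End_{\Tubeh(\cR)}(YX)$. To use cyclicity you need $\sigma\colon YX\to XY$ and $\tau\colon XY\to YX$ in $\Tubeh(\cR)$ with $\sigma\circ_{\mathrm h}\tau=E_U$ and $\tau\circ_{\mathrm h}\sigma=E_V$. The factorization you propose, a $p$-tube and a $q$-tube placed side by side with running objects $U$ and $V$, does not supply this, for two reasons:
\begin{itemize}
\item A horizontal composite of tubes with running objects $U$ and $V$ is represented with running object $V\otimes U$. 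It is not the given representative with running object $U$, and you give no dinaturality chain relating the two.
\item More fundamentally, $p$ and $q$ are joined by the $V$ and $U$ strands, so any cut that puts them in different factors is crossed by those strands. The intermediate object is then not $YX$, and the reversed composite is an endomorphism of that other object rather than $E_V$.
\end{itemize}
You flag this step as the main obstacle, but the proposal never actually supplies it.

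The missing ingredient is an invertible horizontal tube morphism that implements the cyclic reordering $YX\to XY$ without any braiding. This is the paper's $\swap_{X,Y}:=\varpi_X^{YX,XY}(\id_{XYX})$, with inverse $\varpi_{X^*}^{XY,YX}\bigl((\id_Y\otimes\ev_X^\dagger)\circ(\coev_X^\dagger\otimes\id_Y)\bigr)$; invertibility follows from dinaturality and the snake identities. Compare the two composites:
\begin{itemize}
\item $E_U\circ_{\mathrm h}\swap_{X,Y}$ is represented, with running object $U\otimes X$, by $\contraction_{\mathrm v}(p\otimes q)\otimes\id_X=(\id_X\otimes q\otimes\id_X)\circ(p\otimes\id_{YX})$.
\item $\swap_{X,Y}\circ_{\mathrm h}E_V$ is represented, with running object $X\otimes V$, by $\id_X\otimes\contraction_{\mathrm v}(q\otimes p)=(\id_{XY}\otimes p)\circ(\id_X\otimes q\otimes\id_X)$.
\end{itemize}
These are exactly the two sides of a single horizontal dinaturality relation, with $f=p\colon U\otimes X\to X\otimes V$ acting on running objects and $T=\id_X\otimes q\otimes\id_X$. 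Hence $E_U=\swap_{X,Y}\circ_{\mathrm h}E_V\circ_{\mathrm h}\swap_{X,Y}^{-1}$, and your cyclicity argument goes through with $\sigma=\swap_{X,Y}$ and $\tau=E_V\circ_{\mathrm h}\swap_{X,Y}^{-1}$. The factors are the swap and a conjugated endomorphism, not a $p$-piece and a $q$-piece.
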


The two closures of $u$ are shown below. Dashed edges are identified in the
tube coend, and dotted edges are identified by the categorical trace. The
vertical arrows apply $\iota_A^{\mathrm v}$ and $\iota_X^{\mathrm h}$,
and the lower arrow is $\scrS_{\cutlabel}$.
\begin{equation*}
    \includegraphics[scale=.8]{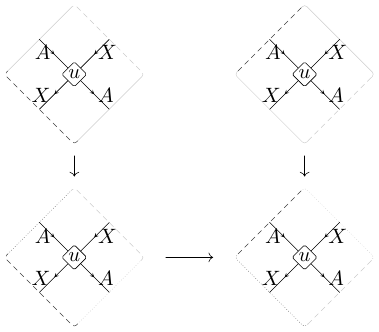}
\end{equation*}

\begin{proof}
For $f:X\to Y$ and $u:A\otimes Y\to X\otimes A$, horizontal trace
cyclicity, applied to the horizontal tube morphisms represented by $f$ with
running label $\one$ and by $u$ with running label $A$, gives
\begin{equation*}
    \Theta_{A,X}\bigl(u(\id_A\otimes f)\bigr)
    =\Theta_{A,Y}\bigl((f\otimes\id_A)u\bigr).
\end{equation*}
Thus $\Theta_{A,X}$ is dinatural in $X$, and the first assertion follows
from the universal property of the tube coend.

It remains only to verify the cowedge condition for $\Theta_A$.  By the
universal property of the tube coends and linearity, it suffices to use
the tube morphisms
\begin{align*}
    P&:=\pi_X^{A,B}(p)\in\Tubev(\cR)(A,B),\\
    Q&:=\pi_Y^{B,A}(q)\in\Tubev(\cR)(B,A),
\end{align*}
represented respectively by
\begin{align*}
    p&:A\otimes X\longrightarrow X\otimes B,\\
    q&:B\otimes Y\longrightarrow Y\otimes A.
\end{align*}
By the definition of vertical tube composition,
\begin{align*}
    Q\circ_{\mathrm v}P
    &=\pi_{X\otimes Y}^{A,A}\!\left(
        \contraction_{\mathrm v}(p\otimes q)\right),\\
    P\circ_{\mathrm v}Q
    &=\pi_{Y\otimes X}^{B,B}\!\left(
        \contraction_{\mathrm v}(q\otimes p)\right).
\end{align*}
By the definition of $\Theta_A$,
\begin{align*}
    \Theta_A\!\left(Q\circ_{\mathrm v}P\right)
    &=\iota_{XY}^{\mathrm h}\!\left(
        \varpi_A^{XY,XY}\!\left(\contraction_{\mathrm v}(p\otimes q)\right)
      \right),\\
    \Theta_B\!\left(P\circ_{\mathrm v}Q\right)
    &=\iota_{YX}^{\mathrm h}\!\left(
        \varpi_B^{YX,YX}\!\left(\contraction_{\mathrm v}(q\otimes p)\right)
      \right).
\end{align*}

Consider the canonical cyclic swap
\begin{equation*}
    \swap_{X,Y}
    :=\varpi_X^{YX,XY}(\id_{XYX})
    \in\Tubeh(\cR)(YX,XY).
\end{equation*}
Choose a right dual $X^*$.  Its inverse is
\begin{equation*}
    \swap_{X,Y}^{-1}
    :=\varpi_{X^*}^{XY,YX}\!\left(
        (\id_Y\otimes\ev_X^\dagger)
        \circ(\coev_X^\dagger\otimes\id_Y)
      \right).
\end{equation*}
Their cut-square representatives are
\begin{equation*}
    \includegraphics[scale=.8]{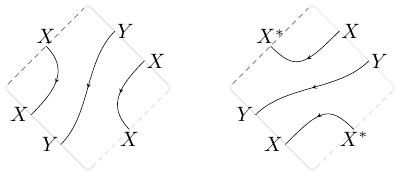}
\end{equation*}
The first uses $\id_{XYX}$, and the second uses the indicated evaluation
and coevaluation. No braiding in $\cR$ is involved.
Indeed, horizontal dinaturality together with the two snake identities gives
\begin{align*}
    \swap_{X,Y}^{-1}
      \circ_{\mathrm h}\swap_{X,Y}
    &=\id_{YX}^{\mathrm h},\\
    \swap_{X,Y}
      \circ_{\mathrm h}\swap_{X,Y}^{-1}
    &=\id_{XY}^{\mathrm h}.
\end{align*}

Apply horizontal dinaturality to
\begin{equation*}
    p:A\otimes X\longrightarrow X\otimes B
\end{equation*}
and the morphism
\begin{equation*}
    \id_X\otimes q\otimes\id_X:
    (XB)\otimes(YX)\longrightarrow(XY)\otimes(AX).
\end{equation*}
This gives the intertwining relation
\begin{multline*}
    \swap_{X,Y}\circ_{\mathrm h}
    \varpi_B^{YX,YX}\!\left(\contraction_{\mathrm v}(q\otimes p)\right)\\
    =\varpi_A^{XY,XY}\!\left(\contraction_{\mathrm v}(p\otimes q)\right)
    \circ_{\mathrm h}\swap_{X,Y}.
\end{multline*}
Thus the two horizontal endomorphisms are conjugate by
$\swap_{X,Y}$.  Since categorical trace is invariant under
conjugation,
\begin{equation*}
    \Theta_A\!\left(Q\circ_{\mathrm v}P\right)
    =\Theta_B\!\left(P\circ_{\mathrm v}Q\right).
\end{equation*}
In the cut-square picture, this equality of horizontal trace classes is
\begin{equation*}
    \includegraphics[scale=.8]{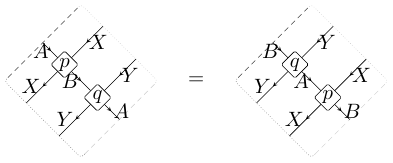}
\end{equation*}
This is the cowedge condition for $(\Theta_A)_A$.  The universal property of
the vertical torus coend therefore induces the unique map
$\scrS_{\cutlabel}$ with Eq.~\eqref{eq:appendix-change-of-cut}.

Interchanging the horizontal and vertical variables gives a cowedge in the
opposite direction and hence an inverse map.  Both composites are the identity
on the universal square generators, so $\scrS_{\cutlabel}$ is an
isomorphism.
\end{proof}

\subsection{Tube amplitudes}

The trace coend is natural under $\C$-linear functors. Every
$F:\mathcal C\to\mathcal D$ induces
\begin{equation*}
\begin{aligned}
    F_*:
    \int^{A\in\mathcal C}\mathcal C(A,A)
    &\longrightarrow
    \int^{V\in\mathcal D}\mathcal D(V,V),\\
    F_*\!\left(\iota_A^{\mathcal C}(u)\right)
    &:=\iota_{F(A)}^{\mathcal D}\!\left(F(u)\right).
\end{aligned}
\end{equation*}

For $\mathcal D=\Hilb$, the ordinary vector-space trace
gives the canonical isomorphism
\begin{equation*}
    \int^{V\in\Hilb}\End(V)
    \simeq\C,
    \qquad
    \iota_V^{\mathrm{fd}}(T)\longmapsto\Tr_V(T).
\end{equation*}
Under this isomorphism, for a finite-dimensional $\C$-linear functor
$F:\Tube_d(\cR)\to\Hilb$, define
\begin{align*}
    \Torus_d(\cR)&:=\int^{X}\Tube_d(\cR)(X,X),\\
    Z_F^d:=F_*&:\Torus_d(\cR)\longrightarrow\C.
\end{align*}
Explicitly,
\begin{equation*}
    Z_F^d\!\left(\iota_A^d(u)\right)
    =\Tr_{F(A)}\!\left(F(u)\right).
\end{equation*}

For $x=(X,\gamma_{X,-})\in\fZ(\cR)$, the vertical anyon-insertion functor is
defined on objects by $x^{\sharp_{\mathrm v}}(A)=\cR(X,A)$.  Its action on a
coend generator is the basis-free partial trace
\begin{equation*}
    x^{\sharp_{\mathrm v}}\!\left(\pi_V^{A,B}(u)\right)(\phi)
    :=\ptr^{\,\ell}_V\!\left[
        u\circ(\phi\otimes\id_V)\circ\gamma_{X,V}^{-1}
    \right].
\end{equation*}
Half-braiding naturality makes this family dinatural, and the hexagon gives
functoriality.  Applying the same construction to
$\Tubeh(\cR)\cong\Tubev(\cR^{\rev})$ gives horizontal simple
functors labeled by $\Irr\fZ(\cR^{\rev})$.
For either direction, abbreviate
\begin{equation*}
    Z_x^d:=Z_{x^{\sharp_d}}^d.
\end{equation*}

In the present covariant convention, the unitary form of Hardiman's
equivalence is
\begin{equation*}
    (-)^{\sharp_{\mathrm v}}:
    \fZ(\cR)^{\mathrm{op}}
    \xrightarrow{\ \cong\ }
    \Fun_*\bigl(\Tubev(\cR),\Hilb\bigr).
\end{equation*}
Here the opposite category records that these functors are covariant, whereas
Hardiman uses contravariant representations~\cite{Hardiman_2019_graphical}.  The
horizontal statement follows by applying the same equivalence to
$\cR^{\rev}$.

For $x=(X,\gamma_{X,-})\in\Irr\fZ(\cR)$, Hardiman's idempotent, translated
to the present convention, is
\begin{equation*}
    \varepsilon_x^{\mathrm v}
    :=\frac{1}{D_\cR^2}
      \sum_{i\in\Irr\cR}d_i\,
      \pi_i^{X,X}\!\left(\gamma_{X,i}\right)
    \in\Tubev(\cR)(X,X).
\end{equation*}
The anyon-insertion action and a coend generator appearing in
$\varepsilon_x^{\mathrm v}$ are depicted by
\begin{equation*}
    \includegraphics[scale=.8]{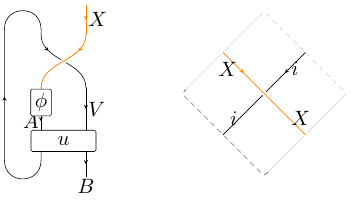}
\end{equation*}
Define the vertical trace class of $\varepsilon_x^{\mathrm v}$ by
\begin{equation*}
    e_x^{\mathrm v}:=
    \iota_X^{\mathrm v}\!\left(\varepsilon_x^{\mathrm v}\right)
    \in\Torusv(\cR).
\end{equation*}
For $y=(Y,\gamma_{Y,-})\in\Irr\fZ(\cR)$, the operator
$y^{\sharp_{\mathrm v}}(\varepsilon_x^{\mathrm v})$ is the projection
\begin{equation*}
    \cR(Y,X)\longrightarrow\fZ(\cR)(y,x).
\end{equation*}
Consequently,
\begin{equation*}
    Z_y^{\mathrm v}(e_x^{\mathrm v})
    =\dim\fZ(\cR)(y,x)=\delta_{x,y}.
\end{equation*}
Completeness of the Hardiman idempotents implies that the classes
$\{e_x^{\mathrm v}\}_{x\in\Irr\fZ(\cR)}$ form a basis of
$\Torusv(\cR)$.

With the reversed tensor product specified above, inverse half-braidings give
a canonical bijection of simple isomorphism classes
\begin{equation*}
\begin{aligned}
    \kappa:\Irr\fZ(\cR)&\longrightarrow\Irr\fZ(\cR^{\rev}),\\
    (X,\gamma_{X,-})&\longmapsto(X,\gamma_{X,-}^{-1})
\end{aligned}
\end{equation*}

\subsection{Modular \texorpdfstring{$S$}{S}-covariance}

Orient the meridional and longitudinal cycles as $(m,\ell)$.  With the tensor
orders fixed in the directional tube categories, Eq.~\eqref{eq:appendix-change-of-cut}
represents the oriented quarter-turn
\begin{equation*}
    \scrS_{\cutlabel}:(m,\ell)\longmapsto(\ell,-m).
\end{equation*}
and hence induces the standard modular transformation
$\tau\mapsto-1/\tau$ on the complex modulus.

For $x=(X,\gamma_{X,-})$ and $y=(Y,\gamma_{Y,-})$ in
$\Irr\fZ(\cR)$, define the Hopf-link matrix by
\begin{equation*}
    S_{yx}
    :=\Tr_{Y\otimes X}
      \!\left(\gamma_{X,Y}\circ\gamma_{Y,X}\right).
\end{equation*}
The total quantum dimensions satisfy
$D_{\fZ(\cR)}=D_\cR^2$.

The explicit form of $e_x^{\mathrm v}$ and
Eq.~\eqref{eq:appendix-change-of-cut} give
\begin{equation*}
    \scrS_{\cutlabel}(e_x^{\mathrm v})
    =\frac{1}{D_\cR^2}
      \sum_{i\in\Irr\cR}d_i\,
      \iota_i^{\mathrm h}\!\left(
        \varpi_X^{i,i}(\gamma_{X,i})
      \right).
\end{equation*}
For $\phi:Y\to i$, half-braiding naturality gives
\begin{equation*}
\begin{aligned}
    \kappa(y)^{\sharp_{\mathrm h}}\!\left(
      \varpi_X^{i,i}(\gamma_{X,i})
    \right)(\phi)
    &=\ptr_X^r\!\left[
        \gamma_{X,i}\circ(\id_X\otimes\phi)
        \circ\gamma_{Y,X}
      \right] \\
    &=\phi\circ h_x^y,\\
    h_x^y&:=\ptr_X^r\!\left[
      \gamma_{X,Y}\circ\gamma_{Y,X}
    \right].
\end{aligned}
\end{equation*}
Thus $h_x^y\in\End_\cR(Y)$. Closing its remaining $Y$ strand gives the
Hopf link defining $S_{yx}$.
\begin{equation*}
    \includegraphics[scale=.8]{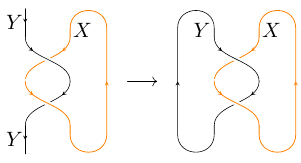}
\end{equation*}
The crossings are the half-braidings $\gamma_{Y,X}$ and $\gamma_{X,Y}$,
read from top to bottom. For every $h\in\End_\cR(Y)$,
semisimplicity gives the canonical trace
decomposition
\begin{equation*}
    \Tr_Y(h)
    =\sum_{i\in\Irr\cR}d_i\,
      \Tr_{\cR(Y,i)}(\phi\longmapsto\phi\circ h).
\end{equation*}
Consequently, sphericality gives
\begin{equation}
\begin{aligned}
    Z_{\kappa(y)}^{\mathrm h}\!\left(
      \scrS_{\cutlabel}e_x^{\mathrm v}
    \right)
    &=\frac{1}{D_\cR^2}\Tr_Y(h_x^y) \\
    &=\frac{1}{D_\cR^2}
      \Tr_{Y\otimes X}\!\left(
        \gamma_{X,Y}\circ\gamma_{Y,X}
      \right)\\
    &=\frac{S_{yx}}{D_\cR^2}.
\end{aligned}
    \label{eq:appendix-mixed-sector-pairing}
\end{equation}

\begin{proposition}[Modular $S$-covariance]
For every $y\in\Irr\fZ(\cR)$,
\begin{equation*}
    Z_{\kappa(y)}^{\mathrm h}\circ\scrS_{\cutlabel}
    =\sum_{x\in\Irr\fZ(\cR)}
      \frac{S_{yx}}{D_\cR^2}\,Z_x^{\mathrm v}.
\end{equation*}
\end{proposition}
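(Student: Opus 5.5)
The plan is to prove the identity of linear functionals on $\Torusv(\cR)$ by checking it on a basis. Both sides are linear maps $\Torusv(\cR)\to\C$. The left side is the composite of the isomorphism $\scrS_{\cutlabel}$ with the functional $Z_{\kappa(y)}^{\mathrm h}=\kappa(y)^{\sharp_{\mathrm h}}{}_*$. The right side is a finite linear combination of the $Z_x^{\mathrm v}$. It therefore suffices to evaluate both sides on the classes $\{e_{x'}^{\mathrm v}\}_{x'\in\Irr\fZ(\cR)}$. These were shown above to form a basis of $\Torusv(\cR)$, by completeness of the Hardiman idempotents together with the unitary form of Hardiman's equivalence (Proposition~\ref{prop:hardiman-tube-center-equivalence}).

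On the right side, use the orthogonality relation $Z_x^{\mathrm v}(e_{x'}^{\mathrm v})=\dim\fZ(\cR)(x,x')=\delta_{x,x'}$ established just before. This collapses the sum to $S_{yx'}/D_\cR^2$.

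On the left side, Eq.~\eqref{eq:appendix-mixed-sector-pairing} already computes $Z_{\kappa(y)}^{\mathrm h}(\scrS_{\cutlabel}e_{x'}^{\mathrm v})=S_{yx'}/D_\cR^2$. The two sides therefore agree on every basis element, and linearity finishes the argument.

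The main obstacle is to make sure that Eq.~\eqref{eq:appendix-mixed-sector-pairing} is fully justified, since the whole weight of the proof rests on it. Three steps need care.

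First, $\scrS_{\cutlabel}(e_x^{\mathrm v})$ must be computed by applying Eq.~\eqref{eq:appendix-change-of-cut} termwise to the generators $\iota_X^{\mathrm v}(\pi_i^{X,X}(\gamma_{X,i}))$. This requires each $\gamma_{X,i}$ to be a square morphism $X\otimes i\to i\otimes X$, so that the change of cut exchanges the roles $A=X$ and running object $i$.

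Second, the horizontal anyon insertion must be evaluated as $\phi\mapsto \phi\circ h_x^y$. Here I would use naturality of $\gamma_{X,-}$ in its second argument to slide $\phi:Y\to i$ through the crossing. I would also use the inverse-half-braiding convention defining $\kappa(y)$, checking carefully that the reversed tensor product in $\Tubeh(\cR)=\Tubev(\cR^{\rev})$ turns the left partial trace and $\gamma^{-1}$ into the right partial trace and $\gamma_{Y,X}$.

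Third, take the sum $\sum_i d_i\Tr_{\cR(Y,i)}(\phi\mapsto\phi\circ h)$. By the semisimple trace decomposition this equals $\Tr_Y(h_x^y)$. Sphericality then identifies $\Tr_Y(\ptr^r_X[\cdots])$ with the Hopf-link trace $S_{yx}$.

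If the conventions in the second step produce $S_{xy}$ instead of $S_{yx}$, I would reconcile this using $S_{xy}=S_{yx}$ for the Hopf link in the modular category $\fZ(\cR)$, via sphericality and cyclicity of the trace.
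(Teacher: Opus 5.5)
Your proposal is correct and is essentially the paper's proof. Both sides are evaluated on the basis $\{e_x^{\mathrm v}\}_{x\in\Irr\fZ(\cR)}$, with $Z_z^{\mathrm v}(e_x^{\mathrm v})=\delta_{z,x}$ collapsing the right-hand side and Eq.~\eqref{eq:appendix-mixed-sector-pairing} supplying the left-hand side; your three points (termwise change of cut on $\pi_i^{X,X}(\gamma_{X,i})$, sliding $\phi\colon Y\to i$ through the crossing with the $\cR^{\rev}$ conventions to obtain $\phi\mapsto\phi\circ h_x^y$, and the semisimple trace decomposition plus sphericality) are exactly the derivation the paper gives just before the proposition, and your fallback $S_{xy}=S_{yx}$ holds by cyclicity of the trace but is not needed.
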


\begin{proof}
It suffices to evaluate both sides on the sector basis.  For
$x\in\Irr\fZ(\cR)$, Eq.~\eqref{eq:appendix-mixed-sector-pairing} gives the value of the
left-hand side on $e_x^{\mathrm v}$, while the right-hand side gives
\begin{equation*}
    \sum_{z\in\Irr\fZ(\cR)}
      \frac{S_{yz}}{D_\cR^2}Z_z^{\mathrm v}(e_x^{\mathrm v})
    =\frac{S_{yx}}{D_\cR^2}.
\end{equation*}
Thus the two linear functionals agree on a basis of $\Torusv(\cR)$.
\end{proof}

\clearpage
\onecolumngrid
\section{Notation table}
\label{app:notation-table}

This appendix collects recurring notation. The meaning of a bar or capital
letter depends on its context. The source column gives the corresponding
definition or convention.

Categorical algebra objects are usually denoted by $A,B,\ldots$,
Q-systems by $Q$, and Lagrangian algebras by $L,N,\ldots$.
Full field algebras and VOAs use script letters, also used when these
algebras are viewed as objects of module categories.

Throughout the manuscript, $\cong$ denotes equivalence or isomorphism of
categories, and $\simeq$ denotes isomorphism of objects, including algebra
objects and vector spaces. Natural isomorphisms between functors are also
written $\simeq$.

\begingroup
\small
\renewcommand{\arraystretch}{1.16}
\setlength{\LTpre}{0.5em}
\setlength{\LTpost}{0.5em}
\begin{longtable}{@{}>{\raggedright\arraybackslash}p{0.18\textwidth}
                        >{\raggedright\arraybackslash}p{0.50\textwidth}
                        >{\raggedright\arraybackslash}p{0.22\textwidth}@{}}
\caption{Recurring notation and its defining equations or conventions.}
\label{tab:notation-table}\\
\toprule
Symbol & Meaning & Definition or convention \\
\midrule
\endfirsthead
\multicolumn{3}{@{}l}{\tablename~\thetable\ (continued)}\\[2pt]
\toprule
Symbol & Meaning & Definition or convention \\
\midrule
\endhead
\midrule
\multicolumn{3}{r@{}}{Continued on next page}\\
\endfoot
\bottomrule
\endlastfoot

\multicolumn{3}{@{}l}{\textbf{Categories, objects, and categorical operations}}\\*
\addlinespace[2pt]
$\cR$ & Unitary fusion category of symmetry charges in an
$\cR$-tensor network; no braiding is assumed unless stated. &
Sec.~\ref{sec:r-tensor} \\
$\cC,\cD,\ldots$ & Generic categories; in braided constructions,
$\cC$ is usually a braided fusion category or a UMTC. &
App.~\ref{app:categorical-data} \\
$\cC(A,B)$ & Morphism (Hom) space
$\operatorname{Hom}_{\cC}(A,B)$ between objects $A,B\in\cC$. &
Sec.~\ref{sec:r-tensor} \\
$\cong$; $\simeq$ & Equivalence/isomorphism of categories; isomorphism of
objects, algebra objects, vector spaces, or morphism spaces, respectively. &
App.~\ref{app:notation-table} (opening convention) \\
$\Hilb$ & Category of finite-dimensional Hilbert spaces. &
Subsec.~\ref{sec:r-tensor-basis} \\
$\Rep(G)$ & Category of finite-dimensional unitary representations of a finite group
$G$, also used for the extension to compact Lie groups. &
Sec.~\ref{sec:r-tensor-network} \\
$\one$ & Tensor unit of the category currently in scope. &
Sec.~\ref{sec:r-tensor} \\
$\Irr(\cR)$ & Fixed set of representatives of the simple isomorphism classes
of $\cR$. &
Eq.~\eqref{eq:orthonormal-completeness} \\
$i,j,k,l,m,n,\ldots$ & Simple-object labels in $\cR$ used in the
$\cR$-tensor, $\cR$-Loop-TNR, and holographic constructions. &
Subsec.~\ref{sec:r-tensor-basis} \\
$X,Y,Z,A,B,V,\ldots$ & General, possibly nonsimple objects; capital letters
also label tensor-network bonds or tube objects locally. &
Sec.~\ref{sec:r-tensor} \\
$\otimes$, $\oplus$, $\boxtimes$ & Monoidal product, direct sum, and Deligne
tensor product.  A subscript, as in $\otimes_A$ or $\boxtimes_{\cC}$,
specifies the base algebra, object, or category for a relative product. &
Sec.~\ref{sec:r-tensor} and App.~\ref{app:categorical-data} \\
$X^*$; $\bar i$ & Chosen categorical dual; in the duality subsection,
$\bar i\in\Irr(\cR)$ is the representative unitarily isomorphic to $i^*$.
The bar here does not mean reversed braiding. &
Sec.~\ref{sec:r-tensor} and Subsec.~\ref{sec:r-tensor-presentation} \\
$\cC^{\rev}$ & Monoidal reverse with
$X\otimes^{\rev}Y:=Y\otimes X$. &
Prop.~\ref{prop:diagonal-algebra} \\
$\overline{\cC}$ & Braided reverse of a braided category. The objects and
fusion data are unchanged, while braiding and twists are inverted. &
App.~\ref{app:categorical-data} \\
$d_X$ (e.g. $d_i,d_A$), $D_\cC$, $D$ & Quantum dimension of an object and total
quantum dimension of a fusion or modular category, with
$D_\cC^2=\sum_{i\in\Irr(\cC)}d_i^2$; $D=D_\cR$ in the holographic
and cloaking calculations. &
Prop.~\ref{prop:diagonal-algebra} and App.~\ref{app:directional-tube-traces} \\
$\fZ(\cR)$ & Drinfeld center of $\cR$, interpreted as the bulk UMTC or symTO
associated with the charge category. &
Sec.~\ref{sec:holographic-picture} \\
symTO; symTFT & Symmetry topological order and symmetry topological field
theory, respectively.  Both abbreviations use a lowercase ``sym'' throughout
the manuscript. &
Appendix~\ref{app:symto-review} \\
$x=(X,\gamma_{X,-})$ & Center object/anyon sector with underlying object
$X\in\cR$ and half-braiding
$\gamma_{X,A}:X\otimes A\rightarrow A\otimes X$. &
Eq.~\eqref{eq:anyon-insertion-rep} \\
$\cM(p,q)$, $\cC_{p,q}$ & Virasoro minimal model and the UMTC of its chiral
sectors, respectively. &
App.~\ref{app:categorical-data} \\
$\mathsf{N}_c$ & Rank-$N$ UMTC with chiral central charge $c$ modulo $8$.
Named symbols such as $\Ising$ and $\Fib$ denote the underlying fusion
categories when used as charge categories. &
App.~\ref{app:categorical-data} \\
$\JonesKauffman_4$ & Jones--Kauffman category appearing in the Potts
factorization $\cC_{6,5}\cong\Fib\boxtimes\JonesKauffman_4$. &
App.~\ref{appsec:potts-categories} \\
$[m]_k$, $\varphi$ & Level-$k$ quantum integer and the golden ratio
$\varphi=(1+\sqrt5)/2$. &
Sec.~\ref{sec:random-cluster-models} \\

\addlinespace[4pt]
\multicolumn{3}{@{}l}{\textbf{String diagrams, bases, and tensor operations}}\\*
\addlinespace[2pt]
$f:X\to Y$, $f^\dagger$ & Morphism and dagger.  Diagrams are read from top to
bottom; horizontal juxtaposition is tensor product and vertical stacking is
composition. &
Sec.~\ref{sec:r-tensor} \\
$\ev_X$, $\coev_X$ & Evaluation
$X\otimes X^*\to\one$ and coevaluation $\one\to X^*\otimes X$ in the chosen
spherical duality convention. &
Sec.~\ref{sec:r-tensor} \\
$\Tr$, $\ptr^{\ell,r}_X$ & Spherical trace (or,
when the traced object is a vector space, ordinary trace) and left/right
partial traces. &
App.~\ref{app:directional-tube-traces} \\
$N_i^X$ & Multiplicity of the simple object $i$ in $X$. &
Eq.~\eqref{eq:orthonormal-completeness} \\
$q^X_{i,\alpha}$ & Orthonormal embedding $i\to X$; reflected triangles in the
diagrams denote its dagger. &
Eq.~\eqref{eq:orthonormal-completeness} \\
$\alpha,\beta,\gamma,\delta$; $\mu,\nu$; $I,J$ & In the rank-four decomposition,
the first four indices label simple-object multiplicities in the bond
objects, while $\mu,\nu$ label fusion-space multiplicities. $I,J$ collect
the labels of an embedding tree. Index letters are reused locally. &
Eq.~\eqref{eq:rank-4-tensor-decomposition} \\
$f^m_{I,J}$ & Matrix block of a morphism in total-charge sector $m$; the
$\Irr(\cR)$-graded blocks are the numerical data stored on a computer. &
Subsec.~\ref{sec:r-tensor-basis} \\
$v_{ij}^{k,\alpha}$ & Isotopy-invariant trivalent basis, equal to the
orthonormal vertex multiplied by $(d_id_j/d_k)^{1/4}$. &
Eq.~\eqref{eq:isotopy-invariant-fusion-basis} \\
$F^{ijk}_{l}$ & Unitary associator matrix relating two fusion-tree bases; its
channel and multiplicity indices are displayed when needed. &
Subsec.~\ref{sec:r-tensor-basis} \\
$Z_i$, $\varkappa_i$ & Unitary identification $i^*\simeq\bar i$ and the
Frobenius--Schur phase relating $Z_i^*$ to $Z_{\bar i}$. &
Eqs.~\eqref{eq:first-duality-identity} and~\eqref{eq:other-two-dual} \\
$A^{ij}_{k,\beta\alpha}$,
$B^{ij}_{k,\beta\alpha}$ & Unitary $A$- and $B$-move matrices; the
leg-bending formulas carry separate quantum-dimension factors. &
Subsec.~\ref{sec:r-tensor-presentation} \\
$U,S,V$; $\sigma_K^m$ & Sectorwise singular-value decomposition and singular
values in sector $m$. &
Sec.~\ref{sec:decomposition-and-truncation} \\
$\chi$, $r_m$, $M_{\mathrm{trunc}}$ & Quantum-dimension budget, retained rank in
sector $m$, and truncated bond object. The ranks maximize the retained
Frobenius norm subject to $\sum_m d_m r_m\leq\chi$. &
Sec.~\ref{sec:decomposition-and-truncation} \\
$\odot$ & Copower: $\C^n\odot i$ consists of $n$ copies of $i$.
More generally, the factor on the left specifies the multiplicity space. &
Subsec.~\ref{sec:r-tensor-presentation} \\
$[h,L,\xi]$ & Lattice-tube geometry consisting of Euclidean-time height, circumference,
and horizontal gluing displacement. &
Sec.~\ref{sec:transfer-morphism-diagonalization} \\
$\chi_{\mathrm{RG}}$,
$\chi_{\mathrm{transfer}}$ & Bond cutoffs used during the RG flow and in the
final transfer morphism, respectively. &
Sec.~\ref{sec:transfer-morphism-diagonalization} \\
$N_{\mathrm{RG}}$ & Number of RG steps. &
Sec.~\ref{sec:stable-ising} \\
$\cN_4^i$ & Environment matrix for tensor $S_4$ in charge sector $i$ in
the alternating least-squares optimization. &
Subsec.~\ref{sec:r-loop-tnr} \\

\addlinespace[4pt]
\multicolumn{3}{@{}l}{\textbf{Holographic expansion, tubes, and modular covariance}}\\*
\addlinespace[2pt]
$\Gamma$, $\Lambda$ & Square graph on the torus and its complete coloring by
simple objects and fusion-basis labels. &
Sec.~\ref{sec:holographic-picture} \\
$|\Gamma,\Lambda\ra$, $\cH_\Gamma$ & Colored graph state and the colored
graph space for fixed $\Gamma$. &
Eq.~\eqref{eq:frozen-graph-state} \\
$|Z_\Gamma\ra$ & Frozen-graph state
$\sum_\Lambda A_\Gamma(\Lambda)|\Gamma,\Lambda\ra$. &
Eq.~\eqref{eq:frozen-graph-state} \\
$A_\Gamma(\Lambda)$ & Ordinary finite-dimensional contraction of the
coefficient tensors for a fixed coloring. &
Sec.~\ref{sec:holographic-picture} \\
$\Gamma_n$, $U_n$ & Graph after $n$ coarse-graining steps and the evolution
from $\Gamma_{n-1}$ to $\Gamma_n$, comprising $2$--$2$ and $2$--$0$ moves. &
Sec.~\ref{sec:holographic-picture} \\
$v_*$ & Common bulk apex in the Turaev--Viro tent representation of a
finite contractible patch. &
Fig.~\ref{fig:holographic-dual-bulk} \\
$\cH_{T^2}$ & Torus ground-state space of the Turaev--Viro theory. &
Sec.~\ref{sec:holographic-picture} \\
$Z_{\cR\text{-TV}}$, $E_\Gamma^x(\Lambda)$ & Turaev--Viro evaluation map and
its amplitude from a colored graph to torus sector $x$. &
Eq.~\eqref{eq:tv-evaluation-amplitude} \\
$Z_x$, $Z^{\sym}_\cR$ & Sector component and the resulting relative
partition-function vector $\sum_x Z_x|x\ra$. &
Eq.~\eqref{eq:evaluated-partition-state} \\
$\Tubev(\cR)$, $\pi_V^{A,B}$ & Vertical tube category, defined by an annular
coend, and its universal dinatural tube-closure map. This closure is the
pro-tensor trace, distinct in general from the spherical quantum trace. &
Sec.~\ref{sec:transfer-morphism-diagonalization} \\
$\Tubeh(\cR)$, $\varpi_A^{X,Y}$ & Horizontal tube category
$\Tubev(\cR^{\rev})$ and its universal dinatural map. &
App.~\ref{app:directional-tube-traces} \\
$(-)^\sharp$, $x^\sharp$ & The sharp superscript denotes the anyon-insertion
construction.  It sends $x=(X,\gamma_{X,-})$ to the $*$-functor with
$x^\sharp(A)=\cR(X,A)$ and inner product
$\la\phi|\psi\ra=d_X^{-1}\Tr(\phi^\dagger\psi)$. &
Eqs.~\eqref{eq:inner-product-def} and~\eqref{eq:anyon-insertion-rep} \\
$x^{\sharp_d}$ & Anyon-insertion functor for direction
$d=\mathrm v,\mathrm h$. &
App.~\ref{app:directional-tube-traces} \\
$\Phi_{A,B}$ & Simultaneous decomposition of a tube morphism into all simple
$\fZ(\cR)$ sector representations. &
Eq.~\eqref{eq:tube-sector-decomposition} \\
$Z_x(u)$, $Z(u)$ & Trace of a tube morphism in sector $x$ and the vector of all
sector partition functions. &
Sec.~\ref{sec:modular-covariance} \\
$\cT$, $\theta_x$, $s_x$ & Lattice translation, topological twist, and
topological spin, with $\theta_x=\ee^{2\pi\ii s_x}$ and one full circuit acting
as the positive twist. &
Sec.~\ref{sec:modular-covariance} \\
$S_{xy}$ & Unnormalized Hopf-link matrix.  In the tube formulas the normalized
mapping-class-group coefficient is $S_{xy}/D_\cR^2$. &
Sec.~\ref{sec:modular-covariance} \\
$\kappa(x)$ & Horizontal-sector label
$(X,\gamma_{X,-}^{-1})\in\fZ(\cR^{\rev})$ corresponding to
$x=(X,\gamma_{X,-})$. &
App.~\ref{app:directional-tube-traces} \\
$\Torusv(\cR)$, $\Torush(\cR)$ & Categorical traces of the vertical and
horizontal tube categories. &
App.~\ref{app:directional-tube-traces} \\
$\iota_A^d$ & Universal map from $\End_{\Tube_d(\cR)}(A)$ to
$\Torus_d(\cR)$, sending a tube endomorphism to its trace class. &
App.~\ref{app:directional-tube-traces} \\
$Z_x^d$ & Sector amplitude on $\Torus_d(\cR)$, with
$Z_x^d(\iota_A^d(u))=\Tr(x^{\sharp_d}(u))$. &
App.~\ref{app:directional-tube-traces} \\
$\scrS_{\cutlabel}$ & Universal change of cut, implementing
$(m,\ell)\mapsto(\ell,-m)$ and hence $\tau\mapsto-1/\tau$. &
Eq.~\eqref{eq:appendix-change-of-cut} \\
$\varepsilon_x^{\mathrm v}$,
$e_x^{\mathrm v}$ & Hardiman idempotent for sector $x$ and its vertical trace
class; the latter form a basis of $\Torusv(\cR)$. &
App.~\ref{app:directional-tube-traces} \\

\addlinespace[4pt]
\multicolumn{3}{@{}l}{\textbf{CFT spectra and three-point data}}\\*
\addlinespace[2pt]
$h$, $\bar h$, $\Delta$, $s$ & Chiral/antichiral conformal weights, scaling
dimension $\Delta=h+\bar h$, and conformal spin $s=h-\bar h$. &
Sec.~\ref{sec:modular-covariance} \\
$L_0$, $\overline L_0$, $c$ & Chiral and antichiral Virasoro zero modes and
central charge. &
Sec.~\ref{sec:modular-covariance} and
App.~\ref{appsec:partition-vector-and-modular-covariance} \\
$\tau$, $q$, $\bar q$ & Modular parameter and nomes
$q=\ee^{2\pi\ii\tau}$,
$\bar q=\ee^{-2\pi\ii\bar\tau}$. &
Sec.~\ref{appsec:gapless-phase-map} \\
$\scrV$, $\Mod_\scrV$, $\scrV_i$, $\scrV_x$ & Full field algebra (including
a chiral VOA) specifying the local quantum symmetry, its module category,
an irreducible chiral module labeled by $i$, and the module in bulk-anyon
sector $x$, respectively. Concrete chiral algebras and their modules
carry model superscripts: $\mathrm I$ (Ising), $\mathrm{TCI}$
(tricritical Ising), and $\mathrm P$ (Potts/$\cM(6,5)$); $\scrW^{\mathrm P}$
denotes the Potts $\cW_3$ extension. &
Apps.~\ref{appsec:enriched-category-description}
and~\ref{appsec:partition-vector-and-modular-covariance} \\
$\chi_i(\tau)$; $\chi_h^{\mathrm I}$, $\chi_h^{\mathrm P}$;
$\widehat\chi_h$ & Chiral character; Ising and Potts Virasoro characters;
Potts $\cW_3$ characters. The argument $\tau$ is often omitted. &
App.~\ref{appsec:partition-vector-and-modular-covariance} \\
${}^{\cD}\!\cS$ & Fusion category $\cS$ enriched in $\cD$.
For a gapless boundary, $\cD=\Mod_\scrV$; objects describe boundary
defects and morphism objects describe local operators between them. &
App.~\ref{appsec:enriched-category-description} \\
$\operatorname{Rel}(\scrB)$ & Relevant-field subspace
$\bigoplus_{0<h+\bar h<2}\scrB_{h,\bar h}$; its vanishing defines stability,
without restricting conformal spin or excluding marginal fields. &
Definition~\ref{def:stable-full-field-algebra} \\
$\cT$ (CFT) & Boundary CFT for which $\fZ(\cR)$ is called a stable symTO
when its trivial bulk-anyon sector contains no nonidentity relevant local
operator. This use of $\cT$ is distinct from lattice translation. &
Definition~\ref{def:stable-symto} \\
$|x;a\ra$; $|i;\Lambda\ra$ & State $a$ in center-anyon sector $x$;
conformal-family state $\Lambda$ paired with charge/defect sector $i$. &
Secs.~\ref{sec:structure-constants} and~\ref{sec:two-two-move} \\
$\cA_{abc^*}$ & Numerical three-point amplitude in
$\fZ(\cR)(x\otimes y,z)$; the star records the outgoing orientation of the third
state. &
Eq.~\eqref{eq:structure-amplitude-hom-space} \\
$\cA_{abc^*;\alpha}$ & Component of the numerical amplitude in the
isotopy-invariant fusion basis $v_{xy}^{z,\alpha}$. &
Eq.~\eqref{eq:structure-amplitude-decomposition} \\
$\phi_a^{\closedsector}$; $C^{\closedsector}_{abc;\alpha}$ & Closed-string/full-CFT
primary field and its fusion-space-resolved structure constant. &
Sec.~\ref{sec:structure-constants} and App.~\ref{appsec:closed-string-pants} \\
$\cA_{000}$ & Vacuum amplitude in the trivial anyon sector and its unique
vacuum fusion channel; division by it removes the overall tensor
normalization. &
Eq.~\eqref{eq:extract-closed-structure-constants} \\
$G_{\mathrm{cpants}}$ & Nonchiral closed-string geometric factor used in the
approximate extraction of bulk primary-field structure constants. &
Eq.~\eqref{eq:pants-gauge-app} \\

\addlinespace[4pt]
\multicolumn{3}{@{}l}{\textbf{Algebras, topological bootstrap, and model-specific notation}}\\*
\addlinespace[2pt]
$\Mat_n(\C)$ & Full matrix algebra of all linear endomorphisms of $\C^n$,
$\Mat_n(\C):=\End(\C^n)$; in particular, its vector-space dimension is
$n^2$. &
Eq.~\eqref{eq:tube-alg-iso} and Sec.~\ref{appsec:gapped-phase-map} \\
$\Fun(\cM,\cM)$,\newline
$\Mat_n(\Hilb)$ & Endofunctor category of a semisimple $\cM$ and its realization
by $n\times n$ matrices of Hilbert spaces, where $n=|\Irr(\cM)|$. &
Sec.~\ref{sec:tensor-complex} \\
$Q$, $m$, $\iota$ & Q-system in the charge category, with multiplication and
unit. &
Sec.~\ref{sec:random-cluster-models} \\
$A,B$; $L,N$ & Categorical algebra objects; Lagrangian algebras.
Conditions such as separability are stated where needed. &
App.~\ref{app:notation-table} (opening convention) \\
$\scrA$, $\scrB$, $\scrL$ & Full-field/VOA-side algebra or sector Hilbert
space, including their conformal-family content; $\scrL$ is reserved for a
Lagrangian full field algebra. &
App.~\ref{app:notation-table} (opening convention) and
Sec.~\ref{appsec:ffa-extensions-and-emergent-symto} \\
$L_{\diagsector}$ & Canonical diagonal Lagrangian algebra
$\bigoplus_i i^*\boxtimes i$ in
$\overline{\cR}\boxtimes\cR\cong\fZ(\cR)$, for modular $\cR$. &
Eq.~\eqref{eq:diag-lagrangian-algebra} \\
$\scrL_{\diagsector}$ & Charge-conjugation full-field space
$\bigoplus_i\overline{\scrV_{i^*}}\otimes\scrV_i$. &
Eq.~\eqref{eq:diagonal-full-field-space} \\
$\mathfrak L$ & Bond degrees of freedom
$\bigoplus_i\overline{\scrV_{i^*}}\odot i$, assigning a conformal
family as the multiplicity space of each charge. &
Eq.~\eqref{eq:mixed-bond-object} \\
$\mathsf{hole}(\scrV)$ & Chiral gapless hole whose boundary is described by
$\scrV$; the geometric realization of $\mathfrak L$. &
Sec.~\ref{sec:geometric-picture} \\
$m_{\text{pants}}$; $m_\triangle$, $m_{\triangle'}$ &
Tensor pairing conformal three-point blocks with basis morphisms in fusion
spaces; subscripts distinguish the pants geometries. &
Eq.~\eqref{eq:mixed-pants-map} and Sec.~\ref{sec:two-zero-move} \\
$O_l$ & Insertion of an $l$-labeled topological defect loop around a hole;
the weighted projector $D^{-2}\sum_l d_l O_l$ implements cloaking. &
Sec.~\ref{sec:two-zero-move} \\
$R$, $|\psi,R\rangle$, $|0\rangle\!\rangle$ & Hole radius, vacuum-normalized
cloaking boundary state $|\psi,R\rangle=R^{L_0+\bar L_0}|0\rangle\!\rangle$,
and vacuum Ishibashi state. The radius $R$ is unrelated to $\cR$. &
Sec.~\ref{sec:two-zero-move} \\
$\cC_A$, $\cC_A^{\localmodules}$ & Categories of $A$-modules and local
$A$-modules. &
Sec.~\ref{appsec:gapped-phase-map} \\
$V$, $T$ & In the topological-bootstrap ansatz,
$V=\bigoplus_{i\in\Irr(\cR)}i$ and $T\in\cR(V\otimes V,V)$ is the initial
rank-three tensor. &
Eq.~\eqref{eq:initial-tensor-bootstrap} \\
$\cN_{ijk;\alpha}$ & Continuous, $S_3$-symmetric trial parameters for
admissible fusion triples in the initial tensor. &
Eq.~\eqref{eq:initial-tensor-bootstrap} \\
$G_\triangle$, $n_i$ & Chiral geometric factor and the integer lift
$h_i=s_i+n_i$ of a topological spin. &
Eq.~\eqref{eq:initial-tensor-bootstrap} \\
$E$, $\Omega$, $p$, $v$ & Random-cluster edge set, bond configuration,
bond-occupation parameter, and $v=p/(1-p)$. &
Sec.~\ref{sec:random-cluster-models} \\
$C(\Omega)$, $l(\Omega)$,
$q=d_Q$ & Numbers of clusters and open edges, and random-cluster weight.  This
$q$ is unrelated to the modular nome. &
Sec.~\ref{sec:random-cluster-models} \\
\addlinespace[4pt]
\multicolumn{3}{@{}l}{\textbf{Geometric factors and conformal data}}\\*
\addlinespace[2pt]
$\Sigma$; $p_i$, $\xi_i$ & Three-punctured Riemann sphere, its ordered
punctures, and abstract local coordinates vanishing at those punctures. &
App.~\ref{app:geometric-factors} \\
$\phi_i$, $\varphi_i$, $f_i$ & Inserted state/field, explicit local coordinate
in a chosen geometry, and pulled-back coordinate germ/substitution operator.
& App.~\ref{app:geometric-factors} \\
$l_n$, $A_n^{(i)}$ & Nonnegative Witt generators and the recursively defined
Virasoro coordinates of a formal germ. &
Eq.~\eqref{eq:formal-coordinate-change} \\
$\cW(f)$, $\widehat{\cW}(f)$ & Contravariant substitution operator and its
Virasoro representation. &
Eq.~\eqref{eq:chiral-coordinate-change-operator} \\
$G_{\mathrm{nut}}$, $G_{\mathrm{pants}}$ & Planar/chiral geometric factors
used to initialize topological-bootstrap tensors; $G_{\mathrm{pants}}$ describes
the flat chiral closed-string amplitude, equivalently expressed as an
open-string amplitude. &
Eq.~\eqref{eq:nut-gauge-app} and App.~\ref{appsec:flat-pants},
``Three-point geometric factor'' \\
$\zeta$, $t$, $u$, $\rho$, $g$ & Upper-half-plane coordinate, rationalizing
auxiliary coordinate, flat polygon coordinate, and the two forms
$u=\rho(\zeta)=g(t(\zeta))$ of the global flat-pants map. &
App.~\ref{app:geometric-factors} \\
$\scrV$, $\Mod_\scrV$ & Strongly rational unitary VOA and its category
of grading-restricted modules. &
Sec.~\ref{sec:two-two-move} \\
$L_n$ & Virasoro modes acting on conformal states. &
Eq.~\eqref{eq:chiral-coordinate-change-operator} \\

\end{longtable}
\endgroup

\clearpage
\twocolumngrid

\bibliography{ref}

\end{document}